\newcommand{\R }{\ensuremath{\mathbb R}}
\newtheorem{theorem}{Theorem}
\newtheorem{corollary}[theorem]{Corollary}
\newtheorem{definition}[theorem]{Definition}
\newtheorem{example}[theorem]{Example}
\newtheorem{lemma}[theorem]{Lemma}
\newtheorem{notation}[theorem]{Notation}
\newtheorem{proposition}[theorem]{Proposition}
\newtheorem{remark}[theorem]{Remark}
\newenvironment{proof}[1][Proof]{\textbf{#1.} }{\ \rule{0.5em}{0.5em}}
\begin{document}

\title{Universal Arbitrage Aggregator in Discrete Time Markets under
Uncertainty \thanks{\textbf{Acknowledgements:} we wish to thank J.
Ob\l oj and F. Riedel for helpful discussions on this subject.}}

\author{Matteo Burzoni \\
{\small Milano University, email: matteo.burzoni@unimi.it} \and Marco
Frittelli \\
{\small Milano University, email: marco.frittelli@unimi.it} \and Marco
Maggis \thanks{%
The author have been supported by the Gruppo Nazionale per l'Analisi
Matematica, la Probabilit\`a e le loro Applicazioni (GNAMPA) of the Istituto
Nazionale di Alta Matematica (INdAM).} \qquad \\
{\small Milano University, email: marco.maggis@unimi.it}}
\maketitle

\begin{abstract}
In a model independent discrete time financial market, we discuss
the richness of the family of martingale measures in relation to
different
notions of Arbitrage, generated by a class $\mathcal{S}$ of \textit{%
significant} sets, which we call Arbitrage \emph{de la classe} $\mathcal{S}$%
. The choice of $\mathcal{S}$ reflects into the intrinsic
properties of the
class of polar sets of martingale measures. In particular: for $\mathcal{S}%
=\left\{ \Omega \right\} ,$ absence of Model Independent Arbitrage
is equivalent to the existence of a martingale measure; for
$\mathcal{S}$ being the open sets, absence of Open Arbitrage is
equivalent to the existence of full support martingale measures.
These results are obtained by adopting a technical filtration
enlargement and by constructing a universal aggregator of all
arbitrage opportunities. We further introduce the notion of market
feasibility and provide its characterization via arbitrage
conditions. We conclude providing a dual representation of Open
Arbitrage in terms of weakly open sets of probability measures,
which highlights the robust nature of this concept.
\end{abstract}

\noindent \textbf{Keywords}: Model Uncertainty, First Fundamental
Theorem of Asset Pricing, Feasible Market, Open Arbitrage, Full
Support Martingale Measure.

\noindent \textbf{MSC (2010):} primary 60G42, 91B24, 91G99, 60H99;
secondary 46A20, 46E27.

\section{Introduction: No Arbitrage under Uncertainty}

The introduction of Knightian Uncertainty in mathematical models
for Finance has recently renewed the attention on foundational
issues such as option pricing rules, super-hedging, and arbitrage
conditions.

\bigskip

We can distinguish two extreme cases:

\begin{enumerate}
\item We are completely sure about the reference probability
measure $P$. In this case, the classical notion of No Arbitrage or
NFLVR can be successfully applied (as in \cite{DMW90,DS94,DS98}).

\item We face complete uncertainty about any probabilistic model
and therefore we must describe our model independently by any
probability. In this case we might adopt a model independent
(weak) notion of No Arbitrage
\end{enumerate}

In the second case, a pioneering contribution was given in the
paper by Hobson \cite{Ho98} where the problem of pricing exotic
options is tackled under model mis-specification. In his approach
the key assumption is the existence of a martingale measure for
the market, consistent with the prices of some observed vanilla
options (see also \cite{BHR01,CO11,DOR14} for further
developments). In \cite{DH07}, Davis and Hobson relate the
previous problem to the absence of Model Independent Arbitrages,
by the mean of semi-static strategies. A step forward towards a
model-free version of the First Fundamental Theorem of Asset
Pricing in discrete time was formerly achieved by Riedel
\cite{Riedel} in a one period market and by Acciao at al.
\cite{AB13} in a more general setup.

\bigskip

Between cases 1. and 2., there is the possibility to accept that
the model could be described in a probabilistic setting, but we
cannot assume the knowledge of a specific reference probability
measure but at most of a set of priors, which leads to the new
theory of Quasi-sure Stochastic Analysis as in
\cite{Bion,Cohen,Denis,Denis2,Pe10,STZ11,STZ11a}.

The idea is that the classical probability theory can be
reformulated as far as the single reference probability $P$ is
replaced by a class of (possibly non-dominated) probability
measures $\mathcal{P}^{\prime }$. This is the case, for example,
of uncertain volatility (e.g. \cite{STZ11a}) where, in a general
continuous time market model, the volatility is only known to lie
in a certain interval $[\sigma _{m},\sigma _{M}]$. \newline In the
theory of arbitrage for non-dominated sets of priors, important
results were provided by Bouchard and Nutz \cite{Nutz2} in
discrete time. A
suitable notion of arbitrage opportunity with respect to a class $\mathcal{P}%
^{\prime }$, named $NA(\mathcal{P}^{\prime })$, was introduced and
it was shown that the no arbitrage condition is equivalent to the
existence of a family $\mathcal{Q}^{\prime }$ of martingale
measure having the same polar sets of $\mathcal{P}^{\prime }$. In
continuous time markets, a similar topic has been recently
investigated also by Biagini et al. \cite{BBKN14}.

\bigskip

Bouchard and Nutz \cite{Nutz2} answer the following question:
which is a good notion of arbitrage opportunity for all
\textbf{admissible} probabilistic models $P\in \mathcal{P}^{\prime
}$ (i.e. one single $H$ that works as an arbitrage for all
admissible models) ? To pose this question one has to know
\textbf{a priori} which are the admissible models, i.e. we have to
exhibit a subset of probabilities $\mathcal{P}^{\prime }$. \bigskip

\textit{In this paper our aim is to investigate arbitrage conditions and robustness properties
of markets that are described independently of any reference
probability or set of priors.}

\bigskip

We consider a financial
market
described by a discrete time adapted stochastic process $S:=(S_{t})_{t\in I}$%
, $I=\left\{ 0,\ldots ,T\right\} $, defined on $(\Omega ,\mathcal{F},\mathbb{%
F})$, $\mathbb{F}:=(\mathcal{F}_{t})_{t\in I}$, with $T<\infty $
and taking values in $\R^{d}$ (see Section \ref{setup}). Note we are not imposing any
restriction on $S$ so that it may describe generic financial
securities (for examples, stocks and/or options). Differently from
previous approaches in literature, in our setting the measurable
space $(\Omega ,\mathcal{F})$ and the price process $S$ defined on
it are given, and we investigate the properties of martingale
measures for $S$ induced by no arbitrage conditions. The class
$\mathcal{H}$
of admissible trading strategies is formed by all $\mathbb{F}$-predictable $%
d $-dimensional stochastic processes and we denote with
$\mathcal{M}$ the set of all probability measures under which $S$
is an $\mathbb{F}$-martingale
and with $\mathcal{P}$ the set of all probability measures on $(\Omega ,%
\mathcal{F}).$ We introduce therefore a flexible definition of Arbitrage which
allows us to characterize the richness of the set $\mathcal{M}$ in a unified framework.

\paragraph{Arbitrage \emph{de la classe} $\mathcal{S}$.}

We look for a single strategy $H$ in $\mathcal{H}$ which
represents an Arbitrage opportunity in some appropriate sense.
Let:
\begin{equation*}
\mathcal{V}_{H}^{+}=\left\{ \omega \in \Omega \mid \
V_{T}(H)(\omega )>0\right\} ,
\end{equation*}%
where $V_{T}(H)=\sum_{t=1}^{T}H_{t}\cdot (S_{t}-S_{t-1})$ is the
final value of the strategy $H$. It is natural to introduce
several notion of Arbitrage accordingly to the properties of the
set $\mathcal{V}_{H}^{+}$.

\begin{definition}
\label{classeCArb} Let $\mathcal{S}$ be a class of measurable subsets of $%
\Omega $ such that $\varnothing \notin \mathcal{S}$. A trading strategy $%
H\in \mathcal{H}$ is an Arbitrage de la classe $\mathcal{S}$ if

\begin{itemize}
\item[$\bullet $ ] $V_{0}(H)=0$, $V_{T}(H)(\omega )\geq 0\ \forall
\omega \in \Omega $ and $\mathcal{V}_{H}^{+}$ contains a set in
$\mathcal{S}$.
\end{itemize}
\end{definition}

The class $\mathcal{S}$ has the role to translate mathematically
the meaning of a \textquotedblleft true gain\textquotedblright .
When a probability $P$ is given (the \textquotedblleft reference
probability\textquotedblright ) then we agree on representing a
true gain as $P(V_{T}(H)>0)>0$ and therefore
the classical no arbitrage condition can be expressed as: no losses $%
P(V_{T}(H)<0)=0$ implies no true gain $P(V_{T}(H)>0)=0$. In a
similar fashion, when a subset $\mathcal{P}^{\prime }$ of
probability measures is given, one may replace the $P$-a.s.
conditions above with $\mathcal{P}$-q.s conditions, as in
\cite{Nutz2}. However, if we can not or do not want to rely on a
priory assigned set of probability measures, we may well use
another concept: there is a true gain if the set
$\mathcal{V}_{H}^{+}$ contains a set considered
\textit{significant}. This is exactly the role
attributed to the class $\mathcal{S}$ which is the core of Section \ref%
{sectionRobust}. Families of sets, not determined by some probability measures,
have been already used in the context of the first and second
fundamental theorem of
asset pricing respectively by Battig Jarrow \cite{BJ99} and Cassese \cite%
{C08} (see Section \ref{SubAlt} for a more specific comparison).

\bigskip

In order to investigate the properties of the martingale measures
induced by
No Arbitrage conditions of this kind we first study (see Section \ref%
{secGeometric}) the structural properties of the market adopting a
geometrical approach in the spirit of \cite{Pliska} but with
$\Omega $ being a general Polish space, instead of a finite sample
space. In particular, we characterize the class $\mathcal{N}$ of
the $\mathcal{M}$-polar sets i.e. those $B\subset \Omega $ such
that there is no martingale measure that can assign a positive
measure to $B$. In the model independent framework the set
$\mathcal{N}$ is induced by the market since the set of martingale
measure has not to withstand to any additional condition (such as
being equivalent to a certain $P$). Once these polar sets are
identified we explicitly build in Section \ref{main} a process
$H^{\bullet }$ which depends only on the price process $S$ and
satisfies:

\begin{itemize}
\item[$\bullet $ ] $V_{T}(H^{\bullet })(\omega )\geq 0\ \forall
\omega \in \Omega $

\item[$\bullet $ ] $N\subseteq \mathcal{V}_{H^{\bullet }}^{+}$ for every $%
N\in \mathcal{N}$.
\end{itemize}

This strategy is a measurable selection of a set valued process
$\mathbb{H}$, that we baptize \textbf{Universal Arbitrage
Aggregator} since for any $P$, which is not absolutely continuous with respect to $\mathcal{M}$, an
arbitrage
opportunity $H^P$ (in the classical sense) can be found among the values of $\mathbb{%
H}$. All the inefficiencies of the market are captured by the
process $H^{\bullet }$ but, in general, it fails to be $\mathbb{F}$-predictable.
To recover predictability we need to enlarge the natural
filtration of the process $S$ by considering a suitable
technical filtration $\widetilde{\mathbb{F}}:=\{\widetilde{\mathcal{F}}%
_{t}\}_{t\in I}$ which does not affect the set of martingale
measures, i.e. any martingale measure $Q\in \mathcal{M}$ can be
uniquely extended to a martingale measure $\widetilde{Q}$ on the
enlarged filtration.\newline This allows us to prove, in Section
\ref{main}, the main result of the paper:

\begin{theorem}
\label{C-polarEquiv} Let $(\Omega ,\widetilde{\mathcal{F}}_{T},\widetilde{%
\mathbb{F}})$ be the enlarged filtered space as in Section \ref%
{sectionFiltration} and let $\widetilde{\mathcal{H}}$ be the set of $d$%
-dimensional discrete time $\widetilde{\mathbb{F}}$-predictable
stochastic process. Then
\begin{equation*}
\text{No Arbitrage de la classe }\mathcal{S}\text{ in }\widetilde{\mathcal{H}%
}\Longleftrightarrow \mathcal{M}\neq \varnothing \text{ and }\mathcal{N}%
\text{ does not contain sets of }\mathcal{S}
\end{equation*}
\end{theorem}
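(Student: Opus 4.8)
The plan is to establish the two implications separately, using the Universal Arbitrage Aggregator $\mathbb{H}$ and its measurable selection $H^{\bullet}$ (now $\widetilde{\mathbb{F}}$-predictable after the filtration enlargement, by the construction of Section \ref{sectionFiltration}) as the central device. For the implication ``$\Leftarrow$'', suppose $\mathcal{M}\neq\varnothing$ and $\mathcal{N}$ contains no set of $\mathcal{S}$, and assume toward a contradiction that $H\in\widetilde{\mathcal{H}}$ is an Arbitrage de la classe $\mathcal{S}$. Then $V_0(H)=0$, $V_T(H)\geq 0$ everywhere, and $\mathcal{V}_H^+\supseteq A$ for some $A\in\mathcal{S}$. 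Pick any $Q\in\mathcal{M}$; since $\widetilde{Q}$ is still a martingale measure for $S$ on the enlarged filtration and $H$ is $\widetilde{\mathbb{F}}$-predictable, $V_T(H)$ is a $\widetilde{Q}$-martingale increment sum, so $E_{\widetilde{Q}}[V_T(H)]=0$; combined with $V_T(H)\geq 0$ this forces $\widetilde{Q}(V_T(H)>0)=0$, hence $\widetilde{Q}(A)=0$. As $Q\in\mathcal{M}$ was arbitrary, $A$ is an $\mathcal{M}$-polar set, i.e.\ $A\in\mathcal{N}$, contradicting the hypothesis that $\mathcal{N}$ contains no set of $\mathcal{S}$.

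For the implication ``$\Rightarrow$'', I would argue the contrapositive: if either $\mathcal{M}=\varnothing$, or $\mathcal{M}\neq\varnothing$ but $\mathcal{N}$ contains some $N_0\in\mathcal{S}$, then there is an Arbitrage de la classe $\mathcal{S}$ in $\widetilde{\mathcal{H}}$. The key structural input is the defining property of $H^{\bullet}$ recalled in the introduction: $V_T(H^{\bullet})\geq 0$ everywhere and $N\subseteq\mathcal{V}_{H^{\bullet}}^+$ for every $N\in\mathcal{N}$. In the second case this is immediate: $H^{\bullet}$ itself witnesses $\mathcal{V}_{H^{\bullet}}^+\supseteq N_0\in\mathcal{S}$, and $V_0(H^{\bullet})=0$ by construction, so $H^{\bullet}$ is the desired arbitrage. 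In the first case, $\mathcal{M}=\varnothing$ means \emph{every} subset of $\Omega$ is $\mathcal{M}$-polar, so in particular $\Omega\in\mathcal{N}$; hence $\mathcal{V}_{H^{\bullet}}^+\supseteq\Omega$, meaning $V_T(H^{\bullet})>0$ everywhere, and since $\mathcal{S}$ is a nonempty class of nonempty measurable sets, any $A\in\mathcal{S}$ satisfies $A\subseteq\Omega=\mathcal{V}_{H^{\bullet}}^+$, so again $H^{\bullet}$ is an Arbitrage de la classe $\mathcal{S}$.

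The main obstacle, and the place where I expect to lean most heavily on the earlier sections, is justifying that $H^{\bullet}$ has \emph{all} the listed properties \emph{simultaneously} and that it lives in $\widetilde{\mathcal{H}}$: namely that the set-valued process $\mathbb{H}$ admits a measurable (indeed $\widetilde{\mathbb{F}}$-predictable) selection $H^{\bullet}$ with $V_0(H^{\bullet})=0$, $V_T(H^{\bullet})\geq 0$ pointwise, and $\bigcup_{N\in\mathcal{N}}N\subseteq\mathcal{V}_{H^{\bullet}}^+$. This is precisely what the construction of the Universal Arbitrage Aggregator in Section \ref{main} delivers, and once it is invoked the theorem follows from the elementary martingale and polarity arguments above. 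A secondary point requiring care is the claim that passing to $\widetilde{\mathbb{F}}$ does not change $\mathcal{M}$: one uses that every $Q\in\mathcal{M}$ extends uniquely to a martingale measure $\widetilde{Q}$ on $\widetilde{\mathcal{F}}_T$, so that $\mathcal{M}$-polarity computed in either filtration coincides, and $\widetilde{\mathbb{F}}$-predictable strategies are still integrated against genuine martingales — this is exactly the content of the filtration-enlargement results of Section \ref{sectionFiltration}, which I would cite rather than reprove.
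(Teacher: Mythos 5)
Your proposal is correct and follows essentially the same route as the paper: the ``$\Leftarrow$'' direction via $E_{\widetilde{Q}}[V_T(H)]=0$ (using that every $Q\in\mathcal{M}$ extends to a martingale measure on $\widetilde{\mathbb{F}}$, Lemma \ref{invariant}) forcing $\mathcal{V}_H^+\in\mathcal{N}$, and the ``$\Rightarrow$'' direction by exhibiting $H^{\bullet}$, whose properties $V_T(H^{\bullet})\geq 0$ and $\mathcal{V}_{H^{\bullet}}^+=(\Omega_*)^c\supseteq N$ for all $N\in\mathcal{N}$ (Propositions \ref{LemNOpolar} and \ref{obloj}) are exactly what the paper invokes, including the degenerate case $\mathcal{M}=\varnothing$ where $(\Omega_*)^c=\Omega$.
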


In other words, properties of the family $\mathcal{S}$ have a dual
counterpart in terms of polar sets of the pricing functional.

In Section \ref{main} we further provide our version of the
Fundamental Theorem of Asset Pricing: the equivalence between
absence of Arbitrage de la classe $\mathcal{S}$ in
$\widetilde{\mathcal{H}}$ and the existence of martingale measures
$Q\in \mathcal{M}$ with the property that $Q(C)>0$ for all $C\in
\mathcal{S}.$

\paragraph{Model Independent Arbitrage.}

When $\mathcal{S}:=\left\{ \Omega \right\} $ then the Arbitrage de
la classe $\mathcal{S}$ corresponds to the notion of a Model
Independent Arbitrage. As $\Omega $ never belongs to the class of
polar sets $\mathcal{N}$, from Theorem \ref{C-polarEquiv} we
directly obtain the following result.

\begin{theorem}
\label{corollaryMI}%
\begin{equation*}
\text{No Model Independent Arbitrage in }\widetilde{\mathcal{H}}%
\Longleftrightarrow \mathcal{M}\neq \varnothing .
\end{equation*}
\end{theorem}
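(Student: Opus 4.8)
The plan is to deduce Theorem \ref{corollaryMI} as a special case of Theorem \ref{C-polarEquiv} by taking $\mathcal{S}:=\{\Omega\}$. First I would observe that, with this choice, the condition \textquotedblleft $\mathcal{V}_H^+$ contains a set in $\mathcal{S}$\textquotedblright\ in Definition \ref{classeCArb} is simply the requirement $\mathcal{V}_H^+=\Omega$, i.e. $V_T(H)(\omega)>0$ for every $\omega\in\Omega$; hence an Arbitrage de la classe $\{\Omega\}$ is precisely a Model Independent Arbitrage in the usual sense, and \textquotedblleft No Arbitrage de la classe $\{\Omega\}$ in $\widetilde{\mathcal H}$\textquotedblright\ reads \textquotedblleft No Model Independent Arbitrage in $\widetilde{\mathcal H}$\textquotedblright. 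Then I would apply Theorem \ref{C-polarEquiv}, which gives the equivalence of this condition with \textquotedblleft $\mathcal{M}\neq\varnothing$ and $\mathcal{N}$ does not contain sets of $\mathcal{S}$\textquotedblright.

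The only remaining point is to check that the second clause, $\mathcal{N}\cap\mathcal{S}=\varnothing$, is automatically satisfied and therefore vacuous when $\mathcal{S}=\{\Omega\}$. Equivalently, I need to argue that $\Omega\notin\mathcal{N}$, i.e. $\Omega$ is never an $\mathcal{M}$-polar set. There are two cases. If $\mathcal{M}=\varnothing$, then the right-hand side of the claimed equivalence already fails on its first clause, so both sides are false and there is nothing more to prove. If $\mathcal{M}\neq\varnothing$, pick any $Q\in\mathcal{M}$; since $Q$ is a probability measure, $Q(\Omega)=1>0$, so $\Omega$ cannot be $\mathcal{M}$-polar, i.e. $\Omega\notin\mathcal{N}$. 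In either case the clause \textquotedblleft $\mathcal{N}$ does not contain sets of $\mathcal{S}$\textquotedblright\ reduces to \textquotedblleft $\mathcal{M}\neq\varnothing$\textquotedblright\ (it is implied by it and is otherwise irrelevant), so the equivalence of Theorem \ref{C-polarEquiv} collapses to
\begin{equation*}
\text{No Model Independent Arbitrage in }\widetilde{\mathcal H}\Longleftrightarrow \mathcal{M}\neq\varnothing,
\end{equation*}
which is the assertion.

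There is essentially no obstacle here: the entire content has been packed into Theorem \ref{C-polarEquiv}, and this corollary is a matter of unwinding the definition of $\mathcal{S}=\{\Omega\}$ and noting the trivial fact that a probability measure charges the whole space. If one wanted to be fully self-contained one could also remark that $\mathcal{S}=\{\Omega\}$ is an admissible choice in Definition \ref{classeCArb}, since $\varnothing\notin\{\Omega\}$ (as $\Omega\neq\varnothing$ in any market model), so Theorem \ref{C-polarEquiv} indeed applies.
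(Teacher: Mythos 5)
Your proposal is correct and follows exactly the paper's route: specialize Theorem \ref{C-polarEquiv} to $\mathcal{S}=\{\Omega\}$ and observe that $\Omega$ is never $\mathcal{M}$-polar, so the second clause of the dual condition is vacuous. Your explicit case split on $\mathcal{M}=\varnothing$ (where $\mathcal{N}$ degenerates but both sides of the equivalence are false anyway) is a slightly more careful rendering of the paper's one-line remark that ``$\Omega$ never belongs to $\mathcal{N}$'', and is welcome but not a different argument.
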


An analogous result has been obtained in \cite{AB13} when
considering a
single risky asset $S$ as the canonical process on the path space $\Omega=%
\mathbb{R}_+^T$, a possibly uncountable collection of options $%
(\varphi_\alpha)_{\alpha\in A}$ whose prices are known at time
$0$, and when trading is possible through semi-static strategies
(see also \cite{Ho11} for a detailed discussion). Assuming the
existence of an option $\varphi_0$ with a specific payoff,
equivalence in Theorem \ref{corollaryMI} is achieved in the
original measurable space
$(\Omega,\mathcal{F},\mathbb{F},\mathcal{H})$.
In our setup, although we are free to choose a $(d+k)$-dimensional process $%
S $ for modeling a finite number of options ($k$) on possibly
different underlying ($d$), the class $\widetilde{\mathcal{H}}$ of
admissible strategies are dynamic in every $S^i$ for $i=1,\ldots
d+k$. In order to incorporate the case of semi-static strategies
we would need to consider restrictions on
$\widetilde{\mathcal{H}}$ and for this reason the two results are
not directly comparable.

\paragraph{Arbitrage with respect to open sets.}

In the topological context, in order to obtain full support
martingale measures, the suitable choice for $\mathcal{S}$ is the
class of open sets. This selection determines the notion of Arbitrage
with respect to open sets, which we shorten as \textquotedblleft
Open Arbitrage\textquotedblright :

\begin{itemize}
\item[$\bullet $] \textbf{Open Arbitrage} is a trading strategy
$H\in \mathcal{H}$ such that $V_{0}(H)=0,$ $V_{T}(H)(\omega )\geq
0\ \forall \omega \in \Omega $ and $\mathcal{V}_{H}^{+}$ contains
an open set.
\end{itemize}
This concept admits the following dual reformulation (see Section \ref%
{secRob}, Proposition \ref{WS-characterization}).%
\begin{eqnarray}
&&\text{An Open Arbitrage consists in a trading strategy }H\in \mathcal{H}%
\text{ and a non empty }  \notag \\
&&\text{weakly open set }\mathcal{U}\subseteq \mathcal{P}\text{
such that
for all }P\in \mathcal{U}\text{, }V_{T}(H)\geq 0\text{ }P\text{-a.s. and }P(%
\mathcal{V}_{H}^{+})>0\text{.}  \label{open}
\end{eqnarray}

The robust feature of an open arbitrage is therefore evident from
this dual formulation, as a certain strategy $H$ satisfies
(\ref{open}) if it represents an arbitrage in the classical sense
for a whole open set of probabilities. In addition, if $H$ is such
strategy and we disregard any finite subset of probabilities then
$H$ remains an Open Arbitrage. Moreover every weakly open subset
of $\mathcal{U}$ contains a full support
probability $P$ (see Lemma \ref{LemWeakClosure}) under which $H$ is a $P$%
-Arbitrage in the classical sense. Full support martingale
measures can be efficiently used whenever we face model
mis-specification, since they have a well spread support that
captures the features of the sample space of events
without neglecting significantly large parts. In Dolinski and Soner \cite%
{DS14} the equivalence of a local version of NA and the existence
of full support martingale measures has been proved (see Section
2.5, \cite{DS14}) in a continuous time market determined by one
risky asset with proportional transaction costs.


\paragraph{Feasibility and approximating measures.}
In Section \ref{marketfeasible} we answer the question: which are
the markets that are feasible in the sense that the
properties of the market are nice for \textquotedblleft \textbf{most}%
\textquotedblright\ probabilistic models? Clearly this problem
depends on the choice of the feasibility criterion, but to this
aim we do not need to exhibit a priori a subset of probabilities.
On the opposite, given a market (described without reference
probability), the induced set of No Arbitrage models
(probabilities) for that market will determine if the market
itself is feasible or not. What is needed here is a good notion of
\textquotedblleft \textbf{most}" probabilistic models.
\\More precisely given the price process $S$ defined on $(\Omega ,\mathcal{F}),$ we introduce
 the set $\mathcal{P}_{0}$ of probability
measures that exhibit No Arbitrage in the classical sense:
\begin{equation}
\mathcal{P}_{0}=\left\{ P\in \mathcal{P}\mid \text{No Arbitrage
with respect to }P\right\} .  \label{NoA}
\end{equation}%
When%
\begin{equation*}
\overline{\mathcal{P}_{0}}^{\tau }=\mathcal{P}
\end{equation*}%
with respect to some topology $\tau $ the market is feasible in
the sense that any \textquotedblleft bad\textquotedblright\
reference probability can
be approximated by No Arbitrage probability models. We show in Proposition %
\ref{propNAfullSupp} that this property is equivalent to the
existence of a full support martingale measure if we choose $\tau
$ as the weak* topology.

One other contribution of the paper, proved in Section
\ref{marketfeasible}, is the following characterization of
feasible markets and absence of Open Arbitrage in terms of
existence of full support martingale measures. We denote with
$\mathcal{P}^{+}\subset \mathcal{P}$ the set of full support
probability measures.

\begin{theorem}
\label{mainEquivalence} The following are equivalent:

\begin{enumerate}
\item The market is feasible, i.e $\overline{\mathcal{P}_{0}}^{\sigma (%
\mathcal{P},C_{b})}=\mathcal{P}$;

\item There exists $P\in \mathcal{P}_{+}$ s.t. No Arbitrage w.r.to
$P$ (in the classical sense) holds true;

\item $\mathcal{M\cap P}_{+}\neq \varnothing $;

\item No Open Arbitrage holds with respect to admissible strategies $%
\widetilde{\mathcal{H}}$.
\end{enumerate}
\end{theorem}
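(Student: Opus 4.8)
The plan is to link the four statements through the single condition $(3)$, $\mathcal{M}\cap\mathcal{P}_{+}\neq\varnothing$: I would establish $(2)\Leftrightarrow(3)$, $(3)\Leftrightarrow(4)$ and $(1)\Leftrightarrow(3)$. Throughout I use that the enlargement of Section~\ref{sectionFiltration} leaves $\mathcal{M}$ unchanged, so that for a fixed $P$ the three statements ``$P$ admits an equivalent element of $\mathcal{M}$'', ``$P$ is arbitrage-free in $\mathcal{H}$'', ``$P$ is arbitrage-free in $\widetilde{\mathcal{H}}$'' are equivalent, and I may freely switch between strategy classes. The equivalence $(2)\Leftrightarrow(3)$ is then just the first Fundamental Theorem for a fixed reference probability: $(3)\Rightarrow(2)$ is trivial (if $Q^{\ast}\in\mathcal{M}\cap\mathcal{P}_{+}$ take $P:=Q^{\ast}$, a full support probability that is its own equivalent martingale measure, hence arbitrage-free), while $(2)\Rightarrow(3)$ is the Dalang--Morton--Willinger theorem \cite{DMW90}: from $P\in\mathcal{P}_{+}$ arbitrage-free one obtains $Q\sim P$ with $S$ a $Q$-martingale, i.e.\ $Q\in\mathcal{M}$, and since $Q$ and $P$ have the same null sets and $P$ charges every nonempty open set, so does $Q$, whence $Q\in\mathcal{M}\cap\mathcal{P}_{+}$.

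For $(3)\Leftrightarrow(4)$ I would apply Theorem~\ref{C-polarEquiv}, together with the version of the Fundamental Theorem of Asset Pricing stated immediately after it (absence of Arbitrage de la classe $\mathcal{S}$ in $\widetilde{\mathcal{H}}$ is equivalent to the existence of $Q\in\mathcal{M}$ with $Q(C)>0$ for all $C\in\mathcal{S}$), to the class $\mathcal{S}$ of all \emph{nonempty open} subsets of $\Omega$; this is an admissible choice since $\varnothing\notin\mathcal{S}$ and, $\Omega$ being Polish, open sets are Borel. The cited result then reads: No Open Arbitrage in $\widetilde{\mathcal{H}}$ holds iff there is $Q\in\mathcal{M}$ with $Q(U)>0$ for every nonempty open $U$, which is exactly $Q\in\mathcal{M}\cap\mathcal{P}_{+}$. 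In polar-set terms, $\mathcal{N}$ contains no nonempty open set precisely when some martingale measure charges every nonempty open set, and separability of $\Omega$ lets one pass from a countable base of $\Omega$ to full support.

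The equivalence $(1)\Leftrightarrow(3)$ is exactly Proposition~\ref{propNAfullSupp}, which I would invoke directly; for completeness I indicate its mechanism. To get $(1)\Rightarrow(4)$ (hence $(1)\Rightarrow(3)$ via the block above) I argue by contraposition: an Open Arbitrage $H\in\widetilde{\mathcal{H}}$ satisfies $V_{0}(H)=0$, $V_{T}(H)\geq0$ on all of $\Omega$ and $\mathcal{V}_{H}^{+}\supseteq U$ for some nonempty open $U$, so $H$ is a classical arbitrage with respect to every $P$ in $\mathcal{U}:=\{P\in\mathcal{P}\mid P(U)>0\}$. Since $P\mapsto P(U)$ is $\sigma(\mathcal{P},C_{b})$-lower semicontinuous for $U$ open, $\mathcal{U}$ is nonempty and weakly open, and it is disjoint from $\mathcal{P}_{0}$; hence $\overline{\mathcal{P}_{0}}^{\sigma(\mathcal{P},C_{b})}\subseteq\mathcal{P}\setminus\mathcal{U}\neq\mathcal{P}$, contradicting $(1)$. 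For the converse $(3)\Rightarrow(1)$, fix $Q^{\ast}\in\mathcal{M}\cap\mathcal{P}_{+}$; every $P\sim Q^{\ast}$ lies in $\mathcal{P}_{0}$ (it has $Q^{\ast}$ as an equivalent martingale measure), so it is enough to show $\{P\in\mathcal{P}\mid P\sim Q^{\ast}\}$ is $\sigma(\mathcal{P},C_{b})$-dense in $\mathcal{P}$. As finitely supported probabilities are $\sigma(\mathcal{P},C_{b})$-dense in $\mathcal{P}$ when $\Omega$ is Polish, it suffices to approximate each $\sum_{i}a_{i}\delta_{\omega_{i}}$, and $\sum_{i}a_{i}\nu_{i,n}$ does so, where $\nu_{i,n}:=(1-\varepsilon_{n})\,Q^{\ast}(\,\cdot\mid B(\omega_{i},1/n))+\varepsilon_{n}Q^{\ast}$ with $\varepsilon_{n}\downarrow0$: each $\nu_{i,n}$ is equivalent to $Q^{\ast}$ because $Q^{\ast}$ has full support and so charges every ball, and $\nu_{i,n}\to\delta_{\omega_{i}}$ in $\sigma(\mathcal{P},C_{b})$.

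I expect the main obstacle to be precisely this last density step inside $(3)\Rightarrow(1)$: approximating an \emph{arbitrary} probability on $\Omega$ by arbitrage-free ones requires the explicit regularization by localized conditional laws of $Q^{\ast}$, and checking that each approximant is genuinely equivalent to $Q^{\ast}$ (and therefore arbitrage-free) is exactly where the \emph{full} support of $Q^{\ast}$ is used --- which is why the right hypothesis is the existence of a full support martingale measure rather than of some martingale measure. Everything else is comparatively soft: the Dalang--Morton--Willinger theorem for $(2)\Leftrightarrow(3)$, Theorem~\ref{C-polarEquiv} applied to the open sets for $(3)\Leftrightarrow(4)$, and the elementary $\sigma(\mathcal{P},C_{b})$-lower semicontinuity of $P\mapsto P(U)$ for $(1)\Rightarrow(4)$.
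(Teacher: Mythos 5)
Your proposal is correct and follows essentially the same route as the paper: the equivalence $1\Leftrightarrow 2\Leftrightarrow 3$ is exactly Proposition \ref{propNAfullSupp} (whose mechanism you re-derive via Dalang--Morton--Willinger, i.e.\ Proposition \ref{FullSuppEquiv}, and the density Lemma \ref{LemWeakClosure}, with only a cosmetically different approximating family $(1-\varepsilon_n)Q^{\ast}(\cdot\mid B(\omega_i,1/n))+\varepsilon_n Q^{\ast}$), and $3\Leftrightarrow 4$ is Corollary \ref{CorNo}, i.e.\ Theorem \ref{C-polarEquiv} applied to the open sets together with the countable-base aggregation of martingale measures. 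The extra contrapositive sketch of $1\Rightarrow 4$ via the weakly open set $\{P\mid P(U)>0\}$ is a harmless variant (to conclude $\mathcal{U}\cap\mathcal{P}_0=\varnothing$ one should note, via Lemma \ref{invariant}, that an everywhere-nonnegative $\widetilde{\mathcal{H}}$-strategy with $Q(U)>0$ for some $Q\in\mathcal{M}_e(P)$ is impossible), but it is not needed since you invoke Proposition \ref{propNAfullSupp} directly.
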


Riedel \cite{Riedel} already pointed out the relevance of the
concept of full support martingale measures in a probability-free
set up. Indeed in a one period market model and under the
assumption that the price process is continuous with respect to
the state variable, he showed that the absence of a one point
arbitrage (non-negative payoff, with strict positivity in at least
one point) is equivalent to the existence of a full support
martingale measure. As shown in Section \ref{remarkCont}, this
equivalence is no longer true in a multiperiod model (or in a
single period model with non trivial initial sigma algebra), even
for price processes continuous in $\omega $. In
this paper we consider a multi-assets multi-period model without $\omega $%
-continuity assumptions on the price processes and we develop the
concept of open arbitrage, as well as its dual reformulation, that
allows for the equivalence stated in the above theorem.

Finally, we present a number of simple examples that point out:
the
differences between single period and multi-period models (examples \ref%
{multi}, \ref{multi1}, \ref{multi2}); the geometric approach to
absence of arbitrage and existence of martingale measures (Section
\ref{examples}); the
need in the multi-period setting of the disintegration of the atoms (example %
\ref{ex3d}); the need of the one period anticipation of some polar
sets (example \ref{svuotamento}). The consequences of our version of the FTAP for the robust
formulation of the superhedging duality will be analyzed in a
forthcoming paper.

\section{Financial Markets}

\label{setup}

We will assume that $(\Omega ,d)$ is a Polish space and $\mathcal{F}=%
\mathcal{B}(\Omega )$ is the Borel sigma algebra induced by the
metric $d$. The requirement that $\Omega $ is Polish is used in
Section \ref{sectionNull} to guarantee the existence of a proper
regular conditional probability, see Theorem \ref{SV}. We fix a
finite time horizon $T\geq 1$, a finite set of time indices
$I:=\left\{ 0,\ldots ,T\right\} $ and we set: $I_{1}:=\left\{
1,\ldots ,T\right\} $. Let $\mathbb{F}:=\{\mathcal{F}_{t}\}_{t\in
I}$ be a filtration with $\mathcal{F}_{0}=\left\{ \varnothing
,\Omega \right\} $\ and
$\mathcal{F}_{T}\subseteq \mathcal{F}$. We denote with $\mathcal{L}(\Omega ,%
\mathcal{F}_{t};\mathbb{R}^{d})$ the set of
$\mathcal{F}_{t}$-measurable
random variables $X:\Omega \rightarrow \mathbb{R}^{d}$ and with $\mathcal{L}%
(\Omega ,\mathbb{F};\mathbb{R}^{d})$ the set of adapted processes $%
X=(X_{t})_{t\in I}$ with $X_{t}\in \mathcal{L}(\Omega ,\mathcal{F}_{t};%
\mathbb{R}^{d})$.

The market consists of one non-risky asset $S_{t}^{0}=1$ for all
$t\in I,$ constantly equal to $1$, and $d\geq 1$ risky assets
$S^{j}=(S_{t}^{j})_{t\in I}$ , $j=1,\ldots ,d$, that are
real-valued adapted stochastic processes.
Let $S=[S^{1},\ldots ,S^{d}]\in \mathcal{L}(\Omega ,\mathbb{F};\mathbb{R}%
^{d})$ be the $d$-dimensional vector of the (discounted) price processes.%
\newline
In this paper we focus on arbitrage conditions, and therefore
without loss of generality we will restrict our attention to
self-financing trading strategies of zero initial cost. Therefore,
we may assume that a trading strategy $H=(H_{t})_{t\in I_{1}}$ is
a $\mathbb{R}^{d}$-valued predictable
stochastic process: $H=[H^{1},\ldots ,H^{d}],$ with $H_{t}\in \mathcal{L}%
(\Omega ,\mathcal{F}_{t-1};\mathbb{R}^{d}),$ and we denote with
$\mathcal{H}$
the class of all trading strategies. The (discounted) value process $%
V(H)=(V_{t}(H))_{t\in I}$ is defined by:
\begin{equation*}
V_{0}(H):=0,\quad V_{t}(H):=\sum_{i=1}^{t}H_{i}\cdot (S_{i}-S_{i-1}),\quad
t\geq 1.
\end{equation*}%
A (discrete time) financial market is therefore assigned, without
any
reference probability measure, by the quadruple [$(\Omega ,d);(\mathcal{B}%
(\Omega ),\mathbb{F});S;\mathcal{H}$%
] satisfying the previous conditions.

\begin{notation}
For $\mathcal{F}$-measurable random variables $X$ and $Y$, we
write $X>Y$ (resp. $X\geq Y,$ $X=Y)$ if $X(\omega )>Y(\omega )$
for all $\omega \in \Omega $ (resp. $X(\omega )\geq Y(\omega ),$
$X(\omega )=Y(\omega )$ for all $\omega \in \Omega ).$
\end{notation}

\subsection{Probability and martingale measures}

Let $\mathcal{P}:=\mathcal{P}(\Omega )$ be the set of all probabilities on $%
(\Omega ,\mathcal{F})$ and $C_{b}:=C_{b}(\Omega )$ the space of
continuous
and bounded functions on $\Omega $. Except when explicitly stated, we endow $%
\mathcal{P}$ with the weak$^{\ast }$ topology $\sigma
(\mathcal{P},C_{b}),$ so that $(\mathcal{P},\sigma
(\mathcal{P},C_{b}))$ is a Polish space (see \cite{Aliprantis}
Chapter 15 for further details). The convergence of $P_{n}$
to $P$ in the topology $\sigma (\mathcal{P},C_{b})$ will be denoted by $P_{n}%
\overset{w}{\rightarrow }P$ and the $\sigma
(\mathcal{P},C_{b})$-closure of
a set $\mathcal{Q}\subseteq \mathcal{P}$ will be denoted with $\overline{%
\mathcal{Q}}.$

We define the \emph{support} of an element $P\in \mathcal{P}$ as
\begin{equation*}
supp(P)=\bigcap \{C\in \mathcal{C}\mid P(C)=1\}
\end{equation*}%
where $\mathcal{C}$ are the closed sets in $(\Omega ,d)$. Under
our assumptions the support is given by
\begin{equation*}
supp(P)=\{\omega \in \Omega \mid P(B_{\varepsilon }(\omega
))>0\text{ for all }\varepsilon >0\},
\end{equation*}%
where $B_{\varepsilon }(\omega )$ is the open ball with radius
$\varepsilon $ centered in $\omega $.

\begin{definition}
We say that $P\in \mathcal{P}$ has full support if $supp(P)=\Omega
$ and we denote with
\begin{equation*}
\mathcal{P}_{+}:=\left\{ P\in \mathcal{P}\mid supp(P)=\Omega
\right\}
\end{equation*}%
the set of all probability measures having full support.
\end{definition}

Observe that $P\in \mathcal{P}_{+}$ if and only if $P(A)>0$ for
every open set $A$. Full support measures are therefore important,
from a topological point of view, since they assign positive
probability to all open sets.

\begin{definition}
The set of $\mathbb{F}$-martingale measures\ is%
\begin{equation}
\mathcal{M}(\mathbb{F})=\left\{ Q\in \mathcal{P}\mid S\text{ is a }(Q,%
\mathbb{F})\text{-martingale}\right\} .  \label{martingale}
\end{equation}%
and we set: $\mathcal{M}:=\mathcal{M}(\mathbb{F}),$ when the
filtration is not ambiguous, and
\begin{equation*}
\mathcal{M}_{+}=\mathcal{M}\cap \mathcal{P}_{+}.
\end{equation*}
\end{definition}

\begin{definition}
Let $P\in \mathcal{P}$ and $\mathcal{G}\subseteq \mathcal{F}$ be a
sub-sigma algebra of $\mathcal{F}$. The generalized conditional
expectation of a non negative $X\in \mathcal{L}(\Omega
,\mathcal{F},\mathbb{R})$ is defined by:
\begin{equation*}
E_{P}[X\mid \mathcal{G}]:=\lim_{n\rightarrow +\infty
}E_{P}[X\wedge n\mid \mathcal{G}],
\end{equation*}%
and for $X\in \mathcal{L}(\Omega ,\mathcal{F},\mathbb{R})$ we set $%
E_{P}[X\mid \mathcal{G}]:=E_{P}[X^{+}\mid
\mathcal{G}]-E_{P}[X^{-}\mid \mathcal{G}],$ where we adopt the
convention $\infty -\infty =-\infty $. All basic properties of the
conditional expectation still hold true (see for example
\cite{FKV09}). In particular if $Q\in \mathcal{M}$ and $H\in
\mathcal{H}$ then $E_{Q}[H_{t}\cdot (S_{t}-S_{t-1})\mid \mathcal{F}%
_{t-1}]=H_{t}\cdot E_{Q}[(S_{t}-S_{t-1})\mid \mathcal{F}_{t-1}]=0$
$Q$-a.s., so that $E_{Q}[V_{T}(H)]=0$ $Q$-a.s.
\end{definition}

\section{Arbitrage de la classe $\mathcal{S}$}

\label{sectionRobust}

Let $H\in \mathcal{H}$ \ and recall that
$\mathcal{V}_{H}^{+}:=\left\{ \omega \in \Omega \mid \
V_{T}(H)(\omega )>0\right\} $ and that $V_{0}(H)=0$.

\begin{definition}
Let $P\in \mathcal{P}.$ A \textbf{$P$-Classical Arbitrage} is a
trading strategy $H\in \mathcal{H}$ such that:

\begin{itemize}
\item[$\bullet $ ] $V_{T}(H)\geq 0$ $P-$a.s.$,$ and $P(\mathcal{V}%
_{H}^{+})>0 $
\end{itemize}

We denote with $NA(P)$ the absence of $P$-Classical Arbitrage.
\end{definition}

Recall the definition of Arbitrage de la classe $\mathcal{S}$
stated in the Introduction.

\begin{definition}
\label{def} Some examples of Arbitrage de la classe $\mathcal{S}$:

\begin{enumerate}
\item $H$ is a $1p$-Arbitrage when $\mathcal{S}=\left\{ C\in
\mathcal{F}\mid
C\neq \varnothing \right\} $. This is the weakest notion of arbitrage since $%
\mathcal{V}_{H}^{+}$ might reduce to a single point. The
$1p$-Arbitrage corresponds to the definition given in
\cite{Riedel}. This can be easily generalized to the
following notion of $n$ point Arbitrage: $H$ is an $np$-Arbitrage
when
\begin{equation*}
\mathcal{S}=\left\{C\in \mathcal{F} \mid C\text{ has at least
}n\text{ elements}\right\} ,
\end{equation*}%
and might be significant for $\Omega $ (at most) countable.

\item $H$ is an Open Arbitrage when $\mathcal{S}=\left\{ C\in \mathcal{B}%
(\Omega )\mid C\text{ open non-empty}\right\} $.

\item \label{4defS}$H$ is a $\mathcal{P}^{\prime }$-q.s. Arbitrage when $%
\mathcal{S}=\left\{ C\in \mathcal{F}\mid P(C)>0\text{ for some
}P\in \mathcal{P}^{\prime }\right\} ,$ for a fixed family
$\mathcal{P}^{\prime
}\subseteq \mathcal{P}$. Notice that $\mathcal{S}=(\mathcal{N}(\mathcal{P}%
^{\prime }))^{c}$, the complements of the polar sets of
$\mathcal{P}^{\prime }
$. Then there are No $\mathcal{P}^{\prime }$-q.s. Arbitrage if:%
\begin{equation*}
H\in \mathcal{H}\text{ such that }V_{T}(H)(\omega )\geq 0\ \forall
\omega \in \Omega \Rightarrow V_{T}(H)=0\text{
}\mathcal{P}^{\prime }\text{-q.s.}
\end{equation*}%
This definition is similar to the No Arbitrage condition in
\cite{Nutz2}, the only difference being that here we require
$V_{T}(H)(\omega )\geq 0\ \forall \omega \in \Omega $, while in
the cited reference it is only
required $V_{T}(H)\geq 0$ $\mathcal{P}^{\prime }$-q.s.. Hence No $\mathcal{P}%
^{\prime }$-q.s. Arbitrage is a condition weaker than No Arbitrage in \cite%
{Nutz2}.

\item $H$ is a $P$-a.s. Arbitrage when $\mathcal{S}=\left\{ C\in
\mathcal{F}\mid P(C)>0\right\}$ for a
fixed $P\in \mathcal{P}$. As in the previous example the
No $P$-a.s. Arbitrage is a weaker condition than the No
$P$-Classical Arbitrage condition, the only difference being that
here we require $V_{T}(H)(\omega )\geq 0\ \forall
\omega \in \Omega $, while in the classical definition it is only required $%
V_{T}(H)\geq 0$ $P$-a.s.

\item $H$ is a Model Independent Arbitrage when
$\mathcal{S}=\left\{ \Omega \right\} ,$ in the spirit of
\cite{AB13,DH07,CO11}.

\item $H$ is an $\varepsilon $-Arbitrage when $\mathcal{S}=\left\{
C_{\varepsilon }(\omega )\mid \omega \in \Omega \right\} ,$ where $%
\varepsilon >0$ is fixed and $C_{\varepsilon }(\omega )$ is the
closed ball in $(\Omega ,d)$ of radius $\varepsilon $ and centered
in $\omega .$
\end{enumerate}
\end{definition}

Obviously, for any class $\mathcal{S}$,
\begin{equation}
\text{No $1p$-Arbitrage}\Rightarrow \text{No Arbitrage de la classe }%
\mathcal{S}\Rightarrow \text{No Model Ind. Arbitrage}  \label{MIA}
\end{equation}%
and these notions depend only on the properties of the financial
market and are not necessarily related to any probabilistic
models.

\begin{remark}
The No Arbitrage concepts defined above, as well as the possible
generalization of No Free Lunch de la classe $\mathcal{S}$, can be
considered also in more general, continuous time, financial market
models. We choose to present our theory in the discrete time
framework, as the subsequent results in the next sections will
rely crucially on the discrete time setting.
\end{remark}

\begin{example} The flexibility of our approach relies on the arbitrary
choice of the class $\mathcal{S}$. Consider $\Omega=C^0([0,T];\R)$
which is a Polish space once endowed with the supremum norm
$\|\cdot\|_{\infty}$. We may consider two classes
\begin{equation*} \mathcal{S}^{\infty}  =  \{\text{ open balls in } \|\cdot\|_{\infty} \}\quad\text{and}
\quad \mathcal{S}^{1}  =  \{\text{ open balls in } \|\cdot\|_{1} \}
\end{equation*}
where $\|\omega\|_{1}=\int_0^T |\omega(t)|dt$. Notice that since
the integral operator $\int_0^T |\cdot|dt:C^0([0,T];\R)\rightarrow
\R$ is $\|\cdot\|_{\infty}$-continuous every open ball in
$\|\cdot\|_{1}$ is also open in $\|\cdot\|_{\infty}$. Hence every
Arbitrage de la classe $\mathcal{S}^1$ is also an Arbitrage de la
classe $\mathcal{S}^{\infty}$ but not the converse.
\\For instance consider a market described by an underlying process $S^1$ and a digital option $S^2$, where trading is allowed only in a set of finite times $\{0,1,...,T-1\}$. Define $S^1_0(\omega)=s_0$ for every $\omega\in\Omega$ and
$S^1_t(\omega)=\omega(t)$ for the underlying and
$S^2_t(\omega)=\mathbf{1}_B(\omega)\mathbf{1}_{T}(t)$ for the
option where $B:=\{\omega\mid
S^1_t(\omega)\in(s_0-\varepsilon,s_0+\varepsilon)\ \forall
t\in[0,T]\}$. A long position in the option at time $T-1$ is an
arbitrage de la classe $\mathcal{S}^{\infty}$ even though there
does not exist any arbitrage de la classe $\mathcal{S}^{1}$.
\end{example}

\subsection{Defragmentation}

When the reference probability $P\in \mathcal{P}$ is fixed, the
market admits a $P$-Classical Arbitrage if and only if there
exists a time $t\in
\{1,\ldots ,T\}$ and a random vector $\eta \in L^{0}(\Omega ,\mathcal{F}%
_{t-1},P;\mathbb{R}^{d})$ such that $\eta \cdot (S_{t}-S_{t-1})\geq 0$ $P$%
-a.s. and $P(\eta \cdot (S_{t}-S_{t-1})>0)>0$ (see \cite{DMW90} or \cite%
{FoSch}, Proposition 5.11). In our context the existence of an
Arbitrage de la classe $\mathcal{S}$, over a certain time interval
$[0,T]$, does not necessarily imply the existence of a single time
step where the arbitrage is realized on a set in $\mathcal{S}$. It
might happen, instead, that the agent needs to implement a
strategy over multiple time steps to achieve an arbitrage de la
classe $\mathcal{S}$. The following example shows exactly a simple
case in which this phenomenon occurs. Recall that
$\mathcal{L}(\Omega ,\mathcal{F};\R^{d})$ is the set of
$\R^{d}$-valued $\mathcal{F}$-measurable random variables on
$\Omega $.

\begin{example}
\label{multi} Consider a 2 periods market model composed by two
risky assets $S^{1},S^{2}$ on
$(\mathbb{R},\mathcal{B}(\mathbb{R}))$ which are described by the
following trajectories
\begin{equation*}
S^{1}\;:\;%
\begin{array}{llll}
& 3\rightarrow & 3 &\quad \omega \in A_{1} \\
\quad \nearrow &  & 5 &\quad \omega \in A_{2} \\
2\rightarrow & 2%
\begin{array}{l}
\nearrow \\
\searrow%
\end{array}
&  &  \\
\quad \searrow &  & 1 &\quad \omega \in A_{3} \\
& 1\rightarrow & 1 &\quad \omega \in A_{4} \\
&  &  &
\end{array}%
\qquad S^{2}\;:\;%
\begin{array}{llll}
& 7\rightarrow & 7 &\quad \omega \in A_{1} \\
\quad \nearrow &  & 3 &\quad \omega \in A_{2} \\
2\rightarrow & 2%
\begin{array}{l}
\nearrow \\
\searrow%
\end{array}
&  &  \\
\quad \searrow &  & 1 & \quad\omega \in A_{3} \\
& 1\rightarrow & 1 & \quad\omega \in A_{4} \\
&  &  &
\end{array}%
\end{equation*}%
Consider $H_{1}=(-1,+1)$ and $H_{2}=(\mathbf{1}_{A_{2}\cup A_{3}},-\mathbf{1}%
_{A_{2}\cup A_{3}})$. Then $H_{1}\cdot
(S_{1}-S_{0})=4\mathbf{1}_{A_{1}}$
and $H_{2}\cdot (S_{2}-S_{1})=2\mathbf{1}_{A_{2}}$. Choosing $A_{1}=\mathbb{Q%
}\cap (0,1),A_{2}=(\mathbb{R}\setminus \mathbb{Q})\cap (0,1)$ and $%
A_{3}=[1,+\infty ),A_{4}=(-\infty ,0]$ we observe that an Open
Arbitrage can
be obtained only by a two step strategy, while in each step we have \emph{%
only} $1p$-Arbitrages. \newline
In general the multi step strategy realizes the Arbitrage de la classe $%
\mathcal{S}$ at time $T$ even though it does not yield necessarily
a
positive gain at each time: i.e. there might exist a $t<T$ such that $%
\{V_{t}(H)<0\}\neq \varnothing $. This is the case of Example \ref%
{svuotamento}.
\end{example}

In the remaining of this section $\Delta
S_t=[S_{t}^1-S_{t-1}^1,\ldots,S_{t}^d-S_{t-1}^d]$.

\begin{lemma}
\label{easy}The strategy $H\in \mathcal{H}$ is a $1p$-Arbitrage if
and only
if there exists a time $t\in I_{1}$, an $\alpha \in \mathcal{L}(\Omega ,%
\mathcal{F}_{t-1};\R^{d})$ and a non empty $A\in \mathcal{F}_{t}$
such that
\begin{equation}
\begin{array}{cc}
\alpha (\omega )\cdot \Delta S_{t}(\omega )\geq 0 & \quad\forall
\,\omega \in
\Omega \\
\alpha (\omega )\cdot \Delta S_{t}(\omega )>0 &\quad \text{ on }A.%
\end{array}
\label{eqdef}
\end{equation}
\end{lemma}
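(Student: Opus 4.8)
Recall that, for the class $\mathcal{S}$ defining a $1p$-arbitrage, the requirement ``$\mathcal{V}_{H}^{+}$ contains a set of $\mathcal{S}$'' simply means $\mathcal{V}_{H}^{+}\neq\varnothing$; so $H\in\mathcal{H}$ is a $1p$-arbitrage iff $V_T(H)\geq 0$ on $\Omega$ and $V_T(H)(\omega_0)>0$ for some $\omega_0\in\Omega$. The implication $(\Leftarrow)$ is immediate: given $t,\alpha,A$ as in \eqref{eqdef}, the strategy $H$ with $H_t:=\alpha$ and $H_s:=0$ for $s\neq t$ lies in $\mathcal{H}$, satisfies $V_T(H)=\alpha\cdot\Delta S_t\geq 0$ on $\Omega$ and $V_T(H)>0$ on the nonempty set $A$, hence is a $1p$-arbitrage.

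For $(\Rightarrow)$, assume $H$ is a $1p$-arbitrage. The plan is to single out one deterministic time step $t\in I_1$ at which the running value $V_t(H)$ is already $\geq 0$ on all of $\Omega$, and then to replace the position $H_t$ by its restriction $H_t\mathbf{1}_B$ to a suitable $\mathcal{F}_{t-1}$-measurable set $B$ on which the modified increment $H_t\mathbf{1}_B\cdot\Delta S_t$, being $V_t(H)-V_{t-1}(H)$ on $B$ and $0$ off $B$, is forced to be nonnegative; then $\alpha:=H_t\mathbf{1}_B\in\mathcal{L}(\Omega,\mathcal{F}_{t-1};\R^d)$ and $A:=\{\alpha\cdot\Delta S_t>0\}$ will do. I would distinguish two cases. \emph{Case 1:} if $V_s(H)(\omega')<0$ for some $s\in I_1$ and some $\omega'\in\Omega$, let $\tau$ be the largest such $s$; since $V_T(H)\geq 0$ we have $\tau\leq T-1$, and by maximality $V_u(H)\geq 0$ on $\Omega$ for every $u>\tau$, in particular for $u=\tau+1$. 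Taking $t:=\tau+1$ and $B:=\{V_\tau(H)<0\}$, which is nonempty and lies in $\mathcal{F}_\tau=\mathcal{F}_{t-1}$, we get $H_t\cdot\Delta S_t=V_{\tau+1}(H)-V_\tau(H)>0$ on $B$ while the modified increment vanishes off $B$, so \eqref{eqdef} holds with $A=B$. \emph{Case 2:} if instead $V_s(H)\geq 0$ on $\Omega$ for all $s$, put $t:=\min\{u:\ V_u(H)(\omega_0)>0\}$, which is $\geq 1$ because $V_0(H)=0$, so $V_{t-1}(H)(\omega_0)=0$; with $B:=\{V_{t-1}(H)=0\}\in\mathcal{F}_{t-1}$ the modified increment equals $V_t(H)-V_{t-1}(H)=V_t(H)\geq 0$ on $B$ and vanishes elsewhere, hence is $\geq 0$ on $\Omega$ and $>0$ at $\omega_0$, so \eqref{eqdef} holds with $A=\{V_{t-1}(H)=0\}\cap\{V_t(H)>0\}\ni\omega_0$.

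The point I expect to be the crux is the realization that one cannot simply read a one-step arbitrage off an increment $H_s\cdot\Delta S_s$ of an arbitrary $1p$-arbitrage: a path-wise arbitrage over $[0,T]$ may well pass through intermediate states with $V_s(H)<0$ (cf. Examples \ref{multi} and \ref{svuotamento}), so no single increment need be signed. The remedy is the deterministic choice of $\tau$ as the \emph{last} time the value can be negative, which is exactly what forces $V_{\tau+1}(H)\geq 0$ on $\Omega$, combined with the predictable localisation $H_t\mathbf{1}_B$. The remaining verifications, namely that each $V_s(H)$ is $\mathcal{F}_s$-measurable (hence $B\in\mathcal{F}_{t-1}$, $\alpha$ is $\mathcal{F}_{t-1}$-measurable and $A\in\mathcal{F}_t$), are routine consequences of the predictability of $H$ and the adaptedness of $S$.
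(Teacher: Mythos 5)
Your proposal is correct, and it follows essentially the same route as the paper's proof: reduce to a single time step by truncating $H_t$ with the indicator of an $\mathcal{F}_{t-1}$-measurable set determined by the sign of the running value ($\{V_{t-1}(H)<0\}$, respectively $\{V_{t-1}(H)=0\}$). The only difference is cosmetic — the paper selects $\overline{t}$ as the \emph{first} time at which $V_{\overline{t}}(H)\geq 0$ on $\Omega$ with strict positivity somewhere and then splits into the two sub-cases, whereas you split first (value ever negative or not) and use the last negative time plus one, resp. the first strict-gain time at a fixed $\omega_0$; both verifications of \eqref{eqdef} and of the measurability claims are sound.
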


\begin{proof}
($\Rightarrow $) Let $H\in \mathcal{H}$ be a $1p$-Arbitrage. Set
\begin{equation*}
\overline{t}=\min \{t\in \{1,\ldots ,T\}\mid V_{t}(H)\geq 0\text{ with }%
V_{t}(H)(\omega )>0\text{ for some }\omega \in \Omega \}.
\end{equation*}%
If $\overline{t}=1$, $\alpha =H_{1}$ satisfies the requirements. If $%
\overline{t}>1$, by definition, $\{V_{\overline{t}-1}(H)<0\}\neq
\varnothing $ or $\{V_{\overline{t}-1}(H)=0\}=\Omega $ . In the
first case, for $\alpha
=H_{\overline{t}}\mathbf{1}_{\{V_{\overline{t}-1}(H)<0\}}$ we have
$\alpha
\cdot \Delta S_{\overline{t}}\geq 0$ with strict inequality on $\{V_{%
\overline{t}-1}(H)<0\}$. In the latter case $\alpha
=H_{\overline{t}}$ satisfies the requirements. \newline
($\Leftarrow $) Take $\alpha \in \mathcal{L}(\Omega ,\mathcal{F}_{t-1};\R%
^{d})$ as by assumption and define $H\in \mathcal{H}$ by $H_{s}=0$
for every $s\neq t$ and $H_{t}=\alpha $. Hence $V_{T}(H)(\omega
)=V_{t}(H)(\omega )$ for every $\omega \in \Omega $ so that
$V_{T}(H)\geq 0$ and $\{\omega \in \Omega \mid V_{T}(H)(\omega
)>0\}=\{\omega \in \Omega \mid \alpha \cdot \Delta S_{t}(\omega
)>0\}$ which ends the proof.
\end{proof}

\begin{remark}
Notice that only the implication $(\Leftarrow )$ of the previous
Lemma holds true for Open Arbitrage. This means that there exists
an Open Arbitrage if
we can find a time $t\in I_{1}$, an $\alpha \in \mathcal{L}(\Omega ,\mathcal{%
F}_{t-1};\R^{d})$ and a set $A\in \mathcal{F}_{t}$ containing an
open set
such that (\ref{eqdef}) holds true. Similarly for Arbitrage de la classe $%
\mathcal{S}$. On the other hand the converse is false in general as
shown by Example \ref{multi}.
\end{remark}

The following Lemma provides a full characterization of Arbitrages
de la classe $\mathcal{S}$ by the mean of a multi-step
decomposition of the strategy.

\begin{lemma}[Defragmentation]
\label{defragmentation}The strategy $H\in \mathcal{H}$ is an
Arbitrage de la classe $\mathcal{S}$ if and only if there exists:

\begin{itemize}
\item[$\bullet $ ] a finite family $\{U_{t}\}_{t\in I}$ with $U_{t}\in \mathcal{F}_{t}$, $%
U_{t}\cap U_{s}=\varnothing $ for every $t\neq s$ and
$\bigcup_{t\in I}U_{t}$ contains a set in $\mathcal{S}$;

\item[$\bullet $ ] a strategy $\widehat{H}\in \mathcal{H}$ such that $V_{T}(\widehat{H}%
)\geq 0$ on $\Omega ,$ and $\widehat{H}_{t}\cdot \Delta S_{t}>0$
on $U_{t}$ for any $U_{t}\neq \varnothing .$
\end{itemize}
\end{lemma}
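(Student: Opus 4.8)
The plan is to prove the easy direction first, then extract the partition from an arbitrary Arbitrage de la classe $\mathcal{S}$ by a backward induction on the time steps, using Lemma \ref{easy} at each step.

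For the ($\Leftarrow$) direction, suppose we are given the disjoint family $\{U_t\}_{t\in I}$ and the strategy $\widehat H$. Then $V_T(\widehat H)\ge 0$ on all of $\Omega$ by hypothesis, so it remains only to check that $\mathcal V_{\widehat H}^{+}$ contains a set in $\mathcal S$. I would argue that $\bigcup_{t\in I}U_t\subseteq \mathcal V_{\widehat H}^{+}$: for $t=0$ the set $U_0$ must be empty (since $V_0(\widehat H)=0$), so there is nothing to check there; for $t\ge 1$, if $\omega\in U_t$ then, since the $U_s$ are pairwise disjoint, the only step at which $\widehat H_s\cdot\Delta S_s$ is asserted to be strictly positive at $\omega$ is possibly step $t$, but what we really need is that the partial sums never dip — here one uses that the hypothesis only says $V_T(\widehat H)\ge0$ globally, so strictly speaking we cannot conclude pointwise positivity of $V_T$ at $\omega\in U_t$ from a single strict increment. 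To fix this I would instead state (and this is the intended reading) that $\widehat H$ can be taken so that $V_T(\widehat H)(\omega)>0$ exactly because the construction in the other direction yields increments that telescope; concretely, the correct statement to carry around is $V_T(\widehat H)>0$ on $\bigcup_{t}U_t$, which follows because in the ($\Rightarrow$) construction the strategy on $U_t$ is switched on only from time $t$ onward on the relevant event, so its terminal value equals its increment at step $t$, which is strictly positive on $U_t$. Hence $\mathcal V_{\widehat H}^{+}\supseteq\bigcup_t U_t\supseteq$ a set in $\mathcal S$, and $\widehat H$ is an Arbitrage de la classe $\mathcal S$.

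For the ($\Rightarrow$) direction, let $H$ be an Arbitrage de la classe $\mathcal S$, so $V_T(H)\ge0$ on $\Omega$ and $\mathcal V_H^{+}\supseteq C$ for some $C\in\mathcal S$. The idea is to walk backward through $t=T,T-1,\dots,1$ and, at each step, peel off the part of $C$ on which the arbitrage "lands" at that time. Set $C_T:=C\cap\{V_{T}(H)>0,\ V_{T-1}(H)\ge 0\}$-type sets; more precisely define, for $t\in I_1$,
\begin{equation*}
U_t:=\{\omega\in\Omega\mid V_t(H)(\omega)>0\text{ and }V_s(H)(\omega)\le 0\ \forall s<t\},
\end{equation*}
and $U_0:=\varnothing$. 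These are pairwise disjoint and $U_t\in\mathcal F_t$. Since $V_T(H)\ge 0$ everywhere and $\mathcal V_H^{+}\supseteq C\ne\varnothing$, every $\omega\in C$ has some first time $t$ at which $V_t(H)(\omega)>0$ (using $V_0(H)=0$ and $V_T(H)(\omega)>0$), whence $C\subseteq\bigcup_{t\in I_1}U_t$, so $\bigcup_t U_t$ contains the set $C\in\mathcal S$. For the strategy, I would define $\widehat H$ by $\widehat H_t:=H_t\mathbf 1_{\{V_{t-1}(H)\le 0\}}$; this is still predictable since $\{V_{t-1}(H)\le0\}\in\mathcal F_{t-1}$. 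On $U_t$ one has $V_{t-1}(H)\le 0$ and in fact, by the minimality in the definition of $U_t$ together with the argument in Lemma \ref{easy}, $\widehat H_t\cdot\Delta S_t=H_t\cdot\Delta S_t>0$; and the global sign condition $V_T(\widehat H)\ge0$ has to be checked by showing that truncating $H$ on the loss sets cannot create new losses — this is the delicate point.

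The main obstacle I anticipate is exactly this last verification: showing $V_T(\widehat H)\ge 0$ on all of $\Omega$ and the strict increment $\widehat H_t\cdot\Delta S_t>0$ on $U_t$ simultaneously. The clean way is to mimic the proof of Lemma \ref{easy}: fix $\omega$, let $\overline t=\overline t(\omega)$ be the first time $V_{\overline t}(H)(\omega)>0$ (if it exists; otherwise $V_t(H)(\omega)=0$ for all $t$ or stays $\le 0$ then returns to $\ge 0$ only at $T$). One then checks that $\widehat H$ "ignores" all increments after the value has gone positive and only reactivates on loss runs, so that along each trajectory $V_t(\widehat H)$ is a sum of the increments of $H$ restricted to the maximal time-intervals on which $V(H)$ is $\le 0$; since on the right endpoint of such an interval the value of $H$ has climbed from $\le 0$ back to $\ge 0$, each such block contributes $\ge 0$, giving $V_T(\widehat H)\ge 0$, with a strict contribution precisely at the first crossing, i.e. on $U_t$. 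Making this bookkeeping rigorous — partitioning each trajectory's time axis into the "underwater" excursions of $V(H)$ and handling the boundary times — is the technical heart of the argument; everything else is routine.
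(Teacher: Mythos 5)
Your ($\Rightarrow$) direction follows essentially the paper's route: your sets $U_t=\{V_t(H)>0,\ V_s(H)\le 0\ \forall s<t\}$ are exactly the paper's $U_t=B_1^c\cap\dots\cap B_{t-1}^c\cap B_t$ (with $B_s=\{V_s(H)>0\}$), and the only real difference is the modified strategy. The paper takes $\widehat H_t:=H_t\mathbf{1}_{D_{t-1}}$ with $D_{t-1}=(\bigcup_{s\le t-1}U_s)^c$, i.e.\ it follows $H$ and stops forever at the first time the value is strictly positive; then the verification is immediate, since $V_T(\widehat H)=V_t(H)>0$ on $U_t$ and $V_T(\widehat H)=V_T(H)\ge 0$ off $\bigcup_t U_t$, with no excursion bookkeeping. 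Your choice $\widehat H_t:=H_t\mathbf{1}_{\{V_{t-1}(H)\le 0\}}$ also works, and your telescoping argument is correct: each maximal block $\{a,\dots,b\}$ of times with $V_{s-1}(H)\le 0$ contributes $V_b(H)-V_{a-1}(H)$, which is $>0$ if the block ends with an upcrossing and is $\ge V_T(H)\ge 0$ if $b=T$; in particular on $U_t$ the first block contributes $V_t(H)>0$. So your construction even yields $V_T(\widehat H)>0$ on $\bigcup_t U_t$, at the price of bookkeeping that the paper's stopped strategy avoids.

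Concerning ($\Leftarrow$): you are right that the bullets as written do not make $\widehat H$ itself an Arbitrage de la classe $\mathcal S$ (a strictly positive increment at one date plus $V_T(\widehat H)\ge 0$ does not force $V_T(\widehat H)>0$ on $U_t$); the paper simply declares this direction trivial. But your resolution, namely replacing the second bullet by ``$V_T(\widehat H)>0$ on $\bigcup_t U_t$'' because that is what the forward construction produces, proves a reformulated lemma rather than the one stated; as written it is an appeal to the intended reading, not an argument. To keep the statement as it stands you would have to manufacture an arbitrage de la classe $\mathcal S$ from the bullets themselves: for instance, for $T=2$ one can rescale the second-period position, halving $\widehat H_2$ on $U_1$ and doubling it on $\{\widehat H_1\cdot\Delta S_1<0\}$ (both sets are in $\mathcal F_1$, and on any $\mathcal F_1$-level set $\widehat H_1\cdot\Delta S_1$ is constant while $V_2(\widehat H)\ge 0$ bounds the second increment from below), which gives a terminal value $\ge 0$ everywhere and $>0$ on $U_1\cup U_2$; this is no longer a one-line matter, and for general $T$ it requires a genuine induction. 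The cleaner fix is to state the equivalence with the stronger second bullet, which both your construction and the paper's deliver, and note that the converse is then immediate. Apart from this point, which the paper itself glosses over, the proposal is sound.
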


\begin{proof}
($\Rightarrow $) Let $H\in \mathcal{H}$ be an Arbitrage de la classe $%
\mathcal{S}$. Define $B_{t}=\{V_{t}(H)>0\}$ and
\begin{eqnarray*}
U_{1}=B_{1} &\Rightarrow &H_{1}\cdot \Delta S_{1}(\omega)>0\quad
\forall \,\omega
\in U_{1} \\
U_{2}=B_{1}^{c}\cap B_{2} &\Rightarrow &H_{2}\cdot \Delta
S_{2}(\omega)>0\quad
\forall \,\omega \in U_{2} \\
U_{T-1}=B_{1}^{c}\cap \ldots \cap B_{T-2}^{c}\cap B_{T-1}
&\Rightarrow
&H_{T-1}\cdot \Delta S_{T-1}(\omega)>0\quad \forall \,\omega \in U_{T-1} \\
U_{T}=B_{1}^{c}\cap \ldots \cap B_{T-2}^{c}\cap B_{T-1}^{c}\cap \mathcal{V}%
_{H}^{+} &\Rightarrow &H_{T}\cdot \Delta S_{T}(\omega)>0\quad
\forall \,\omega \in U_{T}
\end{eqnarray*}%
From the definition of $\{U_{1},U_{2},\ldots ,U_{T}\}$ we have that $%
\mathcal{V}_{H}^{+}\subseteq \bigcup_{i=1}^{T}U_{i}$. Set $\widehat{H}%
_{1}=H_{1}$ and consider the strategy for every $2\leq t\leq T$
given by
\begin{equation*}
\widehat{H}_{t}(\omega )=H_{t}(\omega )\mathbf{1}_{D_{t-1}}(\omega
)\quad \text{where }D_{t-1}=\left( \bigcup_{s=1}^{t-1}U_{s}\right)
^{c}.
\end{equation*}%
By construction $\widehat{H}\in \mathcal{H}$ and
$\widehat{H}_{t}\cdot \Delta S_{t}(\omega )>0$ for every $\omega
\in U_{t}$. \newline ($\Leftarrow $) The converse implication is
trivial.
\end{proof}

\section{Arbitrage de la classe $\mathcal{S}$ and Martingale Measures}

\label{secGeometric} Before addressing this topic in its full
generality we provide some insights into the
problem and we introduce some examples that will help to develop
the intuition on the approach that we adopt.
The required technical tools will then be stated in Sections \ref%
{mktFeas} and \ref{sectionNull}.\\

Consider the family of polar sets of $\mathcal{M}$
\begin{equation*}
\mathcal{N}:=\left\{ A\subseteq A^{\prime }\in \mathcal{F}\ \mid \
Q(A^{\prime })=0\ \forall \ Q\in \mathcal{M}\right\} .
\end{equation*}%
In Nutz and Bouchard \cite{Nutz2} the notion of
$NA(\mathcal{P}^{\prime })$ for any fixed family
$\mathcal{P}^{\prime }\subseteq \mathcal{P}$ is defined by:
\begin{equation*}
V_{T}(H)\geq 0\text{ }\mathcal{P}^{\prime }-q.s.\Rightarrow
V_{T}(H)=0\text{ }\mathcal{P}^{\prime }-q.s.
\end{equation*}%
where $H$ is a predictable process which is measurable with
respect to the universal completion of $\mathbb{F}$. One of the
main results in \cite{Nutz2} asserts that, under $NA(\mathcal{P}^{\prime })$%
, there exists a class $\mathcal{Q}^{\prime }$ of martingale
measures which
shares the same polar sets of $\mathcal{P}^{\prime }$. If we take $\mathcal{P%
}^{\prime }=\mathcal{P}$ then $NA(\mathcal{P})$ is equivalent to
No (universally measurable) $1p$-Arbitrage, since $\mathcal{P}$
contains all
Dirac measures. In addition, the class of polar sets of $\mathcal{P}$ is empty. In Section %
\ref{maxPolar} we will show that this same result is true also in
our setting as a consequence of Proposition \ref{LemNOpolar}. The
existence of a class of martingale measures with no polar sets
implies that $\forall \omega \in \Omega $ there exists $Q\in
\mathcal{M}$ such that $Q(\{\omega\})>0$ and since $\Omega $ is a
separable space we can
find a dense set $D:=\{\omega _{n}\}_{n=1}^{\infty }$, with associated $%
Q^{n}\in \mathcal{M}$, such that $\sum_{n=1}^{\infty
}\frac{1}{2^{n}}Q^{n}$ is a full support martingale measure (see
Lemma \ref{pasting}).
\begin{proposition}
We have the following implications \label{NoPA}

\begin{enumerate}
\item\label{item1conj}No $1p$-Arbitrage $\Longrightarrow \mathcal{M}_{+}\neq
\emptyset$.

\item \label{item2conj}$\mathcal{M}_{+}\neq \emptyset \Longrightarrow $ No
Open Arbitrage.
\end{enumerate}
\end{proposition}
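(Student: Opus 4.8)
The plan is to prove the two implications separately, exploiting the structural facts already assembled in the discussion preceding the Proposition. For item \ref{item1conj}, I would argue by contraposition on the existence of polar sets: I claim that No $1p$-Arbitrage forces the family $\mathcal{N}$ of $\mathcal{M}$-polar sets to be empty, i.e. $\mathcal{M}$ is a class of martingale measures with no common polar set. Indeed, if some $\omega\in\Omega$ were not charged by any $Q\in\mathcal{M}$, then $\{\omega\}\in\mathcal{N}$, and one expects (this is the content of the machinery of Sections \ref{maxPolar}, \ref{sectionNull}, which I am allowed to invoke — in particular Proposition \ref{LemNOpolar} and the geometric/measurable-selection results) that a nonempty polar set gives rise to a strategy $H$ with $V_T(H)\ge 0$ everywhere and $V_T(H)>0$ on that polar set, hence on $\{\omega\}$ — a $1p$-Arbitrage. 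So under No $1p$-Arbitrage, for every $\omega$ there is $Q_\omega\in\mathcal{M}$ with $Q_\omega(\{\omega\})>0$. Then, using separability of $\Omega$, pick a countable dense set $D=\{\omega_n\}_{n\ge1}$ with associated $Q^n\in\mathcal{M}$, $Q^n(\{\omega_n\})>0$, and set $Q^*:=\sum_{n\ge1}2^{-n}Q^n$. By Lemma \ref{pasting} (convexity/countable-convexity of $\mathcal{M}$) $Q^*\in\mathcal{M}$, and since every nonempty open set meets $D$, $Q^*$ charges every open set, i.e. $Q^*\in\mathcal{P}_+$; thus $Q^*\in\mathcal{M}_+\neq\emptyset$.

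For item \ref{item2conj}, I would argue directly by contraposition: suppose $H\in\mathcal{H}$ is an Open Arbitrage, so $V_T(H)\ge 0$ on $\Omega$ and $\mathcal{V}_H^+$ contains a nonempty open set $A$. Take any $Q\in\mathcal{M}_+$. On one hand, since $S$ is a $Q$-martingale and $H$ is predictable, the telescoping/conditional-expectation identity recorded in the paper gives $E_Q[V_T(H)]=0$. On the other hand $V_T(H)\ge 0$ $Q$-a.s., so $V_T(H)=0$ $Q$-a.s.; in particular $Q(\mathcal{V}_H^+)=0$, hence $Q(A)=0$. But $A$ is a nonempty open set and $Q$ has full support, so $Q(A)>0$ — contradiction. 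Therefore no Open Arbitrage exists.

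The first implication is the substantive one and its crux is the passage ``nonempty $\mathcal{M}$-polar set $\Rightarrow$ $1p$-Arbitrage (equivalently: No $1p$-Arbitrage $\Rightarrow$ $\mathcal{N}=\varnothing$)''. This is exactly where the geometric analysis of the market and the measurable-selection arguments deferred to Sections \ref{maxPolar} and \ref{sectionNull} (Polishness of $\Omega$, regular conditional probabilities via Theorem \ref{SV}, and Proposition \ref{LemNOpolar}) do the real work; at this point in the paper one should treat it as a black box. A secondary, purely bookkeeping point is verifying that the countable convex combination $Q^*$ is still a martingale measure — this is Lemma \ref{pasting} and requires only that each $S_t^j$ is $Q^n$-integrable so that dominated convergence applies termwise; since martingale measures make $S$ integrable and the weights sum to one, this is routine. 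The second implication is elementary given the generalized-conditional-expectation calculus already stated, so I anticipate no obstacle there.
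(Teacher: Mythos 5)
Your item \ref{item2conj} argument is correct and is exactly the paper's (full support charges the open set, while $E_Q[V_T(H)]=0$ forces $Q(\mathcal{V}_H^+)=0$), and your endgame for item \ref{item1conj} (countable dense set, measures $Q^n$ charging each $\omega_n$, mixture $\sum 2^{-n}Q^n\in\mathcal{M}$ via Lemma \ref{pasting}) coincides with the paper's proof in Section \ref{maxPolar}. The gap is in your crux step. You justify ``No $1p$-Arbitrage $\Rightarrow$ every singleton is charged'' by the claim that a nonempty $\mathcal{M}$-polar set gives rise to a strategy $H$ with $V_T(H)\geq 0$ on $\Omega$ and $V_T(H)>0$ \emph{on that polar set}. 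That statement is precisely what is \emph{not} available for $\mathbb{F}$-predictable strategies: the object with this property is the Universal Arbitrage Aggregator $H^{\bullet}$, which is only $\widetilde{\mathbb{F}}$-predictable, and Theorem \ref{C-polarEquiv} / Proposition \ref{obloj} are statements about $\widetilde{\mathcal{H}}$. The paper's own example in Section \ref{examples} (item 3) shows the pointwise claim fails in $\mathcal{H}$: there $\mathcal{M}=\varnothing$, so every set is polar, yet $\bigcup_{H\in\mathcal{H}^{+}}\mathcal{V}_H^{+}=\mathbb{R}^{+}\setminus\mathbb{Q}\subsetneqq\Omega$, so no $\mathcal{H}$-strategy is positive at any rational point. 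Consequently, with the black box as you invoke it, your contraposition only yields ``No $1p$-Arbitrage in $\widetilde{\mathcal{H}}$ $\Rightarrow\mathcal{M}_{+}\neq\emptyset$,'' which is weaker than Proposition \ref{NoPA} (whose hypothesis is absence of $1p$-Arbitrage in $\mathcal{H}$, and which is later used in this stronger form in Proposition \ref{corPoint}).

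The correct bridge, and the one the paper uses, does not go through polar sets at all at this stage: absence of $1p$-Arbitrage in $\mathcal{H}$ already rules out, for every $t$ and every level set $\Sigma_{t-1}^{z}$, a separating vector $H\in\mathbb{R}^{d}$ with $H\cdot\Delta S_t\geq 0$ on $\Sigma_{t-1}^{z}$ and $>0$ somewhere, because $H\mathbf{1}_{\Sigma_{t-1}^{z}}$ is $\mathcal{F}_{t-1}$-measurable (Lemma \ref{easy}); by Corollary \ref{NoPoint} / Corollary \ref{resume} this forces $0\in ri(\Delta S_t(\Sigma_{t-1}^{z}))^{cc}$ for all $t,z$, hence no level sets are removed and $\Omega_{\ast}=\Omega$. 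Only then does Proposition \ref{LemNOpolar} deliver, for each $\omega$, a (finite-support) martingale measure with $Q(\{\omega\})>0$, after which your dense-set mixture argument finishes the proof exactly as in the paper. (A minor additional remark: in Lemma \ref{pasting} the mixture need not make $S$ integrable, since $\sum 2^{-n}E_{Q^n}[|S_t|]$ may diverge; the paper's proof uses the generalized conditional expectation and monotone limits rather than dominated convergence, but since you cite the lemma as a black box this does not affect your argument.)
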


\begin{proof}
The proof of \ref{item1conj}. is postponed to Section \ref{maxPolar}. \newline
We prove \ref{item2conj}. by observing that for any open set $O$ and $Q\in \mathcal{M}%
_{+}$ we have $Q(O)>0$. Since for any $H\in \mathcal{H}$ such that $%
V_{T}(H)\geq 0$ we have $Q(\mathcal{V}_{H}^{+})=0$, then
$\mathcal{V}_{H}^{+} $ does not contain any open set.
\end{proof}

\begin{example}
\label{CountNA}Note however that the existence of a full support
martingale measure is compatible with $1p$-Arbitrage so that the
converse implication of
\ref{item1conj}. in Proposition \ref{NoPA} does not hold. Let $(\Omega ,\mathcal{F})=(%
\mathbb{R}^{+},\mathcal{B}(\mathbb{R}^{+}))$. Consider the market
with one risky asset: $S_{0}=2$ and
\begin{equation}
S_{1}=\left\{
\begin{array}{ll}
3 &\quad \omega \in \mathbb{R}^{+}\setminus \mathbb{Q} \\
2 &\quad \omega \in \mathbb{Q}^{+}%
\end{array}%
\right.
\end{equation}%
Then obviously there exists a $1p$-Arbitrage even though there
exist full support martingale measures (those probabilities
assigning positive mass only to each rational).
\end{example}

As soon as we weaken No $1p$-Arbitrage, by adopting any other no
arbitrage conditions in Definition \ref{def}, there is no
guarantee of the existence of martingale measures, as shown in
Section \ref{examples}. In order to obtain the equivalence between
$\mathcal{M}\neq \varnothing $ and No Model Independent Arbitrage
(the weakest among the No Arbitrage conditions de la classe
$\mathcal{S}$) we will enlarge the filtration, as explained in
Section \ref{sectionFiltration}.

\subsection{Examples}

\label{examples}

This section provides a variety of counterexamples to many
possible conjectures on the formulation of the FTAP in the
model-free framework. A financially meaningful example is the one
of two call options with the same
spot price $p_1=p_2$ but with strike prices $K_1>K_2$, formulated in \cite%
{DH07}, which already highlights that the equivalence between
absence of model independent arbitrage and existence of martingale
measures is not possible.\newline

We consider a one period market (i.e. $T=1$) with $(\Omega ,\mathcal{F})=(%
\mathbb{R}^{+},\mathcal{B}(\mathbb{R}^{+}))$ and with $d=2$ risky assets $%
S=[S^{1},S^{2}]$, in addition to the riskless asset $S^{0}=1$.
Admissible
trading strategies are represented by vectors $H=(\alpha ,\beta )\in \mathbb{%
R}^{2}$ so that
\begin{equation*}
V_{T}(H)=\alpha \Delta S^{1}+\beta \Delta S^{2},
\end{equation*}%
where $\Delta S^{i}=S_{1}^{i}-S_{0}^{i}$ for $i=1,2$. Let $%
S_{0}=[S_{0}^{1},S_{0}^{2}]=[2,2]$,
\begin{equation}
S_{1}^{1}=\left\{
\begin{array}{ll}
3 & \quad\omega \in \mathbb{R}^{+}\setminus \mathbb{Q} \\
2 & \quad\omega \in \mathbb{Q}^{+}%
\end{array}%
\right. ;\quad S_{1}^{2}=\left\{
\begin{array}{ll}
1+\exp (\omega ) & \quad\omega \in \mathbb{R}^{+}\setminus \mathbb{Q} \\
1 & \quad\omega =0 \\
1+\exp (-\omega ) & \quad\omega \in \mathbb{Q}^{+}\setminus \{0\}%
\end{array}%
\right.  \label{ex1000}
\end{equation}%
and $\mathcal{F=F}^{S}.$ We notice the following simple facts.

\begin{enumerate}
\item There are no martingale measures:
\begin{equation*}
\mathcal{M}=\varnothing .
\end{equation*}%
Indeed, if we denote by $\mathcal{M}_{i}$ the set of martingale
measures for
the $i^{th}$ asset we have $\mathcal{M}_{1}=\{Q\in \mathcal{P}\ \mid \ Q(%
\mathbb{R}^{+}\setminus \mathbb{Q})=0\}$ and $\forall Q\in
\mathcal{M}_{2},\ Q(\mathbb{R}^{+}\setminus \mathbb{Q})>0$.

\item The final value of the strategy $H=(\alpha ,\beta )\in
\mathbb{R}^{2}$
is%
\begin{equation*}
V_{T}(H)=\left\{
\begin{array}{ll}
\alpha +\beta (\exp (\omega )-1) & \quad\omega \in
\mathbb{R}^{+}\setminus
\mathbb{Q} \\
-\beta & \quad\omega =0 \\
\beta (\exp (-\omega )-1) & \quad\omega \in \mathbb{Q}^{+}\setminus \{0\}%
\end{array}%
\right. .
\end{equation*}%
Only the strategies $H\in \mathbb{R}^{2}$ having $\beta =0$ and
$\alpha \geq
0$ satisfy $V_{T}(H)(\omega )\geq 0$ for all $\omega \in \Omega $. For $%
\beta =0$ and $\alpha >0,$
$\mathcal{V}_{H}^{+}=\mathbb{R}^{+}\setminus
\mathbb{Q}$ and therefore there are \textbf{No Open Arbitrage} and \textbf{%
No Model Independent Arbitrage (}but\textbf{\
}$\mathcal{M}=\varnothing )$. This fact persists even if we impose
boundedness restrictions on the process $S$ or on the admissible
strategies $H$, as the following modification of
the example shows: let $S_0=[2,2]$ and take $S_{1}^{1}= [2+\exp (-\omega )] \mathbf{1}_{\mathbb{R}^+\setminus \mathbb{Q}}+ 2 \mathbf{1}_{\mathbb{Q}^+}$ and $S_{1}^{2}=[1+\exp (\omega )\wedge 4] \mathbf{1}_{\mathbb{R}^+\setminus \mathbb{Q}}+ \mathbf{1}_{\{0\}}+ [1+\exp (-\omega )]\mathbf{1}_{\mathbb{Q}^+\setminus \{0\}}$.\\
\item Set $\mathcal{H}^{+}:=\left\{ H\in \mathcal{H}\mid \
V_{T}(H)\geq 0 \text{ and } V_{0}(H)=0\right\} $ so that we
have $\bigcup_{H\in \mathcal{H}^{+}}\mathcal{V}_{H}^{+}=\mathbb{R}%
^{+}\setminus \mathbb{Q}\subsetneqq \Omega $. This shows that the condition $%
\mathcal{M}=\varnothing $ is not equivalent to $\bigcup_{H\in \mathcal{H}%
^{+}}\mathcal{V}_{H}^{+}=\Omega $ i.e. it is not true that the set
of
martingale measures is empty iff for every $\omega$ there exists a strategy $%
H$ that gives positive wealth on $\omega$ and $V_0(H)=0$. In order
to
recover the equivalence between these two concepts (as in Proposition \ref%
{obloj}) we need to enlarge the filtration in the way explained in Section %
\ref{sectionFiltration}.

\item By fixing any probability $P$ there exists a $P$-Classical
Arbitrage, since the FTAP holds true and $\mathcal{M}=\varnothing
$. Indeed:

\begin{enumerate}
\item If $P(\mathbb{R}^{+}\setminus \mathbb{Q})=0,$ then $\beta =-1$ ($%
\alpha =0)$ yield a $P$-Classical arbitrage, since $\mathcal{V}_{H}^{+}=%
\mathbb{Q}^{+}$ and $P(\mathcal{V}_{H}^{+})=1$

\item If $P(\mathbb{R}^{+}\setminus \mathbb{Q})>0$ then $\beta =0$ and $%
\alpha =1$ yield a $P$-Classical arbitrage, since $\mathcal{V}_{H}^{+}=%
\mathbb{R}^{+}\setminus \mathbb{Q}$ and
$P(\mathcal{V}_{H}^{+})>0.$
\end{enumerate}

\item Instead, by adopting the definition of a $P$-a.s. Arbitrage ($%
V_{T}(H)(\omega )\geq 0$ for all $\omega \in \Omega $ and $P(\mathcal{V}%
_{H}^{+})>0$)$,$ there are two possibilities:

\begin{enumerate}
\item If $P(\mathbb{R}^{+}\setminus \mathbb{Q})=0,$ \textbf{No }$P$\textbf{%
-a.s. Arbitrage }holds, since only the strategies $H\in
\mathbb{R}^{2}$ having $\beta =0$ and $\alpha \geq 0$ satisfies
$V_{T}(H)(\omega )\geq 0$
for all $\omega \in \Omega $ and $\mathcal{V}_{H}^{+}=\mathbb{R}%
^{+}\setminus \mathbb{Q}$.

\item If $P(\mathbb{R}^{+}\setminus \mathbb{Q})>0,$ then $\beta =0$ and $%
\alpha =1$ \textbf{yield a }$P$\textbf{-a.s. arbitrage}, since $\mathcal{V}%
_{H}^{+}=\mathbb{R}^{+}\setminus \mathbb{Q}$ and
$P(\mathcal{V}_{H}^{+})>0.$
\end{enumerate}

\item Geometric approach: If we plot the vector $[\Delta
S^{1},\Delta S^{2}]$ on the real plane (see Figure \ref{PA}) we
see that there exists a unique separating hyperplane given by the
vertical axis. As a consequence 1p-Arbitrage can arise only by
investment in the first asset ($\beta =0 $). For a separating
hyperplane we mean an hyperplane in $\mathbb{R}^{d}$ passing by the
origin and such that one of the associated half-space contains
(not
necessarily strictly contains) all the image points of the random vector $%
[\Delta S^{1},\Delta S^{2}]$. Let us now consider this other example on $(%
\mathbb{R}^{+},\mathcal{B}(\mathbb{R}^{+})$. Let $S_{0}=[2,2]$, and%
\begin{equation}
S_{1}^{1}=\left\{
\begin{array}{ll}
3 & \quad\omega \in \mathbb{R}^{+}\setminus \mathbb{Q} \\
2 & \quad\omega =0 \\
1 & \quad\omega \in \mathbb{Q}^{+}\setminus \{0\}%
\end{array}%
\right. \qquad S_{1}^{2}=\left\{
\begin{array}{ll}
7 & \quad\omega \in \mathbb{R}^{+}\setminus \mathbb{Q} \\
2 & \quad\omega =0 \\
0 & \quad\omega \in \mathbb{Q}^{+}\setminus \{0\}%
\end{array}%
\right.  \label{ex1001}
\end{equation}

In both examples (\ref{ex1000}) and (\ref{ex1001}) there exist
separating hyperplanes i.e. a $1p$-Arbitrage can be obtained (see
Figure \ref{PA}). In example (\ref{ex1000}) $\mathcal{M}$ is empty
and we find a unique separating hyperplane: this hyperplane cannot
give a strict separation of
the set $[\Delta S^{1}(\omega ),\Delta S^{2}(\omega )]_{\omega \in \mathbb{Q}%
^{+}}$ even though $\mathbb{Q}^{+}$ does not support any
martingale measure. In example (\ref{ex1001})
$\mathcal{M}=\{\delta _{\omega =0}\}$, only the event $\{\omega
=0\}$ supports a martingale measure and there exists an infinite
number of hyperplanes which strictly separates the image of both
polar sets $\mathbb{R}^{+}\setminus \mathbb{Q}$ and
$\mathbb{Q}^{+}\setminus \{0\}$, namely, those separating the
convex grey region in Figure \ref{PA}.
\end{enumerate}

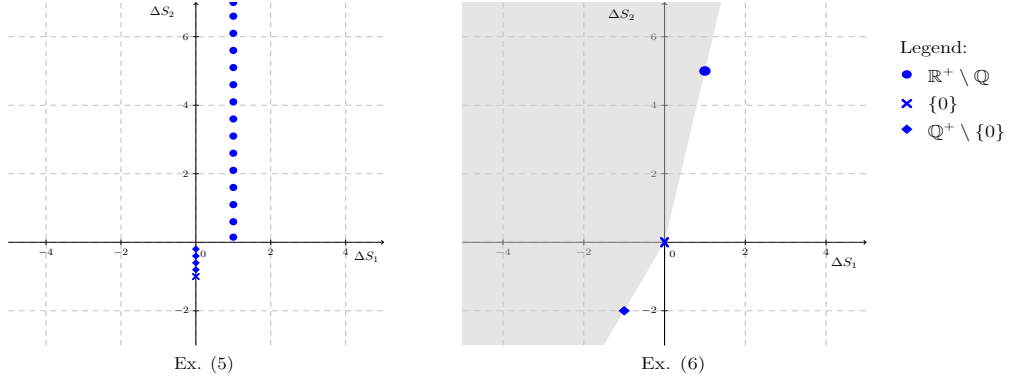
\begin{figure}[tbp]
\caption{{\protect\footnotesize {In examples (\protect\ref{ex1000}) and (%
\protect\ref{ex1001}), $0$ does not belong to the relative
interior of the convex set generated by the points $\{[\Delta
S^1(\protect\omega), \Delta
S^2(\protect\omega)]\}_{\protect\omega\in\Omega}$ and hence there
exists an hyperplane which separates the points.\newline }}}
\label{PA}\centering
\definecolor{cccccc}{rgb}{0.8,0.8,0.8}
\definecolor{qqqqff}{rgb}{0,0,1}
\definecolor{cqcqcq}{rgb}{0.75,0.75,0.75}
\begin{minipage}{.4\textwidth}
\resizebox{5cm}{5cm}{
\begin{tikzpicture}[line cap=round,line join=round,x=1cm,y=1cm]
\draw [color=cqcqcq,dash pattern=on 4pt off 4pt,
xstep=2.0cm,ystep=2.0cm] (-5,-3) grid (5,7); \draw[->,color=black]
(-5,0) -- (5.02,0); \foreach \x in {-4,-2,2,4}
\draw[shift={(\x,0)},color=black] (0pt,2pt) -- (0pt,-2pt)
node[below] {\footnotesize $\x$}; \draw[->,color=black] (0,-3) --
(0,7); \foreach \y in {-2,2,4,6} \draw[shift={(0,\y)},color=black]
(2pt,0pt) -- (-2pt,0pt) node[left] {\footnotesize $\y$};
\draw[color=black] (0pt,-10pt) node[right] {\footnotesize $0$};
\clip(-5,-4) rectangle (5,7); \draw (-1.35,7) node[anchor=north
west] {$ \Delta S_2 $}; \draw (4.1,-0.16) node[anchor=north west]
{$ \Delta S_1 $};
\begin{scriptsize}
\fill [color=qqqqff] (1,0.15) circle (3.0pt); \fill [color=qqqqff]
(1,0.6) circle (3.0pt); \fill [color=qqqqff] (1,1.1) circle
(3.0pt); \fill [color=qqqqff] (1,1.6) circle (3.0pt); \fill
[color=qqqqff] (1,2.1) circle (3.0pt); \fill [color=qqqqff]
(1,2.6) circle (3.0pt); \fill [color=qqqqff] (1,3.1) circle
(3.0pt); \fill [color=qqqqff] (1,3.6) circle (3.0pt); \fill
[color=qqqqff] (1,4.1) circle (3.0pt); \fill [color=qqqqff]
(1,4.6) circle (3.0pt); \fill [color=qqqqff] (1,5.1) circle
(3.0pt); \fill [color=qqqqff] (1,5.6) circle (3.0pt); \fill
[color=qqqqff] (1,6.1) circle (3.0pt); \fill [color=qqqqff]
(1,6.6) circle (3.0pt); \fill [color=qqqqff] (1,7) circle (3.0pt);
\fill [color=qqqqff] (0,-0.2) ++(-3.0pt,0 pt) --
++(3.0pt,3.0pt)--++(3.0pt,-3.0pt)--++(-3.0pt,-3.0pt)--++(-3.0pt,3.0pt);
\fill [color=qqqqff] (0,-0.4) ++(-3.0pt,0 pt) --
++(3.0pt,3.0pt)--++(3.0pt,-3.0pt)--++(-3.0pt,-3.0pt)--++(-3.0pt,3.0pt);
\fill [color=qqqqff] (0,-0.6) ++(-3.0pt,0 pt) --
++(3.0pt,3.0pt)--++(3.0pt,-3.0pt)--++(-3.0pt,-3.0pt)--++(-3.0pt,3.0pt);
\fill [color=qqqqff] (0,-0.8) ++(-3.0pt,0 pt) --
++(3.0pt,3.0pt)--++(3.0pt,-3.0pt)--++(-3.0pt,-3.0pt)--++(-3.0pt,3.0pt);
\draw [line width=1.5pt,domain=-0.08:0.08,color=qqqqff]
plot(\x,{(+1-1*\x)/-1}); \draw [line
width=1.5pt,domain=-0.08:0.08,color=qqqqff]
plot(\x,{(+1+1*\x)/-1});
\end{scriptsize}
\draw (-0.7,-3.2) node[anchor=north west] {\Large{Ex. (5)}};
\end{tikzpicture}}
\end{minipage}%
\begin{minipage}{.4\textwidth}
\resizebox{7.5cm}{5cm}{
\begin{tikzpicture}[line cap=round,line join=round,x=1.0cm,y=1.0cm]
\draw [color=cqcqcq,dash pattern=on 4pt off 4pt,
xstep=2.0cm,ystep=2.0cm] (-5,-3) grid (5,7); \draw[->,color=black]
(-5,0) -- (5,0); \foreach \x in {-4,-2,2,4}
\draw[shift={(\x,0)},color=black] (0pt,2pt) -- (0pt,-2pt)
node[below] {\footnotesize $\x$}; \draw[->,color=black] (0,-3) --
(0,7); \foreach \y in {-2,2,4,6} \draw[shift={(0,\y)},color=black]
(2pt,0pt) -- (-2pt,0pt) node[left] {\footnotesize $\y$};
\draw[color=black] (0pt,-10pt) node[right] {\footnotesize $0$};
\clip(-5,-4) rectangle (9,7); \fill[color=cccccc,fill=cccccc,fill
opacity=0.5] (2,10) -- (-8,10) -- (-8,-3)-- (-1.5,-3) -- (0,0) --
cycle;

\draw (-1.52,6.9) node[anchor=north west] {\parbox{3.18 cm}{$
\Delta S_2 \\  $}}; \draw (4,-0.34) node[anchor=north west] {$
\Delta S_1 $};
\begin{scriptsize}
\fill [color=qqqqff] (1,5) circle (4.0pt);
\fill [color=qqqqff] (-1,-2) ++(-4.0pt,0 pt) --
++(4.0pt,4.0pt)--++(4.0pt,-4.0pt)--++(-4.0pt,-4.0pt)--++(-4.0pt,4.0pt);

\draw [line width=2pt,domain=-0.08:0.08,color=qqqqff]
plot(\x,{(-0-1.5*\x)/-1}); \draw [line
width=2pt,domain=-0.08:0.08,color=qqqqff]
plot(\x,{(-0+1.5*\x)/-1});
\end{scriptsize}
\draw (-0.7,-3.2) node[anchor=north west] {\Large{Ex. (6)}};

\fill [color=qqqqff] (6,4.9) circle (3.5pt); \draw [line
width=2pt,domain=5.9:6.1,color=qqqqff] plot(\x,{(1.95-1*\x)/-1});
\draw [line width=2pt,domain=5.9:6.1,color=qqqqff]
plot(\x,{(-10.05+1*\x)/-1}); \fill [color=qqqqff] (6,3.3)
++(-4.0pt,0 pt) --
++(4.0pt,4.0pt)--++(4.0pt,-4.0pt)--++(-4.0pt,-4.0pt)--++(-4.0pt,4.0pt);

\draw (5.7,6) node[anchor=north west] {\Large{Legend:}}; \draw
(6.4,5.2) node[anchor=north west] {\Large
$\mathbb{R}^+\setminus\mathbb{Q}$}; \draw (6.4,4.4)
node[anchor=north west] {\Large $\{0\}$}; \draw (6.4,3.6)
node[anchor=north west] {\Large $\mathbb{Q}^+\setminus\{0\}$};
\end{tikzpicture}}
\end{minipage}%
\end{figure}

In conclusion the previous examples show that in a model-free
environment the existence of a martingale measure can not be
implied by arbitrage conditions - \textit{at least of the type
considered so far}. This is an important difference between the
model-free and quasi-sure analysis approach (see for example
\cite{Nutz2}):

\begin{itemize}
\item[$\bullet $ ] Model free approach: we deduce the `richness'
of the set $\mathcal{M}$ of martingale measures starting directly
from the underlying market structure $(\Omega ,\mathcal{F},S)$ and
we analyze the class of polar sets with respect to $\mathcal{M}$.

\item[$\bullet $ ] Quasi sure approach: the class of priors
$\mathcal{P}^{\prime }\subseteq \mathcal{P}$ and its polar sets
are given and one formulates a No-Arbitrage type condition to
guarantee the existence of a class of martingale measures which
has the same polar sets as the set of priors.
\end{itemize}

\paragraph{An alternative definition of Arbitrage.\label{SubAlt}}
The notion of No $P$-Classical Arbitrage ($%
P(V_{T}(H)<0)=0\Rightarrow P(V_{T}(H)>0)=0$) can be rephrased as: $%
V_{0}(H)=0$ and
\begin{equation}
\left\{ V_{T}(H)<0\right\} \text{ is negligible }\Rightarrow
\text{ }\left\{ V_{T}(H)>0\right\} \text{ is negligible}
\label{344}
\end{equation}%
or in our setting%
\begin{equation}
\mathcal{V}_{H}^{-}\text{ \ does not contain sets in
}\mathcal{S}\Rightarrow \mathcal{V}_{H}^{+}\text{ \ does not
contain sets in }\mathcal{S}. \label{334}
\end{equation}%
where $\mathcal{V}_{H}^{-}:=\left\{ \omega \in \Omega \mid
V_{T}(H)(\omega )<0\right\} .$ In the definition (\ref{334}) we
are giving up the requirement $V_{T}(H)\geq 0$, and so the
differences with respect to the existence of arbitrage
opportunities showed in Item 5 of the example in this section
disappear. However, this alternative definition of arbitrage does
not work well, as shown by the following example. Consider $(\Omega ,%
\mathcal{F})=(\mathbb{R}^{+},\mathcal{B}(\mathbb{R}^{+})$, a one
period market with one risky asset: $S_{0}=2$,
\begin{equation}
S_{1}=\left\{
\begin{array}{ll}
3 & \quad\omega \in \lbrack 1,\infty ) \\
2 & \quad\omega =[0,1)\setminus \mathbb{Q} \\
1 & \quad\omega \in \lbrack 0,1)\cap \mathbb{Q}%
\end{array}%
\right.  \label{ex12}
\end{equation}%
Consider the strategy of buying the risky asset: $H=1$. Then $\mathcal{V}%
_{H}^{-}=[0,1)\cap \mathbb{Q}$ does not contain an open set, $\mathcal{V}%
_{H}^{+}=[1,\infty )$ contains open sets. Therefore, there is an
Open Arbitrage (in the modified definition obtained from
(\ref{334})) but there
are full support martingale measures, for example $Q([0,1)\cap \mathbb{Q}%
)=Q([1,\infty ))=\frac{1}{2}.$ Notice also that by enlarging the
filtration the Open Arbitrage would persist.

A concept of no arbitrage similar to (\ref{344}) was introduced by
Cassese \cite{C08}, by adopting an ideal $\mathcal{N}$ of
\textquotedblleft negligible\textquotedblright\ sets - not
necessarily derived from probability measures. In a continuous
time setting, he proves that the absence of such an arbitrage is
equivalent to the existence of a finitely additive
\textquotedblleft martingale measure\textquotedblright . Our
results are not comparable with those by \cite{C08} since the
markets are clearly different, we do not require any structure on
the family $\mathcal{S} $ and \cite{C08} works with finitely
additive measures. In addition, the example (\ref{ex12}) just
discussed shows the limitation in our setting of the definition
(\ref{344}) for finding martingale probability measures with the
appropriate properties.

\subsection{Technical Lemmata}

\label{mktFeas}Recall that $S=(S_{t})_{t\in I}$ is an $\mathbb{R}^{d}$%
-valued stochastic process defined on a Polish space $\Omega $
endowed with
its Borel $\sigma $-algebra $\mathcal{F}=\mathcal{B}(\Omega )$ and $%
I_{1}:=\left\{ 1,...,T\right\} .$

\bigskip

\textit{Through the rest of the paper we will make use of the
natural filtration }$\mathcal{F}^{S}=\{\mathcal{F}_{t}^{S}\}_{t\in
I}$\textit{\ of
the process }$S$\textit{\ and for ease of notation we will not specify }$S$%
\textit{, but simply write }$\mathcal{F}_{t}$\textit{\ for }$\mathcal{F}%
_{t}^{S}$.

\bigskip

For the sake of simplicity we indicate by $\mathbf{Z}:=Mat(d\times
(T+1);\R)$ the space of $d\times (T+1)$ matrices with real entries
representing the space of all the possible trajectories of the
price process. Namely for every $\omega \in \Omega $ we have
$(S_{0}(\omega ),S_{1}(\omega ),...,S_{T}(\omega
))=(z_{0},z_{1},...,z_{T})=:z\in \mathbf{Z}$.
Fix $s\leq t$: for any $z\in \mathbf{Z}$ we indicate , the components from $%
s $ to $t$ by $z_{s:t}=(z_{s},...,z_{t})$ and $z_{t:t}=z_{t}$. Similarly $%
S_{s:t}=(S_{s},S_{s+1},...,S_{t})$ represents the process from time $s$ to $%
t $. \newline
We denote with $ri(K)$ the relative interior of a set $K\subseteq \mathbb{R}%
^{d}$. In this section we will make extensive use of the geometric
properties of the image in $\mathbb{R}^{d}$ of the increments of
the price process $\Delta S_{t}:=S_{t}-S_{t-1}$ relative to a set
$\Gamma \subseteq \Omega $. The typical sets that we will consider
are the level sets $\Gamma =\Sigma _{t-1}^{z}$, where:
\begin{equation}
\Sigma _{t-1}^{z}:=\{\omega \in \Omega \mid S_{0:t-1}(\omega
)=z_{0:t-1}\}\in \mathcal{F}_{t-1},\text{ }z\in \mathbf{Z}\text{,
}t\in I_{1} \label{Sigma}
\end{equation}%
and $\Gamma =A_{t-1}^{z},$ the intersection of the level set
$\Sigma _{t-1}^{z}$ with a set $A\in \mathcal{F}_{t-1}$:
\begin{equation}
A_{t-1}^{z}:=\{\omega \in A\mid S_{0:t-1}(\omega )=z_{0:t-1}\}\in \mathcal{F}%
_{t-1}.  \label{A}
\end{equation}%
%
%
%
%
%
%
%
%
%
%
%
%
%
%
%
%
%
%
%
%
%
For any $\Gamma \subseteq \Omega $ define the convex cone:
\begin{equation}
(\Delta S_{t}(\Gamma ))^{cc}:=co\left( conv\big(\Delta S_{t}(\Gamma )\big)%
\right) \cup \{0\}\subseteq \mathbb{R}^{d}.  \label{coconv}
\end{equation}%
If $0\in ri(\Delta S_{t}(\Gamma ))^{cc}$ we cannot apply the
hyperplane separating theorem to the convex sets $\{0\}$ and
$ri(\Delta S_{t}(\Gamma ))^{cc}$, namely, there is no $H\in
\mathbb{R}^{d}$ that satisfies $H\cdot \Delta S_{t}(\omega )\geq
0$ for all $\omega \in \Gamma $ with strict
inequality for some of them. As intuitively evident, and shown in Corollary %
\ref{NoPoint} below, $0\in ri(\Delta S_{t}(\Gamma ))^{cc}$ if and
only if No $1p$-Arbitrage are possible on the set $\Gamma $, since
a trading strategy on $\Gamma $ with a non-zero payoff always
yields both positive and negative outcomes.

In this situation, for $\Gamma =\Sigma _{t-1}^{z}$, the level set
is not suitable for the construction of a $1p$-Arbitrage
opportunity and sets with this property are naturally important
for the construction of a martingale measure. We wish then to
identify, for $\Gamma =\Sigma _{t-1}^{z}$ satisfying $0\notin
ri(\Delta S_{t}(\Gamma ))^{cc}$, those subset of $\Sigma
_{t-1}^{z}$ that retain this property. This result is contained in
the following key Lemma \ref{spezzamento}.

\bigskip

Observe first that for a convex cone $K\subseteq \mathbb{R}^{d}$ such that $%
0\notin ri(K)$ we can consider the family $V=\{v\in
\mathbb{R}^{d}\mid \Vert v\Vert =1\text{ and }v\cdot y\geq
0\;\forall \,y\in K\}$ so that
\begin{equation*}
\overline{K}=\bigcap_{v\in V}\{y\in \mathbb{R}^{d}\mid v\cdot
y\geq 0\}=\bigcap_{n\in \mathbb{N}}\{y\in \mathbb{R}^{d}\mid
v_{n}\cdot y\geq 0\},
\end{equation*}%
where $\{v_{n}\}=(\mathbb{Q}^{d}\cap V)\setminus \{0\}$.

\begin{definition}
Adopting the above notations, we will call $\sum_{n=1}^{\infty }\frac{1}{%
2^{n-1}}v_{n}\in V$ the standard separator.
\end{definition}

\begin{lemma}
\label{spezzamento}Fix $t\in I_{1}$ and $\Gamma \neq \varnothing $. If $%
0\notin ri(\Delta S_{t}(\Gamma ))^{cc}$ then there exist $\beta
\in \{1,\ldots ,d\}$, $H^{1},\ldots ,H^{\beta }$, $B^{1},\ldots
,B^{\beta
},B^{\ast }$ with $H^{i}\in \mathbb{R}^{d}$, $B^{i}\subseteq \Gamma $ and $%
B^{\ast }:=\Gamma \setminus \big(\cup _{j=1}^{\beta }B^{j}\big)$
such that:

\begin{enumerate}
\item $B^{i}\neq \varnothing $ for all $i=1,\ldots \beta $, and
$\{\omega \in \Gamma \mid \Delta S_{t}(\omega )=0\}\subseteq
B^{\ast }$ which may be empty;

\item $B^{i}\cap B^{j}=\varnothing $ if $i\neq j$;

\item \label{item_spezz_arb}$\forall i\leq \beta ,\ H^{i}\cdot
\Delta S_{t}(\omega )>0$ for all $\omega \in B^{i}$ and
$H^{i}\cdot \Delta S_{t}(\omega )\geq 0$ for all $\omega \in \cup
_{j=i}^{\beta }B^{j}\cup B^{\ast }$.

\item \label{item_spezz}$\forall H\in \mathbb{R}^{d}$ s.t. $H\cdot
\Delta S_{t}\geq 0$ on $B^{\ast }$ we have $H\cdot \Delta S_{t}=0$
on $B^{\ast }$.
\end{enumerate}

Moreover, for $z\in \mathbf{Z,}$ $A\in \mathcal{F}_{t-1}$ and
$\Gamma =A_{t-1}^{z}$ (or $\Gamma =\Sigma _{t-1}^{z}$) we have
$B^{i},B^{\ast }\in \mathcal{F}_{t}$ and
\begin{equation}
H(\omega ):=\sum_{i=1}^{\beta }H^{i}\mathbf{1}_{B^{i}}(\omega )
\label{levelArbitrage}
\end{equation}%
is an $\mathcal{F}_{t}$-measurable random variable that is
uniquely determined when we adopt for each $H^{i}$ the standard
separator.\newline Clearly in these cases, $\beta $, $H^{i}$, $H$,
$B^{i}$ and $B^{\ast }$ will depend on $t$ and $z$ and whenever
necessary they will be denoted by $\beta _{t,z}$, $H_{t,z}^{i}$,
$H_{t,z}$, $B_{t,z}^{i}$ and $B_{t,z}^{\ast }$.
\end{lemma}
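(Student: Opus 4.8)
The plan is to argue by induction on the dimension $d$ (equivalently, on the smallest affine space containing $\Delta S_t(\Gamma)$), peeling off one supporting hyperplane at a time. First I would set $K := (\Delta S_t(\Gamma))^{cc}$, the convex cone generated by $\Delta S_t(\Gamma)$, and use the hypothesis $0 \notin ri(K)$ to produce, via the supporting hyperplane theorem, a unit vector $H^1$ with $H^1 \cdot y \ge 0$ for all $y \in K$, hence $H^1 \cdot \Delta S_t(\omega) \ge 0$ for all $\omega \in \Gamma$; concretely I would take $H^1$ to be the standard separator of $K$, which makes it canonical. Define $B^1 := \{\omega \in \Gamma \mid H^1 \cdot \Delta S_t(\omega) > 0\}$. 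If $B^1 = \varnothing$, then $H^1 \cdot \Delta S_t \equiv 0$ on $\Gamma$, the whole increment lies in the hyperplane $(H^1)^\perp$, the ambient dimension genuinely drops, and I recurse on $\Gamma$ inside that lower-dimensional subspace (there the relative interior is still avoided by $0$, trivially, unless $\Delta S_t(\Gamma) = \{0\}$, in which case we are in the terminal case with $\beta = 0$ and $B^* = \Gamma$ — but item 1 forces $B^1 \ne \varnothing$, so strictly speaking when $B^1 = \varnothing$ we simply do not use $H^1$ as one of the listed vectors and pass directly to the recursion). If $B^1 \ne \varnothing$, set it aside as the first piece and recurse on the remainder $\Gamma \setminus B^1$.

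The key observation making the recursion legitimate is that on $\Gamma' := \Gamma \setminus B^1$ we have $H^1 \cdot \Delta S_t \equiv 0$, so $\Delta S_t(\Gamma') \subseteq (H^1)^\perp$, a subspace of dimension $\le d-1$; moreover $0 \notin ri\big((\Delta S_t(\Gamma'))^{cc}\big)$ automatically unless $\Delta S_t(\Gamma') \subseteq \{0\}$, because any relative-interior point of a cone inside $(H^1)^\perp$ is also not separated off by $H^1$ — one must check the cone generated by $\Delta S_t(\Gamma')$ cannot contain $0$ in its relative interior without $\Delta S_t(\Gamma')$ being trivial, which follows since $0 \in ri(C)$ for a cone $C$ forces $C$ to be a linear subspace, and that subspace would have to lie in $(H^1)^\perp \cap \{y : H^1 \cdot y \ge 0\}$ consistently — here I would spell out the short linear-algebra argument. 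Thus either $\Delta S_t(\Gamma')$ is the zero set and we stop with $B^* := \Gamma'$, $\{\omega \in \Gamma \mid \Delta S_t(\omega) = 0\} \subseteq B^*$, or we apply the inductive hypothesis to $\Gamma'$ and obtain $H^2, \dots, H^\beta$, $B^2, \dots, B^\beta$, $B^*$ with $\beta - 1 \le d - 1$, giving $\beta \le d$ overall. Disjointness (item 2) is built in by construction; the nesting property in item 3, namely $H^i \cdot \Delta S_t \ge 0$ on $\bigcup_{j \ge i} B^j \cup B^*$, holds because each $H^i$ was chosen to be nonnegative on the relevant residual set $\Gamma^{(i)} = \Gamma \setminus \bigcup_{j<i} B^j$ which contains exactly that union; item 4 is precisely the terminal condition $0 \in ri((\Delta S_t(B^*))^{cc})$ restated via the separating-hyperplane dichotomy.

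For the measurability addendum, when $\Gamma = A_{t-1}^z$ or $\Sigma_{t-1}^z$ I would note that $\Delta S_t$ is $\mathcal{F}_t$-measurable, that the standard separator of a cone depends only on the closed cone $\overline{K}$ and hence can be written as an explicit convergent series $\sum 2^{-(n-1)} v_n$ over the enumerated rationals $v_n \in \mathbb{Q}^d \cap V$ — note $V$ itself is determined by finitely many linear inequalities read off from $\Delta S_t(\Gamma)$, so the construction is a fixed Borel recipe — and that consequently each $B^i = \{\omega \in \Gamma \mid H^i \cdot \Delta S_t(\omega) > 0\}$ is in $\mathcal{F}_t$, whence $B^* = \Gamma \setminus \bigcup B^i \in \mathcal{F}_t$ and $H = \sum_{i=1}^\beta H^i \mathbf{1}_{B^i}$ is $\mathcal{F}_t$-measurable and, with the standard-separator choice, uniquely determined. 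The main obstacle I anticipate is the bookkeeping in the inductive step: verifying cleanly that passing to $\Gamma \setminus B^1$ genuinely lowers the relevant dimension and preserves the hypothesis $0 \notin ri(\cdot)$, and handling the degenerate branches ($B^1 = \varnothing$, $\Delta S_t(\Gamma')$ trivial) so that the final count $\beta \le d$ and the inclusion $\{\Delta S_t = 0\} \subseteq B^*$ come out exactly as stated; the geometry of each individual separation step is routine, but threading it through the recursion with the nested sign conditions of item 3 intact is where care is needed.
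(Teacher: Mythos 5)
Your overall strategy (iterated separation, peeling off $B^1$, recursing on the remainder, using the standard separator for canonicity and measurability) is the same as the paper's, but the step you yourself flag as ``the key observation making the recursion legitimate'' is false, and this breaks the induction as you have set it up. You claim that on $\Gamma':=\Gamma\setminus B^1$ one has $0\notin ri\big((\Delta S_t(\Gamma'))^{cc}\big)$ automatically unless $\Delta S_t(\Gamma')\subseteq\{0\}$, so that the only terminal case is ``all remaining increments are zero''. The linear-algebra argument you sketch shows only that $0\in ri(C)$ forces the cone $C$ to be a linear subspace contained in $(H^1)^\perp$ --- and there is no contradiction in that: a nontrivial linear subspace sits inside $(H^1)^\perp$ perfectly consistently. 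Concrete counterexample with $d=2$: if $\Delta S_t(\Gamma)=\{(0,1),(1,0),(-1,0)\}$, then $0\notin ri(K^0)$, the standard separator is $H^1=(0,1)$, $B^1=\{\Delta S_t=(0,1)\}$, and on $\Gamma'$ the increments are $\{(1,0),(-1,0)\}$, whose generated cone is the whole $x$-axis: here $0$ lies in its relative interior, the remaining increments are not all zero, your inductive hypothesis is inapplicable (and no separating vector with a nonempty strict-positivity set exists), yet your stated dichotomy tells you to recurse. The correct terminal condition --- the one the paper uses --- is ``$0\in ri(K^i)$ for the remainder cone'' (which includes, but is not limited to, $K^i=\{0\}$); there one stops, sets $B^*$ equal to the remainder, and item 4 follows precisely because a convex cone with $0$ in its relative interior is a linear subspace, so any $H$ with $H\cdot\Delta S_t\ge 0$ on $B^*$ must vanish there. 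Your closing remark that ``item 4 is precisely the terminal condition $0\in ri(\cdot)$'' acknowledges this, but it contradicts the dichotomy on which you actually build the recursion.

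A second, smaller gap: you never show that the chosen separator yields $B^1\neq\varnothing$, and instead add a detour branch ``$B^1=\varnothing$: recurse inside $(H^1)^\perp$''. In fact, under $0\notin ri(K)$ the standard separator cannot annihilate all of $\Delta S_t(\Gamma)$: if it did, every $v_n$ would vanish on $K$, so $\overline{K}=\bigcap_{n}\{y\mid v_n\cdot y\ge 0\}$ would coincide with the subspace $\bigcap_n\{y\mid v_n\cdot y=0\}$ and hence $0\in ri(K)$, a contradiction. This is exactly what yields item 1 ($\beta\ge 1$ and each $B^i\neq\varnothing$), and it eliminates your degenerate branch (where, incidentally, the dimension of the span of the increments would not drop, since they already lie in $(H^1)^\perp$). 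With the terminal condition corrected and this non-emptiness argument supplied, the rest of your outline --- dimension drop of $K^i$ by at least one per step, hence $\beta\le d$, disjointness, the nested sign conditions of item 3, the inclusion $\{\Delta S_t=0\}\subseteq B^*$, and measurability/uniqueness of $H$ via the standard separator --- matches the paper's proof. (Minor point: for fixed $t,z$ each $H^i$ is a fixed vector, so measurability of $B^i$ needs only measurability of $S_t$; your claim that $V$ is cut out by finitely many inequalities is neither needed nor true in general.)
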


\begin{proof}
Set $A^{0}:=\Gamma $ and $K^{0}=(\Delta S_{t}(\Gamma ))^{cc}
\subseteq \mathbb{R}^{d}$ and let $\Delta _{0}:=\{\omega \in
A^{0}\mid \Delta S_{t}(\omega )=0\},$ which may be empty.

\begin{description}
\item[Step 1:] The set $K^{0}\subseteq \mathbb{R}^{d}$ is
non-empty and
convex and so $ri\big(K^{0}\big)\neq \varnothing $. From the assumption $%
0\notin ri\big(K^{0}\big)$ there exists a standard separator
$H^{1}\in \mathbb{R}^{d}$ such that (i) $H^{1}\cdot \Delta
S_{t}(\omega )\geq 0$ for all $\omega \in A^{0}$; (ii)
$B^{1}:=\{\omega \in A^{0}\mid H^{1}\cdot \Delta S_{t}(\omega
)>0\}$ is non-empty. Set $A^{1}:=(A^{0}\setminus
B^{1})=\{\omega \in A^{0}\mid H^{1}\cdot \Delta S_{t}(\omega )=0\}$ and let $%
K^{1}:=(\Delta S_{t}(A^1 ))^{cc} $, which is a non-empty convex set with $%
dim(K^{1})\leq d-1$.\newline If $0\in ri\big(K^{1}\big)$ (this
includes the case $K^{1}=\left\{ 0\right\} $) the procedure is
complete: one cannot separate $\left\{ 0\right\} $ from
the relative interior of $K^{1}$. The conclusion is that $\beta =1$, $%
B^{\ast }=A^{1}=A^{0}\setminus B^{1}$ which might be empty, and
$\Delta
_{0}\subseteq B^{\ast }$. Notice that if $K^{1}=\left\{ 0\right\} $ then $%
B^{\ast }=\Delta _{0}$ which might be empty. Otherwise:

\item[Step 2:] If $0\notin ri\big(K^{1}\big)$ we find the standard
separator $H^{2}\in \mathbb{R}^{d}$ such that $H^{2}\cdot \Delta
S_{t}(\omega )\geq 0,$ for all $\omega \in A^{1},$ and the set
$B^{2}:=\{\omega \in A^{1}\mid
H^{2}\cdot \Delta S_{t}(\omega )>0\}$ is non-empty. Denote $%
A^{2}:=(A^{1}\setminus B^{2})$ and let $K^{2}=(\Delta S_{t}(A^2
))^{cc}$ with $dim(K^{2})\leq d-2$.\newline If $0\in
ri\big(K^{2}\big)$ (this includes $K^{2}=\left\{ 0\right\} $) the
procedure is complete and we have the conclusions with $\beta =2$ and $%
B^{\ast }=A^{1}\setminus B^{2}=A^{0}\setminus (B^{1}\cup B^{2})$, and $%
\Delta _{0}\subseteq B^{\ast }$. Notice that if $K^{1}=\left\{
0\right\} $ then $B^{\ast }=\Delta _{0}.$ Otherwise:

\item[...]

\item[Step d-1] If $0\notin ri\big(K^{d-2}\big)$ ...Set
$B^{d-1}\neq \varnothing ,$ $A^{d-1}=(A^{d-2}\setminus B^{d-1})$,
$K^{d-1}=(\Delta
S_{t}(A^{d-1} ))^{cc}$ with $dim(K^{d-1})\leq 1 $. If $0\in ri\big(K^{d-1}%
\big)$ the procedure is complete. Otherwise:

\item[Step d:] We necessarily have $0\notin ri\big(K^{d-1}\big)$, so that $%
dim(K^{d-1})=1$, and the convex cone $K^{d-1}$ necessarily
coincides with a half-line with origin in $0$. Then we find a
standard separator $H^{d}\in \mathbb{R}^{d}$ such that:
$B^{d}:=\{\omega \in A^{d-1}\mid H^{d}\cdot \Delta S_{t}(\omega
)>0\}\neq \varnothing $ and the set
\begin{equation*}
B^{\ast }:=\{\omega \in A^{d-1}\mid \Delta S_{t}(\omega
)=0\}=\{\omega \in A^{0}\mid \Delta S_{t}(\omega )=0\}=\Delta _{0}
\end{equation*}%
satisfies: $B^{\ast }=A^{d-1}\setminus B^{d}$. Set
$A^{d}:=A^{d-1}\setminus
B^{d}=B^{\ast }=\Delta _{0}$\ and $K^{d}:=(\Delta S_{t}(A^d ))^{cc} $. Then $%
K^{d}=\left\{ 0\right\} $.
\end{description}

\noindent Since $dim(\Delta S_{t}(\Gamma ))^{cc} \leq d$ we have at most $d$
steps. In case $\beta =d$ we have $\Gamma
=A^{0}=\bigcup_{i=1}^{d}B^{i}\cup \Delta _{0}$.
To prove the last assertion we note that for any fixed $t$ and
$z$, $B^{i}$ are $\mathcal{F}_{t}$-measurable since
$B^{i}=A_{t-1}^{z}\cap (f\circ S_{t})^{-1}((0,\infty ))$ where
$f:\mathbb{R}^{d}\mapsto \mathbb{R}$ is the
continuous function $f(x)=H^{i}\cdot (x-z_{t-1})$ with $H^{i}\in \mathbb{R}%
^{d}$ fixed.
\end{proof}

\begin{corollary}
\label{NoPoint}Let $t\in I_{1}$, $z\in \mathbf{Z}$, $A\in
\mathcal{F}_{t-1},$ $\Gamma =A_{t-1}^{z}$. Then $0\in ri(\Delta
S_{t}(\Gamma ))^{cc}$ if and only if there are No $1p$-Arbitrage
on $\Gamma $, i.e.:
\begin{equation}
\text{for all }H\in \mathbb{R}^{d}\text{ s.t.
}H(S_{t}-z_{t-1})\geq 0\text{ on }\Gamma \text{ we have
}H(S_{t}-z_{t-1})=0\text{ on }\Gamma .  \label{H1}
\end{equation}
\end{corollary}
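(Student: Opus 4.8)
The plan is to prove Corollary \ref{NoPoint} by unwinding the two directions through elementary convex geometry, leaning on Lemma \ref{spezzamento} only for one implication and on the separating hyperplane theorem for the other. Throughout write $\Delta S_t(\omega) = S_t(\omega) - z_{t-1}$ for $\omega \in \Gamma = A_{t-1}^z$ (on the level set the value $S_{t-1}(\omega)$ is constant and equal to $z_{t-1}$, so this is consistent with the process increment), and set $K := (\Delta S_t(\Gamma))^{cc}$, a convex cone in $\mathbb{R}^d$.

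For the implication $0 \in ri(K) \Rightarrow$ No $1p$-Arbitrage on $\Gamma$, I argue by contraposition. Suppose there is $H \in \mathbb{R}^d$ with $H \cdot \Delta S_t(\omega) \geq 0$ for all $\omega \in \Gamma$ and $H \cdot \Delta S_t(\bar\omega) > 0$ for some $\bar\omega \in \Gamma$. Then the linear functional $y \mapsto H \cdot y$ is nonnegative on $\Delta S_t(\Gamma)$, hence nonnegative on its convex hull, hence on $K = co(conv(\Delta S_t(\Gamma))) \cup \{0\}$ (a nonnegative linear functional on a set is nonnegative on the cone it generates), while it is strictly positive at the point $\Delta S_t(\bar\omega) \in K$. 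A linear functional that is $\geq 0$ on a convex set and $> 0$ at one of its points cannot vanish on the relative interior: if it did vanish at some $y_0 \in ri(K)$, then for any $y \in K$ one could move slightly past $y_0$ in the direction $y_0 - y$ while staying in $K$, forcing $H\cdot y \le 0$ for all $y\in K$, contradicting strict positivity at $\Delta S_t(\bar\omega)$. Hence $0 \notin ri(K)$ would follow only if the functional is somewhere nonzero on $ri(K)$; more directly, $H$ separates $\{0\}$ from $K$ nontrivially, so $0$ cannot lie in $ri(K)$ — this is exactly the remark made just before Lemma \ref{spezzamento}. This gives the contrapositive.

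For the converse, No $1p$-Arbitrage on $\Gamma \Rightarrow 0 \in ri(K)$, I again argue by contraposition: assume $0 \notin ri(K)$. Then Lemma \ref{spezzamento} applies (with this $\Gamma = A_{t-1}^z \in \mathcal{F}_{t-1}$), producing $\beta \geq 1$, vectors $H^1, \ldots, H^\beta$ and nonempty disjoint sets $B^1, \ldots, B^\beta \subseteq \Gamma$ with, in particular, $H^1 \cdot \Delta S_t(\omega) \geq 0$ for all $\omega \in \Gamma$ and $H^1 \cdot \Delta S_t(\omega) > 0$ for all $\omega \in B^1 \neq \varnothing$ (item \ref{item_spezz_arb} of the lemma with $i = 1$, using $\Gamma = B^1 \cup \cdots \cup B^\beta \cup B^*$). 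Taking $H := H^1 \in \mathbb{R}^d$ then witnesses a $1p$-Arbitrage on $\Gamma$ in the sense of \eqref{H1}: $H(S_t - z_{t-1}) \geq 0$ on $\Gamma$ but is not identically $0$ there (it is positive on $B^1$). This contradicts No $1p$-Arbitrage on $\Gamma$, completing the contrapositive and hence the corollary.

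The only genuinely substantive point is the convex-geometry fact used in the first direction — that a nontrivially separating functional precludes $0 \in ri(K)$ — but this is standard (it is the content of the relative-interior separation theorem: $0 \in ri(K)$ iff $\{0\}$ cannot be properly separated from $K$) and indeed the paper already invokes it verbatim in the paragraph preceding Lemma \ref{spezzamento}, so I would simply cite that remark rather than reprove it. The rest is bookkeeping: checking that $\Delta S_t$ on the level set $\Gamma$ agrees with $S_t - z_{t-1}$, and that the cone generated by $\Delta S_t(\Gamma)$ carries any functional nonnegative on $\Delta S_t(\Gamma)$. No obstacle of real depth arises; the corollary is essentially a restatement of Lemma \ref{spezzamento}'s item \ref{item_spezz_arb} together with the separation remark, packaged in the language of $1p$-Arbitrage.
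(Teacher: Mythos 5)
Your proof is correct and follows essentially the same route as the paper: the direction $0\notin ri(\Delta S_t(\Gamma))^{cc}\Rightarrow$ existence of a $1p$-Arbitrage is obtained from Lemma \ref{spezzamento}, item \ref{item_spezz_arb}, with $i=1$ (using $B^1\neq\varnothing$), and the converse is exactly the separation remark following equation \eqref{coconv}, which you spell out in slightly more detail (noting $S_{t-1}=z_{t-1}$ on $\Gamma$ and that $H\cdot 0=0$ at the relative interior point forces $H\cdot y\le 0$ on the cone). No gaps.
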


\begin{proof}
Let $0\notin ri(\Delta S_{t}(\Gamma ))^{cc}.$ Then from Lemma \ref%
{spezzamento}-\ref{item_spezz_arb}) with $i=1$ we obtain a $1p$-Arbitrage $%
H^{1}$ on $\Gamma =\cup _{j=1}^{\beta }B^{j}\cup B^{\ast }$, since $%
B^{1}\neq \varnothing .$ Viceversa, if $0\in ri(\Delta
S_{t}(\Gamma ))^{cc}$ we obtain (\ref{H1}) from the argument
following equation (\ref{coconv}).
\end{proof}

\begin{definition}
\label{defBeta0}For $A\in \mathcal{F}_{t-1}$ and $\Gamma
=A_{t-1}^{z}$ we naturally extend the definition of $\beta _{t,z}$
in Lemma \ref{spezzamento} to the case of $0\in ri(\Delta
S_{t}(\Gamma ))^{cc}$ using
\begin{equation*}
\beta _{t,z}=0\overset{\cdot }{\Leftrightarrow }0\in ri(\Delta
S_{t}(\Gamma ))^{cc}
\end{equation*}%
with $B_{t,z}^{0}=\varnothing $ and $B_{t,z}^{\ast }=A_{t-1}^{z}\in \mathcal{%
F}_{t-1}$. In this case, we also extend the definition of the
random variable in \eqref{levelArbitrage} as $H_{t,z}(\omega
)\equiv 0$.
\end{definition}

\begin{corollary}
\label{corUnivParb}Let $t\in I_{1}$, $z\in \mathbf{Z,}$ $A\in \mathcal{F}%
_{t-1}$ and $\Gamma =A_{t-1}^{z}$ with $0\notin ri(\Delta
S_{t}(\Gamma ))^{cc}$. For any $P\in \mathcal{P}$ s.t. $P(\Gamma
)>0$ let $j:=\inf \{1\leq i\leq \beta \mid P(B_{t,z}^{i})>0\}$. If
$j<\infty $ the trading
strategy $H(s,\omega ):=H^{j}\mathbf{1}_{\Gamma }(\omega )\mathbf{1}%
_{\{t\}}(s)$ is a $P$-Classical Arbitrage.
\end{corollary}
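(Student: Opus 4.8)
The plan is to verify directly that the strategy $H(s,\omega):=H^{j}\mathbf{1}_{\Gamma}(\omega)\mathbf{1}_{\{t\}}(s)$ meets the definition of a $P$-Classical Arbitrage, namely $V_{T}(H)\geq 0$ $P$-a.s. and $P(\mathcal{V}_{H}^{+})>0$. First I would observe that $H\in\mathcal{H}$: since $\Gamma=A_{t-1}^{z}\in\mathcal{F}_{t-1}$ by \eqref{A} and $H^{j}\in\mathbb{R}^{d}$ is a constant vector, the random vector $H^{j}\mathbf{1}_{\Gamma}$ is $\mathcal{F}_{t-1}$-measurable, so $H$ is predictable (it is zero at all times other than $t$). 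The value process is then simply $V_{T}(H)(\omega)=H^{j}\cdot\Delta S_{t}(\omega)\,\mathbf{1}_{\Gamma}(\omega)$, which vanishes off $\Gamma$.

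The sign analysis on $\Gamma$ is exactly what Lemma \ref{spezzamento} provides. By hypothesis $0\notin ri(\Delta S_{t}(\Gamma))^{cc}$, so the decomposition $\Gamma=\bigcup_{i=1}^{\beta}B^{i}\cup B^{\ast}$ of Lemma \ref{spezzamento} applies. By item \ref{item_spezz_arb}) of that Lemma with index $i=j$, we have $H^{j}\cdot\Delta S_{t}(\omega)>0$ for all $\omega\in B^{j}$ and $H^{j}\cdot\Delta S_{t}(\omega)\geq 0$ for all $\omega\in\bigcup_{i=j}^{\beta}B^{i}\cup B^{\ast}$. The only points of $\Gamma$ not yet covered lie in $B^{1}\cup\cdots\cup B^{j-1}$; but by the definition $j:=\inf\{1\leq i\leq\beta\mid P(B_{t,z}^{i})>0\}$ we have $P(B^{i})=0$ for every $i<j$, so $P\big(\bigcup_{i<j}B^{i}\big)=0$. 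Hence $V_{T}(H)=H^{j}\cdot\Delta S_{t}\,\mathbf{1}_{\Gamma}\geq 0$ $P$-a.s.: off $\Gamma$ it is zero, on $\bigcup_{i\geq j}B^{i}\cup B^{\ast}$ it is $\geq 0$ pointwise, and the remaining set is $P$-null.

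Finally, since $j<\infty$ we have $P(B^{j})>0$ by the definition of $j$, and on $B^{j}\subseteq\Gamma$ we have $V_{T}(H)=H^{j}\cdot\Delta S_{t}>0$; thus $B^{j}\subseteq\mathcal{V}_{H}^{+}$ and $P(\mathcal{V}_{H}^{+})\geq P(B^{j})>0$. Together with $V_{0}(H)=0$ (automatic from our convention on strategies of zero initial cost), this shows $H$ is a $P$-Classical Arbitrage. There is no genuine obstacle here: the statement is essentially a bookkeeping consequence of Lemma \ref{spezzamento}, and the only point requiring a moment of care is the observation that the indices $i<j$ contribute a $P$-null set, which is precisely why $j$ is defined as the first index with $P(B_{t,z}^{i})>0$ rather than, say, $i=1$.
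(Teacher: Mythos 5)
Your proposal is correct and follows essentially the same route as the paper: both rest on Lemma \ref{spezzamento}, item \ref{item_spezz_arb}), applied with the index $j$, together with the observations that $P(B_{t,z}^{i})=0$ for $i<j$ and $P(B_{t,z}^{j})>0$. Your version merely spells out the routine bookkeeping (predictability of $H$, $V_{T}(H)=H^{j}\cdot\Delta S_{t}\,\mathbf{1}_{\Gamma}$, $V_{0}(H)=0$) that the paper leaves implicit.
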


\begin{proof}
From Lemma \ref{spezzamento}-\ref{item_spezz_arb}) we obtain that: $%
H^{j}\Delta S_{t}(\omega )>0$ on $B_{t,z}^{j}$ with $P(B_{t,z}^{j})>0;$ $%
H^{j}\Delta S_{t}(\omega )\geq 0$ on $\bigcup_{i=j}^{\beta
_{t,z}}B_{t,z}^{i}\cup B_{t,z}^{\ast }$ and $P(B_{t,z}^{k})=0$ for
$1\leq k<j $.
\end{proof}

\begin{remark}
\label{ImportantRemark}Let $D\subseteq \mathbb{R}^{d}$ and $%
C:=(D)^{cc}\subseteq \mathbb{R}^{d}$ be the convex cone generated
by $D.$ If
$0\in ri(C)$ then for any $x\in D$ there exist a finite number of elements $%
x_{j}\in D$ such that $0$ is a convex combination of $\left\{
x,x_{1},...,x_{m}\right\} $ with a strictly positive coefficient
of $x$.
Indeed, fix $x\in D$ and recall that for any convex set $C\subseteq \mathbb{R%
}^{d}$ we have
\begin{equation*}
ri(C):=\left\{ z\in C\mid \forall x\in C\text{ }\exists
\varepsilon >0\text{ s.t. }z-\varepsilon (x-z)\in C\right\} .
\end{equation*}%
As $0\in ri(C)$ and $x\in D\subseteq C$ we obtain $-\varepsilon
x\in C,$ for
some $\varepsilon >0,$ and therefore: $\frac{\varepsilon }{1+\varepsilon }x+%
\frac{1}{1+\varepsilon }(-\varepsilon x)=0.$ Since $-\varepsilon
x\in C$ then it is a linear combination with non negative
coefficients of elements
of $D$ and we obtain: $\frac{\varepsilon }{1+\varepsilon }x+\frac{1}{%
1+\varepsilon }\sum_{j=1}^{m}\alpha _{j}x_{j}=0$, which can be
rewritten - possibly normalizing the coefficients - as: $\lambda
x+\sum_{j=1}^{m}\lambda
_{j}x_{j}=0$, with $x_{j}\in D,$ $\lambda +\sum_{j=1}^{m}\lambda _{j}=1$, $%
\lambda >0$ and $\lambda _{j}\geq 0$. When the set $D\subseteq \mathbb{R}%
^{d} $ is the set of the image points of the increment of the price process $%
[\Delta S_{t}(\omega )]_{\omega \in \Gamma }$, for a fixed time
$t$, this observation shows that, however we choose $\omega \in
\Gamma $ we can
construct a conditional martingale measure, relatively to the period $%
[t-1,t],$ which assigns a strictly positive weight to $\omega $
and has finite support. The measure is determined by the
coefficients $\left\{ \lambda ,\lambda _{1},...,\lambda
_{m}\right\} $ in the equation: $0=\lambda \Delta S_{t}(\omega
)+\sum_{j=1}^{m}\lambda _{j}\Delta S_{t}(\omega _{j})$. This
heuristic argument is made precise in the following Corollary and
it will be used also in the proof of Proposition \ref{LemNOpolar}.
\end{remark}

\begin{corollary}
\label{corNoArbBeta}Let $z$, $t$, $\Gamma =A_{t-1}^{z}$ and
$B_{t,z}^{\ast }$ as in Lemma \ref{spezzamento}.
\begin{equation*}
\text{For all }U\subseteq B_{t,z}^{\ast },\;U\in \mathcal{F}\quad \text{%
there exists }Q\in \mathcal{M}(B_{t,z}^{\ast })\text{ s.t. }Q(U)>0
\end{equation*}%
where $\mathcal{M}(B)=\{Q\in \mathcal{P}\mid Q(B)=1\;\text{ and }%
E_{Q}[S_{t}\mid \mathcal{F}_{t-1}]=S_{t-1}$ $Q$-a.s.$\}$, for $B\in \mathcal{%
F}.$
\end{corollary}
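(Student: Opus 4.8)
The plan is to exploit the structural properties of $B_{t,z}^{\ast}$ established in Lemma \ref{spezzamento}-\ref{item_spezz}): on this set no $1p$-Arbitrage over the single period $[t-1,t]$ is possible, which by Corollary \ref{NoPoint} (applied with $A\in\mathcal{F}_{t-1}$ chosen so that $A_{t-1}^{z}=B_{t,z}^{\ast}$, using that $B_{t,z}^{\ast}\in\mathcal{F}_{t}$ but is a level-set intersection of the required form, or more directly from item \ref{item_spezz} itself) means precisely that $0\in ri(\Delta S_{t}(B_{t,z}^{\ast}))^{cc}$. This is the hypothesis that powers the construction.

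First I would fix $U\subseteq B_{t,z}^{\ast}$ with $U\in\mathcal{F}$, $U\neq\varnothing$, and pick any $\omega_{0}\in U$. Since $0\in ri(\Delta S_{t}(B_{t,z}^{\ast}))^{cc}$, Remark \ref{ImportantRemark} (applied with $D=[\Delta S_{t}(\omega)]_{\omega\in B_{t,z}^{\ast}}$ and the distinguished point $x=\Delta S_{t}(\omega_{0})$) furnishes finitely many points $\omega_{1},\dots,\omega_{m}\in B_{t,z}^{\ast}$ and coefficients $\lambda>0$, $\lambda_{j}\geq 0$ with $\lambda+\sum_{j=1}^{m}\lambda_{j}=1$ such that
\begin{equation*}
\lambda\,\Delta S_{t}(\omega_{0})+\sum_{j=1}^{m}\lambda_{j}\,\Delta S_{t}(\omega_{j})=0.
\end{equation*}
Then I would define $Q:=\lambda\,\delta_{\omega_{0}}+\sum_{j=1}^{m}\lambda_{j}\,\delta_{\omega_{j}}$. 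By construction $Q$ is a probability measure, $Q(B_{t,z}^{\ast})=1$ because every atom lies in $B_{t,z}^{\ast}$, and $Q(U)\geq\lambda>0$ because $\omega_{0}\in U$. It remains to verify the one-step martingale property $E_{Q}[S_{t}\mid\mathcal{F}_{t-1}]=S_{t-1}$ $Q$-a.s.: all the chosen points $\omega_{i}$ share the same price history $z_{0:t-1}$ (they lie in $A_{t-1}^{z}\supseteq B_{t,z}^{\ast}$), so $\mathcal{F}_{t-1}$ is $Q$-a.s. trivial along the relevant atoms and $S_{t-1}\equiv z_{t-1}$; hence the condition reduces to $E_{Q}[S_{t}]=z_{t-1}$, i.e. $E_{Q}[\Delta S_{t}]=0$, which is exactly the displayed convex relation. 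Thus $Q\in\mathcal{M}(B_{t,z}^{\ast})$ with $Q(U)>0$.

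The only delicate point — and the step I expect to need the most care — is the passage from "no $1p$-Arbitrage on $B_{t,z}^{\ast}$'' (item \ref{item_spezz} of Lemma \ref{spezzamento}, which is phrased for $H\in\mathbb{R}^{d}$) to the geometric statement $0\in ri(\Delta S_{t}(B_{t,z}^{\ast}))^{cc}$. Corollary \ref{NoPoint} delivers exactly this equivalence, but it is stated for $\Gamma=A_{t-1}^{z}$; since Lemma \ref{spezzamento} already asserts $B_{t,z}^{\ast}\in\mathcal{F}_{t}$ and $B_{t,z}^{\ast}\subseteq\Sigma_{t-1}^{z}$, one should note $B_{t,z}^{\ast}=A_{t-1}^{z}$ for the choice $A:=\{\omega\mid \Delta S_{t}(\omega)\in K^{\beta}\}\cap A_{t-1}^{z}$ (or simply observe that Corollary \ref{NoPoint}'s proof only uses that $\Gamma$ is a set on which $S_{0:t-1}$ is constant, which $B_{t,z}^{\ast}$ is). Everything else is routine: the finiteness of the support is automatic from Carathéodory/Remark \ref{ImportantRemark}, and measurability of $Q$ is trivial since it is finitely supported.
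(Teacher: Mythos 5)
Your proof is correct and follows essentially the same route as the paper: invoke Lemma \ref{spezzamento}-\ref{item_spezz}) and Corollary \ref{NoPoint} to get $0\in ri(\Delta S_{t}(B_{t,z}^{\ast}))^{cc}$, then use Remark \ref{ImportantRemark} at a chosen $\omega_{0}\in U$ to build a finitely supported measure whose weights give the one-step martingale property since $S_{t-1}\equiv z_{t-1}$ on $B_{t,z}^{\ast}$. Your extra remark on applying Corollary \ref{NoPoint} to $\Gamma=B_{t,z}^{\ast}$ (rather than a set of the form $A_{t-1}^{z}$ with $A\in\mathcal{F}_{t-1}$) is a fair observation, correctly resolved by noting the geometric equivalence needs no measurability of $\Gamma$; the paper applies the corollary in the same way without comment.
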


\begin{proof}
From Lemma \ref{spezzamento}-\ref{item_spezz}) there are no
$1p$-Arbitrage restricted to $\Gamma =B_{t,z}^{\ast }$. Applying
Corollary \ref{NoPoint}
this implies that $0\in ri(\Delta S_{t}(B_{t,z}^{\ast }))^{cc}$. Take any $%
\omega \in U\subseteq B_{t,z}^{\ast }$ . Applying Remark \ref%
{ImportantRemark} to the set $D:=\Delta S_{t}(B_{t,z}^{\ast })$ and to $%
x:=\Delta S_{t}(\omega )\in D$, we deduce the existence of
$\{\omega _{1},\ldots ,\omega _{m}\}\subseteq B_{t,z}^{\ast }$ and
non negative coefficients $\left\{ \lambda _{t}(\omega
_{1}),...,\lambda _{t}(\omega _{m})\right\} $ and $\lambda
_{t}(\omega )>0$ such that: $\lambda
_{t}(\omega )+\sum_{j=1}^{m}\lambda _{t}(\omega _{j})=1$ and%
\begin{equation*}
0=\lambda _{t}(\omega )\Delta S_{t}(\omega )+\sum_{j=1}^{m}\lambda
_{t}(\omega _{j})\Delta S_{t}(\omega _{j}).
\end{equation*}%
Since $\{\omega _{1},\ldots ,\omega _{m}\}\subseteq B_{t,z}^{\ast }$ and $%
\omega \in B_{t,z}^{\ast }$ we have $S_{t-1}(\omega _{j})=z_{t-1}$ and $%
S_{t-1}(\omega )=z_{t-1}.$ Therefore:
\begin{equation}
0=\lambda _{t}(\omega )(S_{t}(\omega
)-z_{t-1})+\sum_{j=1}^{m}\lambda _{t}(\omega _{j})(S_{t}(\omega
_{j})-z_{t-1}),
\end{equation}%
so that $Q(\{\omega \})=\lambda _{t}(\omega )$ and $Q(\{\omega
_{j}\})=\lambda _{t}(\omega _{j}),$ for all $j$, give the desired
probability.
\end{proof}
\bigskip
\begin{example}
\label{ex3d} Let $(\Omega ,\mathcal{F})=(\mathbb{R}^{+},\mathcal{B}(\mathbb{R%
}^{+}))$ and consider a single period market with $d=3$ risky asset $%
S_{t}=[S_{t}^{1},S_{t}^{2},S_{t}^{3}]$ with $t=0,1$ and
$S_{0}=[2,2,2]$. Let
\begin{equation*}
S_{1}^{1}(\omega )=%
\begin{cases}
1 & \omega \in \mathbb{R}^{+}\setminus \mathbb{Q} \\
2 & \omega \in \mathbb{Q}\cap \lbrack 1/2,+\infty ) \\
3 & \omega \in \mathbb{Q}\cap [0,1/2)\qquad
\end{cases}%
S_{1}^{2}(\omega )=%
\begin{cases}
2 & \omega \in \mathbb{R}^{+}\setminus \mathbb{Q} \\
1+\omega ^{2} & \omega \in \mathbb{Q}\cap \lbrack 1/2,+\infty ) \\
1+\omega ^{2} & \omega \in \mathbb{Q}\cap [0,1/2)\qquad
\end{cases}%
\end{equation*}%
\begin{equation*}
S_{1}^{3}(\omega )=%
\begin{cases}
2+\omega ^{2} & \omega \in \mathbb{R}^{+}\setminus \mathbb{Q} \\
2 & \omega \in \mathbb{Q}\cap \lbrack 1/2,+\infty ) \\
2 & \omega \in \mathbb{Q}\cap [0,1/2)\qquad
\end{cases}%
\end{equation*}
\end{example}

Fix $t=1$ and $z\in \mathbf{Z}$ with $z_{0}=S_{0}$. It is easy to
check that in this case $\beta _{t,z}=2$ with
$B_{t,z}^{1}=\mathbb{R}^{+}\setminus
\mathbb{Q}$, $B_{t,z}^{2}=\mathbb{Q}\cap [0,1/2)$, $B_{t,z}^{\ast }=\mathbb{Q%
}\cap [1/2,+\infty )$. The corresponding strategies
$H=[h_{1},h_{2},h_{3}]$
(standard in the sense of Lemma \ref{spezzamento}) are given by $%
H_{t,z}^{1}=[0,0,1]$ and $H_{t,z}^{2}=[1,0,0]$. Note that
$H_{t,z}^{1}$ is a 1p-arbitrage with
$\mathcal{V}_{H_{t,z}^{1}}^{+}=B_{t,z}^{1}$. We have therefore
that $B_{t,z}^{1}$ is a null set with respect to any martingale
measure. The strategy $H_{t,z}^{2}$ satisfies $V_{T}(H_{t,z}^{2})\geq 0$ on $%
(B_{t,z}^{1})^{c}$ with $\mathcal{V}_{H_{t,z}^{2}}^{+}=B_{t,z}^{2}$ hence, $%
B_{t,z}^{2}$ is also an $\mathcal{M}$-polar set. This example
shows the need
of a multiple separation argument, as it is not possible to find a \emph{%
single} separating hyperplane $H\in \mathbb{R}^{d}$ such that the
image points of $B_{t,z}^{1}\cup B_{t,z}^{2}$ (which is
$\mathcal{M}$-polar), through the random vector $\Delta S,$ are
strictly contained in one of the
associated half-spaces. We have indeed that $B_{t,z}^{2}$ is a subset of $%
\{\omega \in \Omega \mid H_{t,z}^{1}(S_{1}-S_{0})=0\}$ where
$H_{t,z}^{1}$ is the only 1p-arbitrage in this market.

\definecolor{red}{rgb}{1,0,0}
\definecolor{blue}{rgb}{0,0,1}
\definecolor{qreen}{rgb}{0,1,0}


\tdplotsetmaincoords{60}{138}

\pgfmathsetmacro{\rvec}{.8} \pgfmathsetmacro{\thetavec}{30}
\pgfmathsetmacro{\phivec}{60}

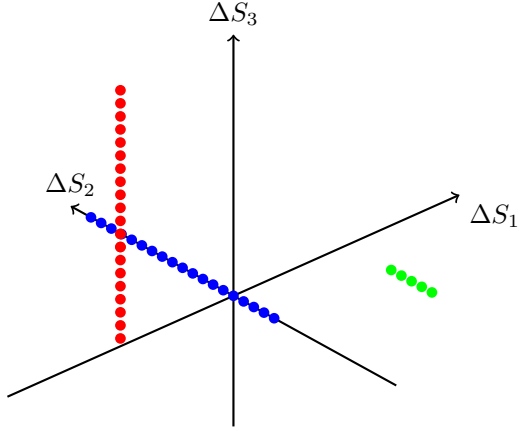
\begin{figure}[h]
\centering

\begin{tikzpicture}[scale=4,tdplot_main_coords]

\draw[thick,->] (1,0,0) -- (-1,0,0) node[anchor=north
west]{$\Delta S_1$}; \draw[thick,->] (0,0.8,0) -- (0,-0.8,0)
node[anchor=south]{$\Delta S_2$}; \draw[thick,->] (0,0,-0.5) --
(0,0,1) node[anchor=south]{$\Delta S_3$};

\fill [color=blue] (0,0.2,0) circle (0.5pt); \fill [color=blue]
(0,0.15,0) circle (0.5pt); \fill [color=blue] (0,0.1,0) circle
(0.5pt); \fill [color=blue] (0,0.05,0) circle (0.5pt); \fill
[color=blue] (0,0.0,0) circle (0.5pt); \fill [color=blue]
(0,-0.05,0) circle (0.5pt); \fill [color=blue] (0,-0.1,0) circle
(0.5pt); \fill [color=blue] (0,-0.15,0) circle (0.5pt); \fill
[color=blue] (0,-0.2,0) circle (0.5pt); \fill [color=blue]
(0,-0.25,0) circle (0.5pt); \fill [color=blue] (0,-0.3,0) circle
(0.5pt); \fill [color=blue] (0,-0.35,0) circle (0.5pt); \fill
[color=blue] (0,-0.4,0) circle (0.5pt); \fill [color=blue]
(0,-0.45,0) circle (0.5pt); \fill [color=blue] (0,-0.5,0) circle
(0.5pt); \fill [color=blue] (0,-0.55,0) circle (0.5pt); \fill
[color=blue] (0,-0.6,0) circle (0.5pt); \fill [color=blue]
(0,-0.65,0) circle (0.5pt); \fill [color=blue] (0,-0.7,0) circle
(0.5pt);

\fill [color=red] (0.5,0,0.03) circle (0.5pt); \fill [color=red]
(0.5,0,0.08) circle (0.5pt); \fill [color=red] (0.5,0,0.13) circle
(0.5pt); \fill [color=red] (0.5,0,0.18) circle (0.5pt); \fill
[color=red] (0.5,0,0.23) circle (0.5pt); \fill [color=red]
(0.5,0,0.28) circle (0.5pt); \fill [color=red] (0.5,0,0.33) circle
(0.5pt); \fill [color=red] (0.5,0,0.38) circle (0.5pt); \fill
[color=red] (0.5,0,0.43) circle (0.5pt); \fill [color=red]
(0.5,0,0.38) circle (0.5pt); \fill [color=red] (0.5,0,0.43) circle
(0.5pt); \fill [color=red] (0.5,0,0.48) circle (0.5pt); \fill
[color=red] (0.5,0,0.53) circle (0.5pt); \fill [color=red]
(0.5,0,0.58) circle (0.5pt); \fill [color=red] (0.5,0,0.63) circle
(0.5pt); \fill [color=red] (0.5,0,0.68) circle (0.5pt); \fill
[color=red] (0.5,0,0.73) circle (0.5pt); \fill [color=red]
(0.5,0,0.78) circle (0.5pt); \fill [color=red] (0.5,0,0.83) circle
(0.5pt); \fill [color=red] (0.5,0,0.88) circle (0.5pt); \fill
[color=red] (0.5,0,0.93) circle (0.5pt); \fill [color=red]
(0.5,0,0.98) circle (0.5pt);

\fill [color=green] (-0.5,0.22,0) circle (0.5pt); \fill
[color=green] (-0.5,0.27,0) circle (0.5pt); \fill [color=green]
(-0.5,0.32,0) circle (0.5pt); \fill [color=green] (-0.5,0.37,0)
circle (0.5pt); \fill [color=green] (-0.5,0.42,0) circle (0.5pt);
\end{tikzpicture}

\caption{{\protect\footnotesize {Decomposition of $\Omega$ in
Example \protect\ref{ex3d}: $B^{1}_{t,z}$ in red $B^{2}_{t,z}$ in
green and $B^{\ast}_{t,z}$ in blue}}}
\end{figure}

The corollaries \ref{corUnivParb} and \ref{corNoArbBeta} show the
difference
between the sets $B^{i}$ and $B^{\ast }$. Restricted to the time interval $%
[t-1,t]$, a probability measure whose mass is concentrated on
$B^{\ast }$ admits an equivalent martingale measure while for
those probabilities that assign positive mass to at least one
$B^{i}$ an arbitrage opportunity can be constructed. We can
summarize the possible situations as follows.

\begin{corollary}
\label{resume}For $\Gamma =A_{t-1}^{z},$ with $A\in
\mathcal{F}_{t-1}$, \ and $\mathcal{M}(B)$ defined in Corollary
\ref{corNoArbBeta} we have:

\begin{enumerate}
\item \label{beta0case}$B_{t,z}^{\ast }=A_{t-1}^{z}\Leftrightarrow $ No $1p$%
-Arbitrage on $A_{t-1}^{z}\overset{\cdot }{\Leftrightarrow }0\in
ri(\Delta S_t(A_{t-1}^{z}))^{cc}.$

\item \label{betacase}$B_{t,z}^{\ast }=\varnothing \Leftrightarrow
0\notin conv\big(\Delta S_{t}(A_{t-1}^{z})\big)$

\item \label{beta1case}$\beta _{t,z}=1$ and $B_{t,z}^{\ast }\neq
\varnothing
\Longrightarrow \exists H\in \mathbb{R}^{d}\setminus \{0\}$ s.t. $%
B_{t,z}^{\ast }=\{\omega \in A_{t-1}^{z}\mid H(S_{t}(\omega
)-z_{t-1})=0\}$ is \textquotedblleft
martingalizable\textquotedblright\ i.e. $\forall
U\subset B$, $B\in \mathcal{F}$ there exists $Q\in \mathcal{M}(A_{t-1}^{z})%
\text{ s.t. }Q(U)>0$.
\end{enumerate}
\end{corollary}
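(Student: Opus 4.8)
The plan is to prove the three items separately, each reducing to a combination of statements already established: Corollary~\ref{NoPoint}, Definition~\ref{defBeta0}, Lemma~\ref{spezzamento}, Remark~\ref{ImportantRemark} and Corollary~\ref{corNoArbBeta}. Throughout one uses that on $\Gamma=A_{t-1}^{z}$ the increment satisfies $\Delta S_{t}(\omega)=S_{t}(\omega)-z_{t-1}$, so the blocks $B_{t,z}^{j},B_{t,z}^{\ast}$ are exactly those produced by the separation procedure in the proof of Lemma~\ref{spezzamento} (the case $\Gamma=\varnothing$ being trivial). For item~\ref{beta0case} I would chain the equivalences: the equivalence ``No $1p$-Arbitrage on $A_{t-1}^{z}$ $\Longleftrightarrow 0\in ri(\Delta S_{t}(A_{t-1}^{z}))^{cc}$'' is precisely Corollary~\ref{NoPoint}, while for the other one I would note that by Definition~\ref{defBeta0} the condition $0\in ri(\Delta S_{t}(\Gamma))^{cc}$ means $\beta_{t,z}=0$, in which case $B_{t,z}^{\ast}=A_{t-1}^{z}$ by definition; conversely if $\beta_{t,z}\geq1$ then $B_{t,z}^{1}\neq\varnothing$ and $B_{t,z}^{1}\cap B_{t,z}^{\ast}=\varnothing$ by Lemma~\ref{spezzamento}, so $B_{t,z}^{\ast}\subsetneq A_{t-1}^{z}$. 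Hence $B_{t,z}^{\ast}=A_{t-1}^{z}$ iff $\beta_{t,z}=0$.

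For item~\ref{betacase} I would prove both implications by a convexity argument. If $0\notin conv(\Delta S_{t}(A_{t-1}^{z}))$ but, for contradiction, $B_{t,z}^{\ast}\neq\varnothing$, then by Lemma~\ref{spezzamento}-\ref{item_spezz}) there are no $1p$-Arbitrage on $B_{t,z}^{\ast}$, so Corollary~\ref{NoPoint} gives $0\in ri(\Delta S_{t}(B_{t,z}^{\ast}))^{cc}$, and then Remark~\ref{ImportantRemark} (applied exactly as in the proof of Corollary~\ref{corNoArbBeta}) exhibits $0$ as a convex combination of finitely many points of $\Delta S_{t}(B_{t,z}^{\ast})\subseteq\Delta S_{t}(A_{t-1}^{z})$, contradicting the hypothesis. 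Conversely, if $B_{t,z}^{\ast}=\varnothing$ then $\beta_{t,z}\geq1$ and $A_{t-1}^{z}$ is the disjoint union of $B_{t,z}^{1},\ldots,B_{t,z}^{\beta_{t,z}}$; given any convex representation $0=\sum_{k}\mu_{k}\Delta S_{t}(\omega_{k})$ with $\mu_{k}>0$, $\sum_{k}\mu_{k}=1$, $\omega_{k}\in B_{t,z}^{j_{k}}$, put $j^{\ast}=\min_{k}j_{k}$ and pair with $H_{t,z}^{j^{\ast}}$: by Lemma~\ref{spezzamento}-\ref{item_spezz_arb}) one has $H_{t,z}^{j^{\ast}}\cdot\Delta S_{t}\geq0$ on $\bigcup_{i\geq j^{\ast}}B_{t,z}^{i}$ with strict inequality on $B_{t,z}^{j^{\ast}}$, so $0=H_{t,z}^{j^{\ast}}\cdot 0=\sum_{k}\mu_{k}\,H_{t,z}^{j^{\ast}}\cdot\Delta S_{t}(\omega_{k})>0$, a contradiction; hence $0\notin conv(\Delta S_{t}(A_{t-1}^{z}))$.

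For item~\ref{beta1case} I would read the claim off the proof of Lemma~\ref{spezzamento}: since $\beta_{t,z}=1$ the procedure stops after Step~1, so $B_{t,z}^{\ast}=A^{1}=\{\omega\in A_{t-1}^{z}\mid H_{t,z}^{1}\cdot(S_{t}(\omega)-z_{t-1})=0\}$, and $H:=H_{t,z}^{1}\neq0$ because otherwise $B_{t,z}^{1}=\{\omega\in A_{t-1}^{z}\mid H_{t,z}^{1}\cdot\Delta S_{t}(\omega)>0\}$ would be empty, against Lemma~\ref{spezzamento}. The ``martingalizable'' property of $B_{t,z}^{\ast}$ is then exactly Corollary~\ref{corNoArbBeta} together with the observation $\mathcal{M}(B_{t,z}^{\ast})\subseteq\mathcal{M}(A_{t-1}^{z})$: any $Q$ with $Q(B_{t,z}^{\ast})=1$ has $Q(A_{t-1}^{z})=1$ since $B_{t,z}^{\ast}\subseteq A_{t-1}^{z}$, and the conditional martingale requirement is the same.

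I expect the only genuinely delicate step to be the forward implication of item~\ref{betacase}, namely ruling out $0\in conv(\Delta S_{t}(A_{t-1}^{z}))$ when $B_{t,z}^{\ast}=\varnothing$: a single separating hyperplane does not suffice here (cf.\ Example~\ref{ex3d}), and one must exploit the \emph{ordering} of the blocks $B_{t,z}^{j}$ produced by the iterated separation, i.e.\ the nested ``${}\geq0$ on the tail, ${}>0$ on the current block'' property recorded in Lemma~\ref{spezzamento}-\ref{item_spezz_arb}). Everything else is bookkeeping on top of results already proved.
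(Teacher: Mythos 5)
Your proof is correct and follows essentially the same route as the paper: item~\ref{beta0case} from Corollary~\ref{NoPoint} together with Definition~\ref{defBeta0}, item~\ref{beta1case} from the Step-1 identification $B_{t,z}^{\ast}=\{\omega\in A_{t-1}^{z}\mid H^{1}\cdot(S_{t}(\omega)-z_{t-1})=0\}$ plus Corollary~\ref{corNoArbBeta} and the inclusion $\mathcal{M}(B_{t,z}^{\ast})\subseteq\mathcal{M}(A_{t-1}^{z})$, and item~\ref{betacase} from the block structure of Lemma~\ref{spezzamento}. The only immaterial deviations are in item~\ref{betacase}, where you establish ``$B_{t,z}^{\ast}=\varnothing\Rightarrow 0\notin conv(\Delta S_{t}(A_{t-1}^{z}))$'' via the minimal-index argument (the paper proves the contrapositive ``$0\in conv\Rightarrow B_{t,z}^{\ast}\neq\varnothing$'' by showing the points of a convex representation of $0$ survive every separation step), and where you justify ``$0\notin conv\Rightarrow B_{t,z}^{\ast}=\varnothing$'' by reusing the argument of Corollary~\ref{corNoArbBeta} (Lemma~\ref{spezzamento}-\ref{item_spezz}, Corollary~\ref{NoPoint} and Remark~\ref{ImportantRemark}) rather than the $K^{i}$-recursion inside the proof of Lemma~\ref{spezzamento}.
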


\begin{proof}
Equivalence \ref{beta0case}. immediately follows from Corollary
\ref{NoPoint} and Definition \ref{defBeta0}.\newline To show
\ref{betacase}. we use the sets $K^{i}$ for $i=1,\ldots ,\beta
_{t,z} $ and the other notations from the proof of Lemma
\ref{spezzamento}. Suppose first that $0\notin conv\big(\Delta
S_{t}(\Gamma )\big)$ which implies $0\notin ri(\Delta S_{t}(\Gamma
))^{cc}$ and $\Delta _{0}=\varnothing $. From the assumption we
have $0\notin conv\big(\Delta
S_{t}(C)\big)$ for any subset $C\subseteq \Gamma $ so, in particular, $%
0\notin ri(K^{i})$ unless $K^{i}=\left\{ 0\right\} $. This implies $%
B_{t,z}^{\ast }=\Delta _{0}=\varnothing $.\newline Suppose now
$0\in conv\big(\Delta S_{t}(\Gamma )\big)$. If $0\in ri(\Delta
S_{t}(\Gamma ))^{cc}$, by Definition \ref{defBeta0} we have
$B_{t,z}^{\ast }=\Gamma \neq \varnothing $. Suppose then $0\notin
ri(\Delta S_{t}(\Gamma ))^{cc}$. As $0\in conv\big(\Delta
S_{t}(\Gamma )\big)$ there exists $n\geq 1 $ such that:
$0=\sum_{j=1}^{n}\lambda _{j}(S_{t}(\omega _{j})-z_{t-1})$, with
$\sum_{j=1}^{n}\lambda _{j}=1$, $\lambda _{j}>0$ and $\omega
_{j}\in \Gamma $ for all $j$. If $0$ is extremal then $n=1,$ $\
S_{t}(\omega _{1})-z_{t-1}=0$ and $\{\omega _{1}\}\in \Delta
_{0}\subseteq B_{t,z}^{\ast } $. If $n\geq 2$ we have $0\in
conv\left( \Delta S_{t}\{\omega _{1},\ldots ,\omega _{n}\}\right)
$ so that for any $H\in \mathbb{R}^{d}$ that satisfies
$H\cdot \Delta S_{t}(\omega _{i})\geq 0$ for any $i=1,\ldots ,n$ we have $%
H\cdot \Delta S_{t}(\omega _{i})=0$. Hence $\{\omega _{1},\ldots
,\omega _{n}\}\subseteq B_{t,z}^{\ast }$ by definition of
$B_{t,z}^{\ast }$. \newline
We conclude by showing \ref{beta1case}. From Lemma \ref{spezzamento} items %
\ref{item_spezz_arb} and \ref{item_spezz}, if we select $H=H^{1}$ then $%
\{\omega \in \Gamma \mid H^{1}(S_{t}(\omega )-z_{t-1})=0\}=\Gamma
\setminus B_{t,z}^{1}=B_{t,z}^{\ast }\neq \varnothing $ and on
$B_{t,z}^{\ast }$ we may apply Corollary \ref{corNoArbBeta}.
\end{proof}

\subsection{On $\mathcal{M}$-polar sets}

\label{sectionNull}

We consider for any $t\in I$ the sigma-algebra $\mathfrak{F}%
_{t}:=\bigcap_{Q\in \mathcal{M}}\mathcal{F}_{t}^{Q}$, where $\mathcal{F}%
_{t}^{Q}$ is the $Q$-completion of $\mathcal{F}_{t}$.
$\mathfrak{F}_{t}$ is
the universal completion of $\mathcal{F}_{t}$ with respect to $\mathcal{M}=%
\mathcal{M}(\mathbb{F})$. Notice that the introduction of this
enlarged
filtration needs the knowledge \emph{a priori} of the whole class $\mathcal{M%
}$ of martingale measures. Recall that any measure $Q\in
\mathcal{M}$ can be
uniquely extended to a measure $\overline{Q}$ on the enlarged sigma algebra $%
\mathfrak{F}_{T}$ so that we can write with slight abuse of notation $%
\mathcal{M}(\mathbb{F})=\mathcal{M}(\mathfrak{F})$ where $\mathfrak{F}:=\{%
\mathfrak{F}_{t}\}_{t\in I}$.\newline

We wish to show now that under any martingale measure the sets
$B_{t,z}^{i}$ (and their arbitrary unions) introduced in Lemma
\ref{spezzamento} must be null-sets. To this purpose we need to
recall some properties of a proper regular conditional probability
(see Theorems 1.1.6, 1.1.7 and 1.1.8 in Stroock-Varadhan
\cite{Strook}).

\begin{theorem}
\label{SV}Let $(\Omega ,\mathcal{F},Q)$ be a probability space, where $%
\Omega $ is a Polish space, $\mathcal{F}$ is the Borel $\sigma $-algebra, $%
Q\in \mathcal{P}.$ Let $\mathcal{A}\subseteq \mathcal{F}$ be a
countably generated sub sigma algebra of $\mathcal{F}$. Then there
exists a proper regular conditional probability, i.e. a function
$Q_{\mathcal{A}}(\cdot ,\cdot ):(\Omega ,\mathcal{F})\mapsto
\lbrack 0,1]$ such that:
\end{theorem}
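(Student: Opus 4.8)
The plan is to recognize this as the classical existence theorem for regular conditional probabilities (a disintegration statement) on Polish spaces, and to sketch the standard two-stage argument: first build a regular conditional probability, then upgrade it to a \emph{proper} one using that $\mathcal{A}$ is countably generated. In the paper itself one would simply invoke Theorems 1.1.6--1.1.8 of Stroock--Varadhan \cite{Strook}; what follows is how that proof runs.

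First I would reduce to the real line. Since $(\Omega,d)$ is Polish and $\mathcal{F}=\mathcal{B}(\Omega)$, there is a Borel isomorphism $\varphi:\Omega\to E$ onto a Borel subset $E\subseteq[0,1]$, so it suffices to construct the kernel when $\Omega$ is a Borel subset of $[0,1]$ with $X$ the coordinate map, and then transport everything back through $\varphi$ (which preserves the relevant measurability). For the real case, fix for each rational $q$ a version $F_q$ of $E_Q[\mathbf{1}_{\{X\le q\}}\mid\mathcal{A}]$. Off a single $Q$-null set $N_0\in\mathcal{A}$ one arranges simultaneously that $q\mapsto F_q(\omega)$ is nondecreasing and right-continuous along the rationals, with limit $0$ at $-\infty$ and $1$ at $+\infty$; then $F(x,\omega):=\inf\{F_q(\omega):q\in\mathbb{Q},\,q>x\}$ is, for $\omega\notin N_0$, a genuine distribution function and hence defines a probability measure $Q_{\mathcal{A}}(\omega,\cdot)$ on $\mathcal{B}(\Omega)$ (put an arbitrary fixed measure on $N_0$). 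A Dynkin/monotone-class argument then promotes the identity $Q_{\mathcal{A}}(\cdot,(-\infty,q])=F_q=E_Q[\mathbf{1}_{\{X\le q\}}\mid\mathcal{A}]$ from half-lines to all Borel sets $A$, yielding both the $\mathcal{A}$-measurability of $\omega\mapsto Q_{\mathcal{A}}(\omega,A)$ and the a.s.\ identity $Q_{\mathcal{A}}(\cdot,A)=E_Q[\mathbf{1}_A\mid\mathcal{A}]$.

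For properness I would exploit that $\mathcal{A}=\sigma(A_1,A_2,\dots)$ is countably generated: for each $n$ one has $Q_{\mathcal{A}}(\cdot,A_n)=E_Q[\mathbf{1}_{A_n}\mid\mathcal{A}]=\mathbf{1}_{A_n}$ $Q$-a.s., so outside a $Q$-null set $N_n$ it holds that $Q_{\mathcal{A}}(\omega,A_n)=\mathbf{1}_{A_n}(\omega)$; this extends to the algebra generated by the $A_n$ and, by a further monotone-class step, to the atom $[\omega]_{\mathcal{A}}=\bigcap\{A_n:\omega\in A_n\}\cap\bigcap\{A_n^c:\omega\notin A_n\}$. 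Setting $N:=N_0\cup\bigcup_n N_n$, a $Q$-null set, one gets for every $\omega\notin N$ and every $B\in\mathcal{A}$ with $\omega\in B$ that $Q_{\mathcal{A}}(\omega,B)=1$ (equivalently $Q_{\mathcal{A}}(\omega,[\omega]_{\mathcal{A}})=1$), which is exactly the properness clause. The only genuinely delicate point is the bookkeeping — collecting all the exceptional null sets into a single one and checking that $\mathcal{A}$-measurability in $\omega$ survives the passage through the Borel isomorphism — but this is routine once countable generation is in force; everything else is the textbook construction of a regular conditional distribution.
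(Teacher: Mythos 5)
This statement is not proved in the paper at all: it is recalled verbatim from Stroock--Varadhan (Theorems 1.1.6--1.1.8 of \cite{Strook}), so there is no in-paper argument to compare against. Your sketch is the standard construction behind that citation and is essentially correct: Borel-isomorphic reduction to a subset of $[0,1]$, conditional distribution functions along rational half-lines regularized off a single $\mathcal{A}$-measurable null set, a $\pi$--$\lambda$ argument to pass from half-lines to all Borel sets, and properness obtained by forcing $Q_{\mathcal{A}}(\omega,A_n)=\mathbf{1}_{A_n}(\omega)$ on the countably many generators and extending, for each fixed $\omega$ outside the union of the exceptional sets, to all of $\mathcal{A}$ (equivalently to the atom of $\omega$). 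Two small pieces of bookkeeping worth making explicit: after transporting back through the isomorphism you must check that $Q_{\mathcal{A}}(\omega,\cdot)$ charges the Borel image of $\Omega$ (adding one more null set), and item d) of the theorem, which you do not mention, follows from item b) by the usual simple-function and monotone-convergence approximation.
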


\begin{itemize}
\item[a)] for all $\omega \in \Omega $, $Q_{\mathcal{A}}(\omega
,\cdot )$ is a probability measure on $\mathcal{F}$;

\item[b)] for each $B\in \mathcal{F}$, the function
$Q_{\mathcal{A}}(\cdot ,B)$ is a version of $Q(B\mid
\mathcal{A})(\cdot )$;

\item[c)] $\exists N\in \mathcal{A}$ with $Q(N)=0$ such that$\ Q_{\mathcal{A}%
}(\omega ,B)=1_{B}(\omega )$ for $\omega \in \Omega \setminus N$
and $B\in \mathcal{A}$

\item[d)] In addition, if $X\in L^{1}(\Omega ,\mathcal{F},Q)$ then
$\
E_{Q}[X\mid \mathcal{A}](\omega )=\int_{\Omega }X(\tilde{\omega})Q_{\mathcal{%
A}}(\omega ,d\tilde{\omega})$ $Q-a.s.$
\end{itemize}

Recall that $\mathcal{F}_{t}=\mathcal{F}_{t}^{S}$, $t\in I$, is
countably generated.

\begin{lemma}
\label{conditional}Fix $t\in I_{1}=\left\{ 1,...,T\right\} $, $A\in \mathcal{%
F}_{t-1}$, $Q\in \mathcal{M}$ and for $z\in \mathbf{Z}$ consider $%
A_{t-1}^{z}:=\{\omega \in A\mid S_{0:t-1}(\omega )=z_{0:t-1}\}$.
Then
\begin{equation*}
\bigcup_{z\in \mathbf{Z}}\{\omega \in A_{t-1}^{z}\text{ s.t. }Q_{\mathcal{F}%
_{t-1}}(\omega ,\cup _{i=1}^{\beta _{t,z}}B_{t,z}^{i})>0\}
\end{equation*}%
is a subset of an $\mathcal{F}_{t-1}$-measurable $Q$-null set.
\end{lemma}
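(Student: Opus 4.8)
The plan is to fix the martingale measure $Q\in\mathcal{M}$ once and for all, produce a single $\mathcal{F}_{t-1}$-measurable $Q$-null set $N$, and then show that every point of the displayed union lies in $N$; since $N$ is $\mathcal{F}_{t-1}$-measurable and $Q$-null, this is exactly the assertion. Note first that $\cup_{i=1}^{\beta_{t,z}}B_{t,z}^{i}\in\mathcal{F}_{t}\subseteq\mathcal{F}$ by the ``moreover'' part of Lemma \ref{spezzamento} (and it is empty when $\beta_{t,z}=0$), so all the conditional probabilities involved are well defined.

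First I would apply Theorem \ref{SV} with the countably generated sigma algebra $\mathcal{A}=\mathcal{F}_{t-1}$ to get a proper regular conditional probability $Q_{\mathcal{F}_{t-1}}(\cdot,\cdot)$. Let $N_{1}\in\mathcal{F}_{t-1}$, $Q(N_{1})=0$, be the exceptional set of item (c), so that $Q_{\mathcal{F}_{t-1}}(\omega,B)=\mathbf{1}_{B}(\omega)$ for every $B\in\mathcal{F}_{t-1}$ whenever $\omega\notin N_{1}$. Since $Q\in\mathcal{M}$ the vector $S_{t}$ is $Q$-integrable, so by item (d) the map $\omega\mapsto\int_{\Omega}S_{t}(\tilde\omega)\,Q_{\mathcal{F}_{t-1}}(\omega,d\tilde\omega)$ is a version of $E_{Q}[S_{t}\mid\mathcal{F}_{t-1}]$, which by the martingale property equals $S_{t-1}$ $Q$-a.s.; let $N_{2}\in\mathcal{F}_{t-1}$ be the ($Q$-null) set where any of these $\mathcal{F}_{t-1}$-measurable functions disagree. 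Put $N:=N_{1}\cup N_{2}\in\mathcal{F}_{t-1}$; then $Q(N)=0$.

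The core of the argument is then elementary. Fix $\omega\notin N$; we may assume $\omega\in A$, so $\omega\in A_{t-1}^{z}$ for the unique $z$ with $z_{0:t-1}=S_{0:t-1}(\omega)$, and write $\mu:=Q_{\mathcal{F}_{t-1}}(\omega,\cdot)$. Since $A_{t-1}^{z},\Sigma_{t-1}^{z}\in\mathcal{F}_{t-1}$ contain $\omega$ and $\omega\notin N_{1}$, the measure $\mu$ is concentrated on $A_{t-1}^{z}\subseteq\Sigma_{t-1}^{z}$, hence $S_{t-1}=z_{t-1}$ $\mu$-a.s.; combining this with $\omega\notin N_{2}$ gives the key identity $\int\Delta S_{t}\,d\mu=0$, because $\int S_{t}\,d\mu=S_{t-1}(\omega)=z_{t-1}$ and $\int S_{t-1}\,d\mu=z_{t-1}$. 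Now I would run the finite induction built into the construction in Lemma \ref{spezzamento}: with $A^{0}:=A_{t-1}^{z}$ and $A^{i}:=A^{i-1}\setminus B_{t,z}^{i}$, each $H^{i}:=H_{t,z}^{i}\in\mathbb{R}^{d}$ satisfies $H^{i}\cdot\Delta S_{t}\geq0$ on $A^{i-1}$ and $B_{t,z}^{i}=\{\,\omega'\in A^{i-1}\mid H^{i}\cdot\Delta S_{t}(\omega')>0\,\}$. Assuming $\mu(A^{i-1})=1$ (true for $i=1$), the function $H^{i}\cdot\Delta S_{t}$ is $\geq0$ $\mu$-a.s. and $\int H^{i}\cdot\Delta S_{t}\,d\mu=H^{i}\cdot\int\Delta S_{t}\,d\mu=0$, so it vanishes $\mu$-a.s.; hence $\mu(B_{t,z}^{i})=0$ and $\mu(A^{i})=1$. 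After at most $d$ such steps we get $\mu\big(\cup_{i=1}^{\beta_{t,z}}B_{t,z}^{i}\big)=0$, i.e. $\omega$ does not belong to the displayed union. Thus that union is contained in $N$, as claimed.

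I do not expect a genuine obstacle. The only points needing care are the bookkeeping of the two exceptional sets coming from items (c) and (d) of Theorem \ref{SV}, the remark that $Q$-integrability of $S_t$ (hence of $\Delta S_{t}$) is exactly what makes item (d) applicable, and the observation that the vectors $H_{t,z}^{i}$ furnished by Lemma \ref{spezzamento} are one-step arbitrages precisely on the nested sets $A^{i-1}$ — the feature that lets the induction telescope.
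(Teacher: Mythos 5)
Your proof is correct and takes essentially the same route as the paper's: a proper regular conditional probability from Theorem \ref{SV} with $\mathcal{A}=\mathcal{F}_{t-1}$, the separators $H^i_{t,z}$ and sets $B^i_{t,z}$ from Lemma \ref{spezzamento}, and the conditional martingale identity $\int \Delta S_t \, Q_{\mathcal{A}}(\omega,d\tilde\omega)=0$ holding off a single $\mathcal{F}_{t-1}$-measurable $Q$-null set that does not depend on $z$. The only difference is organizational: the paper fixes an $\omega$ giving positive conditional mass to $\cup_i B^i_{t,z}$, takes the minimal such index $j$, and shows the conditional expectation identity must fail at $\omega$, whereas you run the induction forward, showing each $B^i_{t,z}$ has zero $Q_{\mathcal{A}}(\omega,\cdot)$-mass in turn — the same mechanism read in the contrapositive.
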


\begin{proof}
If $Q(A)=0$ there is nothing to show. Suppose now $Q(A)>0.$ In
this proof we
set for the sake of simplicity $X:=S_{t}$, $Y:=E_{Q}[X\mid \mathcal{F}%
_{t-1}]=S_{t-1}$ $Q$-a.s. $\beta :=\beta _{t,z}$ and $\mathcal{A}:=\mathcal{F%
}_{t-1}=\mathcal{F}_{t-1}^{S}$. Set
\begin{equation*}
D_{t}^{z}:=\left\{ \omega \in A_{t-1}^{z}\text{ such that }Q_{\mathcal{A}%
}(\omega ,\cup _{i=1}^{\beta }B_{t,z}^{i})>0\right\} .
\end{equation*}%
If $z\in \mathbf{Z}$ is such that $0\in ri(\Delta
S_{t}(A_{t-1}^{z}))^{cc}$
then $\cup _{i=1}^{\beta }B_{t,z}^{i}=\varnothing $ and $D_{t}^{z}=%
\varnothing $. So we can consider only those $z\in \mathbf{Z}$ such that $%
0\notin ri(\Delta S_{t}(A_{t-1}^{z}))^{cc}.$ Fix such $z.$\newline
Since $\mathcal{A}=\mathcal{F}_{t-1}^{S}$ is countably generated,
$Q$ admits a proper regular conditional probability
$Q_{\mathcal{A}}$.\newline From Theorem \ref{SV} d) we obtain:
\begin{equation*}
Y(\omega )=\int_{\Omega }X(\tilde{\omega})Q_{\mathcal{A}}(\omega ,d\tilde{%
\omega})\quad Q-a.s.
\end{equation*}%
As $A_{t-1}^{z}\in \mathcal{A}$, by Theorem \ref{SV} c) there exists a set $%
N\in \mathcal{A}$ with $Q(N)=0$ so that $Q_{\mathcal{A}}(\omega
,A_{t-1}^{z})=1$ on $A_{t-1}^{z}\setminus N$ and therefore we have
\begin{equation}
\int_{\Omega }X(\tilde{\omega})Q_{\mathcal{A}}(\omega ,d\tilde{\omega}%
)=\int_{A_{t-1}^{z}}X(\tilde{\omega})Q_{\mathcal{A}}(\omega ,d\tilde{\omega}%
)\quad \forall \omega \in A_{t-1}^{z}\setminus N.  \label{XY}
\end{equation}%
Since $0\notin ri(\Delta S_{t}(\Gamma ))^{cc}$ we may apply Lemma \ref%
{spezzamento}: for any $i=1,\ldots ,\beta $, there exists $H^{i}\in \mathbb{R%
}^{d}$ such that $H^{i}\cdot (X(\tilde{\omega})-z_{t-1})\geq 0$ for all $%
\tilde{\omega}\in \cup _{l=i}^{\beta }B_{t,z}^{l}\cup B_{t,z}^{\ast }$ and $%
H^{i}\cdot (X(\tilde{\omega})-z_{t-1})>0$ for every
$\tilde{\omega}\in B_{t,z}^{i}$.

Now we fix $\omega \in D_{t}^{z}\setminus N\subseteq
A_{t-1}^{z}\setminus N.$ Then the index $j:=\min \{1\leq i\leq
\beta \mid Q_{\mathcal{A}}(\omega
,B_{t,z}^{i})>0\}$ is well defined and: i) $H^{j}\cdot (X(\tilde{\omega}%
)-z_{t-1})>0$ on $B_{t,z}^{j}$, (ii) $Q_{\mathcal{A}}(\omega
,B_{t,z}^{j})>0$ iii) $H^{j}\cdot (X(\tilde{\omega})-z_{t-1})\geq
0$ on $\cup _{l=j}^{\beta }B_{t,z}^{l}\cup B_{t,z}^{\ast }$; iv)
$Q_{\mathcal{A}}(\omega ,B_{t,z}^{i})=0$ for $i<j.$ From i) and
ii) we obtain
\begin{equation*}
Q_{\mathcal{A}}(\omega ,A_{t-1}^{z}\cap \{H^{j}\cdot (X-z_{t-1})>0\})\geq Q_{%
\mathcal{A}}(\omega ,B_{t,z}^{j})>0.
\end{equation*}%
From iii) and iv) we obtain:%
\begin{eqnarray*}
Q_{\mathcal{A}}(\omega ,\{H^{j}\cdot (X-z_{t-1})\geq 0\}) &\geq &Q_{\mathcal{%
A}}(\omega ,\cup _{l=j}^{\beta }B_{t,z}^{l}\cup B_{t,z}^{\ast }) \\
&\geq &Q_{\mathcal{A}}(\omega ,A_{t-1}^{z})-Q_{\mathcal{A}}(\omega
,\cup _{i<j}B_{t,z}^{i})=1.
\end{eqnarray*}%
Hence
\begin{equation*}
H^{j}\cdot \left(
\int_{A_{t-1}^{z}}X(\tilde{\omega})Q_{\mathcal{A}}(\omega
,d\tilde{\omega})-z_{t-1}\right) =\int_{A_{t-1}^{z}}H^{j}\cdot (X(\tilde{%
\omega})-z_{t-1})Q_{\mathcal{A}}(\omega ,d\tilde{\omega})>0
\end{equation*}%
and therefore, from equation (\ref{XY}) and from $z_{t-1}=Y(\omega
)$, we
have:%
\begin{equation*}
H^{j}\cdot \left( \int_{\Omega }X(\tilde{\omega})Q_{\mathcal{A}}(\omega ,d%
\tilde{\omega})-Y(\omega )\right) >0.
\end{equation*}%
As this holds for any $\omega \in D_{t}^{z}\setminus N$ we obtain:
\begin{equation*}
D_{t}^{z}\setminus N\subseteq \{\omega \in \Omega \mid Y(\omega
)\neq
\int_{\Omega }X(\tilde{\omega})Q_{\mathcal{A}}(\omega ,d\tilde{\omega}%
)\}=:N^{\ast }\in \mathcal{F}_{t-1}
\end{equation*}%
with $Q(N^{\ast })=0.$ Hence, $D_{t}^{z}\subseteq N\cup N^{\ast
}:=N_{0}$
with $Q(N_{0})=0$ and $N_{0}$ not dependent on $z$. As this holds for every $%
z\in \mathbf{Z}$ we conclude that $\bigcup_{z\in \mathbf{Z}%
}D_{t}^{z}\subseteq N_{0}.$
\end{proof}

\begin{corollary}
\label{B-meas}Fix $t\in I_{1}$ and $Q\in \mathcal{M}$. 
If
\begin{equation*}
\mathfrak{B}_{t}:=\bigcup_{z\in \mathbf{Z}}\left\{
\bigcup_{i=1}^{\beta _{t,z}}B_{t,z}^{i}\right\}
\end{equation*}%
for $B_{t,z}^{i}$ given in Lemma \ref{spezzamento} with $\Gamma
=\Sigma
_{t-1}^{z}$ or $\Gamma =A_{t-1}^{z}$ (defined in equations \eqref{Sigma} and %
\eqref{A}), then $\mathfrak{B}_{t}$ is a subset of an $\mathcal{F}_{t}$%
-measurable $Q$ null set.
\end{corollary}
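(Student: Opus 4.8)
The plan is to show that $\mathfrak{B}_t$ is actually itself an $\mathcal{F}_t$-measurable $Q$-null set (the ``subset of'' in the statement being only a hedge against the measurability point below), and to reduce the nullity to Lemma~\ref{conditional} through the tower property for a proper regular conditional probability. The first task is to see that $\mathfrak{B}_t\in\mathcal{F}_t$. For this I would use the uniqueness clause of Lemma~\ref{spezzamento}: once the standard separator is fixed, each level set $\Sigma_{t-1}^z$ (resp.\ $A_{t-1}^z$) uniquely determines $\beta_{t,z}$, the vectors $H_{t,z}^i$, the $\mathcal{F}_t$-sets $B_{t,z}^i$, and the $\mathcal{F}_t$-measurable random variable $H_{t,z}(\cdot)=\sum_{i=1}^{\beta_{t,z}}H_{t,z}^i\mathbf{1}_{B_{t,z}^i}$, with $H_{t,z}\equiv 0$ when $\beta_{t,z}=0$ (Definition~\ref{defBeta0}). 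Patching over levels, set $\mathbb{H}_t(\omega):=H_{t,\,S_{0:t-1}(\omega)}(\omega)$, which is well defined since $\omega\in\Sigma_{t-1}^z$ iff $S_{0:t-1}(\omega)=z_{0:t-1}$. By construction $\mathbb{H}_t(\omega)\cdot\Delta S_t(\omega)\geq 0$ for every $\omega$, with strict inequality exactly when $\omega$ lies in one of the $B_{t,z}^i$ attached to its own level; since the $B_{t,z}^i$ of distinct levels sit in disjoint level sets, this is exactly $\mathfrak{B}_t=\{\omega\in\Omega\mid \mathbb{H}_t(\omega)\cdot\Delta S_t(\omega)>0\}$. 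Hence once $\mathbb{H}_t$ is $\mathcal{F}_t$-measurable, $\mathfrak{B}_t\in\mathcal{F}_t$.

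The $\mathcal{F}_t$-measurability of $\mathbb{H}_t$ is the genuine obstacle — the passage from the \emph{fixed-$z$} statement of Lemma~\ref{spezzamento} to a jointly measurable globalization. Here I would argue that the algorithm in the proof of Lemma~\ref{spezzamento} depends measurably on $z$: the multifunction $z\mapsto(\Delta S_t(\Sigma_{t-1}^z))^{cc}$ and, at each stage, the standard separator of the current cone are measurable in $z$ (and hence, via $S_{0:t-1}$, in $\omega$), so a standard measurable-selection argument furnishes an $\mathcal{F}_t$-measurable version of $\mathbb{H}_t$. Alternatively, one can exploit that the decomposition is intrinsic — Lemma~\ref{spezzamento}(4) together with the discussion after Corollary~\ref{resume} shows $B_{t,z}^\ast$ is the \emph{maximal} subset of $\Sigma_{t-1}^z$ with $0\in ri(\Delta S_t(\cdot))^{cc}$, by Corollary~\ref{NoPoint} the maximal ``$1p$-arbitrage-free'' piece — whence $\mathfrak{B}_t=\{\omega\mid\omega\notin B_{t,\,S_{0:t-1}(\omega)}^\ast\}$ is an analytic, hence universally measurable, set that can be replaced by an $\mathcal{F}_t$-measurable $Q$-hull without affecting the computation below. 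Everything else is routine.

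Granting $\mathfrak{B}_t\in\mathcal{F}_t$, fix $Q\in\mathcal{M}$. Since $\mathcal{F}_{t-1}=\mathcal{F}_{t-1}^S$ is countably generated, pick a proper regular conditional probability $Q_{\mathcal{F}_{t-1}}$ as in Theorem~\ref{SV}, and let $N\in\mathcal{F}_{t-1}$, $Q(N)=0$, be the null set of Theorem~\ref{SV}(c), so that $Q_{\mathcal{F}_{t-1}}(\omega,B)=\mathbf{1}_B(\omega)$ for all $\omega\notin N$ and $B\in\mathcal{F}_{t-1}$; in particular, for $\omega\notin N$ the measure $Q_{\mathcal{F}_{t-1}}(\omega,\cdot)$ is carried by the level set $\Sigma_{t-1}^{z(\omega)}$ through $\omega$. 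Apply Lemma~\ref{conditional} with $A=\Omega$ (the case $\Gamma=A_{t-1}^z$, $A\in\mathcal{F}_{t-1}$, being identical): there is an $\mathcal{F}_{t-1}$-measurable $Q$-null set $N_0$ containing $\bigcup_{z\in\mathbf{Z}}\{\omega\in\Sigma_{t-1}^z\mid Q_{\mathcal{F}_{t-1}}(\omega,\cup_{i=1}^{\beta_{t,z}}B_{t,z}^i)>0\}$. Then for $\omega\notin N\cup N_0$, by the concentration property and the partition of $\Omega$ into level sets, $Q_{\mathcal{F}_{t-1}}(\omega,\mathfrak{B}_t)=Q_{\mathcal{F}_{t-1}}(\omega,\mathfrak{B}_t\cap\Sigma_{t-1}^{z(\omega)})=Q_{\mathcal{F}_{t-1}}(\omega,\cup_{i}B_{t,z(\omega)}^i)$, and since $\omega\in\Sigma_{t-1}^{z(\omega)}$ and $\omega\notin N_0$ this vanishes. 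Hence $Q_{\mathcal{F}_{t-1}}(\cdot,\mathfrak{B}_t)=0$ $Q$-a.s., so by Theorem~\ref{SV}(b), $Q(\mathfrak{B}_t)=E_Q[Q_{\mathcal{F}_{t-1}}(\cdot,\mathfrak{B}_t)]=0$. Thus $\mathfrak{B}_t$ is an $\mathcal{F}_t$-measurable $Q$-null set (and a fortiori contained in one); if in the previous paragraph one only obtained $\mathfrak{B}_t$ universally measurable, the same computation applies to an $\mathcal{F}_t$-measurable $Q$-hull of $\mathfrak{B}_t$, giving the conclusion as stated.
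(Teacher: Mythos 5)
Your probabilistic core is the same as the paper's: reduce to Lemma \ref{conditional}, use that the proper r.c.p.\ $Q_{\mathcal{F}_{t-1}}(\omega,\cdot)$ concentrates on the level set $\Sigma_{t-1}^{z(\omega)}$, and note that $\mathfrak{B}_t\cap\Sigma_{t-1}^{z(\omega)}=\cup_i B^i_{t,z(\omega)}$ is a null set for that kernel off $N\cup N_0$. The genuine gap is in your treatment of measurability. Your main route asserts $\mathfrak{B}_t\in\mathcal{F}_t$ via a ``standard measurable-selection argument''; this is unjustified and in general false. The images $\Delta S_t(\Sigma_{t-1}^z)$ (and the sets produced at each stage of the algorithm in Lemma \ref{spezzamento}) are only analytic, the standard separator is therefore only universally measurably parametrized by $z$, and $\mathfrak{B}_t$ is an \emph{uncountable} union over $z\in\mathbf{Z}$ of Borel sets, which need not be Borel. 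The paper itself is explicit on this point: in Section \ref{sectionFiltration} it states that the polar set produced by Corollary \ref{B-meas} belongs in general only to $\mathfrak{F}_t$ (the universal $\mathcal{M}$-completion), and this failure of $\mathcal{F}_t$-measurability is precisely the reason the enlarged filtration $\widetilde{\mathbb{F}}$ is introduced. So the statement's ``subset of an $\mathcal{F}_t$-measurable $Q$-null set'' is not a hedge; it is the actual content.

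Your fallback (analytic/universally measurable plus an ``$\mathcal{F}_t$-measurable $Q$-hull'') glosses over exactly the step where the paper does its work. Theorem \ref{SV}(b) and the identity $Q(\mathfrak{B}_t)=E_Q[Q_{\mathcal{F}_{t-1}}(\cdot,\mathfrak{B}_t)]$ apply only to sets in $\mathcal{F}$; for a non-Borel $\mathfrak{B}_t$ the quantity $Q_{\mathcal{F}_{t-1}}(\omega,\mathfrak{B}_t)$ must be read in the completion $\widehat{Q}_{\mathcal{A}}(\omega,\cdot)$, and one must then pass from ``$\widehat{Q}_{\mathcal{A}}(\omega,\mathfrak{B}_t)=0$ for $Q$-a.e.\ $\omega$'' to the existence of a single $\mathcal{F}_t$-measurable $Q$-null superset. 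This is not automatic: the Borel null superset you get for each $\omega$, namely $\bigl(\cup_i B^i_{t,z(\omega)}\bigr)\cup(\Sigma_{t-1}^{z(\omega)})^c$, depends on $\omega$, and invoking an $\mathcal{F}_t$-hull of $\mathfrak{B}_t$ is circular, since knowing the hull is null is equivalent to the outer-measure-zero statement you are trying to prove. The paper closes this gap by showing $\mathfrak{B}_t\in\mathcal{F}_t^{Q_{\mathcal{A}}(\omega,\cdot)}$ for every $\omega\notin N_0$ (via the split along $\Sigma_{t-1}^z$ and its complement), forming $\widehat{\mathcal{F}}_t=\bigcap_{\omega\notin N_0}\mathcal{F}_t^{Q_{\mathcal{A}}(\omega,\cdot)}$, proving $\mathcal{F}_t\subseteq\widehat{\mathcal{F}}_t\subseteq\mathcal{F}_t^Q$, and identifying $\widehat{Q}(\cdot)=\int\widehat{Q}_{\mathcal{A}}(\omega,\cdot)\,Q(d\omega)$ with the restriction of the unique extension $\overline{Q}$, whence $\overline{Q}(\mathfrak{B}_t)=0$ and the claimed null superset in $\mathcal{F}_t$ exists. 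Some version of this extension/disintegration argument (or an explicit proof that the $\omega$-wise completed-kernel nullity integrates to $Q$-outer nullity relative to $\mathcal{F}_t$) is needed to make your proof complete.
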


\begin{proof}
First we consider the case $\Gamma =\Sigma _{t-1}^{z}$ and
$B_{t,z}^{i}$ given in Lemma \ref{spezzamento} with $\Gamma
=\Sigma _{t-1}^{z}$. As in the
previous proof, we denote the sigma-algebra $\mathcal{F}_{t-1}$ with $%
\mathcal{A}:=\mathcal{F}_{t-1}$. Notice that if $z\in \mathbf{Z}$
is such that $0\in ri(\Delta S_{t}(\Gamma ))^{cc}$ then $\cup
_{i=1}^{\beta _{t,z}}B_{t,z}^{i}=\varnothing ,$ hence we may
assume that $0\notin ri(\Delta S_{t}(\Gamma ))^{cc}$ . From the
proof of Lemma \ref{conditional}
\begin{equation*}
\bigcup_{z\in \mathbf{Z}}D_{t}^{z}\subseteq N_{0}=N\cup N^{\ast }
\end{equation*}%
with $Q(N_{0})=0.$ Notice that if $\omega \in \Omega \setminus
N_{0}$ then,
for all $z\in \mathbf{Z},$ either $\omega \notin \Sigma _{t-1}^{z}$ or $Q_{%
\mathcal{A}}(\omega ,\cup _{i=1}^{\beta _{t,z}}B_{t,z}^{i})=0$. Hence $%
\omega \in \Sigma _{t-1}^{z}\setminus N_{0}$ implies
$Q_{\mathcal{A}}(\omega
,\cup _{i=1}^{\beta _{t,z}}B_{t,z}^{i})=0.$ By Theorem \ref{SV} c) we have $%
Q_{\mathcal{A}}(\omega ,(\Sigma _{t-1}^{z})^{c})=0$ for all
$\omega \in \Sigma _{t-1}^{z}\setminus N_{0}$.

Fix now $\omega \in \Sigma _{t-1}^{z}\setminus N_{0}$ and consider
the
completion $\mathcal{F}_{t}^{Q_{\mathcal{A}}(\omega ,\cdot )}$ of $\mathcal{F%
}_{t}$ and the unique extension on
$\mathcal{F}_{t}^{Q_{\mathcal{A}}(\omega
,\cdot )}$ of $Q_{\mathcal{A}}(\omega ,\cdot )$, which we denote with $%
\widehat{Q}_{\mathcal{A}}(\omega ,\cdot ):\mathcal{F}_{t}^{Q_{\mathcal{A}%
}(\omega ,\cdot )}\rightarrow \lbrack 0,1]$.

From $\mathfrak{B}_{t}\cap (\Sigma _{t-1}^{z})^{c}\subseteq
(\Sigma _{t-1}^{z})^{c}$ and $Q_{\mathcal{A}}(\omega ,(\Sigma
_{t-1}^{z})^{c})=0$ we
deduce: $\mathfrak{B}_{t}\cap (\Sigma _{t-1}^{z})^{c}\in \mathcal{F}_{t}^{Q_{%
\mathcal{A}}(\omega ,\cdot )}$ and $\widehat{Q}_{\mathcal{A}}(\omega ,%
\mathfrak{B}_{t}\cap (\Sigma _{t-1}^{z})^{c})=0$. From
$\mathfrak{B}_{t}\cap
\Sigma _{t-1}^{z}=\cup _{i=1}^{\beta _{t,z}}B_{t,z}^{i}$ and $Q_{\mathcal{A}%
}(\omega ,\cup _{i=1}^{\beta _{t,z}}B_{t,z}^{i})=0$ we deduce: $\mathfrak{B}%
_{t}\cap \Sigma _{t-1}^{z}\in
\mathcal{F}_{t}^{Q_{\mathcal{A}}(\omega ,\cdot )}$ and
$\widehat{Q}_{\mathcal{A}}(\omega ,\mathfrak{B}_{t}\cap \Sigma
_{t-1}^{z})=0$. Then $\mathfrak{B}_{t}=(\mathfrak{B}_{t}\cap
\Sigma
_{t-1}^{z})\cup (\mathfrak{B}_{t}\cap (\Sigma _{t-1}^{z})^{c})\in \mathcal{F}%
_{t}^{Q_{\mathcal{A}}(\omega ,\cdot )}$ and
$\widehat{Q}_{\mathcal{A}}\left( \omega ,\mathfrak{B}_{t}\right)
=0$. Since $\omega \in \Sigma
_{t-1}^{z}\setminus N_{0}$ was arbitrary, we showed that $\widehat{Q}_{%
\mathcal{A}}(\omega ,\mathfrak{B}_{t})=0$ for all $\omega \in
\Sigma _{t-1}^{z}\setminus N_{0}$ and all $z\in \mathbf{Z}$. Since
$\bigcup_{z\in \mathbf{Z}}(\Sigma _{t-1}^{z}\setminus
N_{0})=\Omega \setminus N_{0}$ we
have:%
\begin{equation}
\mathfrak{B}_{t}\in \mathcal{F}_{t}^{Q_{\mathcal{A}}(\omega ,\cdot
)}\text{
and }\widehat{Q}_{\mathcal{A}}(\omega ,\mathfrak{B}_{t})=0\text{ for all }%
\omega \in \Omega \setminus N_{0}\text{ with }\ Q(N_{0})=0.
\label{111}
\end{equation}

Now consider the sigma algebra
\begin{equation*}
\widehat{\mathcal{F}}_{t}=\bigcap_{\omega \in \Omega \setminus N_{0}}%
\mathcal{F}_{t}^{Q_{\mathcal{A}}(\omega ,\cdot )}
\end{equation*}%
and observe that $\mathfrak{B}_{t}\in \widehat{\mathcal{F}}_{t}$.
Notice
that if a subset $B\subseteq \Omega $ satisfies: $B\subseteq C$ for some $%
C\in \mathcal{F}_{t}$ with $Q_{\mathcal{A}}(\omega ,C)=0$ for all
$\omega
\in \Omega \setminus N_{0},$ then%
\begin{equation*}
Q(C)=\int_{\Omega }Q_{\mathcal{A}}(\omega ,C)Q(d\omega
)=\int_{\Omega \setminus N_{0}}Q_{\mathcal{A}}(\omega ,C)Q(d\omega
)=0,
\end{equation*}%
so that $B\in \mathcal{F}_{t}^{Q}$. This shows that $\mathcal{F}%
_{t}\subseteq \widehat{\mathcal{F}}_{t}\subseteq
\mathcal{F}_{t}^{Q}$. Hence
$\mathfrak{B}_{t}\in \mathcal{F}_{t}^{Q}$. Let $\widehat{Q}:\widehat{%
\mathcal{F}}_{t}\rightarrow \lbrack 0,1]$ be defined by
$\widehat{Q}(\cdot ):=\int_{\Omega
}\widehat{Q}_{\mathcal{A}}(\omega ,\cdot )Q(d\omega ).$ Then
$\widehat{Q}$ is a probability which satisfies
$\widehat{Q}(B)=Q(B)$ for
every $B\in \mathcal{F}_{t}$ and therefore is an extension on $\widehat{%
\mathcal{F}}_{t}$ of $Q$. Since
$\overline{Q}:\mathcal{F}_{t}^{Q}\rightarrow
\lbrack 0,1]$ is the unique extension on $\mathcal{F}_{t}^{Q}$ of $Q$ and $%
\mathcal{F}_{t}\subseteq \widehat{\mathcal{F}}_{t}\subseteq \mathcal{F}%
_{t}^{Q}$ then $\widehat{Q}$ is the restriction of $\overline{Q}$ on $%
\widehat{\mathcal{F}}_{t}$ and%
\begin{equation*}
\overline{Q}(\mathfrak{B}_{t})=\widehat{Q}(\mathfrak{B}_{t})=\int_{\Omega }%
\widehat{Q}_{\mathcal{A}}(\omega ,\mathfrak{B}_{t})Q(d\omega
)=\int_{\Omega \setminus N_{0}}\widehat{Q}_{\mathcal{A}}(\omega
,\mathfrak{B}_{t})Q(d\omega )=0.
\end{equation*}%
Suppose now $A\in \mathcal{F}_{t-1},$ $\Gamma =A_{t-1}^{z}$ and set $%
\mathfrak{C}_{t}:=\bigcup_{z\in \mathbf{Z}}\left\{ \cup
_{i=1}^{\beta
_{t,z}}B_{t,z}^{i}\right\} $ where $B_{t,z}^{i}$ is given in Lemma \ref%
{spezzamento} with $\Gamma =A_{t-1}^{z}$. Fix any $\omega \in A$. Then $%
\Sigma _{t}^{S_{0:T}(\omega )}\subseteq A$ since $A\in
\mathcal{F}_{t-1}$. As a consequence $\mathfrak{C}_{t}\subseteq
\mathfrak{B}_{t}$.
\end{proof}

\begin{corollary}
\label{conditionalPolar}Fix $t\in I_{1}=\left\{ 1,...,T\right\} $
and for any $A\in \mathcal{F}_{t-1}$ consider
$A_{t-1}^{z}=\{\omega \in A\mid S_{0:t-1}(\omega )=z_{0:t-1}\}\neq
\varnothing $. Then for any $Q\in \mathcal{M}$ the set $\bigcup
\{A_{t-1}^{z}\mid 0\notin conv\big(\Delta
S_{t}(A_{t-1}^{z})\big)\}$ is a subset of an $\mathcal{F}_{t-1}$-measurable $%
Q$-null set and as a consequence is an $\mathcal{M}$-polar set.
\end{corollary}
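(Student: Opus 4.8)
The plan is to recognize $\bigcup\{A_{t-1}^z\mid 0\notin conv(\Delta S_t(A_{t-1}^z))\}$ as a union of the ``arbitrage pieces'' $B_{t,z}^i$ and then quote Lemma \ref{conditional}. Fix $Q\in\mathcal{M}$. First I would dispose of the case $Q(A)=0$: here $A$ itself is an $\mathcal{F}_{t-1}$-measurable $Q$-null set containing the set in question, so there is nothing to prove. From now on assume $Q(A)>0$.

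The crucial step is purely geometric and is already recorded in Corollary \ref{resume}: for $z\in\mathbf{Z}$ with $A_{t-1}^z\neq\varnothing$, one has $0\notin conv(\Delta S_t(A_{t-1}^z))$ if and only if $B_{t,z}^{\ast}=\varnothing$. Combined with the disjoint decomposition $A_{t-1}^z=\big(\bigcup_{i=1}^{\beta_{t,z}}B_{t,z}^i\big)\cup B_{t,z}^{\ast}$ from Lemma \ref{spezzamento}, this says exactly that $A_{t-1}^z=\bigcup_{i=1}^{\beta_{t,z}}B_{t,z}^i$ (so in particular $\beta_{t,z}\geq 1$): the whole fibre $A_{t-1}^z$ is covered by the pieces on which a $1p$-arbitrage strategy yields a strictly positive payoff.

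Next I would pass to regular conditional probabilities. Since $\mathcal{A}:=\mathcal{F}_{t-1}=\mathcal{F}_{t-1}^S$ is countably generated, Theorem \ref{SV} provides a proper regular conditional probability $Q_{\mathcal{A}}$ and, by part (c), a \emph{single} set $N\in\mathcal{A}$ with $Q(N)=0$ such that $Q_{\mathcal{A}}(\omega,B)=\mathbf{1}_B(\omega)$ for all $\omega\notin N$ and all $B\in\mathcal{A}$. Then, for any such $z$ and any $\omega\in A_{t-1}^z\setminus N$, using $A_{t-1}^z\in\mathcal{A}$ and the identity $A_{t-1}^z=\bigcup_{i=1}^{\beta_{t,z}}B_{t,z}^i$,
\[
Q_{\mathcal{A}}\Big(\omega,\bigcup_{i=1}^{\beta_{t,z}}B_{t,z}^i\Big)=Q_{\mathcal{A}}(\omega,A_{t-1}^z)=\mathbf{1}_{A_{t-1}^z}(\omega)=1>0 ,
\]
so $\omega$ belongs to the set $\bigcup_{z'\in\mathbf{Z}}\{\omega'\in A_{t-1}^{z'}\mid Q_{\mathcal{A}}(\omega',\bigcup_{i=1}^{\beta_{t,z'}}B_{t,z'}^i)>0\}$, which by Lemma \ref{conditional} is contained in an $\mathcal{F}_{t-1}$-measurable $Q$-null set $N_0$. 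Letting $z$ range over all indices with $A_{t-1}^z\neq\varnothing$ and $0\notin conv(\Delta S_t(A_{t-1}^z))$ gives
\[
\bigcup\{A_{t-1}^z\mid 0\notin conv(\Delta S_t(A_{t-1}^z))\}\subseteq N\cup N_0\in\mathcal{F}_{t-1},\qquad Q(N\cup N_0)=0 .
\]
Since $Q\in\mathcal{M}$ was arbitrary, the set is contained in an $\mathcal{F}_{t-1}$-measurable $Q$-null set for every $Q\in\mathcal{M}$, hence it is $\mathcal{M}$-polar.

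I do not expect a genuine obstacle: Lemma \ref{conditional} already performs the analytic work. The one point that needs care is the uniformity over $z$ — there are uncountably many fibres $A_{t-1}^z$, so one must invoke the \emph{proper} regular conditional probability (property (c) of Theorem \ref{SV}), which supplies one exceptional set $N$ valid for all $B\in\mathcal{A}$ simultaneously, rather than patching together fibre-by-fibre null sets. One also has to use Corollary \ref{resume} in the correct direction: it is the emptiness of $B_{t,z}^{\ast}$ — not merely the failure of $0\in ri(\Delta S_t(A_{t-1}^z))^{cc}$ — that forces $A_{t-1}^z$ to coincide with $\bigcup_i B_{t,z}^i$.
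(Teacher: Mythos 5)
Your proof is correct and follows essentially the same route as the paper: you use Corollary \ref{resume}(2) to identify $A_{t-1}^z$ with $\bigcup_{i=1}^{\beta_{t,z}}B_{t,z}^i$ when $0\notin conv(\Delta S_t(A_{t-1}^z))$, then property (c) of the proper regular conditional probability (one exceptional set $N$ for all fibres) to place $A_{t-1}^z\setminus N$ inside the set $\bigcup_z D_t^z$ handled by Lemma \ref{conditional}, exactly as the paper does. The only cosmetic difference is your separate treatment of $Q(A)=0$ and writing the covering null set as $N\cup N_0$ (the paper's $N_0$ already contains $N$), which is harmless.
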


\begin{proof}
From Corollary \ref{resume} 2), the condition $0\notin
conv\big(\Delta S_{t}(A_{t-1}^{z})\big)$ implies that $\cup
_{i=1}^{\beta _{t,z}}B_{t,z}^{i}=A_{t-1}^{z}$, thus:
$Q_{\mathcal{A}}(\omega ,A_{t-1}^{z})=1$ on $A_{t-1}^{z}\setminus
N$, $D_{t}^{z}=\{\omega \in A_{t-1}^{z}$ s.t.
$Q_{\mathcal{F}_{t-1}}(\omega ,A_{t-1}^{z})>0\}\supseteq
A_{t-1}^{z}\setminus N$ and
\begin{equation*}
\left( \bigcup \{A_{t-1}^{z}\mid 0\notin conv\big(\Delta S_{t}(A_{t-1}^{z})%
\big)\}\setminus N\right) \subseteq \bigcup_{z\in \mathbf{Z}%
}D_{t}^{z}\subseteq N_{0}\in \mathcal{F}_{t-1}.
\end{equation*}
\end{proof}


\subsubsection{Backward effect in the multiperiod case}

\label{SecSvuotamento}

The following example shows that additional care is required in
the multi-period setting:

\begin{example}
\label{svuotamento} Let $\Omega =\{\omega _{1},\omega _{2},\omega
_{3},\omega _{4}\}$ and consider a single risky asset $S_{t}$ with $t=0,1,2$%
.
\begin{equation*}
S_{0}=7\qquad S_{1}(\omega )=%
\begin{cases}
8\quad \omega \in \{\omega _{1},\omega _{2}\} \\
3\quad \omega \in \{\omega _{3},\omega _{4}\}\qquad%
\end{cases}%
S_{2}(\omega )=%
\begin{cases}
9\quad \omega =\omega _{1} \\
6\quad \omega =\omega _{2} \\
5\quad \omega =\omega _{3} \\
4\quad \omega =\omega _{4}%
\end{cases}%
\end{equation*}%
Fix $z\in \mathbf{Z}$ with the first two components
$(z_{0},z_{1})$ equal to $(7,3).$

\noindent \textbf{First period:} $\Sigma _{0}^{z}=\Omega $ and
$0\in ri(conv(\Delta S_{1}(\Sigma _{0}^{z})))=(-4,1)$ and there
exists $Q_{1}$
such that $Q_{1}(\omega _{i})>0$ for $i=1,2,3,4$ and $S_{0}=E_{Q_{1}}[S_{1}]$%
. If we restrict the problem to the first period only, there
exists a full support martingale measure for $(S_{0},S_{1})$ and
there are no $\mathcal{M}$-polar sets.
\newline
\textbf{Second period:} $\Sigma _{1}^{z}=\{\omega _{3},\omega _{4}\}$, $%
0\notin conv(\Delta S_{2}(\Sigma _{1}^{z}))=[1,2]$ and hence
$\Sigma _{1}^{z} $ is not supported by any martingale measure for
$S$, i.e. if $Q\in \mathcal{M}$ then $Q(\{\omega _{3},\omega
_{4}\})=0.$\newline \textbf{Backward}: As $\{\omega _{3},\omega
_{4}\}$ is a $Q$ null set for any martingale measure $Q\in
\mathcal{M}$, then $Q(\{\omega _{1},\omega _{2}\})=1$. This
reflects into the first period as $0\notin conv(\Delta
S_{1}(\{\omega _{1},\omega _{2}\}))=\left\{ 1\right\} $ and we
deduce that also $\{\omega _{1},\omega _{2}\}$ is not supported by
any martingale measure, implying $\mathcal{M}=\varnothing $.

This example thus shows that new $\mathcal{M}$-polar sets (as $\{\omega _{3},\omega _{4}\}$%
) can arise at later times creating a backward effect on the
existence martingale measures. In order to detect these situations
at time $t$, we shall need to anticipate certain polar sets at
posterior times.
\end{example}

More formally we need to consider the following iterative
procedure. Let
\begin{eqnarray*}
\Omega _{T}:= &&\Omega \\
\Omega _{t-1}:= &&\Omega _{t}\setminus \bigcup_{z\in
\mathbf{Z}}\{\Sigma
_{t-1}^{z}\mid 0\notin conv\big(\Delta S_{t}(\widetilde{\Sigma }_{t-1}^{z})%
\big)\},\text{\quad }t\in I_{1},
\end{eqnarray*}%
where%
\begin{equation*}
\widetilde{\Sigma }_{t-1}^{z}:=\{\omega \in \Omega _{t}\mid
S_{0:t-1}=z_{0:t-1}\},\text{\quad }t\in I_{1}.
\end{equation*}

We show that the set $B_{t,z}^{i}$ obtained from Lemma
\ref{spezzamento} with $\Gamma =\widetilde{\Sigma }_{t}^{z}$
belong to the family of polar set of $\mathcal{M}(\mathbb{F})$:
\begin{equation*}
\mathcal{N}:=\left\{ A\subseteq A^{\prime }\in \mathcal{F}\ \mid \
Q(A^{\prime })=0\ \forall \ Q\in \mathcal{M}(\mathbb{F})\right\}
\end{equation*}%
More precisely,

\begin{lemma}
\label{propPolari}For all $t\in I_{1}$ and $z\in \mathbf{Z}$
consider the
sets $B_{t,z}^{i}$ from Lemma \ref{spezzamento} with $\Gamma =\widetilde{%
\Sigma }_{t-1}^{z}$. Let
\begin{equation*}
\widetilde{\mathfrak{B}}_{t}:=\bigcup_{z\in \mathbf{Z}}\left\{
\cup
_{i=1}^{\beta _{t,z}}B_{t,z}^{i}\right\} \quad \mathfrak{D}%
_{t-1}:=\bigcup_{z\in \mathbf{Z}}\left\{ \Sigma _{t-1}^{z}\mid
0\notin conv(\Delta S_{t}(\widetilde{\Sigma }_{t-1}^{z}))\right\}
\end{equation*}%
For any $Q\in \mathcal{M}$, $\widetilde{\mathfrak{B}}_{t}$ is a subset of a $%
\mathcal{F}_{t}$-measurable $Q$-null set and
$\mathfrak{D}_{t-1}$ is a subset of an $\mathcal{F}_{t-1}$-measurable $Q$%
-null set.
\end{lemma}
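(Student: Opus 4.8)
The plan is to prove both assertions simultaneously by backward induction on $t\in\{T,T-1,\ldots,1\}$. If $\mathcal M=\varnothing$ everything is vacuous, so throughout we fix $Q\in\mathcal M$ (the covering null sets are allowed to depend on $Q$). For the base case $t=T$ we have $\Omega_T=\Omega$, hence $\widetilde\Sigma_{T-1}^z=\Sigma_{T-1}^z$ for every $z$ and the sets $B_{T,z}^i$ of Lemma~\ref{spezzamento} with $\Gamma=\widetilde\Sigma_{T-1}^z$ coincide with those obtained from $\Gamma=\Sigma_{T-1}^z$; thus $\widetilde{\mathfrak B}_T=\mathfrak B_T$ is contained in an $\mathcal F_T$-measurable $Q$-null set by Corollary~\ref{B-meas}, and $\mathfrak D_{T-1}=\bigcup_z\{\Sigma_{T-1}^z\mid 0\notin conv(\Delta S_T(\Sigma_{T-1}^z))\}$ is contained in an $\mathcal F_{T-1}$-measurable $Q$-null set by Corollary~\ref{conditionalPolar} applied with $A=\Omega$.

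For the inductive step assume the claim for all times $>t$. In particular each $\mathfrak D_s$ with $t\le s\le T-1$ is contained in an $\mathcal F_s$-measurable $Q$-null set, so the removed set $\Omega\setminus\Omega_t=\bigcup_{s=t}^{T-1}\mathfrak D_s$ is contained in some $R\in\mathcal F$ with $Q(R)=0$. Since $\mathcal A:=\mathcal F_{t-1}$ is countably generated, $Q$ admits a proper regular conditional probability $Q_{\mathcal A}$ (Theorem~\ref{SV}). Combining Theorem~\ref{SV}(c) with $Q(R)=0$ and with $E_Q[S_t\mid\mathcal F_{t-1}]=S_{t-1}$ $Q$-a.s., we may fix a \emph{single}, $z$-independent, $\mathcal A$-measurable $Q$-null set $N$ such that for all $\omega\in\Omega\setminus N$: $Q_{\mathcal A}(\omega,B)=\mathbf 1_B(\omega)$ for every $B\in\mathcal A$, $Q_{\mathcal A}(\omega,R)=0$, and $\int_\Omega S_t(\tilde\omega)\,Q_{\mathcal A}(\omega,d\tilde\omega)=S_{t-1}(\omega)$. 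The key observation is that, writing $\widehat Q_{\mathcal A}(\omega,\cdot)$ for the completion of $Q_{\mathcal A}(\omega,\cdot)$, one has $\widehat Q_{\mathcal A}(\omega,\widetilde\Sigma_{t-1}^z)=\mathbf 1_{\Sigma_{t-1}^z}(\omega)$ for every $z$ and every $\omega\notin N$, because $\widetilde\Sigma_{t-1}^z=\Sigma_{t-1}^z\setminus\bigl(\Sigma_{t-1}^z\cap(\Omega\setminus\Omega_t)\bigr)$ and $\Sigma_{t-1}^z\cap(\Omega\setminus\Omega_t)\subseteq R$.

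Granting this, the computation in the proof of Lemma~\ref{conditional} applies, now with $\Sigma_{t-1}^z$ as the base set and with $\widehat Q_{\mathcal A}(\omega,\cdot)$ concentrated on $\widetilde\Sigma_{t-1}^z$: for $z$ with $0\notin ri(\Delta S_t(\widetilde\Sigma_{t-1}^z))^{cc}$, the separators $H_{t,z}^i$ of Lemma~\ref{spezzamento} together with $\int_\Omega S_t\,d\widehat Q_{\mathcal A}(\omega,\cdot)=S_{t-1}(\omega)=z_{t-1}$ force
\begin{equation*}
\bigl\{\,\omega\in\Sigma_{t-1}^z : \widehat Q_{\mathcal A}\bigl(\omega,\cup_{i=1}^{\beta_{t,z}}B_{t,z}^i\bigr)>0\,\bigr\}\subseteq N .
\end{equation*}
Running the argument of Corollary~\ref{B-meas} from this inclusion --- with the conditional probabilities $\widehat Q_{\mathcal A}(\omega,\cdot)$, which also absorb the $Q$-null discrepancy between each $B_{t,z}^i$ and its $\mathcal F_t$-measurable envelope $\Sigma_{t-1}^z\cap\{H_{t,z}^i\cdot(S_t-z_{t-1})>0\}$ --- yields $\overline Q(\widetilde{\mathfrak B}_t)=0$ with $\widetilde{\mathfrak B}_t\in\mathcal F_t^{Q}$, which is the first assertion. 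For the second, whenever $0\notin conv(\Delta S_t(\widetilde\Sigma_{t-1}^z))$ the equivalence $B_{t,z}^{\ast}=\varnothing\Leftrightarrow 0\notin conv(\Delta S_t(\widetilde\Sigma_{t-1}^z))$ (Corollary~\ref{resume}) gives $\cup_{i=1}^{\beta_{t,z}}B_{t,z}^i=\widetilde\Sigma_{t-1}^z$; hence for $\omega\in\Sigma_{t-1}^z\setminus N$ we would get $\widehat Q_{\mathcal A}(\omega,\cup_i B_{t,z}^i)=\widehat Q_{\mathcal A}(\omega,\widetilde\Sigma_{t-1}^z)=1>0$, contradicting the displayed inclusion. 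Therefore $\Sigma_{t-1}^z\subseteq N$ for every such $z$, and taking the union over these $z$ gives $\mathfrak D_{t-1}\subseteq N\in\mathcal F_{t-1}$ with $Q(N)=0$, completing the induction.

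The delicate point is entirely the measurability bookkeeping in the inductive step: the reduced level sets $\widetilde\Sigma_{t-1}^z$, and the sets $B_{t,z}^i$ built from them, are measurable only modulo $Q$-null sets, so the whole step rests on the inductive fact $Q(\Omega\setminus\Omega_t)=0$ --- this is what lets the proper regular conditional probability given $\mathcal F_{t-1}$ assign, for $Q$-a.e.\ $\omega\in\Sigma_{t-1}^z$, full conditional mass to $\widetilde\Sigma_{t-1}^z$, after which one need only repeat, with the completions $\widehat Q_{\mathcal A}(\omega,\cdot)$, the separation/martingale contradiction already exploited in Lemma~\ref{conditional}, Corollary~\ref{B-meas} and Corollary~\ref{conditionalPolar}. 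One should also, following the paper's convention, work with generalized conditional expectations so that the manipulation $\int H_{t,z}^j\cdot(S_t-z_{t-1})\,d\widehat Q_{\mathcal A}(\omega,\cdot)>0$ is legitimate.
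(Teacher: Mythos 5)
Your proof is correct and keeps the paper's backward-induction skeleton (base case $t=T$ via Corollaries \ref{B-meas} and \ref{conditionalPolar}), but the inductive step takes a genuinely different route. The paper never re-opens the conditional-probability argument: it introduces the auxiliary processes $X_k^Q=S_{k-1}\mathbf{1}_{N_k^Q}+S_k\mathbf{1}_{(N_k^Q)^c}$ and $S_k^Q=(S_{k-1}+\underline{\varepsilon})\mathbf{1}_{N_k^Q\cup M_k^Q}+S_k\mathbf{1}_{(N_k^Q\cup M_k^Q)^c}$, which remain conditional martingale increments under $Q$, verifies the transfer implications \eqref{ri} and \eqref{secondak} from the reduced level sets $\widetilde{\Sigma}_{k-1}^z$ to the Borel level sets $\Sigma_{k-1}^z$, and then invokes Corollary \ref{B-meas} and Corollary \ref{conditionalPolar} as black boxes for the modified increments. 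You instead absorb all removals $\Omega\setminus\Omega_t=\bigcup_{s\geq t}\mathfrak{D}_s$ into a single null set $R$ and rerun the argument of Lemma \ref{conditional} and Corollary \ref{B-meas} directly with the completed conditional measures $\widehat{Q}_{\mathcal{A}}(\omega,\cdot)$, exploiting that they give full mass to $\widetilde{\Sigma}_{t-1}^z$ for $Q$-a.e.\ $\omega\in\Sigma_{t-1}^z$; the non-Borel sets $B_{t,z}^i$ are then measurable for $\widehat{Q}_{\mathcal{A}}(\omega,\cdot)$ because each differs from a Borel set by a subset of $R$ (strictly, the correct Borel envelope of $B_{t,z}^i$ also carries the constraints $H_{t,z}^j\cdot(S_t-z_{t-1})=0$ for $j<i$, but this slip in your parenthetical is cosmetic and does not affect the argument). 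What the paper's device buys is that no completion of conditional measures is needed and the one-step corollaries are reused verbatim; what your route buys is more self-contained bookkeeping — in particular your $R$ explicitly covers all later-time removals entering $\widetilde{\Sigma}_{t-1}^z$, whereas the paper's modification at step $k$ is phrased through $N_k^Q\supseteq\mathfrak{D}_k$ — at the price of redoing the measure-theoretic work already packaged in Corollary \ref{B-meas}. Both arguments ultimately rest on the same two pillars, the separation Lemma \ref{spezzamento} and the proper regular conditional probability of Theorem \ref{SV}, and your handling of integrability via the generalized conditional expectation matches the paper's level of rigor.
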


\begin{proof}
We prove this by backward induction. For $t=T$ the assertion is
true from Corollary \ref{B-meas} and Corollary
\ref{conditionalPolar}. Suppose now the claim holds true
for any $k+1\leq t\leq T$. From the
inductive hypothesis there exists $N_{k}^{Q}\in \mathcal{F}_{k}$ such that $%
\mathfrak{D}_{k}\subseteq N_{k}^{Q}$ with $Q(N_{k}^{Q})=0$.
Introduce the
auxiliary $\mathcal{F}_{k}$-measurable random variable
\begin{equation}\label{Q_modification2}
X_{k}^{Q}:=S_{k-1}%
\mathbf{1}_{N_{k}^{Q}}+S_{k}\mathbf{1}_{(N_{k}^{Q})^{c}}
\end{equation} and notice that $%
E_{Q}[X_{k}^{Q}\mid \mathcal{F}_{k-1}]=S_{k-1}$ $Q$-a.s. Moreover from $%
\Delta X_{k}^{Q}:=X_{k}^{Q}-S_{k-1}=0$ on $N_{k}^{Q}$ and $\Omega
\setminus N_{k}^{Q}\subseteq \Omega \setminus \mathfrak{D}_{k}$,
we can deduce that
\begin{equation}
0\notin ri(\Delta S_{k}(\widetilde{\Sigma
}_{k-1}^{z}))^{cc}\Longrightarrow 0\notin ri(\Delta
X_{k}^{Q}(\Sigma _{k-1}^{z}))^{cc}  \label{ri}
\end{equation}%
which implies $\widetilde{\mathfrak{B}}_{k}\subseteq \mathfrak{B}%
_{k}(X_{k}^{Q})\cup N_{k}^{Q}$ where we denote
$\mathfrak{B}_{k}(X_{k}^{Q})$ the set obtained from Corollary
\ref{B-meas} with $\Gamma =\Sigma _{k-1}^{z}$ and $X_{k}^{Q}$
which replaces $S_{k}$. According to Corollary \ref{B-meas}
we find $M_{k}^{Q}\in \mathcal{F}_{k}$ with $Q(M_{k}^{Q})=0$ such that $%
\widetilde{\mathfrak{B}}_{k}\subseteq
\mathfrak{B}_{k}(X_{k}^{Q})\cup N_{k}^{Q}\subseteq M_{k}^{Q}\cup
N_{k}^{Q}$. Since $Q$ is arbitrary we have the thesis.\\
We now show the second assertion.\\ For every $Q\in
\mathcal{M}$ and $\underline{\varepsilon }=(\varepsilon
,...,\varepsilon )\in \mathbb{R}^{d}$ with $\varepsilon >0$ we can
define
\begin{equation}  \label{Q_modification}
S_{k}^{Q}=(S_{k-1}+\underline{\varepsilon
})\mathbf{1}_{N_{k}^{Q}\cup
M_{k}^{Q}}+S_{k}\mathbf{1}_{(N_{k}^{Q}\cup M_{k}^{Q})^{c}}
\end{equation}%
and $E_{Q}[S_{k}^{Q}\mid \mathcal{F}_{k-1}]=S_{k-1}$. With $\Delta
S_{k}^{Q}:=S_{k}^{Q}-S_{k-1}$ we claim
\begin{equation}
\mathfrak{D}_{k-1}\subseteq \bigcup_{z\in \mathbf{Z}}\{\Sigma
_{k-1}^{z}\mid 0\notin conv(\Delta S_{k}^{Q}(\Sigma
_{k-1}^{z}))\}.  \label{secondak}
\end{equation}%
Indeed let $z\in \mathbf{Z}$ such that $\Sigma _{k-1}^{z}\subseteq \mathfrak{%
D}_{k-1}$ and observe that
\begin{equation}
0\notin conv(\Delta S_{k}(\widetilde{\Sigma
}_{k-1}^{z}))\Leftrightarrow 0\notin conv(\Delta S_{k}(\Sigma
_{k-1}^{z}\setminus \mathfrak{D}_{k})). \label{convSvuotato}
\end{equation}

\noindent Since $\Sigma _{k-1}^{z}\setminus N_{k}^{Q}\subseteq
\Sigma
_{k-1}^{z}\setminus \mathfrak{D}_{k}\subseteq \widetilde{\mathfrak{B}}%
_{k}\subseteq N_{k}^{Q}\cup M_{k}^{Q}$, then
\begin{eqnarray*}
\Sigma _{k-1}^{z} &=&(\Sigma _{k-1}^{z}\cap N_{k}^{Q})\cup (\Sigma
_{k-1}^{z}\setminus N_{k}^{Q})\subseteq N_{k}^{Q}\cup M_{k}^{Q} \\
&\subseteq &\bigcup_{z\in \mathbf{Z}}\{\Sigma _{k-1}^{z}\mid
0\notin conv(\Delta S_{k}^{Q}(\Sigma _{k-1}^{z}))\}
\end{eqnarray*}%
for any $\Sigma _{k-1}^{z}\subseteq \mathfrak{D}_{k-1}$. Hence the
claim since $\bigcup_{z}\{\Sigma _{k-1}^{z}\mid 0\notin
conv(\Delta
S_{k}^{Q}(\Sigma _{k-1}^{z}))\}$ is a subset of an $\mathcal{F}_{k-1}$%
-measurable $Q$-null set.
\end{proof}

\subsection{On the maximal $\mathcal{M}$-polar set and the support of
martingale measures}

\label{maxPolar}

\noindent The sets introduced in Sections \ref{mktFeas} and \ref%
{SecSvuotamento} provide a geometric decomposition of $\Omega $ in
two parts, $\Omega =\Omega _{\ast }\cup (\Omega _{\ast })^{c}$
specified in Proposition \ref{LemNOpolar} below. The set $\Omega
_{\ast }$ contains those events $\omega $ supported by martingale
measures, namely, for any of those events it is possible to
construct a martingale measure (even with finite support) that
assign positive probability to $\omega $. Observe that such a
decompostion is induced by $S$ and it is determined prior to
arbitrage considerations.

\begin{proposition}
\label{LemNOpolar} Let $\{\Omega _{t}\}_{t\in I}$ as defined in Section \ref%
{sectionFiltration} and, for any $z\in \mathbf{Z,}$ let $\beta _{t,z}$ and $%
B_{t,z}^{\ast }$ be the index $\beta $ and the set $B^{\ast }$ from Lemma %
\ref{spezzamento} with $\Gamma =\widetilde{\Sigma }_{t-1}^{z}$.
Define
\begin{equation*}
\Omega _{\ast }:=\bigcap_{t=1}^{T}\left( \bigcup_{z\in \mathbf{Z}%
}B_{t,z}^{\ast }\right) .
\end{equation*}%
We have the following
\begin{equation*}
\mathcal{M}\neq \varnothing \Longleftrightarrow \Omega _{\ast
}\neq \varnothing \Longleftrightarrow \mathcal{M\cap P}_{f}\neq
\varnothing ,
\end{equation*}%
where
\begin{equation*}
\mathcal{P}_{f}:=\left\{ P\in \mathcal{P}\mid \text{supp(}P\text{) is finite}%
\right\}
\end{equation*}%
is the set of probability measures whose support is a finite number of $%
\omega \in \Omega $.

If $\mathcal{M}\neq \varnothing $ then for any $\omega _{\ast }\in
\Omega _{\ast }$ there exists $Q\in \mathcal{M}$ such that
$Q(\{\omega _{\ast }\})>0 $, so that $(\Omega _{\ast })^{c}$ is
the maximal $\mathcal{M}$-polar set, i.e. $(\Omega _{\ast })^{c}$
is an $\mathcal{M}$-polar set and
\begin{equation}
\forall N\in \mathcal{N}\text{ we have }N\subseteq (\Omega _{\ast
})^{c}. \label{decomp2}
\end{equation}
\end{proposition}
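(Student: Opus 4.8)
The plan is to establish the stronger \emph{charging} property --- for every $\omega_{\ast}\in\Omega_{\ast}$ there is a finitely supported $Q\in\mathcal{M}$ with $Q(\{\omega_{\ast}\})>0$ --- and to deduce everything else from it, closing the cycle $\mathcal{M}\neq\varnothing\Rightarrow\Omega_{\ast}\neq\varnothing\Rightarrow\mathcal{M}\cap\mathcal{P}_{f}\neq\varnothing\Rightarrow\mathcal{M}\neq\varnothing$ (the last arrow being trivial). For $\mathcal{M}\neq\varnothing\Rightarrow\Omega_{\ast}\neq\varnothing$: since the level sets $\widetilde{\Sigma}_{t-1}^{z}$ partition $\Omega_{t}$ and each splits as $(\cup_{i}B_{t,z}^{i})\sqcup B_{t,z}^{\ast}$, one has $\bigcup_{z}B_{t,z}^{\ast}=\Omega_{t}\setminus\widetilde{\mathfrak{B}}_{t}$; combined with $\Omega\setminus\Omega_{t}=\bigcup_{s=t}^{T-1}\mathfrak{D}_{s}$ this gives $(\Omega_{\ast})^{c}=\bigcup_{s=1}^{T-1}\mathfrak{D}_{s}\cup\bigcup_{t=1}^{T}\widetilde{\mathfrak{B}}_{t}$, and Lemma \ref{propPolari} shows each summand is contained in an $\mathcal{F}$-measurable $Q$-null set. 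Hence $\Omega_{\ast}$ contains a $Q$-full measurable set and is non-empty.

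The core is the following pair of statements, proved simultaneously by backward induction on $t$, from $t=T$ down to $t=1$ (write $z(\omega):=(S_{0}(\omega),\dots,S_{T}(\omega))\in\mathbf{Z}$): $(\star_{t})$ --- for every $\omega\in\Omega_{t-1}$ there is a finitely supported probability $Q$ with $S_{0:t-1}\equiv z(\omega)_{0:t-1}$ on $\mathrm{supp}(Q)$ and $E_{Q}[S_{s}\mid\mathcal{F}_{s-1}]=S_{s-1}$ $Q$-a.s. for $s=t,\dots,T$; and $(\star\star_{t})$ --- the same with the extra conclusion $Q(\{\omega\})>0$, now valid under the stronger hypothesis $\omega\in\Gamma_{t}^{\ast}:=\bigcap_{s=t}^{T}\bigcup_{z}B_{s,z}^{\ast}$. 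Note $\Gamma_{1}^{\ast}=\Omega_{\ast}$ and $\Gamma_{t}^{\ast}\subseteq\Gamma_{t+1}^{\ast}$, and $\Gamma_{t}^{\ast}\subseteq\Omega_{t-1}$ since $\bigcup_{z}B_{s,z}^{\ast}\subseteq\Omega_{s-1}$ (an easy consequence of Corollary \ref{resume}-\ref{betacase}: $0\notin conv(\Delta S_{s}(\widetilde{\Sigma}_{s-1}^{z}))$ forces $B_{s,z}^{\ast}=\varnothing$, so $\mathfrak{D}_{s-1}\cap\Omega_{s}\subseteq\widetilde{\mathfrak{B}}_{s}$). Applying $(\star\star_{1})$ to an arbitrary $\omega\in\Omega_{\ast}$ then produces a finitely supported martingale measure for $S$ on $(\Omega,\mathbb{F})$ charging $\omega$, which yields $\Omega_{\ast}\neq\varnothing\Rightarrow\mathcal{M}\cap\mathcal{P}_{f}\neq\varnothing$ together with the charging assertion; the equivalences follow (if one of the three conditions fails, so do the other two).

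For the inductive step of $(\star_{t})$: as $\omega\in\Omega_{t-1}=\Omega_{t}\setminus\mathfrak{D}_{t-1}$ and $\mathfrak{D}_{t-1}$ is a union of level sets, the defining condition of $\mathfrak{D}_{t-1}$ must fail for $\widetilde{\Sigma}_{t-1}^{z(\omega)}$, i.e. $0\in conv(\Delta S_{t}(\widetilde{\Sigma}_{t-1}^{z(\omega)}))$; Carath\'eodory gives $\xi_{1},\dots,\xi_{n}\in\widetilde{\Sigma}_{t-1}^{z(\omega)}\subseteq\Omega_{t}$ and weights $\lambda_{j}>0$, $\sum_{j}\lambda_{j}=1$, with $\sum_{j}\lambda_{j}\Delta S_{t}(\xi_{j})=0$; each $\xi_{j}$ lies in $\Omega_{t}=\Omega_{(t+1)-1}$, so $(\star_{t+1})$ supplies $Q_{j}$ (with $S_{0:t}$ constant $=z(\xi_{j})_{0:t}$ on $\mathrm{supp}(Q_{j})$), and $Q:=\sum_{j}\lambda_{j}Q_{j}$ works --- the time-$t$ identity reduces to $\sum_{j}\lambda_{j}S_{t}(\xi_{j})=z(\omega)_{t-1}$, and those for $s>t$ hold because at each history they are a weighted average of the corresponding identities of the $Q_{j}$. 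For $(\star\star_{t})$: $\omega\in B_{t,z(\omega)}^{\ast}$, so by Lemma \ref{spezzamento}-\ref{item_spezz}) and Corollary \ref{NoPoint} there is no $1p$-Arbitrage on $B_{t,z(\omega)}^{\ast}$, whence $0\in ri(\Delta S_{t}(B_{t,z(\omega)}^{\ast}))^{cc}$; Remark \ref{ImportantRemark} applied with $x=\Delta S_{t}(\omega)$ produces $\xi_{1},\dots,\xi_{n}\in B_{t,z(\omega)}^{\ast}\subseteq\Omega_{t}$ and $\lambda_{0},\dots,\lambda_{n}\geq0$, $\lambda_{0}>0$, $\sum_{i}\lambda_{i}=1$, with $\lambda_{0}\Delta S_{t}(\omega)+\sum_{i\geq1}\lambda_{i}\Delta S_{t}(\xi_{i})=0$. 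I then continue $\omega$ by its $(\star\star_{t+1})$-measure (legitimate since $\omega\in\Gamma_{t}^{\ast}\subseteq\Gamma_{t+1}^{\ast}$) and each $\xi_{i}$ by $(\star_{t+1})$, glue as above, and get $Q(\{\omega\})>0$. The base cases $t=T$ (where $\widetilde{\Sigma}_{T-1}^{z}=\Sigma_{T-1}^{z}$) are the one-step versions: $(\star_{T})$ from $0\in conv(\Delta S_{T}(\Sigma_{T-1}^{z(\omega)}))$, $(\star\star_{T})$ from Remark \ref{ImportantRemark} as above, with no further continuation needed.

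It remains to record the maximality. If some $N\in\mathcal{N}$ met $\Omega_{\ast}$ at a point $\omega$, then $N\subseteq N'\in\mathcal{F}$ with $Q(N')=0$ for all $Q\in\mathcal{M}$, contradicting $Q_{0}(\{\omega\})>0$ for the measure $Q_{0}$ charging $\omega$; hence $N\subseteq(\Omega_{\ast})^{c}$ for every $N\in\mathcal{N}$, which is \eqref{decomp2}. That $(\Omega_{\ast})^{c}$ itself belongs to $\mathcal{N}$ follows from Lemma \ref{propPolari} (together with the measurability of $\Omega_{\ast}$ established in Section \ref{sectionFiltration}): it is $\mathcal{F}$-measurable and, being contained in a $Q$-null set, satisfies $Q((\Omega_{\ast})^{c})=0$ for every $Q\in\mathcal{M}$. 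I expect the delicate point of the whole argument to be the bookkeeping in the induction --- realizing that the auxiliary points $\xi_{i},\xi_{j}$ can only be propagated through the weaker statement $(\star)$, which is self-sustaining precisely because $\Omega_{t}\subseteq\Omega_{t+1}$, while positive charging of the distinguished point survives only because $\omega_{\ast}\in\Gamma_{t}^{\ast}$ for every $t$; a secondary technical issue is the measurability of $(\Omega_{\ast})^{c}$ needed for it to qualify as an $\mathcal{M}$-polar set.
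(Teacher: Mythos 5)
Your argument is correct and follows essentially the same route as the paper: the polar direction rests on Lemma \ref{propPolari} via the identity $(\Omega_{\ast})^{c}=\bigcup_{t=1}^{T}\widetilde{\mathfrak{B}}_{t}$ (your variant with the $\mathfrak{D}_{s}$ listed separately is equivalent), and the charging direction iterates Remark \ref{ImportantRemark} to build a finitely supported martingale measure along a tree of selected points, exactly as in the paper's recursion $\Omega_{f}^{1},\dots,\Omega_{f}^{T}$. Your two-statement induction $(\star_{t})/(\star\star_{t})$ simply makes explicit what the paper leaves implicit --- auxiliary points need only be \emph{continued} (via $0\in conv(\Delta S_{t}(\widetilde{\Sigma}_{t-1}^{z}))$, guaranteed because they lie in $\Omega_{t-1}$), while only the distinguished $\omega_{\ast}$ must be \emph{charged} at every step --- and your parenthetical appeal to Borel measurability of $(\Omega_{\ast})^{c}$ is unnecessary, since $\mathcal{M}$-polarity only requires the measurable null covers that Lemma \ref{propPolari} already provides.
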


\begin{proof}
Observe first that:
\begin{equation*}
(\Omega _{\ast
})^{c}=\bigcup_{t=1}^{T}\widetilde{\mathfrak{B}}_{t}.
\end{equation*}%
From Lemma \ref{propPolari}, $\widetilde{\mathfrak{B}}_{t}$ is an
$\mathcal{M} $-polar set for any $t\in I_{1}$, which implies
$(\Omega _{\ast })^{c}$ is an $\mathcal{M}$-polar set. Suppose now
that $\Omega _{\ast }=\varnothing $ so that $\Omega
=\bigcup_{t=1}^{T}\widetilde{\mathfrak{B}}_{t}$ is a polar set. We
can conclude that $\mathcal{M}=\varnothing $.\newline
Suppose now that $\Omega _{\ast }\neq \varnothing $. We show that for every $%
\omega _{\ast }\in \Omega _{\ast }$ there exists a $Q\in
\mathcal{M}$ such that $Q(\{\omega _{\ast }\})>0$.
%
Observe now that for any $t\in I_{1}$ and for any $\omega \in
\Omega _{\ast } $, $0\in ri(\Delta S_{t}(B_{t,z}^{\ast }))^{cc}$
with $z=S_{0:T}(\omega )$.
As we did in Corollary \ref{corNoArbBeta}, we apply Remark \ref%
{ImportantRemark} and conclude that there exists a finite number
of elements of $B_{t,z}^{\ast },$ named $C_{t}(\omega ):=\{\omega
,\omega _{1},\ldots ,\omega _{m}\}\subseteq B_{t,z}^{\ast }$, such
that
\begin{equation}
S_{t-1}(\omega )=\lambda _{t}(\omega )S_{t}(\omega
)+\sum_{j=1}^{m}\lambda _{t}(\omega _{j})S_{t}(\omega _{j})
\label{first}
\end{equation}%
where $\lambda _{t}(\omega )>0$ and $\lambda _{t}(\omega
)+\sum_{j=1}^{m}\lambda _{t}(\omega _{j})=1$.\newline
Fix now $\omega _{\ast }\in \Omega _{\ast }$. We iteratively build a set $%
\Omega _{f}^{T}$ which is suitable for being the finite support of
a discrete martingale measure (and contains $\omega _{\ast
}$).\newline
Start with $\Omega _{f}^{1}=C_{1}(\omega _{\ast })$ which satisfies (\ref%
{first}) for $t=1$. For any $t>1$, given $\Omega _{f}^{t-1}$,
define $\Omega _{f}^{t}:=\left\{ C_{t}(\omega )\mid \omega \in
\Omega _{f}^{t-1}\right\} $. Once $\Omega _{f}^{T}$ is settled, it
is easy to construct a martingale measure via \eqref{first}:
\begin{equation*}
Q(\{\omega \})=\prod_{t=1}^{T}\lambda _{t}(\omega )\quad \forall
\omega \in \Omega _{f}^{T}
\end{equation*}%
Since, by construction, $\lambda _{t}(\omega _{\ast })>0$ for any
$t\in I_{1} $, we have $Q(\{\omega _{\ast }\})>0$ and $Q\in
\mathcal{M\cap P}_{f}.$

\bigskip

To show \eqref{decomp2} just observe from the previous line that
$\Omega _{\ast }$ is not $\mathcal{M}$-polar, while $(\Omega
_{\ast })^{c}=\bigcup_{t=1}^{T}\widetilde{\mathfrak{B}}_{t}$ is
$\mathcal{M}$-polar thanks to Lemma \ref{propPolari}.
\end{proof}

\begin{proof}[Proof of Proposition \protect\ref{NoPA}]
The absence of $1p$-Arbitrages readly implies that $\Omega _{\ast
}=\Omega $ (see Corollary \ref{resume}). Take a dense subset
$\{\omega _{n}\}_{n\in \mathbb{N}}$ of $\Omega $: from Proposition
\ref{LemNOpolar} for any $\omega _{n}$ there exists a martingale
measure $Q^{n}\in \mathcal{M}$ such that $Q^{n}(\{\omega
_{n}\})>0$. From
Lemma \ref{pasting} in the Appendix $Q:=\sum_{i=1}^{\infty }\frac{1}{2^{i}}%
Q^{i}\in \mathcal{M}$, moreover $Q(\{\omega _{n}\})>0\ \forall n\in \mathbb{N%
}$. Since $\{\omega _{n}\}_{n\in \mathbb{N}}$ is dense, $Q$ is a
full support martingale measure.
\end{proof}


\subsection{Enlarged Filtration and Universal Arbitrage Aggregator}

\label{sectionFiltration}

In Sections \ref{mktFeas} and \ref{sectionNull} we solve the problem of characterizing the
$\mathcal{M}$-polar sets of a certain market model on a fixed time
interval $[t-1,t]$ for $t\in I_{1}=\left\{ 1,...,T\right\} $. In
particular, if we look at the level sets $\Sigma _{t-1}^{z}$ of
the price process $(S_{t})_{t\in I}$, we can identify the
component of these sets that must be polar (Corollary
\ref{B-meas}) which coincides with the whole level set when
$0\notin conv(\Delta S_{t}(\Sigma _{t-1}^{z}))$ (Corollary
\ref{conditionalPolar}). Further care
is required in the multiperiod case due to the backward effects (see Section %
\ref{SecSvuotamento}), but nevertheless a full characterization of $\mathcal{%
M}$-polar sets is obtained in Section \ref{maxPolar} .\newline

In this section we build a predictable strategy that embrace all
the inefficiencies of the market. Unfortunately, even on a single
time-step, the
polar set given by Corollary \ref{B-meas} belongs, in general, to $\mathfrak{%
F}_{t}$ (the universal $\mathcal{M}$-completion), hence the
trading
strategies suggested by equation \eqref{levelArbitrage} in Lemma \ref%
{spezzamento} fail to be predictable. This reflects into the
necessity of enlargement of the original filtration by
anticipating some one step-head information. Under this filtration
enlargement, which depends only on the underlying structure of the
market, the set of martingale measures will not change (see Lemma
\ref{invariant}).

\begin{definition}
We call Universal Arbitrage Aggregator the strategy%
\begin{equation}
H_{t}^{\bullet }(\omega )\mathbf{1}_{\Sigma _{t-1}^{z}}:=\left\{
\begin{array}{ccl}
H_{t,z}(\omega ) & \quad\text{on } & \bigcup_{i=1}^{\beta _{t,z}}B_{t,z}^{i} \\
0 & \quad\text{on } & \Sigma _{t-1}^{z}\setminus \bigcup_{i=1}^{\beta
_{t,z}}B_{t,z}^{i}%
\end{array}%
\right. ,\quad t\in I_{1}=\left\{ 1,...,T\right\} ,
\label{universal}
\end{equation}%
where $z\in \mathbf{Z}$ satisfies $z_{0:t-1}=S_{0:t-1}(\omega ),$
$H_{t,z},$
$B_{t,z}^{i},B_{t,z}^{\ast }$ comes from \eqref{levelArbitrage} and Lemma %
\ref{spezzamento} with $\Gamma =\widetilde{\Sigma }_{t-1}^{z}$.
\newline
This strategy is predictable with respect to the enlarged filtration $%
\widetilde{\mathbb{F}}=\{\widetilde{\mathcal{F}}_{t}\}_{t\in I}$
given by
\begin{eqnarray}
\widetilde{\mathcal{F}}_{t} &:&=\mathcal{F}_{t}\vee \sigma
(H_{1}^{\bullet },...,H_{t+1}^{\bullet }),\text{ }t\in \left\{
0,...,T-1\right\}
\label{enlargment} \\
\widetilde{\mathcal{F}}_{T} &:&=\mathcal{F}_{T}\vee \sigma
(H_{1}^{\bullet },...,H_{T}^{\bullet }).
\end{eqnarray}
\end{definition}

\begin{remark}
The strategy $H^{\bullet }$ in equation \eqref{universal} satisfies $%
V_{T}(H^{\bullet })\geq 0$\ and
\begin{equation}
\mathcal{V}_{H^{\bullet
}}^{+}=\bigcup_{t=1}^{T}\widetilde{\mathfrak{B}}_{t}. \label{VH}
\end{equation}%
Indeed, from Lemma \ref{spezzamento}\ $H_{t,z}\cdot \Delta S_{t}>0$\ on $%
\bigcup_{i=1}^{\beta _{t,z}}B_{t,z}^{i},$ so that $\bigcup_{t=1}^{T}%
\widetilde{\mathfrak{B}}_{t}\subseteq \mathcal{V}_{H^{\bullet
}}^{+}$. On the other hand, $\mathcal{V}_{H^{\bullet
}}^{+}\subseteq \{H_{t}^{\bullet
}\neq 0$ for some $t\}\subseteq \bigcup_{t=1}^{T}\widetilde{\mathfrak{B}}%
_{t} $. \newline For $t<T$ we therefore conclude that
$\widetilde{\mathcal{F}}_{t}\subseteq
\mathcal{F}_{t}\vee \bigcup_{s=1}^{t+1}\mathcal{N}_{s}\subseteq \mathfrak{F}%
_{t}$, where
\begin{equation*}
\mathcal{N}_{t}:=\left\{ A=\bigcup_{z\in \mathbf{V}}\bigcup_{i\in
J(z)}B_{t,z}^{i}\mid \text{for some }%
\begin{array}{c}
\mathbf{V}\subseteq \mathbf{Z} \\
J(z)\subseteq \{1,...,\beta _{t,z}\}%
\end{array}%
\right\} \cup \mathfrak{D}_{t} ,
\end{equation*}%
while for $t=T$, $\widetilde{\mathcal{F}}_{T}\subseteq
\mathcal{F}_{T}\vee \bigcup_{s=1}^{T}\mathcal{N}_{s}\subseteq
\mathfrak{F}_{T}$.

For any $Q\in \mathcal{M}$ and $t\in I$, any element of $\mathcal{N}_{t}$ is a subset of a $%
\mathcal{F}_{t}$-measurable $Q$-null set .
\end{remark}

From now on we will assume that the class of admissible trading strategies $%
\widetilde{\mathcal{H}}$ is given by all $\widetilde{\mathbb{F}}$
predictable processes. We can rewrite the definition of Arbitrage
de la classe $\mathcal{S}$ using strategies adapted to
$\widetilde{\mathbb{F}}$.
Namely, an Arbitrage de la classe $\mathcal{S}$ with respect to $\widetilde{%
\mathcal{H}}$ is an $\widetilde{\mathbb{F}}$-predictable processes $%
H=[H^{1},\ldots ,H^{d}]$ such that $V_{T}(H)\geq 0$ and
$\{V_{T}(H)>0\}$ contains a set in $\mathcal{S}$.

\begin{remark}
No Arbitrage de la classe $\mathcal{S}$ with respect to $\widetilde{\mathcal{%
H}}$ implies No Arbitrage de la classe $\mathcal{S}$ with respect to $%
\mathcal{H}$.
\end{remark}

\begin{remark}
\textbf{\emph{(Financial interpretation of the filtration
enlargement)}} Fix $t\in I_{1}$, $z\in \mathbf{Z}$, the event
$\Sigma _{t-1}^{z}=\{S_{0:t-1}=z_{0:t-1}\}$ and suppose the market
presents the opportunity given by $0\notin ri(\Delta S_{t}(\Sigma
_{t-1}^{z}))^{cc}$.
Consider two probabilities $P_{k}\in \mathcal{P}$, $k=1,2,$ for which $%
P_{k}(\Sigma _{t-1}^{z})>0$. Following Lemma \ref{spezzamento}, if $%
j_{k}:=\inf \{i=1,\ldots ,\beta \mid P_{k}(B_{t,z}^{i})>0\}<\infty
$, then
the rational choice for the strategy is $H^{j_{k}}$, as shown in Corollary %
\ref{corUnivParb}.  Thus it is possible that $j_{k}<\infty $ holds
for both probabilities, so that the two agents represented by
$P_{1}$ and $P_{2}$ agree that $\Sigma _{t-1}^{z}$ is
a non-efficient level set of the market, although it is possible that $%
j_{1}\neq j_{2}$ so that they might not agree on the trading strategy $%
H^{j_{k}}$ that establish the $P_{k}$-Classical Arbitrage on
$\Sigma _{t-1}^{z}$. In such case, these two arbitrages are
realized on different subsets of $\Sigma _{t-1}^{z}$ and generate
different payoffs. Nevertheless note that any of these agents is
able to find an arbitrage opportunity among the finite number of
trading strategies $\{H_{t,z}^{i}\}_{i=1}^{\beta _{t,z}}$ given by
Lemma \ref{spezzamento} (recall $\beta _{t,z}\leq d$). The
filtration enlargement allows to embrace them all. This can be
referred to the analogous discussion in \cite{DH07}:
\textquotedblleft A weak arbitrage opportunity is a situation
where we know there must be an arbitrage but we cannot tell,
without further information, what strategy will realize
it\textquotedblright .
\end{remark}

We expand on this argument more formally. Recall that Lemma
\ref{spezzamento} provides a partition of any level set
$\widetilde{\Sigma }_{t-1}^{z}$ with the following property: for
any $\omega \in (\Omega _{\ast })^{c}$ there
exists a unique set $B_{t,z}^{i}$, identified by $i=i(\omega )$, such that $%
\omega \in B_{t,z}^{i}$ with $z=S_{0:T}(\omega )$. Define
therefore, for any $t\in I_{1}$ the multifunction
\begin{equation}
\mathbb{H}_{t}(\omega ):=\left\{ H\in \mathbb{R}^{d}\text{ s.t.
}H\cdot \Delta S_{t}(\widehat{\omega })\geq 0\text{ for any
}\widehat{\omega }\in \cup _{j=i(\omega )}^{\beta _{t,z}}B_{t,z}^{j}\cup B_{t,z}^{\ast }\right\} \quad \text{if
}\omega \in (\Omega _{\ast })^{c}  \label{multifunctionH}
\end{equation}%
and $\mathbb{H}_{t}(\omega )=\{0\}$ otherwise.\newline Observe
that for any $t\in I_1$, if $\omega_1,\omega_2$ satisfy
$S_{0:t-1}(\omega_1)=S_{0:t-1}(\omega_2)$ and
$i(\omega_1)=i(\omega_2)$ they belong to the same $B^i_{t,z}$ and
$\mathbb{H}_t(\omega_1)=\mathbb{H}_t(\omega_2)$. In other words $\mathbb{H}_t$ is constant on any
$B^i_{t,z}$ and therefore for any open set
$V\subseteq\mathbb{R}^d$ we have $$\{\omega\in\Omega\mid
\mathbb{H}_t(\omega)\cap V\neq
\varnothing\}=\bigcup_{z\in\mathbf{Z}}\bigcup_{i=1}^{\beta_{t,z}}\{B^i_{t,z}\mid
\mathbb{H}_t(B^i_{t,z})\cap V\neq \varnothing\}$$ from which
$\mathbb{H}_t$ is measurable with respect to
$\mathcal{F}_{t-1}\vee \bigcup_{s=1}^{t}\mathcal{N}_{s}$. Note
that since $H_{t}^{\bullet }(\omega )\in \mathbb{H}_{t}(\omega )$
for any $\omega
\in \Omega $, we have that $H_{t}^{\bullet }$ is a selection of $\mathbb{H}%
_{t}$ with the same measurability. We now show how the process $\mathbb{H}:=(%
\mathbb{H}_{t})_{t\in I_1}$ provides $P$-Classical Arbitrage as
soon as we choose a probabilistic model $P\in\mathcal{P}$ which is not absolutely continuous with respect to $\nu:=\sup_{Q\in\mathcal{M}}Q$. The case of $P\ll \nu$ is discussed in Remark \ref{split}.
\begin{theorem}
\label{theoremExtraction}Let $\mathbb{H}$ be defined in %
\eqref{multifunctionH}. If $P\in \mathcal{P}$ is not absolutely continuous with respect to $\nu$ then there exists an $\mathbb{F}^P$-predictable trading strategy $H^{P}$ which is a $P$%
-Classical Arbitrage and
\begin{equation*}
H^{P}(\omega )\in \mathbb{H}(\omega )\qquad P\text{-a.s.}
\end{equation*}
where
$\mathcal{F}_{t}^{P}$
denote the $P$-completion of $\mathcal{F}_{t}$ and $\mathbb{F}^P:=\{\mathcal{F}_{t}^{P}\}_{t\in I}$.
\end{theorem}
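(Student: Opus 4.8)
The plan is to isolate the \emph{first} period in which the model $P$ detects an inefficiency of the market, and to extract there a one--step $P$--Classical Arbitrage whose direction is, on each level set, one of the values of the multifunction $\mathbb H$; the whole difficulty is to perform this selection \emph{predictably}, i.e.\ keeping it $\mathcal F_{\bar t-1}^P$--measurable, which forces us to argue through a proper regular conditional probability of $P$ rather than through the universal completion used to define $\widetilde{\mathbb F}$. First I would rewrite the hypothesis: since $\nu(A)=\sup_{Q\in\mathcal M}Q(A)=0$ precisely when $A$ is $\mathcal M$--polar, $P\not\ll\nu$ means that $P$ charges some $\mathcal M$--polar Borel set $A$, and by Proposition~\ref{LemNOpolar} (equation \eqref{decomp2}) $A\subseteq(\Omega_\ast)^c=\bigcup_{t=1}^T\widetilde{\mathfrak B}_t$, so $P^\ast\big(\bigcup_{t=1}^T\widetilde{\mathfrak B}_t\big)>0$, where $P^\ast$ is the outer measure. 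As this is a finite union there is a minimal $\bar t\in I_1$ with $P^\ast(\widetilde{\mathfrak B}_{\bar t})>0$, and $P^\ast(\widetilde{\mathfrak B}_s)=0$ for $s<\bar t$; combined with $\mathfrak D_s\subseteq\widetilde{\mathfrak B}_{s+1}$ this makes the earlier defragmentation sets $P$--negligible, and on the atoms $\Sigma_{\bar t-1}^z$ one reduces $\widetilde\Sigma_{\bar t-1}^z$ to $\Sigma_{\bar t-1}^z$ up to $P$--null sets (truncating the horizon at $\bar t$ if necessary). This confines the arbitrage to the single period $[\bar t-1,\bar t]$.

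Next I would disintegrate $P$ along $\mathcal F_{\bar t-1}$. By Theorem~\ref{SV} there is a proper regular conditional probability $P_{\mathcal F_{\bar t-1}}(\omega,\cdot)$, concentrated on $\Sigma_{\bar t-1}^z$ for $\omega\in\Sigma_{\bar t-1}^z$ off a fixed $\mathcal F_{\bar t-1}$--null set. For each $z$ the sets $B^1_{\bar t,z},\dots,B^{\beta_{\bar t,z}}_{\bar t,z}$ of Lemma~\ref{spezzamento} are Borel with $\beta_{\bar t,z}\le d$, so $\omega\mapsto P_{\mathcal F_{\bar t-1}}(\omega,B^i_{\bar t,z})$ is $\mathcal F_{\bar t-1}$--measurable and constant on $\Sigma_{\bar t-1}^z$; patching over $z$ one obtains an $\mathcal F_{\bar t-1}$--measurable vector $\omega\mapsto(p^1(\omega),\dots,p^d(\omega))$ recording the conditional $P$--masses of the ``bad'' pieces of the level set through $\omega$. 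On $G:=\{\omega:\sum_i p^i(\omega)>0\}$, which lies in $\mathcal F_{\bar t-1}^P$ and has $P(G)\ge P^\ast(\widetilde{\mathfrak B}_{\bar t})>0$ (by the RCP disintegration of $\widetilde{\mathfrak B}_{\bar t}\cap\Sigma_{\bar t-1}^z=\bigcup_i B^i_{\bar t,z}$), put $j(\omega):=\min\{i:p^i(\omega)>0\}$ and $\eta(\omega):=H^{j(\omega)}_{\bar t,z(\omega)}$, where $z(\omega)$ is the path agreeing with $S_{0:\bar t-1}(\omega)$; this is a bounded $\mathcal F_{\bar t-1}^P$--measurable selection, the standard separators making $H^i_{\bar t,z}$ uniquely determined and there being at most $d$ of them. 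Finally set $H^P_{\bar t}:=\eta\,\mathbf 1_G$ and $H^P_s:=0$ for $s\ne\bar t$, an $\mathbb F^P$--predictable strategy.

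It then remains to verify the two conclusions. On $G\cap\Sigma_{\bar t-1}^z$, Lemma~\ref{spezzamento}-\ref{item_spezz_arb}) gives $H^{j}_{\bar t,z}\cdot\Delta S_{\bar t}\ge 0$ on $\bigcup_{m\ge j}B^m_{\bar t,z}\cup B^\ast_{\bar t,z}$ and $>0$ on $B^j_{\bar t,z}$, while $P(B^m_{\bar t,z})=p^m P(\Sigma_{\bar t-1}^z)=0$ for $m<j$; integrating these conditional inequalities in $\omega$ shows $V_T(H^P)=\eta\mathbf 1_G\cdot\Delta S_{\bar t}\ge 0$ $P$--a.s., and were $P(V_T(H^P)>0)=0$ the RCP would force $p^{j(\omega)}(\omega)=0$ on $G$, contradicting $j(\omega)=\min\{i:p^i(\omega)>0\}$, so $H^P$ is a $P$--Classical Arbitrage (a measurable--in--$z$ version of Corollary~\ref{corUnivParb}). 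For $H^P(\omega)\in\mathbb H(\omega)$ $P$--a.s.: at times $s\ne\bar t$ and on $G^c$ we have $H^P(\omega)=0\in\mathbb H_s(\omega)$; for $P$--a.e.\ $\omega\in G$ either $\omega\in B^i_{\bar t,z}$ with $i=i(\omega)\ge j(\omega)$ (the set $\bigcup_z\bigcup_{m<j_z}B^m_{\bar t,z}$ being $P$--null as above) or $\omega\in B^\ast_{\bar t,z}$, and in both cases $H^{j(\omega)}_{\bar t,z}$ is non--negative on $\bigcup_{m\ge i(\omega)}B^m_{\bar t,z}\cup B^\ast_{\bar t,z}$ by Lemma~\ref{spezzamento}-\ref{item_spezz_arb}), hence $\eta(\omega)\in\mathbb H_{\bar t}(\omega)$ by \eqref{multifunctionH}; the residual case $\omega\in G\cap B^\ast_{\bar t,z}\cap\Omega_\ast$ is $P$--null, since the hypothesis confines the charged part of $P$ to $(\Omega_\ast)^c$.

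The main obstacle is not conceptual but the measurability bookkeeping: $\widetilde{\mathfrak B}_{\bar t}$ and $\Omega_\ast$ need not be Borel, so every assertion about $P$--measure has to be routed through the proper regular conditional probability and the partition of $\Omega$ into the $\mathcal F_{\bar t-1}$--atoms $\Sigma_{\bar t-1}^z$, and one must check that the patched objects $p^i(\cdot),j(\cdot),\eta(\cdot)$ are honest $\mathcal F_{\bar t-1}^P$--measurable selections (using $\beta_{\bar t,z}\le d$ and the uniqueness from standard separators) and that the exceptional sets — the pieces $B^m_{\bar t,z}$ with $m<j_z$, the discrepancy between $\widetilde\Sigma_{\bar t-1}^z$ and $\Sigma_{\bar t-1}^z$ caused by the later $\mathfrak D$'s, and the part of $\Omega_\ast$ sitting inside a charged level set — are genuinely $P$--negligible.
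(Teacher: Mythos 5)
Your reduction to the \emph{first} period charged by $P$ is where the argument genuinely fails, not merely where bookkeeping is missing. Taking $\bar{t}$ minimal with $P^{\ast}(\widetilde{\mathfrak B}_{\bar{t}})>0$ and playing on each level set the separator $H^{j}_{\bar{t},z}$ from Lemma \ref{spezzamento} applied to $\Gamma=\widetilde{\Sigma}^{z}_{\bar{t}-1}$ controls the sign of the one-step payoff only on $\widetilde{\Sigma}^{z}_{\bar{t}-1}$: item \ref{item_spezz_arb}) of Lemma \ref{spezzamento} says nothing on $\Sigma^{z}_{\bar{t}-1}\setminus\widetilde{\Sigma}^{z}_{\bar{t}-1}$, and this difference consists of level sets removed because of degeneracies at times strictly \emph{later} than $\bar{t}$, so the minimality of $\bar{t}$ (which only controls $s<\bar{t}$) does not make it $P$-null, and truncating the horizon does not help since the loss is incurred at time $\bar{t}$ itself. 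The paper's own Example \ref{svuotamento} with a full-support $P$ is a counterexample to your construction (there $\mathcal{M}=\varnothing$, so any $P$ satisfies the hypothesis): one has $\widetilde{\mathfrak B}_{1}=\{\omega_{1},\omega_{2}\}$ and $\widetilde{\mathfrak B}_{2}=\{\omega_{3},\omega_{4}\}$, so your $\bar{t}=1$, $j=1$, and $H^{P}$ is ``buy one unit at time $1$, do nothing at time $2$'', which pays $+1$ on $\{\omega_{1},\omega_{2}\}$ but $-4$ on $\{\omega_{3},\omega_{4}\}$, a set of positive $P$-measure; this is not a $P$-Classical Arbitrage (the correct critical time there is $t=2$). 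The paper avoids exactly this by never arguing through $\widetilde{\Sigma}$ at the critical time: it first passes to the $P$-a.s. modifications $X^{P},S^{P}$ of \eqref{Q_modification2}--\eqref{Q_modification}, locates the critical time $\tau$ via Rokhlin's sets $A_{\Delta X^{P}_{t}}$ built from the conditional law of the increments under $P$, and obtains nonnegativity because the selected direction is, by construction, nonnegative against the whole image $\Psi(\omega)=\Delta S_{\tau}(\Sigma^{z(\omega)}_{\tau-1}\cap N^{c})$ of the \emph{plain} level set off an explicit $P$-null set $N=N_{1}\cup N_{2}$.

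The second gap is the measurability you defer to the final paragraph. The maps $p^{i}(\omega)=P_{\mathcal F_{\bar{t}-1}}(\omega,B^{i}_{\bar{t},z(\omega)})$, $j(\omega)$ and $\eta(\omega)=H^{j(\omega)}_{\bar{t},z(\omega)}$ are defined by patching, over \emph{uncountably} many level sets, quantities whose dependence on $z$ passes through the geometric decomposition of $\Delta S_{\bar{t}}(\widetilde{\Sigma}^{z}_{\bar{t}-1})$. Measurability for each fixed $z$ together with constancy on the atoms does not yield measurability of the patched function (a function constant on the atoms of $\mathcal F_{\bar{t}-1}$ need not be measurable; indeed $\bigcup_{z}\bigcup_{i}B^{i}_{t,z}$ is nowhere claimed to be Borel in the paper, only to lie in suitable completions), and neither $\beta_{t,z}\le d$ nor the uniqueness of the standard separators addresses joint measurability in $\omega$. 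This is precisely the hard part of the paper's proof, which never selects $H^{j_{z}}_{\tau,z}$ directly: it proves that the multifunction $\Psi$ is $\mathcal F^{P}_{\tau-1}$-measurable through analytic-set arguments (Lemma \ref{multiPsimeas}: images and preimages of Borel sets under $S$ are only analytic, hence universally measurable), and then extracts $H^{P}_{\tau}$ as a measurable selection of the auxiliary closed-valued multifunctions $G_{n}$ (Lemma \ref{meas-eps}, Proposition \ref{preservation}), with a final limiting argument to secure $P(\mathcal{V}^{+}_{H^{P}})>0$. Even if you replaced your minimal $\bar{t}$ by the \emph{last} time with $P^{\ast}(\widetilde{\mathfrak B}_{t})>0$, which would repair the sign problem above, without a measurable-selection argument of this type the $\mathcal F^{P}_{\bar{t}-1}$-measurability of $\eta$, hence the predictability of $H^{P}$, remains unproven.
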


\begin{proof}
See Appendix \ref{appendixProof}.
\end{proof}
From Lemma \ref{dec} if $P\in\mathcal{P}$ fulfills the hypothesis of Theorem \ref{theoremExtraction} there exists an  $\mathcal{F}$-measurable set $F\subseteq (\Omega_*)^c$ with $P(F)>0$. Note that from Remark
\ref{remarkPsing} this family is non-empty unless $%
(\Omega _{\ast })^{c}=\varnothing $. Theorem \ref{theoremExtraction} asserts therefore that for any probabilistic models which supports $(\Omega _{\ast })^{c}$ an $\mathbb{F}^P$-\emph{predictable} arbitrage
opportunity can be found among the values of the set-valued process $\mathbb{%
H}$. This property suggested us to baptize $\mathbb{H}$ as the
Universal Arbitrage Aggregator and thus $H^{\bullet }$ as a
(standard) selection of the Universal Arbitrage Aggregator. Note
that we could have considered a
different selection of $\mathbb{H}$ satisfying the essential requirement %
\eqref{VH}. Since this choice does not affect any of our results
we simply take $H^{\bullet }$.
\begin{remark}
\label{split} Recall from (\ref{NoA}) that any $P\in
(\mathcal{P}_{0})^{c}$ admits a $P$-Classical Arbitrage
opportunity. We can distinguish
between two different classes in $(\mathcal{P}_{0})^{c}$. The first one is: $%
\mathcal{P}_{\mathcal{M}}:=\left\{ P\in (\mathcal{P}_{0})^{c}\mid P<<%
\nu\right\} $ or, in other words, an element $P\in (\mathcal{P}%
_{0})^{c}$ belong to $\mathcal{P}_{\mathcal{M}}$ iff any subset of
$(\Omega _{\ast })^{c}$ is $P$-null. Then for each probability $P$
in this class, there exists a probability $P^{\prime }$ with
larger support that
annihilates any $P$-Classical Arbitrage opportunity. Recall Example \ref%
{ex3d} where $\Omega _{\ast }=\mathbb{Q}\cap \lbrack 1/2,+\infty
)$. By choosing $P=\delta _{\{\frac{1}{2}\}}\in
\mathcal{P}_{\mathcal{M}}$ we
clearly have $P$-Classical Arbitrages. Nevertheless by simply taking $%
P^{\prime }=\lambda \delta _{\{\frac{1}{2}\}}+(1-\lambda )\delta
_{\{2\}}$ for some $0<\lambda <1$ this market is arbitrage free.
From a model-independent point of view these situations must not
considered as market inefficiencies since they vanish as soon as
more trajectories are considered. This feature is captured by the
Universal Arbitrage Aggregator by means of the property:
$H^{\bullet }=0$ on $\Omega _{\ast }$.\newline
On the other hand when $P\in (\mathcal{P}_{0})^{c}\setminus \mathcal{P}_{%
\mathcal{M}}$ then $P$ assigns a positive measure to some $\mathcal{M}$%
-polar $\mathcal{F}$-measurable set $F\in \mathcal{N}$. Therefore,
any other $P^{\prime }\in \mathcal{P}$ with larger support will
satisfy $P^{\prime
}(F)>0$ and the probabilistic model $(\Omega ,\mathcal{F},\mathbb{F}%
,S,P^{\prime })$ will also exhibit $P^{\prime }$-Classical
Arbitrages. In the case of Example \ref{ex3d} $(\Omega _{\ast
})^{c}=B^{1}\cup B^{2}$ where
$B^{1}=\mathbb{R}^{+}\setminus \mathbb{Q}$ and $B^{2}=\mathbb{Q}\cap [0,1/2)$%
. If $P((\Omega _{\ast })^{c})>0$ the market exhibits a
$P$-Classical
Arbitrage, but this is still valid for any probabilistic model given by $%
P^{\prime }$ with $P<<P^{\prime }$. In particular if $P^{\prime
}(B^{1})>0$
then $H^{1}:=[0,0,1]$ is a $P^{\prime }$-Classical Arbitrage, while if $%
P^{\prime }(B^{1})=0$ and $P^{\prime }(B^{2})>0$ then
$H^{2}:=[1,0,0]$ is the desired strategy. In this example,
$H_{1}^{\bullet }=H^{1}1_{\left\{ \mathbb{R}^{+}\setminus
\mathbb{Q}\right\} }+H^{2}1_{\left\{ \mathbb{Q}\cap
[0,1/2)\right\} }.$
\end{remark}

\begin{lemma}
\label{invariant} $\mathcal{M}(\mathbb{F})\leftrightarrows \mathcal{M}(%
\widetilde{\mathbb{F}})$ with the following meaning
\begin{itemize}
\item[$\bullet $ ] the restriction of any $\widetilde{Q}\in \mathcal{M}(\widetilde{%
\mathbb{F}})$ to $\mathcal{F}_{T}$ belongs to
$\mathcal{M}(\mathbb{F})$;

\item[$\bullet $ ] any $Q\in \mathcal{M}(\mathbb{F})$ can be
uniquely extended to an element of
$\mathcal{M}(\widetilde{\mathbb{F}})$
\end{itemize}
\end{lemma}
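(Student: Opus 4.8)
The plan is to prove the two bullet points separately, the first being essentially immediate and the second requiring the construction of a canonical extension. For the first bullet: if $\widetilde{Q}\in\mathcal{M}(\widetilde{\mathbb{F}})$, then since $\mathcal{F}_t\subseteq\widetilde{\mathcal{F}}_t$ for every $t\in I$ and $S_t$ is already $\mathcal{F}_t$-measurable, the tower property of conditional expectation gives $E_{\widetilde{Q}}[S_t\mid\mathcal{F}_{t-1}]=E_{\widetilde{Q}}\big[E_{\widetilde{Q}}[S_t\mid\widetilde{\mathcal{F}}_{t-1}]\,\big|\,\mathcal{F}_{t-1}\big]=E_{\widetilde{Q}}[S_{t-1}\mid\mathcal{F}_{t-1}]=S_{t-1}$ $\widetilde{Q}$-a.s.; so the restriction $Q:=\widetilde{Q}|_{\mathcal{F}_T}$ is an $\mathbb{F}$-martingale measure. (One should note the generalized conditional expectation of the introduction is being used, but the same tower property holds there.)

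For the second bullet, fix $Q\in\mathcal{M}(\mathbb{F})$. The key observation, already recorded in the Remark preceding this Lemma, is that for each $t<T$ we have $\widetilde{\mathcal{F}}_t=\mathcal{F}_t\vee\sigma(H_1^\bullet,\dots,H_{t+1}^\bullet)\subseteq\mathcal{F}_t\vee\bigcup_{s=1}^{t+1}\mathcal{N}_s$, and every set in $\mathcal{N}_s$ is contained in an $\mathcal{F}_s$-measurable $Q$-null set; likewise $\widetilde{\mathcal{F}}_T\subseteq\mathcal{F}_T\vee\bigcup_{s=1}^T\mathcal{N}_s\subseteq\mathfrak{F}_T$, the universal $\mathcal{M}$-completion. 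Therefore $Q$ extends uniquely to a probability measure $\widetilde{Q}$ on $\widetilde{\mathcal{F}}_T$ (as recalled at the start of Section \ref{sectionNull}, every $Q\in\mathcal{M}$ has a unique extension $\overline{Q}$ to $\mathfrak{F}_T$, and $\widetilde{\mathcal{F}}_T\subseteq\mathfrak{F}_T$, so take $\widetilde{Q}:=\overline{Q}|_{\widetilde{\mathcal{F}}_T}$). It remains to check that $S$ is an $(\widetilde{Q},\widetilde{\mathbb{F}})$-martingale. Since $\widetilde{\mathcal{F}}_{t-1}$ differs from $\mathcal{F}_{t-1}$ only by sets that are $Q$-null (hence $\widetilde{Q}$-null, the extension being measure-preserving on $\mathcal{F}_{t-1}$), conditional expectations with respect to $\widetilde{\mathcal{F}}_{t-1}$ agree $\widetilde{Q}$-a.s. with those with respect to $\mathcal{F}_{t-1}$: more precisely, $E_Q[S_t\mid\mathcal{F}_{t-1}]$ is a version of $E_{\widetilde{Q}}[S_t\mid\widetilde{\mathcal{F}}_{t-1}]$, because it is $\mathcal{F}_{t-1}\subseteq\widetilde{\mathcal{F}}_{t-1}$-measurable and integrates correctly against every $\widetilde{\mathcal{F}}_{t-1}$-set (each such set being, modulo a null set, an $\mathcal{F}_{t-1}$-set). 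Hence $E_{\widetilde{Q}}[S_t\mid\widetilde{\mathcal{F}}_{t-1}]=E_Q[S_t\mid\mathcal{F}_{t-1}]=S_{t-1}$ $\widetilde{Q}$-a.s., so $\widetilde{Q}\in\mathcal{M}(\widetilde{\mathbb{F}})$. Uniqueness of $\widetilde{Q}$ among martingale extensions follows from uniqueness of the extension to $\mathfrak{F}_T$, together with the fact that any two martingale extensions must agree on $\mathcal{F}_T$ (by the first bullet applied to their common restriction).

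The only delicate point — and the step I would treat most carefully — is verifying that enlarging $\mathcal{F}_{t-1}$ by the sets generating $\sigma(H_1^\bullet,\dots,H_t^\bullet)$ does not change the conditional expectation of $S_t$; this is where one genuinely uses that these generating sets lie in $\mathcal{N}_1\cup\cdots\cup\mathcal{N}_t$ and are $Q$-null subsets of $\mathcal{F}_{t-1}$-measurable null sets (for $s\le t-1$ directly, and for $s=t$ one must observe that $\widetilde{\mathcal{F}}_{t-1}$ includes $\sigma(H_t^\bullet)$, whose atoms are the $B_{t,z}^i$ with $z_{0:t-1}=S_{0:t-1}(\omega)$, again $Q$-null by Lemma \ref{propPolari}). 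Everything else is a routine bookkeeping exercise with the tower property and uniqueness of measure extensions.
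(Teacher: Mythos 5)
Your proposal is correct and follows essentially the same route as the paper: the restriction direction via the fact that $S_t$ is $\mathcal{F}_t$-measurable and $\mathcal{F}_{t-1}\subseteq\widetilde{\mathcal{F}}_{t-1}$ (the paper integrates $(S_t-S_{t-1})\mathbf{1}_A$ over $A\in\mathcal{F}_{t-1}$, which is equivalent to your tower-property argument), and the extension direction via the observation that $\widetilde{\mathcal{F}}_{t-1}$ differs from $\mathcal{F}_{t-1}$ only by sets contained in $Q$-null sets, so every $\widetilde{A}\in\widetilde{\mathcal{F}}_{t-1}$ agrees with some $A\in\mathcal{F}_{t-1}$ up to a null set and the martingale property transfers. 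If anything, your write-up is slightly more explicit than the paper's about where the extension comes from ($\widetilde{\mathcal{F}}_T\subseteq\mathfrak{F}_T$) and about the null-set bookkeeping.
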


\begin{proof}
Let $\widetilde{Q}\in \mathcal{M}(\widetilde{\mathbb{F}})$ and
$Q\in \mathcal{P}(\Omega )$ be the restriction to
$\mathcal{F}_{T}$. For any $t\in
I_{1}$ and $A\in \mathcal{F}_{t-1}$ we have $E_{Q}[(S_{t}-S_{t-1})\mathbf{1}%
_{A}]=E_{\widetilde{Q}}[(S_{t}-S_{t-1})\mathbf{1}_{A}]=0$. Let now
$Q\in
\mathcal{M}(\mathbb{F})$. There exists a unique extension to $\widetilde{%
\mathcal{F}}_{t}$ of $Q$ that we call $\widetilde{Q}$. For any $\widetilde{A}%
\in \widetilde{\mathcal{F}}_{t-1}$ with $t\in I_{1}$ there exists
$A\in
\mathcal{F}_{t-1}$ such that $\widetilde{Q}(\widetilde{A})=\widetilde{Q}%
(A)=Q(A)$. Hence $E_{\widetilde{Q}}[(S_{t}-S_{t-1})\mathbf{1}_{\widetilde{A}%
}]=E_{Q}[(S_{t}-S_{t-1})\mathbf{1}_{A}]=0$. We conclude that $E_{\widetilde{Q%
}}[S_{t}\mid \widetilde{\mathcal{F}}_{t-1}]=S_{t-1}$, hence $\widetilde{Q}%
\in \mathcal{M}(\widetilde{\mathbb{F}})$.
\end{proof}

\begin{remark}
The filtration enlargement $\widetilde{\mathbb{F}}$ has been
introduced to guarantee the aggregation of $1p$-Arbitrages on the
sets $B_{t,z}^{i}$
obtained from Lemma \ref{spezzamento} with $\Gamma =\widetilde{\Sigma }%
_{t-1}^{z}$. If indeed we follow \cite{Cohen} we can consider any
collection
of probability measures $\Theta _{t}:=\{P_{t,z}^{i}\}$ on $(\Omega ,\mathcal{%
F})$ such that $P_{t,z}^{i}(B_{t,z}^{i})=1$. Observe first that
\begin{equation*}
\mathcal{F}_{t}^{\Theta _{t}}\supseteq \sigma \left( \bigcup
\{B_{t,z}^{i}\mid z\in \mathbf{V},i\in J(z)\}\right)
\end{equation*}%
with $\mathbf{V}$ and $J(z)$ arbitrary. For any $P_{t,z}^{i}$ we
have indeed
that $\mathcal{F}_{t}^{P_{t,z}^{i}}$ contains any subset of $%
(B_{t,z}^{i})^{c}$. Therefore if $A=\bigcup \{B_{t,z}^{i}\mid z\in
\mathbf{V} $, $i\in J(z)\}$ we have

\begin{itemize}
\item[$\bullet $ ] if $z\notin \mathbf{V}$ or $i\notin J(z)$ then $A\in \mathcal{F}%
_{t}^{P_{t,z}^{i}}$ trivially because $A\subset (B_{t,z}^{i})^{c}$

\item[$\bullet $ ] if $z\in \mathbf{V}$ and $i\in J(z)$ then $A\in \mathcal{F}%
_{t}^{P_{t,z}^{i}}$ because $A=B_{t,z}^{i}\cup \bar{A}$ with $\bar{A}%
\subseteq (B_{t,z}^{i})^{c}$
\end{itemize}

It is easy to check that $\Theta _{t}$ has the Hahn property on $\mathcal{F}%
_{t}$ as defined in Definition 3.2, \cite{Cohen}, with $\Phi
_{t}:=\Theta
_{t}\mid _{\mathcal{F}_{t}}$. We can therefore apply Theorem 3.16 in \cite%
{Cohen} to find an $\mathcal{F}_{t}^{\Theta _{t}}$- measurable function $%
H_{t}$ such that $H_{t}=H_{t,z}^{i}$ $P_{t,z}^{i}$-a.s. which means that $%
H_{t}(\omega )=H_{t,z}^{i}$ for every $\omega \in B_{t,z}^{i}$.
\end{remark}


\subsection{Main Results}

\label{main}


Our aim now is to show how the absence of arbitrage de la classe $%
\mathcal{S}$ provides a pricing functional \emph{via} the
existence of a martingale measure with nice properties. \newline
Clearly the \textquotedblleft No $1p$-Arbitrage\textquotedblright
condition is the strongest that one can assume in this model
independent framework and we have shown in Proposition \ref{NoPA}
that it automatically implies the existence of a full support
martingale measure. On the other
hand we are interested in characterizing those markets which can exhibit $1p$%
-Arbitrages but nevertheless admits a rational system of pricing
rules.\newline

The set $\Omega _{\ast }$ introduced in Section \ref{maxPolar} has a clear financial interpretation as it represents the set of events for which No 1p-Arbitrage can be found. This is the content of the following Proposition.\\
Let $(\Omega ,\widetilde{\mathcal{F}}_{T},\widetilde{\mathbb{F}})$, $%
\widetilde{\mathcal{H}}$ as in Section \ref{sectionFiltration} and
define
\begin{equation*}
\widetilde{\mathcal{H}}^{+}:=\left\{ H\in
\widetilde{\mathcal{H}}\mid \
V_{T}(H)(\omega )\geq 0\,\forall \,\omega \in \Omega \mbox{ and }%
V_{0}(H)=0\right\} .
\end{equation*}

\begin{proposition}
\label{obloj}

\begin{itemize}
\item[(1)] $\mathcal{V}_{H^{\bullet }}^{+}=\bigcup_{H\in \widetilde{\mathcal{%
H}}^{+}}\mathcal{V}_{H}^{+}=(\Omega _{\ast })^{c}$

\item[(2)] $\mathcal{M}\neq \varnothing $ if and only if
$\bigcup_{H\in
\widetilde{\mathcal{H}}^{+}}\mathcal{V}_{H}^{+}$ is strictly contained in $%
\Omega $.
\end{itemize}
\end{proposition}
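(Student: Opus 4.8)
The plan is to deduce both statements from two facts already in hand: the identity $\mathcal{V}_{H^{\bullet}}^{+}=\bigcup_{t=1}^{T}\widetilde{\mathfrak{B}}_{t}$ of \eqref{VH}, and Proposition \ref{LemNOpolar}, which gives $(\Omega_{\ast})^{c}=\bigcup_{t=1}^{T}\widetilde{\mathfrak{B}}_{t}$ together with the equivalence $\mathcal{M}\neq\varnothing\Longleftrightarrow\Omega_{\ast}\neq\varnothing$. Once (1) is proved, (2) is immediate, so the whole content lies in (1), and there essentially in one inclusion.

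First I would record that $H^{\bullet}\in\widetilde{\mathcal{H}}^{+}$: it is $\widetilde{\mathbb{F}}$-predictable by construction, $V_{T}(H^{\bullet})\geq 0$ by the Remark following \eqref{universal}, and $V_{0}(H^{\bullet})=0$ by definition of $V_{0}$. Hence $(\Omega_{\ast})^{c}=\mathcal{V}_{H^{\bullet}}^{+}\subseteq\bigcup_{H\in\widetilde{\mathcal{H}}^{+}}\mathcal{V}_{H}^{+}$, combining \eqref{VH} with Proposition \ref{LemNOpolar}. The reverse inclusion is the only real step: I would fix $\omega_{\ast}\in\Omega_{\ast}$ and an arbitrary $H\in\widetilde{\mathcal{H}}^{+}$ and show $V_{T}(H)(\omega_{\ast})=0$, whence $\omega_{\ast}\notin\mathcal{V}_{H}^{+}$. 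Since $\Omega_{\ast}\neq\varnothing$ we have $\mathcal{M}\neq\varnothing$, and Proposition \ref{LemNOpolar} furnishes $Q\in\mathcal{M}\cap\mathcal{P}_{f}$ with $Q(\{\omega_{\ast}\})>0$; by Lemma \ref{invariant} this $Q$ extends uniquely to $\widetilde{Q}\in\mathcal{M}(\widetilde{\mathbb{F}})$, still supported on the same finite set, with $\widetilde{Q}(\{\omega_{\ast}\})>0$. Because $\widetilde{Q}$ has finite support there is no integrability issue, and a telescoping computation using $E_{\widetilde{Q}}[\Delta S_{t}\mid\widetilde{\mathcal{F}}_{t-1}]=0$ and the $\widetilde{\mathcal{F}}_{t-1}$-measurability of $H_{t}$ (exactly as in the generalized-conditional-expectation computation of Section \ref{setup}) yields $E_{\widetilde{Q}}[V_{T}(H)]=0$. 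As $V_{T}(H)\geq 0$ everywhere, $V_{T}(H)=0$ $\widetilde{Q}$-a.s.; since $\widetilde{Q}(\{\omega_{\ast}\})>0$ this forces $V_{T}(H)(\omega_{\ast})=0$. Therefore $\bigcup_{H\in\widetilde{\mathcal{H}}^{+}}\mathcal{V}_{H}^{+}\subseteq(\Omega_{\ast})^{c}$, and together with the previous inclusion and \eqref{VH} this gives $\mathcal{V}_{H^{\bullet}}^{+}=\bigcup_{H\in\widetilde{\mathcal{H}}^{+}}\mathcal{V}_{H}^{+}=(\Omega_{\ast})^{c}$, i.e. (1).

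For (2): by (1), $\bigcup_{H\in\widetilde{\mathcal{H}}^{+}}\mathcal{V}_{H}^{+}=(\Omega_{\ast})^{c}$, which is strictly contained in $\Omega$ precisely when $\Omega_{\ast}\neq\varnothing$, and by Proposition \ref{LemNOpolar} the latter is equivalent to $\mathcal{M}\neq\varnothing$.

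The point that needs the most care is the reverse inclusion in (1), specifically ensuring that a martingale measure charging the single point $\omega_{\ast}$ is available \emph{on the enlarged filtration} and that the argument is not obstructed by measurability of singletons in $\mathcal{F}_{T}^{S}$ or in $\widetilde{\mathcal{F}}_{T}$. Routing through the finite-support measures produced by Proposition \ref{LemNOpolar} and the invariance Lemma \ref{invariant} resolves this cleanly, since a finitely supported measure assigns genuine positive mass to $\{\omega_{\ast}\}$ regardless of the ambient $\sigma$-algebra and makes $V_{T}(H)$ trivially integrable.
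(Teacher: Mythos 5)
Your proof is correct, but for the only nontrivial inclusion $\bigcup_{H\in\widetilde{\mathcal{H}}^{+}}\mathcal{V}_{H}^{+}\subseteq(\Omega_{\ast})^{c}$ it takes a genuinely different route from the paper. The paper argues primally and pointwise: given $\overline{\omega}\in\mathcal{V}_{\overline{H}}^{+}$ it extracts a single time $t$ at which $\overline{H}_{t}\cdot\Delta S_{t}\geq 0$ on all of $\Omega$ with strict inequality at $\overline{\omega}$ (a defragmentation step in the spirit of Lemma \ref{easy}), and then Lemma \ref{spezzamento}, item \ref{item_spezz}, forces $\overline{\omega}$ into some $B_{t,z}^{i}$, hence into $\widetilde{\mathfrak{B}}_{t}\subseteq(\Omega_{\ast})^{c}$; no martingale measure enters and the argument does not need $\mathcal{M}\neq\varnothing$. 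You argue dually: you take the finite-support martingale measure $Q$ charging $\omega_{\ast}\in\Omega_{\ast}$ supplied by Proposition \ref{LemNOpolar}, pass to the enlarged filtration by Lemma \ref{invariant}, and use $E_{\widetilde{Q}}[V_{T}(H)]=0$ together with $V_{T}(H)\geq 0$ to kill $V_{T}(H)$ at $\omega_{\ast}$; this is essentially the expectation argument the paper defers to the proof of Theorem \ref{C-polarEquiv}, and the finite support indeed removes all integrability concerns. Your route buys brevity and avoids revisiting the separation bookkeeping of Lemma \ref{spezzamento}; it costs reliance on the full strength of Proposition \ref{LemNOpolar} plus one measurability point that should be made explicit: $\{\omega_{\ast}\}$ need not belong to $\widetilde{\mathcal{F}}_{T}$ (which is generated by $S$ and $H^{\bullet}$, not by all Borel sets), so the expression $\widetilde{Q}(\{\omega_{\ast}\})>0$ is not literally defined. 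The fix is one line: by the uniqueness in Lemma \ref{invariant}, $\widetilde{Q}$ is the restriction to $\widetilde{\mathcal{F}}_{T}$ of the $Q$-completion, so the $\widetilde{Q}$-null set $\{V_{T}(H)>0\}\in\widetilde{\mathcal{F}}_{T}$ is contained in a Borel set $A$ with $Q(A)=0$; since $Q$ is defined on $\mathcal{B}(\Omega)$ and $Q(\{\omega_{\ast}\})>0$, we get $\omega_{\ast}\notin A$ and hence $V_{T}(H)(\omega_{\ast})=0$ as you claim. With that sentence added your argument is complete; your easy inclusion via $H^{\bullet}\in\widetilde{\mathcal{H}}^{+}$ and your derivation of part (2) coincide with the paper's.
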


\begin{proof}
(2) follows from (1) and Proposition \ref{LemNOpolar}. Indeed: $\mathcal{M}%
\neq \varnothing $ iff $\Omega _{\ast }\neq \varnothing $ iff
$(\Omega
_{\ast })^{c}\varsubsetneqq \Omega $ iff $\bigcup_{H\in \widetilde{\mathcal{H%
}}^{+}}\mathcal{V}_{H}^{+}\subsetneq \Omega $. Now we prove (1). Given %
\eqref{VH}, we only need to show the inclusion $\bigcup_{H\in \widetilde{%
\mathcal{H}}^{+}}\mathcal{V}_{H}^{+}\subseteq (\Omega _{\ast })^{c}$. Let $%
\overline{\omega }\in \bigcup_{H\in \widetilde{\mathcal{H}}^{+}}\mathcal{V}%
_{H}^{+}$, then there exists $\overline{H}\in
\widetilde{\mathcal{H}}^{+}$ and $t\in I_{1}$ such that
$\overline{H}_{t}(\omega )\cdot \Delta
S_{t}(\omega )\geq 0$ $\forall \omega \in \Omega $ and $\overline{H}_{t}(%
\overline{\omega })\cdot \Delta S_{t}(\overline{\omega })>0$. Let $z=S_{0:T}(%
\overline{\omega })$. From Lemma \ref{spezzamento} there exists
$i\in \{1,\ldots ,\beta _{t,z}\}$ such that $\overline{\omega }\in
B_{t,z}^{i}$ hence we conclude that $\overline{\omega }\in
\widetilde{\mathfrak{B}}_{t}$ and therefore $\overline{\omega }\in
(\Omega _{\ast })^{c}$.
\end{proof}

\begin{proof}[Proof of Theorem \protect\ref{C-polarEquiv}]
We prove that
\begin{equation*}
\exists \text{ an Arbitrage de la classe }\mathcal{S}\text{ in }\widetilde{%
\mathcal{H}}\Longleftrightarrow \mathcal{M}=\varnothing \text{ or }\mathcal{N%
}\text{ contains sets of }\mathcal{S}.
\end{equation*}%
Notice that if $H\in \mathcal{\widetilde{H}}$ satisfies
$V_{T}(H)(\omega)\geq 0\
\forall \omega \in \Omega $ then, if $\mathcal{M}\neq \varnothing $, $%
\mathcal{V}_{H}^{+}\in \mathcal{N}$, otherwise $0<\mathbb{E}%
_{Q}[V_{T}(H)]=V_{0}(H)=0$ for $Q\in \mathcal{M}$. If there exists an $%
\widetilde{\mathcal{H}}$-Arbitrage de la classe $\mathcal{S}$ then $\mathcal{%
V}_{H}^{+}$ contains a set in $\mathcal{S}$ and therefore
$\mathcal{N}$ contains a set in $\mathcal{S}$. If instead
$\mathcal{M}=\varnothing $ we already have the thesis. For the
opposite implication, we exploit the Universal Arbitrage
$H^{\bullet }\in \widetilde{\mathcal{H}}$ as defined in equation
\eqref{universal} satisfying $V_{T}(H^{\bullet })(\omega)\geq 0\
\forall
\omega \in \Omega $ and $\mathcal{V}_{H^{\bullet }}^{+}=\bigcup_{t=1}^{T}%
\widetilde{\mathfrak{B}}_{t}=(\Omega _{\ast })^{c}.$ If $\mathcal{M}%
=\varnothing $ then, by Proposition \ref{LemNOpolar}, $(\Omega
_{\ast })^{c}=\Omega $ and $H^{\bullet }$ is an
$\widetilde{\mathcal{H}}$-Model Independent Arbitrage and hence
(from \eqref{MIA}) $H^{\bullet }$ is also an
Arbitrage de la classe $\mathcal{S}$. If $\mathcal{M}\neq \varnothing $ and $%
\mathcal{N}$ contains a set $C$ in $\mathcal{S}$ then $C\subseteq
(\Omega _{\ast })^{c}=\mathcal{V}_{H^{\bullet }}^{+}$ from
\eqref{decomp2} and
Proposition \ref{obloj}, item 1. Therefore $H^{\bullet }$ is an $\widetilde{%
\mathcal{H}}$-Arbitrage de la classe $\mathcal{S}$.
\end{proof}

\begin{definition}
\label{defRS}Define the following convex subset of $\mathcal{P}$:
\begin{equation}\label{eqRS}
\mathcal{R}_{\mathcal{S}}:=\left\{ Q\in \mathcal{\mathcal{P}}\mid Q(C)>0%
\text{ for all }C\in \mathcal{S}\right\} .
\end{equation}
\end{definition}

The martingale measures having the property of the class $\mathcal{R}_{%
\mathcal{S}}$ will be associated to the Arbitrage de la classe
$\mathcal{S}.$

\begin{example}
\label{ExA}We consider the examples introduced in Definition
\ref{def}.
Suppose there are no Model Independent Arbitrage in $\widetilde{\mathcal{H}}$%
. From Theorem \ref{C-polarEquiv} we obtain:

\begin{enumerate}
\item $1p$-Arbitrage: $\mathcal{S}=\left\{ C\in \mathcal{F}\mid
C\neq \varnothing \right\} .$

\begin{itemize}
\item[$\bullet $ ] No $1p$-Arbitrage in $\widetilde{\mathcal{H}}$ iff $\mathcal{N}%
=\varnothing ;$

\item[$\bullet $ ] $\mathcal{R}_{\mathcal{S}}=\mathcal{P}_{+}$, if
$\Omega $ finite or countable; otherwise
$\mathcal{R}_{\mathcal{S}}=\varnothing .$

\item In the case of $np$-Arbitrage we have:
\begin{equation*}
\mathcal{R}_{\mathcal{S}}=\left\{ Q\in \mathcal{P}\mid Q(A)>0\text{ for all }%
A\subseteq \Omega \text{ having at least }n\text{
elements}\right\}
\end{equation*}%
No $np$-Arbitrage in $\widetilde{\mathcal{H}}$ iff $\mathcal{N}$
does not contain elements having more than $n-1$ elements.
\end{itemize}

\item Open Arbitrage: $\mathcal{S}=\left\{ C\in \mathcal{B}(\Omega )\mid C%
\text{ open non-empty}\right\} $.

\begin{itemize}
\item[$\bullet $ ] No Open Arbitrage in $\widetilde{\mathcal{H}}$
iff $\mathcal{N}$ does not contain open sets;

\item[$\bullet $ ] $\mathcal{R}_{\mathcal{S}}=\mathcal{P}_{+}$.
\end{itemize}

\item $\mathcal{P}^{\prime }$-q.s. Arbitrage: $\mathcal{S}=\left\{
C\in \mathcal{F}\mid P(C)>0\text{ for some }P\in
\mathcal{P}^{\prime }\right\} ,$ $\mathcal{P}^{\prime }\subseteq
\mathcal{P}$.

\begin{itemize}
\item[$\bullet $ ] No $\mathcal{P}^{\prime }$-q.s. Arbitrage in
$\widetilde{\mathcal{H}}$ iff $\mathcal{N}$ may contain only
$\mathcal{P}^{\prime }$-polar sets;

\item[$\bullet $ ] $\mathcal{R}_{\mathcal{S}}=\left\{ Q\in
\mathcal{P}\mid P^{\prime }\ll Q\text{ for all }P^{\prime }\in
\mathcal{P}^{\prime }\right\} .$
\end{itemize}

\item $P$-a.s. Arbitrage: $\mathcal{S}=\left\{ C\in
\mathcal{F}\mid P(C)>0\right\} ,$ $P\in \mathcal{P}$.

\begin{itemize}
\item[$\bullet $ ] No $P$-a.s. Arbitrage in
$\widetilde{\mathcal{H}}$ iff $\mathcal{N}$ may contain only
$P$-null sets;

\item[$\bullet $ ] $\mathcal{R}_{\mathcal{S}}=\left\{ Q\in
\mathcal{P}\mid P\ll Q\right\} .$
\end{itemize}

\item Model Independent Arbitrage: $\mathcal{S}=\left\{ \Omega
\right\} $.

\begin{itemize}
\item[$\bullet $ ] $\mathcal{R}_{\mathcal{S}}=\mathcal{P}$.
\end{itemize}

\item \label{item6}$\varepsilon $-Arbitrage: $\mathcal{S}=\left\{
C_{\varepsilon }(\omega )\mid \omega \in \Omega \right\} ,$ where $%
\varepsilon >0$ is fixed and $C_{\varepsilon }(\omega )$ is the
closed ball in $(\Omega ,d)$ of radius $\varepsilon $ and centered
in $\omega .$

\begin{itemize}
\item[$\bullet $ ] No $\varepsilon $-Arbitrage in $\widetilde{\mathcal{H}}$ iff $\mathcal{%
N}$ does not contain closed balls of radius $\varepsilon $;

\item[$\bullet $ ] $\mathcal{R}_{\mathcal{S}}=\left\{ Q\in
\mathcal{P}\mid Q(C_{\varepsilon }(\omega ))>0\text{ for all
}\omega \in \Omega \right\} .$
\end{itemize}
\end{enumerate}
\end{example}

\begin{corollary}
\label{CorS}Suppose that the class $\mathcal{S}$ has the property:%
\begin{equation}
\exists \left\{ C_{n}\right\} _{n\in \mathbb{N}}\subseteq
\mathcal{S}\text{
s.t. }\forall C\in \mathcal{S}\text{ }\exists \overline{n}\text{ s.t. }C_{%
\overline{n}}\subseteq C\text{.}  \label{c}
\end{equation}%
Then:
\begin{equation}
\text{No Arb. de la classe }\mathcal{S}\text{ in }\widetilde{\mathcal{H}}%
\Longleftrightarrow \mathcal{M}\cap \mathcal{R}_{\mathcal{S}}\neq
\varnothing .  \label{E1}
\end{equation}
\end{corollary}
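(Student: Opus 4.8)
The plan is to deduce \eqref{E1} from Theorem \ref{C-polarEquiv}: that theorem already identifies ``No Arbitrage de la classe $\mathcal{S}$ in $\widetilde{\mathcal{H}}$'' with ``$\mathcal{M}\neq\varnothing$ and $\mathcal{N}$ contains no set of $\mathcal{S}$'', so it suffices to prove that this last property is equivalent to $\mathcal{M}\cap\mathcal{R}_{\mathcal{S}}\neq\varnothing$. The implication from right to left is immediate and uses neither \eqref{c} nor the structure of the market: if $Q\in\mathcal{M}\cap\mathcal{R}_{\mathcal{S}}$ then $\mathcal{M}\neq\varnothing$, and by \eqref{eqRS} we have $Q(C)>0$ for every $C\in\mathcal{S}$, so no $C\in\mathcal{S}$ can belong to $\mathcal{N}$, since every element of $\mathcal{N}$ is $Q$-null for each $Q\in\mathcal{M}$.

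For the converse, assume $\mathcal{M}\neq\varnothing$ and that $\mathcal{N}$ contains no set of $\mathcal{S}$; I would build a single $Q\in\mathcal{M}$ charging every member of $\mathcal{S}$ by aggregating countably many martingale measures indexed by the cofinal family of \eqref{c}. Let $\{C_n\}_{n\in\mathbb{N}}\subseteq\mathcal{S}$ be as in \eqref{c}. Each $C_n$ is a measurable element of $\mathcal{S}$, hence $C_n\notin\mathcal{N}$; since $C_n\in\mathcal{F}$ this forces $Q^n(C_n)>0$ for some $Q^n\in\mathcal{M}$. Set $Q:=\sum_{n=1}^{\infty}2^{-n}Q^n$. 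By Lemma \ref{pasting} such a countable convex combination of martingale measures is again a martingale measure, so $Q\in\mathcal{M}$. Now take an arbitrary $C\in\mathcal{S}$: by \eqref{c} there is an index $\bar{n}$ with $C_{\bar{n}}\subseteq C$, whence $Q(C)\geq 2^{-\bar{n}}Q^{\bar{n}}(C_{\bar{n}})>0$. Thus $Q\in\mathcal{R}_{\mathcal{S}}$, which gives $\mathcal{M}\cap\mathcal{R}_{\mathcal{S}}\neq\varnothing$ and closes the argument.

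The only place where hypothesis \eqref{c} is needed is the aggregation step: Lemma \ref{pasting} lets us merge only a \emph{countable} family of martingale measures into one, so the countable cofinal subfamily $\{C_n\}$ of $\mathcal{S}$ is exactly what allows us to pass from ``every set of $\mathcal{S}$ is non-$\mathcal{M}$-polar'' to ``one martingale measure gives positive mass to every set of $\mathcal{S}$''. This is the sole non-routine ingredient, so recognizing the role of \eqref{c} is the main (and rather mild) obstacle; everything else is a direct application of Theorem \ref{C-polarEquiv} and of Lemma \ref{pasting}.
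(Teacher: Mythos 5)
Your proof is correct and follows essentially the same route as the paper's: the easy direction via the polarity contradiction, and the converse by picking, for each $C_n$ of the countable cofinal family in \eqref{c}, a $Q^n\in\mathcal{M}$ charging it and aggregating them with Lemma \ref{pasting} into $Q=\sum_n 2^{-n}Q^n\in\mathcal{M}\cap\mathcal{R}_{\mathcal{S}}$. No gaps; your remark on where \eqref{c} is used matches the paper's argument exactly.
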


\begin{proof}
Suppose $Q\in \mathcal{M}\cap \mathcal{R}_{\mathcal{S}}\neq
\varnothing $. Then any polar set $N\in \mathcal{N}$ does not
contain sets in $\mathcal{S}$
(otherwise, if $C\in \mathcal{S}$ and $C\subseteq N$ then $Q(C)>0$ and $%
Q(N)=0$, a contradiction). Then, from Theorem \ref{C-polarEquiv},
No Arbitrage de la classe $\mathcal{S}$ holds true. Conversely,
suppose that No Arbitrage de la classe $\mathcal{S}$ holds true so
that $\mathcal{M}\neq \varnothing $ and let $\left\{ C_{n}\right\}
_{n\in C}\subseteq \mathcal{S}$
be the collection of sets in the assumption. From Theorem \ref{C-polarEquiv}%
, we obtain that $N\in \mathcal{N}$ does not contain any set in $\mathcal{S}$%
, and so each set $C_{n}$ is not a polar set, hence for each $n$
there
exists $Q_{n}\in \mathcal{M}$ such that $Q_{n}(C_{n})>0.$ Set $%
Q:=\sum_{n=1}^{\infty }\frac{1}{2^{n}}Q_{n}\in \mathcal{M}$ (see Lemma \ref%
{pasting}). Take any $C\in \mathcal{S}$ \ and let
$C_{\overline{n}}\subseteq C.$ Then
\begin{equation*}
Q(C)\geq \frac{1}{2^{\overline{n}}}Q_{\overline{n}}(C)\geq \frac{1}{2^{%
\overline{n}}}Q_{\overline{n}}(C_{\overline{n}})>0
\end{equation*}%
and $Q\in \mathcal{M}\cap \mathcal{R}_{\mathcal{S}}$.
\end{proof}

\begin{corollary}
\label{CorNo}Let $\mathcal{S}$ be the class of non empty open
sets. Then the
condition (\ref{c}) is satisfied and therefore%
\begin{equation}
\text{No Open Arbitrage in
}\widetilde{\mathcal{H}}\Longleftrightarrow \mathcal{M}_{+}\neq
\varnothing .  \label{E2}
\end{equation}
\end{corollary}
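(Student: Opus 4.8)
The plan is to deduce the statement from Corollary \ref{CorS} by checking that the class $\mathcal{S}$ of non-empty open subsets of $\Omega$ satisfies the countability condition (\ref{c}), and then to identify the convex set $\mathcal{R}_{\mathcal{S}}$ with $\mathcal{P}_{+}$ so that $\mathcal{M}\cap\mathcal{R}_{\mathcal{S}}=\mathcal{M}_{+}$.

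First I would exhibit the countable family required by (\ref{c}), using that $(\Omega,d)$ is Polish and hence separable. Fix a countable dense set $D=\{\omega_{n}\}_{n\in\mathbb{N}}$ and let $\{C_{m}\}_{m\in\mathbb{N}}$ be an enumeration of the countable collection of open balls $\{B_{1/k}(\omega_{n})\mid n,k\in\mathbb{N}\}$; each $C_{m}$ is non-empty and open, so $C_{m}\in\mathcal{S}$. Given an arbitrary $C\in\mathcal{S}$, choose $\omega\in C$ and $\varepsilon>0$ with $B_{\varepsilon}(\omega)\subseteq C$; pick $n$ with $d(\omega_{n},\omega)<\varepsilon/2$ and $k$ with $1/k<\varepsilon/2$. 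By the triangle inequality $B_{1/k}(\omega_{n})\subseteq B_{\varepsilon}(\omega)\subseteq C$, so the corresponding $C_{m}$ satisfies $C_{m}\subseteq C$. This verifies (\ref{c}).

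Then I would invoke Corollary \ref{CorS} directly: condition (\ref{c}) gives the equivalence ``No Arbitrage de la classe $\mathcal{S}$ in $\widetilde{\mathcal{H}}\Longleftrightarrow\mathcal{M}\cap\mathcal{R}_{\mathcal{S}}\neq\varnothing$''. It then remains only to observe that $\mathcal{R}_{\mathcal{S}}=\mathcal{P}_{+}$: by Definition \ref{defRS}, $Q\in\mathcal{R}_{\mathcal{S}}$ means $Q(C)>0$ for every non-empty open set $C$, which is exactly the characterization of full support recalled after the definition of $\mathcal{P}_{+}$ (cf. also Example \ref{ExA}(2)). Consequently $\mathcal{M}\cap\mathcal{R}_{\mathcal{S}}=\mathcal{M}\cap\mathcal{P}_{+}=\mathcal{M}_{+}$, which is (\ref{E2}).

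There is essentially no serious obstacle here: the only real content is the separability argument establishing (\ref{c}), and the rest is bookkeeping on top of the already proved Corollary \ref{CorS} together with the identification $\mathcal{R}_{\mathcal{S}}=\mathcal{P}_{+}$.
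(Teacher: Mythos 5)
Your proposal is correct and follows essentially the same route as the paper: it verifies condition (\ref{c}) for the class of non-empty open sets via a countable dense subset of the Polish space $\Omega$ and the countable family of balls with rational radii, and then applies Corollary \ref{CorS} together with the identification $\mathcal{R}_{\mathcal{S}}=\mathcal{P}_{+}$ (so $\mathcal{M}\cap\mathcal{R}_{\mathcal{S}}=\mathcal{M}_{+}$). The paper's own proof only differs cosmetically, choosing a dense point $\omega_{\overline{n}}\in C$ and a sufficiently small ball around it rather than your triangle-inequality variant.
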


\begin{proof}
Consider a dense countable subset $\{\omega _{n}\}_{n\in \mathbb{N}}$ of $%
\Omega $, as $\Omega $ is Polish. Consider the open balls:
\begin{equation*}
B^{m}(\omega _{n}):=\left\{ \omega \in \Omega \mid d(\omega ,\omega _{n})<%
\frac{1}{m}\right\} ,\text{ }m\in \mathbb{N}\text{,}
\end{equation*}%
The density of $\{\omega _{n}\}_{n\in \mathbb{N}}$ implies that
$\Omega =\bigcup_{n\in \mathbb{N}}B^{m}(\omega _{n})$ for any
$m\in \mathbb{N}$. Take any open set $C\subseteq \Omega $. Then
there exists some $\overline{n}$ such that $\omega
_{\overline{n}}\in C$. Take $\overline{m}\in \mathbb{N}$
sufficiently big so that $B^{\overline{m}}(\omega
_{\overline{n}})\subseteq C $.
\end{proof}

\begin{corollary}
Suppose that $\Omega $ is finite or countable. Then the condition
(\ref{c}) is fulfilled and therefore:
\begin{equation}
\text{No Arb. de la classe }\mathcal{S}\text{ in }\widetilde{\mathcal{H}}%
\Longleftrightarrow \mathcal{M}\cap \mathcal{R}_{\mathcal{S}}\neq
\varnothing .  \label{E3}
\end{equation}%
In particular:
\begin{eqnarray}
\text{No }1p\text{-Arbitrage in }\widetilde{\mathcal{H}} &%
\Longleftrightarrow &\mathcal{M}_{+}\neq \emptyset .  \label{a1} \\
\text{No }P\text{-a.s. Arbitrage in }\widetilde{\mathcal{H}}
&\Longleftrightarrow &\exists \text{ }Q\in \mathcal{M}\text{ s.t.
}P\ll Q.
\label{a2} \\
\text{No }\mathcal{P}^{\prime }\text{-q.s. Arbitrage in }\widetilde{\mathcal{%
H}}\text{ } &\Longleftrightarrow &\exists \text{ }Q\in
\mathcal{M}\text{ s.t. }P^{\prime }\ll Q\text{ }\forall P^{\prime
}\in \mathcal{P}^{\prime }. \label{a3}
\end{eqnarray}
\end{corollary}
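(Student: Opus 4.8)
The plan is to obtain the statement as a direct consequence of Corollary~\ref{CorS}: once the structural hypothesis~\eqref{c} has been verified, \eqref{E3} is exactly \eqref{E1}, and then \eqref{a1}--\eqref{a3} follow by substituting into the condition $\mathcal{M}\cap\mathcal{R}_{\mathcal{S}}\neq\varnothing$ the explicit descriptions of $\mathcal{R}_{\mathcal{S}}$ already recorded, for the relevant classes, in Example~\ref{ExA}. Thus the only thing that needs an argument is \eqref{c}.

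First I would dispose of the finite case: if $\Omega$ is finite then $\mathcal{S}\subseteq\mathcal{F}$ is finite, and listing its members (with repetitions, so as to have an $\mathbb{N}$-indexed family) gives \eqref{c} at once, taking $C_{\overline{n}}=C$ for each $C\in\mathcal{S}$. For $\Omega$ countably infinite I would fix an enumeration $\Omega=\{\omega_k\}_{k\in\mathbb{N}}$ and build the required coinitial family from the singletons that happen to lie in $\mathcal{S}$. For the $1p$-class $\mathcal{S}=\{C\in\mathcal{F}\mid C\neq\varnothing\}$ every $\{\omega_k\}$ belongs to $\mathcal{S}$ and every nonempty $C$ contains some $\omega_k$, so $C_k:=\{\omega_k\}$ works. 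For the $P$-a.s.\ class $\mathcal{S}=\{C\mid P(C)>0\}$, the set $A_P:=\{\omega_k\mid P(\{\omega_k\})>0\}$ is nonempty (as $\sum_k P(\{\omega_k\})=1$) and countable, the singletons it indexes lie in $\mathcal{S}$, and any $C$ with $P(C)>0$ contains one of them, since otherwise $P(C)=\sum_{\omega_k\in C}P(\{\omega_k\})=0$. For the $\mathcal{P}^{\prime}$-q.s.\ class the same argument runs with $A:=\{\omega_k\mid P^{\prime}(\{\omega_k\})>0\text{ for some }P^{\prime}\in\mathcal{P}^{\prime}\}$ in place of $A_P$ (this is nonempty when $\mathcal{P}^{\prime}\neq\varnothing$). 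The remaining classes of Definition~\ref{def} are handled the same way ($\{\Omega\}$ itself for the Model Independent case, the singletons together with the $n$-point sets for the $np$-case, the countably many balls $C_{\varepsilon}(\omega_k)$ for the $\varepsilon$-case). In every case an enumeration of the chosen sets is the desired $\{C_n\}\subseteq\mathcal{S}$, so \eqref{c} holds and Corollary~\ref{CorS} yields \eqref{E3}.

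It then remains to specialise \eqref{E3}. For $\mathcal{S}=\{C\neq\varnothing\}$, Example~\ref{ExA}(1) gives $\mathcal{R}_{\mathcal{S}}=\mathcal{P}_{+}$ for finite or countable $\Omega$, hence $\mathcal{M}\cap\mathcal{R}_{\mathcal{S}}=\mathcal{M}\cap\mathcal{P}_{+}=\mathcal{M}_{+}$ and \eqref{E3} becomes \eqref{a1}. For $\mathcal{S}=\{C\mid P(C)>0\}$, Example~\ref{ExA}(4) gives $\mathcal{R}_{\mathcal{S}}=\{Q\in\mathcal{P}\mid P\ll Q\}$, so $\mathcal{M}\cap\mathcal{R}_{\mathcal{S}}\neq\varnothing$ is precisely \eqref{a2}; and for $\mathcal{S}=\{C\mid P^{\prime}(C)>0\text{ for some }P^{\prime}\in\mathcal{P}^{\prime}\}$, Example~\ref{ExA}(3) gives $\mathcal{R}_{\mathcal{S}}=\{Q\in\mathcal{P}\mid P^{\prime}\ll Q\ \forall P^{\prime}\in\mathcal{P}^{\prime}\}$, turning \eqref{E3} into \eqref{a3}.

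The step I expect to require the most care is the verification of \eqref{c}: the family $\{C_n\}$ must be a countable subfamily \emph{of $\mathcal{S}$} that is coinitial for inclusion, and this is exactly where the countability of $\Omega$ enters, through the facts that $\Omega$ has only countably many singletons and that, for each class in Definition~\ref{def}, every member of $\mathcal{S}$ contains a singleton (or a ball) that itself belongs to $\mathcal{S}$. For a completely arbitrary class of nonempty measurable subsets of a countable $\Omega$ (for instance, all infinite subsets of $\mathbb{N}$) no such countable coinitial family exists, so the verification is genuinely carried out class by class rather than once and for all; everything else in the proof is substitution into Corollary~\ref{CorS} and Example~\ref{ExA}.
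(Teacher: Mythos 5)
Your treatment of the finite case, your class-by-class verification of \eqref{c} for the families of Definition \ref{def}, and the substitutions of $\mathcal{R}_{\mathcal{S}}$ from Example \ref{ExA} giving \eqref{a1}--\eqref{a3} are all correct (one small slip: for the $np$-case with $n\geq 2$ the singletons do not belong to $\mathcal{S}$, so your coinitial family should consist of the $n$-point sets only). The genuine gap is that the corollary asserts \eqref{E3} for an \emph{arbitrary} class $\mathcal{S}$ of nonempty measurable sets, and your proposal does not prove this: you establish it only for the named classes and explicitly conclude that no uniform argument exists. You are right that condition \eqref{c}, read literally (the $C_n$ must lie in $\mathcal{S}$), can fail on a countable $\Omega$ --- all infinite subsets of $\mathbb{N}$ is a valid counterexample, and in this respect the paper's wording is loose --- but the inference you draw from it is wrong: a single argument does cover every $\mathcal{S}$ at once, and it is the one the paper intends with its auxiliary family $\mathcal{S}_0$ of singletons.

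The missing idea is a reduction to \emph{non-polar} singletons rather than a verification of \eqref{c} inside $\mathcal{S}$. Assume No Arbitrage de la classe $\mathcal{S}$ in $\widetilde{\mathcal{H}}$. By Theorem \ref{C-polarEquiv}, $\mathcal{M}\neq\varnothing$ and no $C\in\mathcal{S}$ is $\mathcal{M}$-polar. Since $\Omega$ is countable, each $C\in\mathcal{S}$ is a countable union of singletons, and a countable union of polar measurable sets is polar; hence every $C\in\mathcal{S}$ contains a non-polar singleton. The non-polar singletons form an at most countable family $\{\{\omega_n\}\}_n$ (nonempty, since otherwise $\Omega$ itself would be polar, contradicting $\mathcal{M}\neq\varnothing$); for each $n$ choose $Q_n\in\mathcal{M}$ with $Q_n(\{\omega_n\})>0$ and set $Q:=\sum_n 2^{-n}Q_n\in\mathcal{M}$ by Lemma \ref{pasting}. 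Then $Q(C)\geq Q(\{\omega_n\})>0$ for some $\omega_n\in C$, for every $C\in\mathcal{S}$, so $Q\in\mathcal{M}\cap\mathcal{R}_{\mathcal{S}}$; the converse implication is the one already proved in Corollary \ref{CorS} and needs no countability. (Equivalently, the non-polar singletons are exactly the points of $\Omega_{\ast}$ of Proposition \ref{LemNOpolar}.) So what countability buys is not \eqref{c} itself but the fact that a non-polar set must contain a non-polar singleton; with this observation the general equivalence \eqref{E3} follows once and for all, and your class-by-class work becomes unnecessary except as a cross-check of the particular cases.
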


\begin{proof}
Define $\mathcal{S}_{0}:=\left\{ \left\{ \omega \right\} \mid
\omega \in \Omega \text{ such that there exists }C\in
\mathcal{S}\text{ with }\omega \in C\right\} $. Then
$\mathcal{S}_{0}$ is at most a countable set and satisfies
condition (\ref{c}).
\end{proof}

\begin{remark}
While (\ref{a1}) holds also for $1p$-Arbitrage in $\mathcal{H}$
(see Proposition \ref{corPoint})), (\ref{a2}) and (\ref{a3}) can
not be improved.
Indeed, by replacing in the example (\ref{ex1000}) $\mathbb{R}^{+}$ with $%
\mathbb{Q}^{+}$ and $\mathbb{Q}^{+}$ with $\mathbb{N}$, $\Omega $
is
countable, we still have $\mathcal{M}=\varnothing $ but there are No $P$%
-a.s. Arbitrage in $\mathcal{H}$ if $P(\mathbb{Q}^{+}\diagdown
\mathbb{N})=0$ (see Section \ref{examples}, item 5 (a)).
\end{remark}

\begin{remark}
There are other families of sets satisfying condition (\ref{c}).
For example, in a topological setting, nowhere dense subset of
$\Omega $ (those having closure with empty interior)\ are often
considered \textquotedblleft negligible\textquotedblright\ sets.
Then the class of sets which are the complement of nowhere dense
sets, satisfies condition (\ref{c}).
\end{remark}
\begin{remark}
Condition (\ref{c}) is not necessary to obtain the desired equivalence (\ref%
{E1}). Consider for example the class $\mathcal{S}$ defining $\varepsilon $%
-Arbitrage in Example \ref{ExA} item \ref{item6}. In such a case condition (%
\ref{c}) fails, as soon as $\Omega $ is uncountable. However, we
now prove that (\ref{E1}) holds true, when $\Omega =\R$. We
already know by the previous proof that $\mathcal{M}\cap
\mathcal{R}_{\mathcal{S}}\neq
\varnothing $ implies No Arbitrage de la classe $\mathcal{S}$ in $\widetilde{%
\mathcal{H}}$. For the converse, from No Arbitrage de la classe
$\mathcal{S}$
in $\widetilde{\mathcal{H}}$ we know that each element in $\mathcal{S}%
:=\{[r-\varepsilon ,r+\varepsilon ]\mid r\in \R\}$ is not a polar
set. Consider the countable class
\begin{equation*}
G:=\{[q-\varepsilon ,q+\varepsilon ]\mid q\in \mathbb{Q}\}\subseteq \mathcal{%
S}.
\end{equation*}%
Each set $G_{n}\in G$ is not a polar set, hence for each $n$ there exists $%
Q_{n}\in \mathcal{M}$ such that $Q_{n}(G_{n})>0.$ Set $\overline{Q}%
:=\sum_{n=1}^{\infty }\frac{1}{2^{n}}Q_{n}\in \mathcal{M}$ (see Lemma \ref%
{pasting}). The set
\begin{equation*}
D:=\{r\in \R\mid \overline{Q}([r-\varepsilon ,r+\varepsilon ])=0\}
\end{equation*}%
is at most countable. Indeed, any two distinct intervals
$J:=[r-\varepsilon ,r+\varepsilon ]$ and $J^{\prime }:=[r^{\prime
}-\varepsilon ,r^{\prime }+\varepsilon ],$ with $r,r^{\prime }\in
D$, must be disjoint, otherwise for a rational $q$ between $r$ and
$r^{\prime }$ we would have: $[q-\varepsilon
,q+\varepsilon ]\subseteq J\cup J^{\prime }$ and thus $\overline{Q}%
([q-\varepsilon ,q+\varepsilon ])=0,$ which is impossible by
construction of $\overline{Q}$. For each $r_{n}\in D$ the set
$[r_{n}-\varepsilon ,r_{n}+\varepsilon ]\in \mathcal{S}$ is not a
polar set, hence for each $n$
there exists $\widehat{Q}_{n}\in \mathcal{M}$ such that $\widehat{Q}%
_{n}([r_{n}-\varepsilon ,r_{n}+\varepsilon ])>0.$ Set $\widehat{Q}%
:=\sum_{n=1}^{\infty }\frac{1}{2^{n}}\widehat{Q}_{n}\in \mathcal{M}$. Thus $%
Q:=\frac{1}{2}\overline{Q}+\frac{1}{2}\widehat{Q}\in
\mathcal{M}\cap \mathcal{R}_{\mathcal{S}}$ is the desired measure.
\end{remark}

\section{Feasible Markets}

\label{marketfeasible} We extend the classical notion of arbitrage
with respect to a single
probability measure $P\in \mathcal{P}$ to a class of probabilities $\mathcal{%
R}\subseteq \mathcal{P}$ as follows:

\begin{definition}
The market admits \textbf{$\mathcal{R}$-Arbitrage} if

\begin{itemize}
\item[$\bullet $ ] for all $P\in \mathcal{R}$ there exists a
$P$-Classical Arbitrage.
\end{itemize}

We denote with No $\mathcal{R}$-Arbitrage the property: for some
$P\in \mathcal{R}$, $NA(P)$ holds true.
\end{definition}

\begin{remark}[Financial interpretation of $\mathcal{R}$-Arbitrage.]
If a model admits an $\mathcal{R}$-Arbitrage then the agent will
not be able to find a fair pricing rule, whatever model $P\in
\mathcal{R}$ he will choose. However, the presence of an
$\mathcal{R}$-Arbitrage only implies
that for each $P$ there exists a trading strategy $H^{P}$ which is a $P$%
-Classical Arbitrage and this is a different concept respect to
the existence of one single trading strategy $H$ that realizes an
arbitrage for all $P\in \mathcal{R}$. In the particular case of
$\mathcal{R}=\mathcal{P}$ this notion was firstly introduced in
\cite{DH07} as ``Weak Arbitrage opportunity'' and further studied
in \cite{CO11,DOR14} and the reference therein. The No
$\mathcal{R}$-Arbitrage property above should not be confused with
the condition $NA(\mathcal{R})$ introduced by Bouchard and Nutz
\cite{Nutz2} and recalled in Section \ref{secGeometric} as well as
in Definition \ref{def}, item \ref{4defS}.
\end{remark}

We set:
\begin{equation*}
\mathcal{P}_{e}(P)=\left\{ P^{\prime }\in \mathcal{P}\mid
P^{\prime }\sim
P\right\} , \qquad
\mathcal{M}_{e}(P)=\left\{ Q\in \mathcal{M}\mid Q\sim P\right\}
\end{equation*}%
In discrete time financial markets the Dalang-Morton-Willinger
Theorem applies, so that $NA(P)$ iff $\mathcal{M}_{e}(P)\neq
\varnothing .$

\begin{proposition}
\label{FullSuppEquiv} Suppose that $\mathcal{R}\subseteq
\mathcal{P}$ has the property: $P\in \mathcal{R}$ implies
$\mathcal{P}_{e}(P)\subseteq \mathcal{R}.$ Then
\begin{equation*}
\text{No }\mathcal{R}\text{-Arbitrage}\text{ iff }\mathcal{M}\cap \mathcal{R}%
\neq \varnothing .
\end{equation*}%
In particular%
\begin{eqnarray*}
\text{No }\mathcal{R}_{\mathcal{S}}\text{-Arbitrage
iff}\;\mathcal{M}\cap
\mathcal{\mathcal{R}_{\mathcal{S}}} &\neq &\varnothing , \\
\text{No }\mathcal{P}_{+}\text{-Arbitrage}\text{ iff
}\mathcal{M}_{+} &\neq
&\varnothing , \\
\text{No }\mathcal{P}\text{-Arbitrage}\text{ iff }\mathcal{M}
&\neq &\varnothing .
\end{eqnarray*}%
where $\mathcal{R}_{\mathcal{S}}$ is defined in \eqref{eqRS} and all arbitrage conditions here
are\ with respect to $\mathcal{H}$.
\end{proposition}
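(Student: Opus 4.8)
The plan is to reduce the statement to the classical Dalang--Morton--Willinger theorem, recalled just above, which asserts $NA(P)\Leftrightarrow\mathcal{M}_e(P)\neq\varnothing$ for each fixed $P\in\mathcal{P}$, and then to use the hypothesis $P\in\mathcal{R}\Rightarrow\mathcal{P}_e(P)\subseteq\mathcal{R}$ to move between equivalent measures without leaving $\mathcal{R}$. Both implications are then essentially one line each, and the ``in particular'' part will follow by checking the hypothesis for the three specific choices of $\mathcal{R}$.

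For the implication ``$\mathcal{M}\cap\mathcal{R}\neq\varnothing\Rightarrow$ No $\mathcal{R}$-Arbitrage'', I would pick $Q\in\mathcal{M}\cap\mathcal{R}$; since $Q\sim Q$ we have $Q\in\mathcal{M}_e(Q)$, so $\mathcal{M}_e(Q)\neq\varnothing$ and DMW gives $NA(Q)$. As $Q\in\mathcal{R}$, this exhibits a probability in $\mathcal{R}$ enjoying $NA$, which is precisely the No $\mathcal{R}$-Arbitrage property. Note that here the stability hypothesis on $\mathcal{R}$ is not even used.

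For the converse ``No $\mathcal{R}$-Arbitrage $\Rightarrow\mathcal{M}\cap\mathcal{R}\neq\varnothing$'', by definition of No $\mathcal{R}$-Arbitrage there is some $P\in\mathcal{R}$ with $NA(P)$; DMW then yields $Q\in\mathcal{M}_e(P)$, i.e. $Q\in\mathcal{M}$ with $Q\sim P$. Now the hypothesis enters: $Q\in\mathcal{P}_e(P)\subseteq\mathcal{R}$, whence $Q\in\mathcal{M}\cap\mathcal{R}\neq\varnothing$.

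Finally, for the three displayed equivalences I would only need to verify that $\mathcal{R}=\mathcal{R}_{\mathcal{S}}$, $\mathcal{R}=\mathcal{P}_{+}$ and $\mathcal{R}=\mathcal{P}$ each satisfy $P\in\mathcal{R}\Rightarrow\mathcal{P}_e(P)\subseteq\mathcal{R}$: for $\mathcal{P}$ this is trivial; for $\mathcal{P}_{+}$ it follows because equivalent measures share the same null sets, hence the same support, so $P\in\mathcal{P}_{+}$ and $P'\sim P$ force $supp(P')=\Omega$; for $\mathcal{R}_{\mathcal{S}}$ it follows because $Q(C)>0$ and $Q'\sim Q$ give $Q'(C)>0$ for every $C\in\mathcal{S}$. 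I do not expect a genuine obstacle here: the only point requiring care is to invoke the stability hypothesis in the correct direction — it is needed solely in the ``$\Rightarrow$'' implication, to reabsorb the equivalent martingale measure furnished by DMW back into $\mathcal{R}$ — and to record that $\mathcal{P}_{+}$ and $\mathcal{R}_{\mathcal{S}}$ are closed under passage to equivalent probabilities.
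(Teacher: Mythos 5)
Your proposal is correct and follows essentially the same route as the paper: the easy direction uses that a martingale measure $Q\in\mathcal{M}\cap\mathcal{R}$ itself satisfies $NA(Q)$, and the converse uses Dalang--Morton--Willinger to get $Q\in\mathcal{M}_e(P)$ and the stability hypothesis $\mathcal{P}_e(P)\subseteq\mathcal{R}$ to conclude $Q\in\mathcal{M}\cap\mathcal{R}$, exactly as in the paper. Your explicit verification of the stability property for $\mathcal{R}_{\mathcal{S}}$, $\mathcal{P}_{+}$ and $\mathcal{P}$ matches the paper's remark that the particular cases follow from this property of $\mathcal{R}_{\mathcal{S}}$.
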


\begin{proof}
Suppose $Q\in \mathcal{M}\cap \mathcal{R}\neq \varnothing $. Since
$Q\in \mathcal{R}$ and $NA(Q)$ holds true we have No
$\mathcal{R}$-Arbitrage.
Viceversa, suppose No $\mathcal{R}$-Arbitrage holds true. Then there exists $%
P\in \mathcal{R}$ for which $NA(P)$ holds true and therefore there exists $%
Q\in \mathcal{M}_{e}(P)$. The assumption
$\mathcal{P}_{e}(P)\subseteq
\mathcal{R}$ implies that $Q\in \mathcal{M}_{e}(P):=\mathcal{M}\cap \mathcal{%
P}_{e}(P)\subseteq \mathcal{M}\cap \mathcal{R}$. The particular
cases
follows from the fact that $\mathcal{R}_{\mathcal{S}}$ has the property: $%
P\in \mathcal{R}_{\mathcal{S}}$ implies
$\mathcal{P}_{e}(P)\subseteq \mathcal{R}_{\mathcal{S}}$.
\end{proof}

\begin{remark}
As a result of the previous proposition, whenever (\ref{E1}), (\ref{E2}), (%
\ref{E3}) hold true each (equivalent) condition in (\ref{E1}), (\ref{E2}), (%
\ref{E3}) is also equivalent to: No
$\mathcal{R}_{\mathcal{S}}$-Arbitrage in
$\mathcal{H}$ (with $\mathcal{S}:=\left\{ \text{open sets}\right\} $ for (%
\ref{E2}))$.$
\end{remark}

Given the measurable space $(\Omega ,\mathcal{F})$ and the price
process $S$ defined on it, in this section we investigate the
properties of the set of arbitrage free (for $S$) probabilities on
$(\Omega ,\mathcal{F})$. A minimal reasonable requirement on the
financial market is the existence of at least one probability
$P\in \mathcal{P}$ that does not allow any $P$-Classical
Arbitrage. Recall from the Introduction the definition of the set
\begin{equation*}
\mathcal{P}_{0}=\left\{ P\in \mathcal{P}\mid
\mathcal{M}_{e}(P)\neq \varnothing \right\} .
\end{equation*}%
By Proposition \ref{FullSuppEquiv} and the definition of
$\mathcal{P}_{0}$
it is clear that:%
\begin{equation*}
\text{No }\mathcal{P}\text{-Arbitrage}\Leftrightarrow
\mathcal{M}\neq \varnothing \Leftrightarrow \mathcal{P}_{0}\neq
\varnothing ,
\end{equation*}%
and each one of these conditions is equivalent to No Model
Independent
Arbitrage with respect to $\widetilde{\mathcal{H}}$ (Theorem \ref%
{corollaryMI}). When $\mathcal{P}_{0}\neq \varnothing $, it is possible that only
very few models (i.e. a \textquotedblleft small\textquotedblright\
set of probability measures - the extreme case being
$|\mathcal{P}_{0}|=1$) are arbitrage free. On the other hand, the
financial market could be very \textquotedblleft well
posed\textquotedblright , so that for \textquotedblleft
most\textquotedblright\ models no arbitrage is assured - the
extreme case being $\mathcal{P}_{0}=\mathcal{P}$.

To distinguish these two possible occurrences we analyze the
conditions under which the set $\mathcal{P}_{0}$ is dense in
$\mathcal{P}$: in this case even if there could be some particular
models allowing arbitrage opportunities, the financial market is
well posed for most models.

\begin{definition}
The market is feasible if $\overline{\mathcal{P}_{0}}=\mathcal{P}$
\end{definition}
Recall that we are here considering the $\sigma
(\mathcal{P},C_{b})$- closure.\\
In Proposition \ref{propNAfullSupp} we characterize feasibility
with the existence of a full support martingale measure, a
condition independent of any a priori fixed probability.

\begin{lemma}
\label{LemWeakClosure}For all $P\in \mathcal{P_{+}}$
\begin{equation*}
\overline{\mathcal{P}_{e}(P)}=\mathcal{P}\text{ and
}\mathcal{P}_{+}\text{ is }\sigma (\mathcal{P},C_{b})\text{-dense
in }\mathcal{P}\text{.}
\end{equation*}
\end{lemma}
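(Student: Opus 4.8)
The plan is to reduce the first assertion to the density of finitely supported probability measures in $(\mathcal{P},\sigma(\mathcal{P},C_b))$. I would first record two elementary facts: $\mathcal{P}_e(P)$ is convex, since a strict convex combination of measures equivalent to $P$ is again equivalent to $P$; consequently $\overline{\mathcal{P}_e(P)}$, being the closure of a convex set, is a convex $\sigma(\mathcal{P},C_b)$-closed subset of $\mathcal{P}$. It therefore suffices to prove $\delta_\omega\in\overline{\mathcal{P}_e(P)}$ for every $\omega\in\Omega$: convexity of $\overline{\mathcal{P}_e(P)}$ then forces it to contain every finitely supported measure, and these are $\sigma(\mathcal{P},C_b)$-dense in $\mathcal{P}$ on any separable metric space (see \cite{Aliprantis}, Ch.~15), whence $\mathcal{P}\subseteq\overline{\mathcal{P}_e(P)}$ and equality holds.

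The heart of the argument is the approximation of a Dirac mass by measures equivalent to $P$. Fix $\omega\in\Omega$. Since $P\in\mathcal{P}_+$ we have $P(B_{1/n}(\omega))>0$ for every $n$, so I would set
\[
P_n:=\Big(1-\tfrac1n\Big)\,\frac{P(\,\cdot\,\cap B_{1/n}(\omega))}{P(B_{1/n}(\omega))}+\tfrac1n\,P.
\]
Each $P_n\ll P$ (its density with respect to $P$ is bounded), while $P_n\geq\tfrac1n P$ yields $P\ll P_n$; hence $P_n\in\mathcal{P}_e(P)$. For $f\in C_b$ the first term of $\int f\,dP_n$ is a $P$-average of $f$ over $B_{1/n}(\omega)$, which differs from $f(\omega)$ by at most $\sup_{B_{1/n}(\omega)}|f-f(\omega)|\to0$ by continuity, and the second term is bounded by $\tfrac1n\|f\|_\infty\to0$; thus $P_n\overset{w}{\rightarrow}\delta_\omega$, proving $\delta_\omega\in\overline{\mathcal{P}_e(P)}$.

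For the second assertion I would observe that $\mathcal{P}_e(P)\subseteq\mathcal{P}_+$ when $P\in\mathcal{P}_+$: if $P'\sim P$ and $A$ is a nonempty open set then $P(A)>0$, hence $P'(A)>0$. Combining with the first part gives $\mathcal{P}=\overline{\mathcal{P}_e(P)}\subseteq\overline{\mathcal{P}_+}\subseteq\mathcal{P}$, so $\overline{\mathcal{P}_+}=\mathcal{P}$. The only genuinely delicate point is the Dirac approximation when $\omega$ is a $P$-null point (for instance when $P$ is atomless): the naive candidate $\tfrac{n-1}{n}\delta_\omega+\tfrac1n P$ fails, being not absolutely continuous with respect to $P$, so one must instead \emph{condition} $P$ on balls shrinking to $\omega$ and add a vanishing multiple of $P$ to retain equivalence; everything else is routine.
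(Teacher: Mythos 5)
Your proof is correct and follows essentially the same route as the paper: approximate each Dirac mass $\delta_\omega$ by measures equivalent to $P$ obtained by conditioning $P$ on shrinking balls around $\omega$ and mixing with (a piece of) $P$ to preserve equivalence, then use convexity of $\overline{\mathcal{P}_e(P)}$ and the weak density of finitely supported measures. Your single explicit sequence $P_n=(1-\tfrac1n)P(\cdot\mid B_{1/n}(\omega))+\tfrac1n P$ with direct verification against $f\in C_b$ merely streamlines the paper's two-step limit (first $\lambda\downarrow 0$, then $n\to\infty$ via the portmanteau criterion), so no substantive difference.
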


\begin{proof}
It is well know that under the assumption that $(\Omega ,d)$ is separable, $%
\mathcal{P_{+}}\neq \varnothing $. Let us first show that $\forall
a\in
\Omega $ we have that $\delta _{a}\in \overline{\mathcal{P}_{e}(P)}$ where $%
P\in \mathcal{P}_{+}$ and $\delta _{a}$ is the point mass
probability
measure in $a$. Let%
\begin{equation*}
A_{n}:=\left\{ \omega \in \Omega \ :\quad d(a,\omega
)<\frac{1}{n}\right\} .
\end{equation*}%
This set is open in the topology induced by $d$ and, since $P$ has
full
support, $0<P(A_{n})<1$. Define the conditional probability measure $%
P_{n}:=P(\cdot \mid A_{n})$. For all $\ 0<\lambda <1$, $P_{\lambda
}:=\lambda P(\cdot \mid A_{n}^{c})+(1-\lambda )P(\cdot \mid
A_{n})$ is a full support measure equivalent to $P$ and
$P_{\lambda }$ converges weakly to $P(\cdot \mid A_{n})$ as
$\lambda \downarrow 0.$ Hence: $P_{n}\in
\overline{\mathcal{P}_{e}(P)}$. In order to show that $P_{n}\overset{w}{%
\rightarrow }\delta _{a}$ we prove that $\forall \ G$ open
$\liminf
P_{n}(G)\geq \delta _{a}(G)$. If $a\in G$ then $\delta _{a}(G)=1$ and $%
P(G\cap A_{n})=P(A_{n})$ eventually so we have that $\liminf
P_{n}(G)=1=\delta _{a}(G)$. Otherwise if $a\not\in G$ then $\delta
_{a}(G)=0$ and the inequality is obvious.\newline
Since $\forall a\in \Omega $ we have that $\delta _{a}\in \overline{\mathcal{%
P}_{e}(P)}$ then $co(\{\delta _{a}\ :\ a\in \Omega \})\subseteq \overline{%
\mathcal{P}_{e}(P)}$ and from the density of the probability
measures with finite support in $\mathcal{P}$ (respect to the weak
topology) it follows that
$\overline{\mathcal{P}_{e}(P)}=\mathcal{P}$. The last assertion is
obvious since $\mathcal{P}_{e}(P)\subseteq \mathcal{P}_{+}$ for
each $P\in \mathcal{P_{+}}$.
\end{proof}

\begin{proposition}
\label{propNAfullSupp}The following assertions are equivalent:

\begin{enumerate}
\item \label{item_NA_full1} $\mathcal{M}_{+}\neq \emptyset$;

\item \label{item_NA_full2} No $\mathcal{P_{+}}$-Arbitrage;

\item \label{item_NA_full3} $\mathcal{P}_{0}\cap
\mathcal{P_{+}}\neq \emptyset$;

\item \label{item_NA_full4} $\overline{\mathcal{P}_{0}\cap \mathcal{P_{+}}}=%
\mathcal{P}$;

\item \label{item_NA_full5}
$\overline{\mathcal{P}_{0}}=\mathcal{P}$.
\end{enumerate}
\end{proposition}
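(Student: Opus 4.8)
The plan is to prove the five assertions equivalent by closing the cycle $(1)\Leftrightarrow(2)$, $(1)\Rightarrow(3)\Rightarrow(4)\Rightarrow(5)\Rightarrow(1)$, using Proposition~\ref{FullSuppEquiv} for the purely algebraic equivalences, Lemma~\ref{LemWeakClosure} for the density statements, and the structural description of the maximal $\mathcal{M}$-polar set $(\Omega_{\ast})^{c}$ from Proposition~\ref{LemNOpolar} for the delicate implication. First, $(1)\Leftrightarrow(2)$ is exactly the case $\mathcal{R}=\mathcal{P}_{+}$ of Proposition~\ref{FullSuppEquiv}, since $P\in\mathcal{P}_{+}$ and $P'\sim P$ force $P'\in\mathcal{P}_{+}$ (equivalent measures share the same support). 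For $(1)\Leftrightarrow(3)$: if $Q\in\mathcal{M}_{+}$ then $Q\in\mathcal{M}_{e}(Q)$, so by Dalang--Morton--Willinger $Q\in\mathcal{P}_{0}$, and $Q\in\mathcal{P}_{+}$ by definition, whence $(1)\Rightarrow(3)$; conversely, if $P\in\mathcal{P}_{0}\cap\mathcal{P}_{+}$, pick $Q\in\mathcal{M}_{e}(P)$, and since $Q\sim P$ with $P$ of full support, $Q$ has full support too, so $Q\in\mathcal{M}_{+}$.

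Next I would treat $(3)\Rightarrow(4)$. Fix $P\in\mathcal{P}_{0}\cap\mathcal{P}_{+}$ and observe $\mathcal{P}_{e}(P)\subseteq\mathcal{P}_{0}\cap\mathcal{P}_{+}$: if $P'\sim P$, any $Q\in\mathcal{M}_{e}(P)$ is also equivalent to $P'$, so $P'\in\mathcal{P}_{0}$, and $P'$ inherits full support from $P$. By Lemma~\ref{LemWeakClosure}, $\overline{\mathcal{P}_{e}(P)}=\mathcal{P}$, hence $\overline{\mathcal{P}_{0}\cap\mathcal{P}_{+}}=\mathcal{P}$. The implication $(4)\Rightarrow(5)$ is immediate from $\mathcal{P}_{0}\cap\mathcal{P}_{+}\subseteq\mathcal{P}_{0}$ and monotonicity of the $\sigma(\mathcal{P},C_{b})$-closure.

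The substantive step is $(5)\Rightarrow(1)$. The key observation is that $\mathcal{P}_{0}$ is contained in the set of probabilities that do not charge $(\Omega_{\ast})^{c}$: if $P\in\mathcal{P}_{0}$, choose $Q\in\mathcal{M}$ with $Q\sim P$; by Proposition~\ref{LemNOpolar} the set $(\Omega_{\ast})^{c}$ lies inside an $\mathcal{F}$-measurable $Q$-null set $N$, so $P(N)=0$, and in particular $P(U)=0$ for every open $U\subseteq(\Omega_{\ast})^{c}$. Suppose, for contradiction, that $\Omega_{\ast}$ is not dense, i.e. some nonempty open $U$ satisfies $U\subseteq(\Omega_{\ast})^{c}$; pick $\omega_{0}\in U$. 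Since $(\mathcal{P},\sigma(\mathcal{P},C_{b}))$ is metrizable (it is Polish) and $\overline{\mathcal{P}_{0}}=\mathcal{P}$, there is a sequence $P_{n}\in\mathcal{P}_{0}$ with $P_{n}\overset{w}{\rightarrow}\delta_{\omega_{0}}$, and the portmanteau theorem gives $\liminf_{n}P_{n}(U)\geq\delta_{\omega_{0}}(U)=1$, contradicting $P_{n}(U)=0$. Hence $\Omega_{\ast}$ is dense in $\Omega$. Taking a countable dense set $\{\omega_{n}\}\subseteq\Omega_{\ast}$, Proposition~\ref{LemNOpolar} yields $Q^{n}\in\mathcal{M}$ with $Q^{n}(\{\omega_{n}\})>0$, and $Q:=\sum_{n}2^{-n}Q^{n}\in\mathcal{M}$ (Lemma~\ref{pasting}) has full support, so $Q\in\mathcal{M}_{+}$, which is $(1)$.

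I expect the main obstacle to be precisely this last implication $(5)\Rightarrow(1)$: producing a \emph{full support} martingale measure out of the mere weak$^{\ast}$-density of the family $\mathcal{P}_{0}$ of no-arbitrage models. The crux is to recognise that $\mathcal{P}_{0}$ is trapped inside the (in general non-open, hence a priori not closed) set of measures vanishing on $(\Omega_{\ast})^{c}$, so its density forces $(\Omega_{\ast})^{c}$ to contain no nonempty open set, i.e. $\Omega_{\ast}$ is dense; after that, the countable pasting construction already used in Proposition~\ref{NoPA} concludes the argument. All other implications are routine once Proposition~\ref{FullSuppEquiv} and Lemma~\ref{LemWeakClosure} are in hand.
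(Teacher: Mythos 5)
Your proof is correct, and all the easy implications ($(1)\Leftrightarrow(2)\Leftrightarrow(3)$ via Proposition \ref{FullSuppEquiv}, $(3)\Rightarrow(4)$ via $\mathcal{P}_{e}(P)\subseteq\mathcal{P}_{0}\cap\mathcal{P}_{+}$ and Lemma \ref{LemWeakClosure}, $(4)\Rightarrow(5)$ trivially) coincide with the paper's. Where you genuinely diverge is the closing implication: the paper proves $(5)\Rightarrow(3)$ without ever leaving the level of probability models — it fixes an arbitrary $P\in\mathcal{P}_{+}$, takes $P_{n}\in\mathcal{P}_{0}$ with $P_{n}\overset{w}{\rightarrow}P$, forms the mixture $P^{\ast}=\sum_{n}2^{-n}P_{n}$, notes that $NA(P^{\ast})$ holds (a $P^{\ast}$-arbitrage would be a $P_{n}$-arbitrage for some $n$), and shows $P^{\ast}$ has full support by exactly the portmanteau argument you use ($P_{n}(O)=0$ for all $n$ but $P(O)>0$ would contradict weak convergence). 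You instead prove $(5)\Rightarrow(1)$ directly: every $P\in\mathcal{P}_{0}$ is equivalent to some $Q\in\mathcal{M}$ and hence annihilates every open subset of $(\Omega_{\ast})^{c}$ (here the $Q$-dependent covering null set from Lemma \ref{propPolari}/Proposition \ref{LemNOpolar} suffices, since $P\ll Q$), so weak$^{\ast}$-density of $\mathcal{P}_{0}$ forces $\Omega_{\ast}$ to be dense, after which you paste the point-charging martingale measures of Proposition \ref{LemNOpolar} along a countable dense subset of $\Omega_{\ast}$, exactly as in the proof of Proposition \ref{NoPA}. The paper's route is lighter and self-contained (only mixing of no-arbitrage models plus portmanteau); yours invokes the heavier structural machinery of the maximal $\mathcal{M}$-polar set, but in exchange it produces the full support martingale measure directly and yields the nice byproduct that feasibility forces density of $\Omega_{\ast}$ in $\Omega$, a fact the paper's argument does not exhibit.
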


\begin{proof}
Since $\mathcal{M}_{+}\neq \emptyset \Leftrightarrow $ No $\mathcal{P_{+}}$%
-Arbitrage by Proposition \ref{FullSuppEquiv} and No $\mathcal{P_{+}}$%
-Arbitrage $\Leftrightarrow \mathcal{P}_{0}\cap
\mathcal{P_{+}}\neq
\emptyset $ by definition, $\ref{item_NA_full1}),\ref{item_NA_full2}),\ref%
{item_NA_full3})$ are clearly equivalent.\newline Let us show that
$\ref{item_NA_full3})\Rightarrow \ref{item_NA_full4})$: Assume
that $\mathcal{P}_{0}\cap \mathcal{P_{+}}\neq \emptyset $ and
observe
that if $P\in \mathcal{P}_{0}\cap \mathcal{P_{+}}$ then $\mathcal{P}%
_{e}(P)\subseteq \mathcal{P}_{0}\cap \mathcal{P_{+}}$, which implies that $%
\overline{\mathcal{P}_{e}(P)}\subseteq
\overline{\mathcal{P}_{0}\cap \mathcal{P_{+}}}\subseteq
\mathcal{P}$. From Lemma \ref{LemWeakClosure} we conclude that
$\ref{item_NA_full4})$ holds.\newline
Observe now that the implication $\ref{item_NA_full4})\Rightarrow \ref%
{item_NA_full5})$ holds trivially, so we just need to show that $\ref%
{item_NA_full5})\Rightarrow \ref{item_NA_full3})$. Let $P\in \mathcal{P}_{+}$%
. If $P$ satisfies NA($P$) there is nothing to show, otherwise by $\ref%
{item_NA_full5})$ there exist a sequence of probabilities $P_{n}\in \mathcal{%
P}_{0}$ such that $P_{n}\overset{w}{\rightarrow }P$ and the condition NA($%
P_{n}$) holds $\forall n\in \mathbb{N}$. Define $P^{\ast
}:=\sum_{n=1}^{+\infty }\frac{1}{2^{n}}P_{n}$ and note that for
this probability the condition NA($P^{\ast }$) holds true, so we
just need to
show that $P^{\ast }$ has full support. Assume by contradiction that $%
supp(P^{\ast })\subset \Omega .$ Then there exist an open set $O$ such that $%
P^{\ast }(O)=0$ and $P(O)>0$ since $P$ has full support. From
$P_{n}(O)=0$ for all $n$, and $P_{n}\overset{w}{\rightarrow }P$ we
obtain $0=\lim \inf P_{n}(O)\geq P(O)>0$, a contradiction.
\end{proof}

\begin{remark}
From the previous proof we observe that if the market is feasible then $%
\overline{\bigcup_{P\in \mathcal{P}_{0}}supp(P)}=\Omega $ and no
\textquotedblleft significantly large parts\textquotedblright\ of
$\Omega $ are neglected by no arbitrage models $P\in
\mathcal{P}_{0}$.
\end{remark}

\begin{proof}[Proof of Theorem \protect\ref{mainEquivalence}]
Proposition \ref{propNAfullSupp} guarantees: 1. $\Leftrightarrow $ 2. $%
\Leftrightarrow $ 3. and Corollary \ref{CorNo} assures: 3.
$\Leftrightarrow $ 4.
\end{proof}

\paragraph{The case of a countable space $\Omega $.}

When $\Omega =\{\omega _{n}\mid n\in \mathbb{N}\}$ is countable it
is possible to provide another characterization of feasibility
using the norm topology instead of the weak topology on
$\mathcal{P}$. More precisely, we consider the topology induced by
the total variation norm. A sequence of probabilities $P_{n}$
\emph{converges in variation} to $P$ if $\Vert P_{n}-P\Vert
\rightarrow 0$, where the variation norm of a signed measure $R$
is defined by:%
\begin{equation}
\Vert R\Vert =\sup_{(A_{i},\ldots ,A_{n})\in \mathcal{F}%
}\sum_{i=1}^{n}|R(A_{i})|,  \label{norm}
\end{equation}%
and $(A_{i},\ldots ,A_{n})$ is a finite partition of $\Omega $.

\begin{lemma}
\label{LemStrongClosure} Let $\Omega $ be a countable space. Then
$\forall P\in \mathcal{P_{+}}$
\begin{equation*}
\overline{\mathcal{P}_{e}(P)}^{\Vert \cdot \Vert }=\overline{\mathcal{P}_{+}}%
^{\Vert \cdot \Vert }=\mathcal{P}\text{.}
\end{equation*}
\end{lemma}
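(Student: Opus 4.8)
The plan is to establish the chain of equalities from the inside out, reducing everything to one elementary total--variation perturbation. The first step I would take is to record the following elementary fact: since $\Omega$ is countable, $\mathcal{F}=\mathcal{B}(\Omega)$ contains every singleton, and a probability $P$ has full support exactly when it charges every point, i.e. $P(\{\omega\})>0$ for all $\omega\in\Omega$. Consequently such a $P$ has only $\varnothing$ as a null set, so any two elements of $\mathcal{P}_{+}$ have the same (trivial) null ideal and are therefore mutually equivalent; in particular, for the fixed $P\in\mathcal{P}_{+}$ one gets $\mathcal{P}_{e}(P)=\mathcal{P}_{+}$. This collapses the statement to the single identity $\overline{\mathcal{P}_{+}}^{\Vert\cdot\Vert}=\mathcal{P}$, the inclusion ``$\subseteq$'' being automatic from $\mathcal{P}_{+}\subseteq\mathcal{P}$.

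For the reverse inclusion I would fix an arbitrary $Q\in\mathcal{P}$ and, using the given $P\in\mathcal{P}_{+}$, set $R_{n}:=(1-\tfrac{1}{n})Q+\tfrac{1}{n}P$ for $n\in\mathbb{N}$. Since $R_{n}(\{\omega\})\geq\tfrac{1}{n}P(\{\omega\})>0$ for every $\omega$, we have $R_{n}\in\mathcal{P}_{+}=\mathcal{P}_{e}(P)$; and from $R_{n}-Q=\tfrac{1}{n}(P-Q)$, the norm property of $\Vert\cdot\Vert$ together with $\Vert P\Vert=\Vert Q\Vert=1$ gives $\Vert R_{n}-Q\Vert=\tfrac{1}{n}\Vert P-Q\Vert\leq\tfrac{2}{n}\to 0$. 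Hence $Q\in\overline{\mathcal{P}_{e}(P)}^{\Vert\cdot\Vert}\subseteq\overline{\mathcal{P}_{+}}^{\Vert\cdot\Vert}$, and since $Q$ was arbitrary all three sets coincide with $\mathcal{P}$.

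The step requiring the most care is the first one, the identification $\mathcal{P}_{e}(P)=\mathcal{P}_{+}$: on a countable Polish space that is not discrete a full support measure need not charge an accumulation point, and then $\mathcal{P}_{e}(P)$ is a proper subset of $\mathcal{P}_{+}$ whose total--variation closure omits, for instance, the Dirac mass at such a point. So the argument is to be read with singletons open (the discrete case), or equivalently with attention restricted to the common dense set of atoms that every full support measure must charge; granted this, the remainder is just the two--line perturbation $R_{n}$ above and uses nothing beyond the definitions of $\mathcal{P}_{+}$, $\mathcal{P}_{e}(P)$, the Radon--Nikodym characterisation of $\sim$, and the triangle inequality for $\Vert\cdot\Vert$.
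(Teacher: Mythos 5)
Your proof is correct and shares the paper's overall strategy: both arguments first observe that on a countable space $\mathcal{P}_{e}(P)=\mathcal{P}_{+}$ for every $P\in\mathcal{P}_{+}$ (reading full support as charging every point), which reduces the lemma to the total-variation density of $\mathcal{P}_{+}$ in $\mathcal{P}$. You differ in how you prove that density: the paper fixes $Q\in\mathcal{P}\setminus\mathcal{P}_{+}$ and builds an explicit approximant by removing mass $\varepsilon/2$ from one charged atom and spreading it over the uncharged atoms with geometric weights $\alpha 2^{-n}$, whereas you simply mix with the given full-support measure, $R_{n}=(1-\frac{1}{n})Q+\frac{1}{n}P$, and bound $\Vert R_{n}-Q\Vert=\frac{1}{n}\Vert P-Q\Vert\leq\frac{2}{n}$. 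Your mixing argument is shorter, needs no bookkeeping about which coordinates vanish, and lands directly in $\mathcal{P}_{e}(P)$; the paper's construction achieves an approximant at an exactly prescribed distance $\varepsilon$ but buys nothing extra here. Your cautionary remark about non-discrete countable Polish spaces is also well taken: the paper's own proof makes the same identification $\mathcal{P}_{+}=\{P\mid p_{n}>0\ \forall n\}$ without comment, so the reading you adopt (singletons open, equivalently full support interpreted as charging every atom) is precisely the one under which both your argument and the paper's are valid; for a full-support measure that fails to charge an accumulation point, the total-variation closure of $\mathcal{P}_{e}(P)$ would indeed miss the Dirac mass at that point, so the caveat is substantive and applies equally to the published proof and to its later use in Proposition \ref{propCount}.
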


\begin{proof}
Since $\Omega $ is countable we have that
\begin{equation*}
\mathcal{P}=\{P:=\{p_{n}\}_{1}^{\infty }\in \ell ^{1}\ \mid \
p_{n}\geq 0\ \forall n\in \mathbb{N},\ \Vert P\Vert _{1}=1\},
\end{equation*}%
\begin{equation*}
\mathcal{P}_{+}=\{P\in \mathcal{P}\ \mid \ p_{n}>0\ \forall n\in \mathbb{N}%
\},
\end{equation*}%
with $\Vert \cdot \Vert _{1}$ the $\ell ^{1}$ norm. Observe that
in the
countable case $\mathcal{P}_{e}(P)=\mathcal{P}_{+}$ for every $P\in \mathcal{%
P_{+}}$. So we only need to show that for any $P\in \mathcal{P}$ and any $%
\varepsilon >0$ there exists $P^{\prime }\in \mathcal{P}_{+}$ s.t.
$\Vert P-P^{\prime }\Vert _{1}\leq \varepsilon .$

Let $P\in \mathcal{P}\setminus \mathcal{P}_{+}$. Then $P=\{p_{n}\}_{1}^{%
\infty }\in \ell ^{1}$ and there exists at least one index $n$ for which $%
p_{n}=0.$ Let $\alpha >0$ be the constant satisfying
\begin{equation*}
\sum_{n\in \mathbb{N}\text{ s.t. }p_{n}=0}\frac{\alpha }{2^{n}}=1.
\end{equation*}%
There also exists one index $n$, say $n_{1},$ for which $1\geq
p_{n_{1}}>0$. Let $p:=p_{n_{1}}>0.$

For any positive $\varepsilon <p$, define $P^{\prime
}=\{p_{n}^{\prime }\}$ by: $p_{n_{1}}^{\prime
}=p-\frac{\varepsilon }{2},$ $p_{n}^{\prime }=p_{n}$
for all $n\neq n_{1}$ s.t. $p_{n}>0$, $p_{n}^{\prime }=\frac{\alpha }{2^{n}}%
\frac{\varepsilon }{2}$ for all $n$ s.t. $p_{n}=0$. Then
$p_{n}^{\prime }>0$
for all $n$ and $\sum_{n=1}^{\infty }p_{n}^{\prime }=\sum_{n\text{ s.t.}%
p_{n}>0\text{ }}p_{n}=1$ , so that $P^{\prime }\in \mathcal{P}_{+}$ and $%
\Vert P-P^{\prime }\Vert _{1}=\varepsilon .$
\end{proof}

\begin{remark}
In the general case, when $\Omega $ is uncountable, while it is
still true that $\overline{\mathcal{P}_{+}}^{\Vert \cdot \Vert
}=\mathcal{P}$, it is no
longer true that $\overline{\mathcal{P}_{e}(P)}^{\Vert \cdot \Vert }=%
\mathcal{P}$ for any $P\in \mathcal{P}_{+}$.\newline Take $\Omega
=[0,1]$ and $\mathcal{P}_{e}(\lambda )$ the set of probability
measures equivalent to Lebesgue. It is easy to see that $\delta
_{0}\notin \overline{\mathcal{P}_{e}(\lambda )}^{\Vert \cdot \Vert
}$ since $\Vert P-\delta _{0}\Vert \geq P((0,1])=1$ .
\end{remark}

\begin{proposition}
\label{propCount}If $\Omega $ is countable, the following
conditions are equivalent:

\begin{enumerate}
\item $\mathcal{M}_{+}\neq \emptyset$;

\item No $\mathcal{P_{+}}$-Arbitrage;

\item $\mathcal{P}_{0}\cap \mathcal{P_{+}}\neq \emptyset$;

\item $\overline{\mathcal{P}_{0}}^{\Vert \cdot \Vert
}=\mathcal{P}$,
\end{enumerate}

where $\Vert \cdot \Vert $ is the total variation norm on
$\mathcal{P}$
\end{proposition}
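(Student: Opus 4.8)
The plan is to establish $1 \Leftrightarrow 2 \Leftrightarrow 3$ exactly as in Proposition \ref{propNAfullSupp}, and then to close the cycle with $3 \Rightarrow 4$ and $4 \Rightarrow 3$, using the total-variation density Lemma \ref{LemStrongClosure} in the role played by the weak density Lemma \ref{LemWeakClosure} in the proof of Proposition \ref{propNAfullSupp}. None of the first three equivalences uses countability: since $P \in \mathcal{P}_{+}$ implies $\mathcal{P}_{e}(P) \subseteq \mathcal{P}_{+}$, Proposition \ref{FullSuppEquiv} gives No $\mathcal{P}_{+}$-Arbitrage iff $\mathcal{M}\cap\mathcal{P}_{+} = \mathcal{M}_{+} \neq \varnothing$ (so $1 \Leftrightarrow 2$), while by definition No $\mathcal{P}_{+}$-Arbitrage means $NA(P)$ holds for some $P \in \mathcal{P}_{+}$, i.e. $\mathcal{P}_{0} \cap \mathcal{P}_{+} \neq \varnothing$ (so $2 \Leftrightarrow 3$).

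For $3 \Rightarrow 4$ I would fix $P \in \mathcal{P}_{0} \cap \mathcal{P}_{+}$ and note that $NA$ depends only on the null sets of a measure, so $P' \sim P$ forces $\mathcal{M}_{e}(P') = \mathcal{M}_{e}(P) \neq \varnothing$, hence $\mathcal{P}_{e}(P) \subseteq \mathcal{P}_{0}$. Lemma \ref{LemStrongClosure} then yields $\mathcal{P} = \overline{\mathcal{P}_{e}(P)}^{\Vert\cdot\Vert} \subseteq \overline{\mathcal{P}_{0}}^{\Vert\cdot\Vert} \subseteq \mathcal{P}$, which is item 4. This is the verbatim analogue of the implication $\ref{item_NA_full3}) \Rightarrow \ref{item_NA_full4})$ of Proposition \ref{propNAfullSupp}.

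For $4 \Rightarrow 3$ I would mimic the argument for $\ref{item_NA_full5}) \Rightarrow \ref{item_NA_full3})$ of Proposition \ref{propNAfullSupp}, replacing weak convergence by convergence in variation. Pick any $P \in \mathcal{P}_{+}$ (nonempty since $\Omega$ is countable, e.g. $p_{n} = 2^{-n}$). If $NA(P)$ holds, then $P \in \mathcal{P}_{0} \cap \mathcal{P}_{+}$ and we are done. Otherwise, by item 4 there is a sequence $P_{n} \in \mathcal{P}_{0}$ with $\Vert P_{n} - P\Vert \to 0$; set $P^{\ast} := \sum_{n=1}^{\infty} 2^{-n} P_{n}$. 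Since $P_{n} \ll P^{\ast}$ for every $n$, any $P^{\ast}$-Classical Arbitrage $H$ (that is, $V_{T}(H) \geq 0$ $P^{\ast}$-a.s. with $P^{\ast}(\mathcal{V}_{H}^{+}) > 0$) would satisfy $V_{T}(H) \geq 0$ $P_{n}$-a.s. for every $n$ and $P_{n}(\mathcal{V}_{H}^{+}) > 0$ for some $n$, contradicting $P_{n} \in \mathcal{P}_{0}$; hence $P^{\ast} \in \mathcal{P}_{0}$. Finally $P^{\ast}$ has full support: if $P^{\ast}(A) = 0$ for some non-empty open $A$, then $P_{n}(A) = 0$ for all $n$, and $|P_{n}(A) - P(A)| \leq \Vert P_{n} - P\Vert \to 0$ gives $P(A) = 0$, contradicting $P \in \mathcal{P}_{+}$. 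Thus $P^{\ast} \in \mathcal{P}_{0} \cap \mathcal{P}_{+}$, giving item 3, and the cycle $1 \Leftrightarrow 2 \Leftrightarrow 3 \Leftrightarrow 4$ is complete.

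I do not expect a serious obstacle here: the proof is essentially a transcription of Proposition \ref{propNAfullSupp} with the weak topology replaced by the total-variation topology, the one genuinely new ingredient being Lemma \ref{LemStrongClosure} (which is where countability is used, via $\mathcal{P}_{e}(P) = \mathcal{P}_{+}$). The only point worth double-checking is that total-variation convergence still passes to $P_{n}(A) \to P(A)$ on the open sets relevant to full support — which it does, trivially, since $|P_{n}(A) - P(A)| \leq \Vert P_{n} - P\Vert$ for every measurable $A$.
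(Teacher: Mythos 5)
Your proposal is correct and is essentially the paper's own argument: the paper's proof of Proposition \ref{propCount} simply invokes Lemma \ref{LemStrongClosure} and says to repeat the techniques of Proposition \ref{propNAfullSupp}, which is exactly what you do (the equivalences $1\Leftrightarrow 2\Leftrightarrow 3$ via Proposition \ref{FullSuppEquiv}, $3\Rightarrow 4$ via the total-variation density of $\mathcal{P}_{e}(P)$, and $4\Rightarrow 3$ via the mixture $P^{\ast}=\sum_{n}2^{-n}P_{n}$, with the full-support check made even easier by $|P_n(A)-P(A)|\leq \Vert P_n-P\Vert$). You also correctly locate the only place countability enters, namely Lemma \ref{LemStrongClosure}.
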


\begin{proof}
Using Lemma \ref{LemStrongClosure} the proof is straightforward
using the same techniques as in Proposition \ref{propNAfullSupp}.
\end{proof}

\section{On Open Arbitrage}

\label{secRob}

In the introduction we already illustrated the interpretation and
robust features of the dual formulation of Open Arbitrage. In
order to prove the
equivalence between Open Arbitrage and (\ref{open}) consider the following definition and recall that $\mathcal{V}%
_{H}^{+}:=\left\{ \omega \in \Omega \mid \ V_{T}(H)(\omega
)>0\right\} $.

\begin{definition}
\label{RobArb copy(1)}Let $\tau $ be a topology on $\mathcal{P}$ and $%
\mathfrak{H}$ be a class of trading strategies. Set%
\begin{equation*}
W(\tau ,\mathfrak{H})=\left\{ H\in \mathfrak{H}\mid
\begin{array}{c}
\text{there exists a non empty }\tau -\text{open set
}\mathcal{U}\subseteq
\mathcal{P}\text{ such that } \\
\forall P\in \mathcal{U}\quad V_{T}(H)\geq 0\text{ }P\text{-a.s.}\quad \text{%
and}\quad P(\mathcal{V}_{H}^{+})>0%
\end{array}%
\right\}
\end{equation*}
\end{definition}

Clearly, $W(\tau ,\mathfrak{H})$ consists of the trading
strategies satisfying condition (\ref{open}) with respect to the
appropriate topology and the measurability requirement. The first
item in the next proposition is the announced equivalence. The
second item shows that the analogue
equivalence is true also with respect to the class $\widetilde{\mathcal{H}}$%
. Therefore, in Theorem \ref{mainEquivalence} we could add to the
four equivalent conditions also the dual formulation of Open
Arbitrage with respect to $\widetilde{\mathcal{H}}$.

\begin{proposition}
\label{WS-characterization}(1) Let $\sigma :=\sigma
(\mathcal{P},C_{b})$ and $\parallel \cdot \parallel $ the
variation norm defined in (\ref{norm}). Then:
\begin{eqnarray*}
H &\in &W(\parallel \cdot \parallel
,\mathcal{H})\Longleftrightarrow \quad
H\in \mathcal{H}\text{ is a $1p$-Arbitrage} \\
&\Uparrow & \\
H &\in &W(\sigma ,\mathcal{H})\text{ }\Longleftrightarrow \quad
H\in \mathcal{H}\text{ is an Open Arbitrage}
\end{eqnarray*}%
In addition, if $H\in W(\sigma ,\mathcal{H})$ then
$V_{T}(H)(\omega )\geq 0$ for all $\omega \in \Omega $.

(2) Let $\mathcal{F}=\mathcal{B}(\Omega )$ be the Borel sigma
algebra and
let $\widetilde{\mathcal{F}}$ be a sigma algebra such that $\mathcal{%
F\subseteq }\widetilde{\mathcal{F}}$. Define the set
\begin{equation*}
\widetilde{\mathcal{P}}:=\{\widetilde{P}:\widetilde{\mathcal{F}}\rightarrow
\lbrack 0,1]\mid \widetilde{P}\text{ is a probability}\},
\end{equation*}%
and endow $\widetilde{\mathcal{P}}$ with the topology $\widetilde{\sigma }%
:=\sigma (\widetilde{\mathcal{P}},C_{b})$. The class of admissible
trading
strategies $\widetilde{\mathcal{H}}$ is given by all $\widetilde{\mathbb{F}}$%
- predictable processes. Then
\begin{equation*}
H\in W(\widetilde{\sigma },\widetilde{\mathcal{H}})\text{ }%
\Longleftrightarrow \quad H\in \widetilde{\mathcal{H}}\text{ is an
Open Arbitrage in }\widetilde{\mathcal{H}}
\end{equation*}%
In addition, if $H\in W(\widetilde{\sigma },\widetilde{\mathcal{H}})$ then $%
V_{T}(H)(\omega )\geq 0$ for all $\omega \in \Omega $.
\end{proposition}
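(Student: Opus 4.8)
The plan is to prove each biconditional in (1) and (2) by its two implications, running the $1p$-Arbitrage/variation-norm and Open-Arbitrage/weak case in parallel, and to obtain the vertical arrow $W(\sigma,\mathcal{H})\subseteq W(\|\cdot\|,\mathcal{H})$ for free: variation-norm convergence on $\mathcal{P}$ forces $\sigma(\mathcal{P},C_b)$-convergence (since $|\int f\,d(P_n-P)|\le\|f\|_\infty\,\|P_n-P\|$ for $f\in C_b$), so every $\sigma(\mathcal{P},C_b)$-open set is $\|\cdot\|$-open and the same witnessing $\mathcal{U}$ serves for both memberships. For the two ``easy'' implications I would start from the fact that a $1p$-Arbitrage (resp.\ an Open Arbitrage) has $V_T(H)\ge 0$ on all of $\Omega$, so $V_T(H)\ge 0$ holds $P$-a.s.\ for \emph{every} $P$; it then remains only to exhibit a non-empty open $\mathcal{U}$ on which $P(\mathcal{V}_H^+)>0$. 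If $\mathcal{V}_H^+\ni\omega_0$, the ball $\mathcal{U}=\{P:\|P-\delta_{\omega_0}\|<1\}$ works, since evaluating the norm on the partition $\{\mathcal{V}_H^+,(\mathcal{V}_H^+)^c\}$ gives $\|P-\delta_{\omega_0}\|\ge 2\big(1-P(\mathcal{V}_H^+)\big)$, whence $P(\mathcal{V}_H^+)>\tfrac{1}{2}$ on $\mathcal{U}$. If $\mathcal{V}_H^+$ contains a non-empty open $O$, then $\mathcal{U}=\{P:P(O)>0\}$ works: writing $\mathbf{1}_O$ as the increasing limit of $f_n:=\min(1,n\,d(\cdot,O^c))\in C_b$ shows $P\mapsto P(O)=\sup_n\int f_n\,dP$ is $\sigma(\mathcal{P},C_b)$-lower semicontinuous, so $\mathcal{U}$ is open, and $\delta_{\omega_0}\in\mathcal{U}$ for $\omega_0\in O$ makes it non-empty.

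For the converse implications I would proceed in two steps. First, $V_T(H)\ge 0$ \emph{everywhere}: given a witness $\mathcal{U}$ and a point $\bar\omega$ with $V_T(H)(\bar\omega)<0$, pick $P_0\in\mathcal{U}$ and put $P_\lambda=(1-\lambda)P_0+\lambda\delta_{\bar\omega}$; then $\|P_\lambda-P_0\|\le 2\lambda\to 0$ (and so also $P_\lambda\to P_0$ weakly), hence $P_\lambda\in\mathcal{U}$ for small $\lambda$, yet $P_\lambda(\{V_T(H)<0\})\ge\lambda>0$ contradicts $V_T(H)\ge 0$ $P_\lambda$-a.s. Since $V_0(H)=0$ always and $\mathcal{V}_H^+\ne\varnothing$ (any $P\in\mathcal{U}$ has $P(\mathcal{V}_H^+)>0$), this already closes the $1p$-Arbitrage case and proves the ``In addition'' clause of (1). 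Second, for Open Arbitrage it remains to show $\mathcal{V}_H^+$ contains a non-empty open set, which I would do by contraposition: if $(\mathcal{V}_H^+)^c$ is dense in $\Omega$, then the probability measures with finite support carried by $(\mathcal{V}_H^+)^c$ are $\sigma(\mathcal{P},C_b)$-dense in $\mathcal{P}$ (finitely supported measures are weakly dense in $\mathcal{P}$, and each $\delta_\omega$ is a weak limit of $\delta_{x_n}$ with $x_n\in(\mathcal{V}_H^+)^c$ by density, so one moves the atoms of any finitely supported measure to nearby points of $(\mathcal{V}_H^+)^c$). Every such measure gives mass $0$ to $\mathcal{V}_H^+$, so $\{P:P(\mathcal{V}_H^+)=0\}$ is weakly dense and must meet the non-empty open $\mathcal{U}$, contradicting $P(\mathcal{V}_H^+)>0$ on $\mathcal{U}$. (Borel measurability of $\mathcal{V}_H^+$ plays no role here, only that $P(\mathcal{V}_H^+)$ is defined for $P\in\mathcal{U}$, as hypothesized.) Together with $V_T(H)\ge 0$ and $V_0(H)=0$, this shows $H$ is an Open Arbitrage.

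Part (2) runs on the same two-step scheme, now with $\widetilde{\mathcal{P}}$ in place of $\mathcal{P}$, with $\mathcal{U}\subseteq\widetilde{\mathcal{P}}$ in the definition of $W$, and with $\widetilde{\sigma}=\sigma(\widetilde{\mathcal{P}},C_b)$; the key observation is that $\widetilde{\sigma}$ tests only against $C_b(\Omega)$, so $\widetilde{\sigma}$-convergence in $\widetilde{\mathcal{P}}$ is detected only through the Borel restrictions $\widetilde{P}\mapsto\widetilde{P}|_{\mathcal{F}}$. Concretely, Diracs $\delta_x$ ($x\in\Omega$) and their finite convex combinations belong to $\widetilde{\mathcal{P}}$; the perturbation $\widetilde{P}_\lambda=(1-\lambda)\widetilde{P}_0+\lambda\delta_{\bar\omega}$ converges to $\widetilde{P}_0$ in $\widetilde{\sigma}$, and since $\{V_T(H)<0\}\in\widetilde{\mathcal{F}}$ receives mass $\ge\lambda$, we again get $V_T(H)\ge 0$ everywhere; if $(\mathcal{V}_H^+)^c$ were dense, the finitely supported measures on $(\mathcal{V}_H^+)^c$, viewed in $\widetilde{\mathcal{P}}$, are $\widetilde{\sigma}$-dense (their $\mathcal{F}$-restrictions are weakly dense as above and $\widetilde{\sigma}$ only sees those), so $\{\widetilde{P}:\widetilde{P}(\mathcal{V}_H^+)=0\}$ meets the non-empty $\widetilde{\sigma}$-open $\mathcal{U}$, a contradiction. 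The easy implication in (2) uses that $\widetilde{P}\mapsto\widetilde{P}(O)=\sup_n\int f_n\,d\widetilde{P}$ is $\widetilde{\sigma}$-lower semicontinuous, so $\{\widetilde{P}:\widetilde{P}(O)>0\}$ is a non-empty $\widetilde{\sigma}$-open witness whenever $O\subseteq\mathcal{V}_H^+$ is open, while $V_T(H)\ge 0$ on $\Omega$ supplies the a.s.\ inequality. I expect the main obstacle to be the step ``$\mathcal{V}_H^+$ contains an open set'': the right device is the density fact that probability measures with finite support sitting on a dense subset are weakly dense in $\mathcal{P}$; once this is isolated, the rest is routine perturbation and lower-semicontinuity bookkeeping.
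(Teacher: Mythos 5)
Your proof is correct and follows essentially the same route as the paper's: the Dirac-mixture perturbation $P_\lambda=(1-\lambda)P_0+\lambda\delta_{\bar\omega}$ to force $V_T(H)\geq 0$ everywhere, the weak density of finitely supported measures carried by a dense set (the paper's Lemma \ref{denseDirac}, extended to $\widetilde{\mathcal{P}}$ in Lemma \ref{extended}) to show that $\mathcal{V}_H^+$ must contain a non-empty open set, and the observation that weakly open sets are norm-open for the vertical implication. The only (harmless) deviation is in the easy directions, where you exhibit the witnessing open set via lower semicontinuity of $P\mapsto P(O)$ and a variation ball centered at $\delta_{\omega_0}$, whereas the paper uses a variation ball around an arbitrary $P$ charging $\mathcal{V}_H^+$ and the complement of the weak$^*$-closed set $\{P\mid supp(P)\subseteq\overline{(\mathcal{V}_H^+)^c}\}$.
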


\begin{proof}
We prove (1) and we postpone the proof of (2) to the Appendix.

(a) $H$ is a $1p$-Arbitrage $\Rightarrow $ $H\in W(\parallel \cdot
\parallel
,\mathcal{H})$. Let $H\in \mathcal{H}$ be a $1p$-Arbitrage. Then $%
V_{T}(H)(\omega )\geq 0\ \forall \omega \in \Omega $ and there
exists a probability $P$ such that
$P(\mathcal{V}_{H}^{+})>\varepsilon >0$. From the implication
$\Vert P-Q\Vert <\varepsilon \Rightarrow |P(C)-Q(C)|<\varepsilon
$ for every $C\in \mathcal{F}$, we obtain: $\bar{P}(\mathcal{V}_{H}^{+})>0$ $%
\forall \bar{P}\in B_{\varepsilon }(P)$, where $B_{\varepsilon
}(P)$ is the ball of radius $\varepsilon $ centered in $P$. Hence
$H\in W(\parallel \cdot
\parallel ,\mathcal{H})$.

(b) $H\in W(\parallel \cdot \parallel ,\mathcal{H})$ $\Rightarrow
$ $H$ is a
$1p$-Arbitrage. If $H\in W(\parallel \cdot \parallel ,\mathcal{H})$ then $%
V_{T}(H)\geq 0$ $P$-a.s. for all $P$ in the open set
$\mathcal{U}.$ We need only to show that $B:=\left\{ \omega \in
\Omega \ \mid \ V_{T}(H)(\omega )<0\right\} $ is empty. By
contradiction, let $\omega \in B$, take any $P\in \mathcal{U}$ and
define the probability $P_{\lambda }:=\lambda \delta
_{\omega }+(1-\lambda )P $. Since $V_{T}(H)\geq 0$ $P$-a.s. we must have $%
P(\omega )=0$, otherwise $P(B)>0$. However, $P_{\lambda }(B)\geq
P_{\lambda }(\omega )=\lambda >0$ for all positive $\lambda $ and
$P_{\lambda }$ will eventually belongs to $\mathcal{U}$, as
$\lambda \downarrow 0$, which contradicts $V_{T}(H)\geq 0$
$P$-a.s. for any $P\in \mathcal{U}$.

(c) $H\in W(\sigma ,\mathcal{H})$ $\Rightarrow $ $H\in W(\parallel
\cdot
\parallel ,\mathcal{H})$. This claim is trivial because every weakly open
set is also open in the norm topology.

(d) If $H\in W(\sigma ,\mathcal{H})$ then $V_{T}(H)(\omega )\geq 0$ for all $%
\omega \in \Omega $. This follows from (c) and (b).

(e) $H\in W(\sigma ,\mathcal{H})$ $\Rightarrow $ $H$ is an Open
Arbitrage. Suppose $H\in W(\sigma ,\mathcal{H})$, so that
$V_{T}(H)(\omega )\geq 0$ for all $\omega \in \Omega $. We claim
that $(\mathcal{V}_{H}^{+})^{c}=\left\{ \omega \in \Omega \mid \
V_{T}(H)=0\right\} $ is not dense in $\Omega $. This will imply
the thesis as $int(\mathcal{V}_{H}^{+})$ will then be a non
empty open set on which $V_{T}(H)>0$. Suppose by contradiction that $%
\overline{(\mathcal{V}_{H}^{+})^{c}}=\Omega $. We know by Lemma \ref%
{denseDirac} in the Appendix that the set $\mathcal{Q}$ of
embedded
probabilities $co(\{\delta _{\omega }\}\mid \omega \in (\mathcal{V}%
_{H}^{+})^{c})$ is weakly dense in $\mathcal{P}$ and hence it
intersects, in particular, the weakly open set $\mathcal{U}$ in
the definition of $W(\sigma
,\mathcal{H})$. However, for every $P\in \mathcal{Q}$ we have $V_{T}(H)=0\ P$%
-a.s. and so $H$ is not in $W(\sigma ,\mathcal{H})$.

(f) $H$ is an Open Arbitrage $\Rightarrow $ $H\in W(\sigma
,\mathcal{H})$.
Note first that if $F$ is a closed subset of $\Omega $, then $\mathcal{P}%
(F):=\{P\in \mathcal{P}\mid \ supp(P)\subset F\}$ is a $\sigma (\mathcal{P}%
,C_{b})$-closed face of $\mathcal{P}$ from Th. 15.19 in
\cite{Aliprantis}. If $H$ is an Open Arbitrage then
$\mathcal{V}_{H}^{+}$ contains an open set and in particular
$G:=\overline{(\mathcal{V}_{H}^{+})^{c}}$ is a closed set strictly
contained in $\Omega $. Observe then that $\mathcal{U}:=\left(
\mathcal{P}(G)\right) ^{c}$ is a non empty open set of
probabilities that fulfills the properties in the definition of
$W(\sigma ,\mathcal{H})$.
\end{proof}

The following proposition is an improvement of (\ref{a1}), as the $1p$%
-Arbitrage is defined with respect to $\mathcal{H}$.

\begin{proposition}
\label{corPoint}For $\Omega $ countable: No $1p$-Arbitrage in $\mathcal{H}$ $%
\Longleftrightarrow \mathcal{M}_{+}\neq \emptyset .$
\end{proposition}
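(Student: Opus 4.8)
The plan is to establish the two implications separately; the direction $\mathcal{M}_{+}\neq\varnothing\Rightarrow$ No $1p$-Arbitrage is immediate, while the converse will be reduced to the identity $\Omega_{\ast}=\Omega$, after which the argument in the proof of Proposition~\ref{NoPA} can be quoted.

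\emph{From $\mathcal{M}_{+}\neq\varnothing$ to No $1p$-Arbitrage in $\mathcal{H}$.} Fix $Q\in\mathcal{M}_{+}$. Since $\Omega$ is countable, $Q$ charges every point, $Q(\{\omega\})>0$ for all $\omega\in\Omega$ (this is the description of $\mathcal{P}_{+}$ used in Lemma~\ref{LemStrongClosure}). If $H\in\mathcal{H}$ satisfies $V_{T}(H)(\omega)\geq0$ for every $\omega$, then $E_{Q}[V_{T}(H)]=0$ (the identity recalled in Section~\ref{setup}), and a non-negative random variable with zero $Q$-expectation vanishes $Q$-a.s.; hence $Q(\mathcal{V}_{H}^{+})=0$, which together with $Q(\{\omega\})>0$ for all $\omega$ forces $\mathcal{V}_{H}^{+}=\varnothing$. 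Thus no such $H$ is a $1p$-Arbitrage. (Since $\mathcal{F}_{t}\subseteq\widetilde{\mathcal{F}}_{t}$, hence $\mathcal{H}\subseteq\widetilde{\mathcal{H}}$, this implication is also contained in \eqref{a1}.)

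\emph{From No $1p$-Arbitrage in $\mathcal{H}$ to $\mathcal{M}_{+}\neq\varnothing$.} The decisive step is to prove $\Omega_{\ast}=\Omega$, for $\Omega_{\ast}$ as in Proposition~\ref{LemNOpolar}. First, absence of $1p$-Arbitrage in $\mathcal{H}$ forces $0\in ri(\Delta S_{t}(\Sigma_{t-1}^{z}))^{cc}$ for every $t\in I_{1}$ and every $z\in\mathbf{Z}$ with $\Sigma_{t-1}^{z}\neq\varnothing$: otherwise Lemma~\ref{spezzamento}(\ref{item_spezz_arb}) applied with $\Gamma=\Sigma_{t-1}^{z}$ and $i=1$ produces $H^{1}\in\mathbb{R}^{d}$ with $H^{1}\cdot\Delta S_{t}\geq0$ on $\Sigma_{t-1}^{z}$ and $>0$ on the nonempty $B^{1}$, and then $\widehat{H}\in\mathcal{H}$ given by $\widehat{H}_{t}=H^{1}\mathbf{1}_{\Sigma_{t-1}^{z}}$, $\widehat{H}_{s}=0$ for $s\neq t$ (predictable as $\Sigma_{t-1}^{z}\in\mathcal{F}_{t-1}$) satisfies $V_{T}(\widehat{H})\geq0$ and $\mathcal{V}_{\widehat{H}}^{+}\supseteq B^{1}\neq\varnothing$, contradicting No $1p$-Arbitrage. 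By Remark~\ref{ImportantRemark} this also yields $0\in conv(\Delta S_{t}(\Sigma_{t-1}^{z}))$ at every such node. A downward induction on $t$ then gives $\Omega_{t}=\Omega$ for all $t$: indeed $\Omega_{T}=\Omega$, and if $\Omega_{t}=\Omega$ then $\widetilde{\Sigma}_{t-1}^{z}=\Sigma_{t-1}^{z}$ for all $z$, so no level set is pruned when passing to $\Omega_{t-1}$ and $\Omega_{t-1}=\Omega$. Hence $\widetilde{\Sigma}_{t-1}^{z}=\Sigma_{t-1}^{z}$ and $0\in ri(\Delta S_{t}(\widetilde{\Sigma}_{t-1}^{z}))^{cc}$ for all $t,z$, so $\beta_{t,z}=0$ and $B_{t,z}^{\ast}=\widetilde{\Sigma}_{t-1}^{z}$ by Definition~\ref{defBeta0}; since for each $t$ the sets $\Sigma_{t-1}^{z}$ partition $\Omega$, we get $\bigcup_{z}B_{t,z}^{\ast}=\Omega$ and therefore $\Omega_{\ast}=\bigcap_{t=1}^{T}\Omega=\Omega$.

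Once $\Omega_{\ast}=\Omega$ is known, the remainder of the proof of Proposition~\ref{NoPA} applies word for word: picking a countable dense subset $\{\omega_{n}\}_{n}$ of $\Omega$ (here, all of $\Omega$), Proposition~\ref{LemNOpolar} furnishes $Q^{n}\in\mathcal{M}$ with $Q^{n}(\{\omega_{n}\})>0$, and $Q:=\sum_{n}2^{-n}Q^{n}\in\mathcal{M}$ by Lemma~\ref{pasting} has $Q(\{\omega_{n}\})>0$ for all $n$, hence $supp(Q)=\Omega$, i.e.\ $Q\in\mathcal{M}_{+}$. The main obstacle is exactly this reduction: one has to check that working with $\mathbb{F}$-predictable (rather than $\widetilde{\mathbb{F}}$-predictable) strategies still rules out \emph{every} one-step inefficiency and that, via the backward-pruning mechanism of Section~\ref{SecSvuotamento}, these conditions leave all of $\Omega$ untouched --- the point being that ``no one-step arbitrage on each actual level set $\Sigma_{t-1}^{z}$'' already collapses $\widetilde{\Sigma}_{t-1}^{z}$ onto $\Sigma_{t-1}^{z}$, making the backward effect vacuous.
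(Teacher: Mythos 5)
Your proof is correct, but the implication $\mathcal{M}_{+}\neq\emptyset\Rightarrow$ No $1p$-Arbitrage in $\mathcal{H}$ is handled by a genuinely different and more elementary argument than the paper's. The paper deduces it from the dual machinery of Section \ref{secRob}: by Proposition \ref{WS-characterization}, part (1), a $1p$-Arbitrage is exactly an element of $W(\parallel\cdot\parallel,\mathcal{H})$, and by Proposition \ref{propCount}, item 4, the set $\mathcal{P}_{0}$ is norm-dense in $\mathcal{P}$ when $\Omega$ is countable, so every norm-open $\mathcal{U}\subseteq\mathcal{P}$ contains an arbitrage-free model and hence $W(\parallel\cdot\parallel,\mathcal{H})=\varnothing$. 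You instead argue directly: a full-support measure on a countable $\Omega$ charges every point, $E_{Q}[V_{T}(H)]=0$ for $Q\in\mathcal{M}$, so an everywhere-nonnegative terminal payoff vanishes $Q$-a.s.\ and therefore identically. Both arguments channel the countability hypothesis through the same fact, namely the identification of $\mathcal{P}_{+}$ with the measures having all point masses positive (Lemma \ref{LemStrongClosure}), which is exactly what breaks down in Example \ref{CountNA}; the paper's route buys a uniform presentation through the feasibility/density results it has already established, while yours buys brevity and self-containedness. For the converse direction the paper simply cites Proposition \ref{NoPA}, item 1; you re-derive it --- no $1p$-Arbitrage in $\mathcal{H}$ forces $0\in ri(\Delta S_{t}(\Sigma_{t-1}^{z}))^{cc}$ at every nonempty level set, the backward pruning of Section \ref{SecSvuotamento} is then vacuous, so $\Omega_{\ast}=\Omega$, and Proposition \ref{LemNOpolar} together with Lemma \ref{pasting} produces a full-support martingale measure --- which is precisely the paper's own proof of that item written out in detail, and your remark that the one-step strategies $H^{1}\mathbf{1}_{\Sigma_{t-1}^{z}}$ are already $\mathbb{F}$-predictable correctly identifies why that statement holds with respect to $\mathcal{H}$ and not only $\widetilde{\mathcal{H}}$.
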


\begin{proof}
From Propositions \ref{NoPA} and \ref{WS-characterization} we only
need to prove $\mathcal{M}_{+}\neq \emptyset \Longrightarrow $
$W(\parallel \cdot
\parallel ,\mathcal{H})=\varnothing $. From Proposition \ref{propCount} item
4) we have $\mathcal{M}_{+}\neq \emptyset \Longrightarrow \overline{\mathcal{%
P}_{0}}^{\Vert \cdot \Vert }=\mathcal{P}$ and so for every (norm) open set $%
\mathcal{U}\subseteq \mathcal{P}$ there exists $P\in
\mathcal{P}_{0}\cap \mathcal{U}$ for which $NA(P)$ holds, which
implies $W(\parallel \cdot
\parallel ,\mathcal{H})=\varnothing $.
\end{proof}


\subsection{On the continuity of S with respect to $\protect\omega $}

\label{remarkCont} Consider first a one period market $I=\{0,1\}$
with $S_0=s_0\in\mathbb{R}^d$ and $S_{1}$ a random outcome
\textit{continuous} in $\omega $. Then every $1p$-Arbitrage generates an
Open Arbitrage (this was shown by \cite{Riedel} and is
intuitively clear). From Proposition \ref{NoPA}, No $1p$-Arbitrage implies $%
\mathcal{M}_{+}\neq \varnothing $ and therefore No Open Arbitrage.
We then
conclude that, in this particular case, the three conditions are all equivalent and Theorem \ref%
{mainEquivalence} holds without the enlargement of the natural
filtration so that we recover in particular the result stated in
\cite{Riedel}.

\bigskip

Differently from the one period case, in the multi-period setting
it is no longer true that No Open Arbitrage and No $1p$-Arbitrage
(with respect to admissible strategies $\mathcal{H}$) are
equivalent, as shown by the following examples. Moreover, even
with $S$ continuous in $\omega ,$ No Open Arbitrage is not
equivalent to $\mathcal{M}_{+}\neq \varnothing $ as long as we do
not enlarge the filtration as in Section \ref{sectionFiltration}.

\begin{example}
\label{multi1} Consider $\Omega =[0,1]\times \lbrack 0,1]$, $\mathcal{F}=%
\mathcal{B}_{[0,1]}\otimes \mathcal{B}_{[0,1]}$ and the canonical
process given by $S_{1}(\omega )=\omega _{1}$ and $S_{2}(\omega
)=\omega _{2}$. Clearly for any $\omega =(\omega _{1},\omega
_{2})$ such that $\omega _{1}\in (0,1)$ we have that $0\in
ri(\Delta S_{2}(\Sigma _{1}^{\omega
}))^{cc}$. On the other hands for $\overline{\omega }=(1,\omega _{2})$ or $%
\widehat{\omega }=(0,\omega _{2})$ we have $1p$-Arbitrages since $S_{2}(%
\overline{\omega })\leq S_{1}(\overline{\omega })$ with $<$ for
any $\omega _{2}\neq 1$ and $S_{2}(\widehat{\omega })\geq
S_{1}(\widehat{\omega })$ with
$>$ for any $\omega _{2}\neq 0$. Denote by $\Sigma ^{1}=\{S_{1}=1\}$ and $%
\Sigma ^{0}=\{S_{1}=0\}$ then $\alpha (\omega )=-\mathbf{1}_{\Sigma ^{1}}+%
\mathbf{1}_{\Sigma ^{0}}$ is a $1p$-Arbitrage which does not admit
any open arbitrage since neither $\Sigma ^{1}$ nor $\Sigma ^{0}$
are open sets, and any strategy which is not zero on $(\Sigma
^{1}\cup \Sigma ^{0})^{c}$ gives both positive and negative
payoffs.
\end{example}

\begin{example}
\label{multi2} We show an example of a market with $S$ continuous
in $\omega $, with no Open Arbitrage in $\mathcal{H}$ and
$\mathcal{M}_+=\varnothing$.
Let us first introduce the following continuous functions on $%
\Omega=[0,+\infty)$

\begin{equation*}
\varphi_{a,b}^m(\omega):=\left\lbrace
\begin{array}{ll}
m(\omega-a) & \omega\in [a,\frac{a+b}{2}] \\
-m(\omega-b) & \omega\in [\frac{a+b}{2},b] \\
0 & \text{otherwise}%
\end{array}%
\right. \phi_{a,b}^m(\omega):=\left\lbrace
\begin{array}{ll}
m(\omega-a) & \omega\in [a,a+1] \\
m & \omega\in [a+1,b-1] \\
-m(\omega-b) & \omega\in [b-1,b] \\
0 & \text{otherwise}%
\end{array}%
\right.
\end{equation*}
with $a,b,m\in\mathbb{R}$. Define the continuous (in $\omega$)
stochastic process $(S_t)_{t=0,1,2,3}$
\begin{eqnarray*}
S_0(\omega)&:=&\frac{1}{2} \\
S_1(\omega)&:=&\phi_{[0,3]}^1(\omega)+\phi_{[3,6]}^1(\omega)+\sum_{k=3}^%
\infty\varphi_{[2k,2k+2]}^1(\omega) \\
S_2(\omega)&:=&\phi_{[0,3]}^{\frac{1}{2}}(\omega)+\phi_{[3,6]}^{\frac{1}{2}%
}(\omega)+\sum_{k=3}^\infty\varphi_{[2k,2k+2]}^2(\omega) \\
S_3(\omega)&:=&\varphi_{[0,3]}^2(\omega)+\varphi_{[3,6]}^\frac{1}{4}%
(\omega)+\varphi_{[6,8]}^4(\omega)+\sum_{k=4}^\infty\varphi_{[2k+1-\frac{1}{k%
},2k+1+\frac{1}{k}]}^{4k}(\omega)
\end{eqnarray*}
It is easy to check that given $z\in\mathbf{Z}$ such that $z_{0:2}=[\frac{1}{%
2},1,2]$, we have $\Sigma_{2}^z=\{2k+1\}_{k\geq 3}$ and $H:=\mathbf{1}%
_{\Sigma_{2}^z}$ is the only $1p$-Arbitrage opportunity in the
market. One can also check that $\mathcal{V}_H^+=\Sigma_{2}^z$, as
a consequence, $H$ is
not an Open Arbitrage and%
\begin{equation}  \label{null}
Q(\{2k+1\}_{k\geq 3})=0 \text{ for any }Q\in\mathcal{M}
\end{equation}
Consider now $\hat{z}\in\mathbf{Z}$ with $\hat{z}_{0:2}=[\frac{1}{2},1,\frac{%
1}{2}]$ and the corresponding level set $\Sigma_{2}^{\hat{z}}$. It
is easy to check that
\begin{eqnarray}  \label{neg}
\Sigma_{2}^{\hat{z}}=[1,2]\cup[4,5]&\text{and}&\Delta S_2<0 \text{
on } \Sigma_{2}^{\hat{z}}
\end{eqnarray}
Observe now that $z_{0:1}=\hat{z}_{0:1}$ and that $\Sigma_{1}^z=[1,2]\cup[4,5%
]\cup\{2k+1\}_{k\geq 3}$. We therefore have
\begin{equation*}
S_2(\omega)=\left\lbrace%
\begin{array}{ll}
2 & \omega\in\{2k+1\}_{k\geq 3} \\
\frac{1}{2} & \omega\in [1,2]\cup[4,5]%
\end{array}%
\right.\text{ for } \omega\in\Sigma_{1}^z
\end{equation*}
From $S_1(\omega)=1$ on $\Sigma_{1}^z$, \eqref{null} and
\eqref{neg} we have that any martingale measure must satisfy
$Q([1,2]\cup[4,5])=0$. In other
words there exist polar sets with non-empty interior which implies $\mathcal{%
M}_+=\varnothing$.
\end{example}


\section{Appendix}

\subsection{proof of Theorem \protect\ref{theoremExtraction}}

\label{appendixProof}

\begin{lemma}[Lebesgue decomposition of $P$]
\label{lemmaDec}Let $\nu :=\sup_{Q\in \mathcal{M}}Q$. For any $P\in \mathcal{%
P}$ there exists a set $F\in \mathcal{F}$ such that $F\subseteq
(\Omega
_{\ast })^{c}$, and the measures $P_{c}(\cdot ):=P(\cdot \setminus F)$ and $%
P_{s}(\cdot ):=P(\cdot \cap F)$ satisfy
\begin{equation}
P_{c}\ll \nu ,\ P_{s}\perp \nu \qquad \text{and}\qquad
P=P_{c}+P_{s} \label{dec}
\end{equation}
\end{lemma}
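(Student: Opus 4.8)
The plan is to reduce the assertion to a set-theoretic extraction resting on a clean description of the $\nu$-null sets. The first step is to observe that, $\nu=\sup_{Q\in\mathcal{M}}Q$ being by construction the least positive (possibly non-$\sigma$-finite) measure dominating every $Q\in\mathcal{M}$, one has for $A\in\mathcal{F}$ that $\nu(A)=0$ if and only if $Q(A)=0$ for all $Q\in\mathcal{M}$: the implication ``$\Rightarrow$'' follows from $Q\le\nu$, and for ``$\Leftarrow$'' one notes that $B\mapsto\nu(B\setminus A)$ is again a measure dominating every $Q\in\mathcal{M}$ and dominated by $\nu$, hence equal to $\nu$ by minimality, which forces $\nu(A)=0$. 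Together with the facts that $(\Omega_{\ast})^{c}$ is $\mathcal{M}$-polar and that every $N\in\mathcal{N}$ is contained in $(\Omega_{\ast})^{c}$ (Proposition \ref{LemNOpolar}, equation \eqref{decomp2}), this gives the equivalence, for every $A\in\mathcal{F}$,
\begin{equation*}
\nu(A)=0\quad\Longleftrightarrow\quad A\in\mathcal{N}\quad\Longleftrightarrow\quad A\subseteq(\Omega_{\ast})^{c}.
\end{equation*}

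The second step is to construct $F$ by exhaustion. Let $\mathcal{D}:=\{A\in\mathcal{F}\mid A\subseteq(\Omega_{\ast})^{c}\}$ and $s:=\sup_{A\in\mathcal{D}}P(A)\le 1$; choose an increasing sequence $A_{n}\in\mathcal{D}$ with $P(A_{n})\to s$ and set $F:=\bigcup_{n}A_{n}$, so that $F\in\mathcal{F}$, $F\subseteq(\Omega_{\ast})^{c}$ and $P(F)=s$. Putting $P_{c}(\cdot):=P(\cdot\setminus F)$ and $P_{s}(\cdot):=P(\cdot\cap F)$, the identity $P=P_{c}+P_{s}$ is immediate. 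By the equivalence above, $F\in\mathcal{D}$ yields $\nu(F)=0$; since $P_{s}$ is carried by $F$ while $\nu$ vanishes on $F$, one gets $P_{s}\perp\nu$. Finally, if $A\in\mathcal{F}$ with $\nu(A)=0$ then $A\subseteq(\Omega_{\ast})^{c}$, so $A\cup F\in\mathcal{D}$ and $P(A\cup F)\le s=P(F)$, whence $P_{c}(A)=P(A\setminus F)=P(A\cup F)-P(F)\le 0$; thus $P_{c}\ll\nu$, and \eqref{dec} follows.

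The step I expect to be the main obstacle is the first one: since $\nu$ need not be $\sigma$-finite and $(\Omega_{\ast})^{c}$ need not be measurable (being an uncountable union of the sets $B^{i}_{t,z}$ of Lemma \ref{spezzamento}), the classical Lebesgue decomposition theorem does not apply directly. The exhaustion argument of the second step is the substitute — it produces a maximal \emph{measurable} subset $F$ of $(\Omega_{\ast})^{c}$ absorbing the ``singular mass'' of $P$ — and it relies crucially on the identification of $\nu$-null sets with measurable subsets of $(\Omega_{\ast})^{c}$, i.e. on \eqref{decomp2}.
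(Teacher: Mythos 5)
Your proof is correct, but it follows a genuinely different route from the paper's. The paper does not construct $F$ by hand: it treats $\nu=\sup_{Q\in\mathcal{M}}Q$ as a \emph{non-additive} set function and invokes the Lebesgue decomposition theorem for non-additive measures (Theorem 4.1 in \cite{LYL07}), after checking that $P$ is exhaustive and that $\nu$ is weakly null additive and continuous from below; this yields $F\in\mathcal{F}$ with $\nu(F)=0$, and only then is Proposition \ref{LemNOpolar} used to place $F$ inside $(\Omega_{\ast})^{c}$. You instead identify the measurable $\nu$-null sets with the measurable subsets of $(\Omega_{\ast})^{c}$ (using both directions of Proposition \ref{LemNOpolar}: the maximality statement \eqref{decomp2} together with the fact that every point of $\Omega_{\ast}$ is charged by some $Q\in\mathcal{M}$) and then run a classical exhaustion over $\mathcal{D}=\{A\in\mathcal{F}\mid A\subseteq(\Omega_{\ast})^{c}\}$, which is closed under countable unions, to produce a maximal measurable $F\subseteq(\Omega_{\ast})^{c}$; singularity and absolute continuity then drop out exactly as you argue, and no additivity of $\nu$ is ever needed. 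This buys self-containedness (no appeal to the non-additive decomposition theorem, and the localization $F\subseteq(\Omega_{\ast})^{c}$ is automatic rather than a posteriori), at the price of leaning more heavily on Proposition \ref{LemNOpolar}. One small caveat: with the paper's setwise definition, $\nu$ is not a measure (it is not additive), so your opening description of it as ``the least positive measure dominating every $Q\in\mathcal{M}$'' and the ensuing minimality argument are misdirected; fortunately the only property you actually use is $\nu(A)=0\Leftrightarrow Q(A)=0$ for all $Q\in\mathcal{M}$, which is immediate from $\nu(A)=\sup_{Q\in\mathcal{M}}Q(A)$, so nothing in the proof breaks.
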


\begin{proof}
We wish to apply Theorem 4.1 in \cite{LYL07} to $\mu =P\in \mathcal{P}$ and $%
\nu =\sup_{Q\in \mathcal{M}}Q$. It is easy to check that: 1) $\mu
$ and $\nu $ are monotone $[0,1]$-valued set functions on
$\mathcal{F}$ satisfying $\mu (\varnothing )=0$ and $\nu
(\varnothing )=0$; 2) $P$ is \textit{exhaustive},
i.e. if $\{A_{n}\}_{n\in \mathbb{N}}$ is a disjoint sequence then $%
P(A_{n})\rightarrow 0$ (indeed, $1\geq P\left( \cup
_{n}A_{n}\right)
=\sum_{n}P(A_{n})\geq 0\Rightarrow P(A_{n})\rightarrow 0$; 3) $\nu $ \textit{%
is weakly null additive}: if $A,B\in \mathcal{F}$ with $\nu
(A)=\nu (B)=0$ then $\nu (A\cup B)=0$ (indeed, if $\nu (A)=\nu
(B)=0$ then for any $Q\in \mathcal{M},$ $Q(A)=Q(B)=0$ which
implies $Q(A\cup B)=0$ and $\nu (A\cup B)=0 $); 4) $\nu $ is
\textit{continuous from below}. Indeed if $A_{n}\nearrow A$ then
$Q(A_{n})\uparrow Q(A),$ $Q(A)=\sup_{n}Q(A_{n})$ and
\begin{equation*}
\lim_{n\rightarrow \infty }\nu (A_{n})=\sup_{n}\nu
(A_{n})=\sup_{n}\sup_{Q\in \mathcal{M}}Q(A_{n})=\sup_{Q\in \mathcal{M}%
}\sup_{n}Q(A_{n})=\nu (A).
\end{equation*}%
Hence $\mu $ and $\nu $ satisfy all the assumptions of Theorem 4.1 in \cite%
{LYL07} and hence we obtain the existence of $F\in \mathcal{F}$ such that $%
\nu (F)=0$ and the decomposition in (\ref{dec}) holds true. From Proposition %
\ref{LemNOpolar}, $\forall A\in \mathcal{F}$ such that $A\subseteq
\Omega _{\ast }$ we have $\nu (A)>0$. Therefore, $F\subseteq
(\Omega _{\ast })^{c}$ and this concludes the proof.
\end{proof}

\begin{remark}
\label{remarkPsing} Observe that if $(\Omega_{\ast })^{c}\neq
\varnothing $ the set of probability measures with non trivial
singular part $P_{s}$ is non-empty. Simply take, for instance, any
convex combination of $\{\delta _{\omega }\mid \omega \in
(\Omega_{\ast })^{c}\}$.
\end{remark}

\paragraph{Preliminary considerations.}

We want to consider now the probabilistic model $(\Omega ,\{\mathcal{F}%
_{t}^{P}\}_{t\in I},S,P)$ and we need therefore to pass from
$\omega $-wise considerations to $P$-a.s considerations. For this
reason we first need to construct an auxiliary process $S_{t}^{P}$
with the property $S_{t}^{P}=S_{t}
$ $P$-a.s for any $t\in I$ in the same spirit of Lemma \ref{propPolari}.%
\newline
Let $P_{\Delta S_{T}}(\cdot ,\cdot ):\Omega \times \mathcal{B}(\mathbb{R}%
^{d})\mapsto \lbrack 0,1]$ be the conditional distribution of
$\Delta S_{T}$ and denote $\Upsilon _{\Delta S_{T}}$ its random
support. Define as in
Rokhlin \cite{Rokhlin} the set $A_{\Delta S_{T}}:=\{0\notin \text{ri(conv}%
\Upsilon _{\Delta S_{T}})\}$. It may happen that $P(A_{\Delta
S_{T}})=0$. In
this case $\mathfrak{B}_{T}$ and $\mathfrak{D}_{T-1}$ as in Lemma \ref%
{propPolari} are subset of $P$-null sets (respectively in
$\mathcal{F}_{T}$
and $\mathcal{F}_{T-1}$). Construct iteratively $X_{t}^P$ and $S_{t}^{P}$ as in %
\eqref{Q_modification2} and \eqref{Q_modification}. Denote $\Delta
X_{t}^{P}:=X_{t}^{P}-S_{t-1}$ and let
\begin{equation}
\tau:=\min \left\{ t\in I_{1}\mid P\left( A_{\Delta
X_{t}^{P}}\right)
>0\right\} .  \label{firstParb}
\end{equation}
Observe that $\tau$ is well defined since, from Lemma \ref{propPolari}, if $%
P(A_{\Delta X_{t}^{P}})=0$ for any $t\geq 1$ we have that
$\bigcup_{t\in I_{1}}\widetilde{\mathfrak{B}}_{t}=(\Omega _{\ast
})^{c}$ is a subset of a $P $-null set (cfr \eqref{ri}). This is a contradiction
since $P$ is not absolutely continuous with respect to $\nu$, henceforth the set $F$ from Lemma \ref{dec} satisfies $F\subseteq (\Omega
_{\ast })^{c}$ and $P(F)>0$. From now on denote $S_{t}=S_{t}\mathbf{1}_{t<\tau%
}+X_{t}^{P}\mathbf{1}_{t\geq \tau}$ which is a $P$-a.s. version of
$S$.

\begin{remark}
For any $t\in I_1$ denote $P_{t-1}(\cdot,\cdot):\left(\Omega,\mathcal{F}%
\right)\mapsto[0,1]$ the conditional probability of $P$ on
$\mathcal{F}_{t-1}$. Recall from Theorem \ref{SV} c] that there exists
$N_1\in\mathcal{F}_{t-1}$
with $P(N_1)=0$ such that for any $\omega\in\Omega\setminus N_1$ we have $%
P_{t-1}\left(\omega,\Sigma_{t-1}^{z(\omega)}\right)=1$ where $%
z(\omega)=S_{0:T}(\omega)$.
\end{remark}

\paragraph{Construction of a $P$-arbitrage from $\mathbb{H}$.}

Fix time $\tau$ and denote $A_{\tau}:=A_{\Delta
S_{\tau}}$. For any $\omega\in \Omega$ the
level set $\Sigma_{\tau-1}^z$ can
be decomposed as $\Sigma_{\tau-1}^z=\cup_{i=1}^{\beta_{\tau,z}}B^i_{\tau,z}\cup B^*_{\tau,z} $%
. Define for any $z\in \mathbf{Z}$
\begin{equation*}
j_z:=\inf\left\{j\in\{1,\dots,\beta_{\tau,z}\}\mid P(\omega,B^j_{\tau,z})>0\ \forall \omega\in\Sigma_{\tau-1}^z\right\}
\end{equation*}
and recall that $P(\cdot,B^j_{\tau,z})$ is constant on $\Sigma_{\tau-1}^z$ (Theorem \ref{SV} b]). Define $N_2:=\bigcup_{z\in \mathbf{Z}_f}\cup_{i=1}^{j_z-1}B^i_{\tau,z}$ where $%
\mathbf{Z}_f:=\left\{z\in \mathbf{Z}\mid j_z<\infty\right\}$. $N_2$ is a $%
\bar{P}$-null set since for any $\omega\in N_1^c$ we have $\bar{P}%
(\omega,N_2)=\bar{P}(\omega,\cup_{i=1}^{j_z-1}B^i_{\tau,z})=0$ hence $\bar{P}(N_2)=%
\bar{P}(N_1\cap N_2)+\bar{P}(N_1^c\cap N_2)=0$ (see also Lemma \ref%
{LemmaPbar} below). Recall that $\bar{P}(\cdot)$ and $\bar{P}(\omega,\cdot)$ denote the completion of $P(\cdot)$ and $P(\omega,\cdot)$ respectively.

\bigskip \noindent Denote $N:=N_1\cup N_2$. We are now able to define the
following multifunction $%
\Psi:\Omega\mapsto2^{\mathbb{R}^d}$ with values in the power set of $\mathbb{%
R}^d$.
\begin{equation}\label{defPSi}
\Psi(\omega):=\left\lbrace%
\begin{array}{lll}
\Delta S_{\tau}\left(\Sigma_{\tau-1}^{z(\omega)}\cap
N^c\right) &\qquad&
\omega\in N^c \\
\varnothing & &\text{otherwise}%
\end{array}%
\right.
\end{equation}%
In Lemma \ref{multiPsimeas} we show that $\Psi$ is $\mathcal{F}^P_{\tau-1}$-measurable. We apply now an argument similar to
\cite{Rokhlin}. Denote $\mathbb{S}^d_1$ the unitary closed ball in
$\mathbb{R}^d$, lin$(\chi)$ the linear space generated by $\chi$
and $\chi^\circ$ the polar cone of $\chi$. By preservation of
measurability (see Proposition \ref{preservation}) the
(closed-valued) multifunction
\begin{equation*}
\omega\mapsto G_0(\omega):=\text{lin}(\Psi(\omega))\cap \left(-\text{cone }%
\Psi(\omega)\right)^\circ\cap\mathbb{S}^d_1
\end{equation*}
is also $\mathcal{F}_{\tau-1}^P$-measurable and
$G_0(\omega)\neq\varnothing$
iff $\omega\in A_{\tau}\cap N^c$, hence $A_{\tau}=\{0\notin\text{%
ri(conv}\Upsilon_{\Delta S^P_{\tau}})\}$ is $\mathcal{F}_{\tau-1}^P$%
-measurable. Note that we already have that $G_0(\omega)\subseteq\mathbb{H}_{%
\tau}(\omega)$ for $P$-a.e. $\omega\in\Omega$. Indeed fix
$\omega\notin N$ and consider the level set
$\Sigma_{\tau-1}^{z(\omega)}$ and its decomposition as in Lemma
\ref{spezzamento}. By construction of $G_0$ we have that any $g\in
G_0(\omega)\neq\varnothing$ satisfies $g\cdot\Delta
S_{\tau}(\omega)\geq 0$ for any $\omega\in \cup_{i=j_z}^{\beta_{\tau,z}}B_{\tau,z}^i\cup
B^*_{\tau,z}$ and thus $g\in\mathbb{H}(\omega)$.\newline

Nevertheless, the random set $G_0(\omega)$ contains those
$g\in\mathbb{S}_1^d$ such that $g\cdot\Delta S_{\tau}(\omega)=0$. Thus,
we will not extract a
measurable selection from $G_0$ but we will rather consider for any $%
n\in\mathbb{N}$ the following closed-valued multifunction
\begin{equation*}
\omega\mapsto G_n(\omega):=\text{lin}(\Psi(\omega))\cap \left\lbrace v\in%
\mathbb{R}^d\mid \langle v,s\rangle\geq \frac{1}{n}\quad \forall
s\in\Psi(\omega)\setminus \{0\}\right\rbrace\cap\mathbb{S}^d_1,\quad n\geq 1
\end{equation*}
and seek for a measurable selection of $G:=\cup_{n=0}^\infty G_n$.
From Lemma \ref{meas-eps} all the random sets $G_n$ are
$\mathcal{F}_{\tau-1}^P$-measurable and therefore the same is true
for $G$. Now, for any $n\geq 0$, let $\tilde{H}_n$ a measurable
selection of $G_n$ on $\{G_n\neq\varnothing\}$ which always exists
for a (measurable)
closed-valued multifunction with $\tilde{H}_n(\omega)=0$ if $%
G_n(\omega)=\varnothing$. Define therefore
\begin{equation}  \label{arbitrage_def}
H_k:=\sum_{n=0}^k \tilde{H}_n\qquad\text{and}\quad
B_k:=\mathcal{V}^+_{H_k}
\end{equation}
By construction $B_k$ is an increasing sequence of sets converging to $%
\cup_z B^{j_z}_{\tau,z}$ which is therefore measurable and it satisfies
\begin{equation*}
P(\cup_z B^{j_z}_{\tau,z})=\int_{\Omega}P(\omega,\cup_z
B^{j_z}_{\tau,z})dP(\omega)=\int_{\Omega\setminus
N}P(\omega,B^{j_{z}}_{\tau,z})dP(\omega)\geq \int_{A_{\tau}\setminus
N}P(\omega,B^{j_{z}}_{\tau,z})dP(\omega)>0
\end{equation*}
which follows from the definition of conditional probability, $P(A_{\tau%
})>0$ and $P(\omega,B^{j_z}_{\tau,z})>0$ for every $\omega \in
A_{\tau}\setminus N$. We can therefore conclude that there
exists $m\geq 0$ such that $P(B_m)>0$
and since obviously $H_m\Delta S_{\tau}\geq 0$ we have that $H_m$ is a $P$%
-arbitrage. The normalized random variable $H_{\tau}^P:=H_m(\omega)/%
\|H_m(\omega)\|$ is a measurable selector of the multifunction
$G_0$ since it satisfies $H_{\tau}^P(\omega)\in\cup_{n=1}^m
G_n(\omega)\subseteq G(\omega)\subseteq
\mathbb{H}_{\tau}(\omega)$ $P$-a.s. and thus the desired
strategy is given by $H_s^P=H_{\tau}^P\mathbf{1}_{\tau}(s)$.

\begin{lemma}\label{multiPsimeas}
The multifunction $\Psi$ defined in \eqref{defPSi} is $\mathcal{F}_{\tau-1}^P$-measurable.\end{lemma}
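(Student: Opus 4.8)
The plan is to verify measurability of $\Psi$ through the usual criterion for set--valued maps: $\Psi$ is $\mathcal{F}_{\tau-1}^{P}$--measurable as soon as, for every open $V\subseteq\R^{d}$, the set $\{\omega\in\Omega\mid \Psi(\omega)\cap V\neq\varnothing\}$ belongs to $\mathcal{F}_{\tau-1}^{P}$. So I fix such a $V$ and unravel the set. Write $\phi:=S_{0:\tau-1}\colon\Omega\to\mathbf{Z}_{0:\tau-1}$; since each $S_{t}$ is $\mathcal{F}_{t}$--measurable and $\Omega$ is Polish, $\phi$ is Borel, $\mathcal{F}_{\tau-1}=\sigma(\phi)$, and the level set $\Sigma_{\tau-1}^{z(\omega)}$ equals $\phi^{-1}(\{\phi(\omega)\})$. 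In particular $\Psi$ is constant on each set $\Sigma_{\tau-1}^{z}\cap N^{c}$, and, setting $E_{V}:=N^{c}\cap(\Delta S_{\tau})^{-1}(V)$, a direct reading of \eqref{defPSi} shows that for $\omega\in N^{c}$ one has $\Psi(\omega)\cap V\neq\varnothing$ iff there is $\omega'\in E_{V}$ with $\phi(\omega')=\phi(\omega)$; hence
\[
\{\omega\in\Omega\mid \Psi(\omega)\cap V\neq\varnothing\}=N^{c}\cap\phi^{-1}\bigl(\phi(E_{V})\bigr).
\]
Thus the whole question reduces to showing that the relevant $\phi$--saturations lie in $\mathcal{F}_{\tau-1}^{P}$.

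For this I would use standard descriptive--set--theoretic facts: as $\phi$ is Borel between Polish spaces, $\phi$ of any Borel (indeed analytic) subset of $\Omega$ is analytic, hence universally measurable; and if $A\subseteq\mathbf{Z}_{0:\tau-1}$ is measurable for the push--forward $\mu:=P\circ\phi^{-1}$, then sandwiching $A$ between Borel sets with $\mu$--null difference and pulling back (using $\mathcal{F}_{\tau-1}=\sigma(\phi)$) gives $\phi^{-1}(A)\in\mathcal{F}_{\tau-1}^{P}$. The component $N_{1}$ of $N=N_{1}\cup N_{2}$ is harmless here: $N_{1}\in\mathcal{F}_{\tau-1}$, $P(N_{1})=0$, and $N_{1}$ is $\phi$--saturated, so $B:=(\Delta S_{\tau})^{-1}(V)\cap N_{1}^{c}\in\mathcal{F}_{\tau}$ is Borel, $E_{V}=B\setminus N_{2}$, and $N^{c}=N_{1}^{c}\cap N_{2}^{c}$ with $N_{1}^{c}\in\mathcal{F}_{\tau-1}$. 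If $N_{2}$ could be discarded (i.e.\ if $E_V$ were Borel) we would be done at once, since then $\phi(E_{V})$ would be analytic and $N_{1}^{c}\cap\phi^{-1}(\phi(E_{V}))\in\mathcal{F}_{\tau-1}^{P}$.

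The delicate point — and the one I expect to be the main obstacle — is precisely the component $N_{2}$. By construction $N_{2}$ meets every level set $\Sigma_{\tau-1}^{z}$ in the \emph{finite} union $\bigcup_{i=1}^{j_{z}-1}B_{\tau,z}^{i}$ (finite because $\beta_{\tau,z}\le d$ by Lemma \ref{spezzamento}), and, by the very choice of $j_{z}$ together with Theorem \ref{SV}(b),(c), this union carries $P_{\tau-1}(\omega,\cdot)$--mass zero for all $\omega$ in that level set outside $N_{1}$; this is what makes $N_{2}$ a $P$--null set, but it does \emph{not} make the $\phi$--saturation of $N_{2}$ a $P$--null set, so $N_{2}$ cannot simply be dropped from $\phi(E_{V})$. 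The route I would take is to argue level set by level set: for $\omega\in N_{1}^{c}$ one has
\[
\Sigma_{\tau-1}^{z(\omega)}\cap N^{c}\cap(\Delta S_{\tau})^{-1}(V)=\Bigl(\Sigma_{\tau-1}^{z(\omega)}\cap(\Delta S_{\tau})^{-1}(V)\Bigr)\setminus\bigcup_{i=1}^{j_{z(\omega)}-1}B_{\tau,z(\omega)}^{i},
\]
an $\mathcal{F}_{\tau}$--measurable set whose non--emptiness is a property of $\phi(\omega)$ only, thanks to the finiteness of the union and to the fact, recorded in Lemma \ref{spezzamento}, that the data $\bigl(\beta_{\tau,z},(B_{\tau,z}^{i})_{i}\bigr)$ and hence $j_{z}$ depend measurably on $z$ (indeed $j_{z}$ is an $\mathcal{F}_{\tau-1}$--measurable $\{0,1,\dots,d,\infty\}$--valued function of $\omega$, and each $B_{\tau,z}^{i}$ is cut out by the continuous standard separators of Lemma \ref{spezzamento}). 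Splitting $\Omega\setminus N_{1}$ according to the value of $j_{z}$ and to which of the finitely many $B^i_{\tau,z}$ the point falls in, one writes $\{\omega\mid\Psi(\omega)\cap V\neq\varnothing\}$ as a countable Boolean combination of sets of the form $\phi^{-1}(\phi(B'))$ with $B'$ Borel, intersected with sets of $\mathcal{F}_\tau$ and with $N_1^c\in\mathcal F_{\tau-1}$; by the analytic--sets argument of the previous paragraph each such preimage lies in $\mathcal{F}_{\tau-1}^{P}$, and the $\mathcal F_\tau$--measurable pieces are absorbed because $N_2$ (and the residual analytic ``error'' sets) are $P$--null. Since $V$ was an arbitrary open set, this proves that $\Psi$ is $\mathcal{F}_{\tau-1}^{P}$--measurable, and in particular the closed--valued maps $G_{0},G_{n}$ built from it inherit the same measurability, which is all that is needed for the measurable--selection step that follows.
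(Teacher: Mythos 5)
Your first two paragraphs are sound and essentially reproduce the paper's own strategy: the open-set criterion, the saturation identity $\{\Psi\cap V\neq\varnothing\}=N^{c}\cap\phi^{-1}(\phi(E_{V}))$ with $\phi=S_{0:\tau-1}$ (the paper writes the same thing, in complement form, through $S_{\tau-1}^{-1}\bigl[S_{\tau-1}(\cdot)\bigr]$), the fact that Borel images under Borel maps between Polish spaces are analytic hence universally measurable, and the sandwich/pullback step giving $\phi^{-1}(A)\in\mathcal{F}_{\tau-1}^{P}$ for sets $A$ measurable with respect to the law of $\phi$ under $P$ (this is the content of the Remark following the paper's proof). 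You also correctly isolate the only delicate object, namely $N_{2}$, and correctly observe that $P$-nullity of $N_{2}$ does not make its $\phi$-saturation $P$-null, so $N_{2}$ cannot simply be discarded from $E_{V}$.

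The genuine gap is in your third paragraph, exactly at that crux. Your level-set argument rests on the claim that $j_{z}$ and the sets $B_{\tau,z}^{i}$ ``depend measurably on $z$'', attributed to Lemma \ref{spezzamento}; but that lemma only asserts, for each \emph{fixed} $(t,z)$, that $B_{t,z}^{i}\in\mathcal{F}_{t}$ and that $H_{t,z}$ is $\mathcal{F}_{t}$-measurable. No joint or $z$-measurable dependence of $\beta_{t,z}$, $H_{t,z}^{i}$, $B_{t,z}^{i}$, nor of $z\mapsto P_{\tau-1}(\cdot,B_{\tau,z}^{i})$ (hence of $j_{z}$), is established anywhere in the paper --- indeed the absence of such measurability is precisely why the sets $\mathfrak{B}_{t}$ are only shown to be subsets of null sets and why the enlarged filtration is introduced. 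Consequently your ``split according to $j_{z}$ and to which $B_{\tau,z}^{i}$ the point falls in'' does not produce a countable Boolean combination of saturations of Borel sets: $\beta_{\tau,z}\le d$ bounds the pieces per level set, but the union runs over uncountably many $z$, and without a jointly measurable version of the decomposition the sets $B'$ you need are not known to be Borel; the final claim that the remaining $\mathcal{F}_{\tau}$-measurable pieces are ``absorbed because $N_{2}$ is $P$-null'' contradicts your own earlier (correct) warning, since nullity does not survive saturation. So the reduction is asserted rather than constructed. The paper takes a different route at this point: it keeps $N=N_{1}\cup N_{2}$ as a single object, passes to the complement so that $N$ enters through the union $\Delta S_{\tau}^{-1}(V^{c})\cup N$, treats that union by the analytic-set/completion argument, and never needs any measurability in $z$ of the objects from Lemma \ref{spezzamento}. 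To complete your variant you would first have to prove a measurable-in-$z$ (e.g.\ measurable-selection based) version of Lemma \ref{spezzamento} and of $z\mapsto j_{z}$, which is a substantial missing ingredient.
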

\begin{proof}
Recall that by definition the multifunction $\Psi$ is measurable
iff for any open set $V\subseteq \mathbb{R}^d$ we have
$\{\omega\mid \Psi(\omega)\cap V\neq \varnothing\} $ is a
measurable set. Observe that
\begin{equation*}
\Psi^{-1}(V):=\{\omega\mid \Psi(\omega)\cap V\neq \varnothing\}=S_{\tau%
-1}^{-1}\left[S_{\tau-1}\left(\Delta S_{\tau}^{-1}(V)\cap N^c\right)%
\right]\cap N^c
\end{equation*}
Let us show that the complement of this set is $\mathcal{F}^P_{\tau-1}$%
-measurable from which the thesis will follow. Observe that for
any function
$f$ and for any set $A$ we have $(f^{-1}(A))^c=f^{-1}(A^c)$ so that%
\begin{eqnarray*}
(\Psi^{-1}(V))^c&=&S_{\tau-1}^{-1}\left[S_{\tau-1}\left(\Delta S_{\bar{%
t}}^{-1}(V)\cap N^c\right)^c\right]\cup N \\
&=&S_{\tau-1}^{-1}\left[S_{\tau-1}\left((\Delta S_{\tau%
}^{-1}(V))^c\cup N\right)\right]\cup N \\
&=&S_{\tau-1}^{-1}\left[S_{\tau-1}\left(\Delta S_{\tau%
}^{-1}(V^c)\cup N\right)\right]\cup N
\end{eqnarray*}
Note now that $A_1:=\Delta S_{\tau}^{-1}(V^c)\cup N$ is an
analytic set since it is union of a Borel set and a $\bar{P}$-null
set. The set $B_1:=S_{\tau-1}(A_1)$ is an analytic subset of
$\mathbb{R}^d$ since $S$ is a Borel function and image of an analytic
set through a Borel measurable function is analytic. Finally
$A_2:=S_{\tau-1}^{-1}(B_1)$ is an analytic subset of $\Omega$
since pre-image of an analytic set through a Borel measurable function
is analytic. Since $P$-completion of $\mathcal{F}$ contains any
analytic set, $A_2\cup N$ is also analytic and belongs to
$\mathcal{F}^P_{\tau-1}$.

\begin{remark} For sure $A_2\cup N$ is analytic and belongs to $\mathcal{F}%
^P$. The heuristic for $A_2\cup N$ belonging to
$\mathcal{F}^P_{\tau-1}$ should be that this set is union of
atoms of $\mathcal{F}^P_{\tau-1}$. More formally, since $B_1$ is analytic in $\mathbb{R}^d$ for any
measure $\mu$ there exists $F,G$ such that $B_1=F\cup G$ with $F$
a Borel set and $G$ a subset of $\mu$-null measure (because
analytic sets are in the completion of $\mathcal{B}$ respect to
any measure $\mu$). If we take
$\mu$ as the distribution of $S_{\tau-1}$ under $P$ we have $A_2=S_{\tau-1}^{-1}(F)\cup S_{\tau-1}^{-1}(G)$. Since $S_{\tau-1}^{-1}(F)\in%
\mathcal{F}_{\tau-1}$ and $S_{\tau-1}^{-1}(G)$ is a subset of a $%
\mathcal{F}_{\tau-1}$-measurable $P$-null set, we have $A_2\in\mathcal{F}%
_{\tau-1}^P$ and hence also $A_2\cup N$.
\end{remark}
\end{proof}

\begin{lemma}
\label{LemmaPbar}Let $(\Omega,\mathcal{F},P)$ a probability space and $%
\mathcal{G}$ a subsigma-algebra of $\mathcal{F}$. Let $P_\mathcal{G}%
(\omega,\cdot)$ the conditional probability of $P$ on
$\mathcal{G}$. Then
\begin{equation}  \label{Pbar}
\bar{P}(A)=\int_{\Omega\setminus
N(A)}\bar{P}_\mathcal{G}(\omega,A)dP(\omega)\qquad
A\in\mathcal{F}^P
\end{equation}
where $\bar{P}_\mathcal{G}(\omega,\cdot)$ is the completion of $P_{\mathcal{G}}(\omega,\cdot)$ and $%
N(A)\in\mathcal{G}$ is a $P$-null set which depends on $A$.
\end{lemma}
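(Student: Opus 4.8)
The plan is to reduce the claim to the ordinary disintegration identity $P(B)=\int_\Omega P_\mathcal{G}(\omega,B)\,dP(\omega)$ for $B\in\mathcal{F}$, which is exactly property b) (equivalently d)) of Theorem \ref{SV}, by sandwiching the completed set $A$ between two genuinely $\mathcal{F}$-measurable sets. Note first that a proper regular conditional probability $P_\mathcal{G}$ exists here because in the application $\mathcal{G}=\mathcal{F}_{\tau-1}$ is countably generated and $\Omega$ is Polish, so Theorem \ref{SV} applies; in particular $\omega\mapsto P_\mathcal{G}(\omega,B)$ is $\mathcal{G}$-measurable for every $B\in\mathcal{F}$.

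First I would fix $A\in\mathcal{F}^P$ and choose $B_1,B_2\in\mathcal{F}$ with $B_1\subseteq A\subseteq B_2$ and $P(D)=0$, where $D:=B_2\setminus B_1\in\mathcal{F}$; by definition of the completion $\bar P(A)=P(B_1)=P(B_2)$. Since $\omega\mapsto P_\mathcal{G}(\omega,D)$ is a $\mathcal{G}$-measurable version of $P(D\mid\mathcal{G})$ and $\int_\Omega P_\mathcal{G}(\omega,D)\,dP(\omega)=P(D)=0$, the set $N(A):=\{\omega\in\Omega\mid P_\mathcal{G}(\omega,D)>0\}$ belongs to $\mathcal{G}$ and satisfies $P(N(A))=0$; this is the exceptional set of the statement.

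Next, for every $\omega\notin N(A)$ one has $P_\mathcal{G}(\omega,D)=0$, so $A$ lies between the $\mathcal{F}$-sets $B_1\subseteq B_2$ whose difference is $P_\mathcal{G}(\omega,\cdot)$-null; hence $A$ belongs to the completion of $\mathcal{F}$ with respect to $P_\mathcal{G}(\omega,\cdot)$ and $\bar P_\mathcal{G}(\omega,A)=P_\mathcal{G}(\omega,B_1)$. In particular $\omega\mapsto\bar P_\mathcal{G}(\omega,A)$ agrees on $\Omega\setminus N(A)$ with the $\mathcal{G}$-measurable map $\omega\mapsto P_\mathcal{G}(\omega,B_1)$, so the integral in \eqref{Pbar} is meaningful, and I would then compute
\[
\int_{\Omega\setminus N(A)}\bar P_\mathcal{G}(\omega,A)\,dP(\omega)
=\int_{\Omega\setminus N(A)}P_\mathcal{G}(\omega,B_1)\,dP(\omega)
=\int_{\Omega}P_\mathcal{G}(\omega,B_1)\,dP(\omega)=P(B_1)=\bar P(A),
\]
where the second equality uses $P(N(A))=0$ and the third is Theorem \ref{SV} b) applied to $B_1\in\mathcal{F}$.

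There is no real analytic obstacle here; the only points requiring care are bookkeeping ones: making sure the regular conditional probability exists (handled as above), and that $N(A)$ can be taken inside $\mathcal{G}$ rather than merely in $\mathcal{F}$, which is the $\mathcal{G}$-measurability of $\omega\mapsto P_\mathcal{G}(\omega,D)$. I would also note explicitly that $N(A)$ depends on $A$ only through the chosen envelopes $B_1,B_2$, and any two admissible choices of envelopes differ by a $P$-null, hence $P$-a.e. $P_\mathcal{G}(\omega,\cdot)$-null, set, so the right-hand side of \eqref{Pbar} is unambiguous.
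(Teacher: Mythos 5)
Your proposal is correct and follows essentially the same route as the paper: represent $A\in\mathcal{F}^P$ via $\mathcal{F}$-measurable sets differing by a $P$-null set, observe that the regular conditional probability of the null envelope vanishes $P$-a.s.\ (yielding the $\mathcal{G}$-measurable exceptional set $N(A)$), and reduce to the disintegration identity $P(B)=\int P_{\mathcal{G}}(\omega,B)\,dP(\omega)$ for $B\in\mathcal{F}$. Your sandwich $B_1\subseteq A\subseteq B_2$ is just a reformulation of the paper's decomposition of $A$ into an $\mathcal{F}$-set plus a subset of a $P$-null set, and you in fact spell out the well-definedness and measurability points slightly more explicitly than the paper does.
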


\begin{proof}
It is easy to see that every set in $\mathcal{F}^P$ is union of a set $F\in%
\mathcal{F}$ and a subset of a $P$-null set. For any $F\in\mathcal{F}$, $%
\bar{P}(F)=P(F)$ and $P_\mathcal{G}(\omega,F)=\bar{P}_\mathcal{G}(\omega,F)$ so equality (\ref{Pbar}%
) is obvious from the definition of conditional probability (with $%
N(F)=\varnothing$). Let $A$ be a subset of a P-null set $A_1$. $%
0=P(A_1)=\int_{\Omega}P_\mathcal{G}(\omega,A_1)dP(\omega)$ which means that $%
P_\mathcal{G}(\omega,A_1)=0$ $P$-a.s. Thus, we also have $%
\bar{P}_\mathcal{G}(\omega,A)=0$ $P$-a.s. from which the equality (\ref{Pbar}) follows with $%
N(A)=\{\omega\in\Omega:P_\mathcal{G}(\omega,A_1)>0\}\in\mathcal{G}$.
\end{proof}

\paragraph{Measurable selection results.}

\begin{lemma}
\label{meas-eps} Let $(\Omega,\mathcal{A})$ a measurable space and $%
\Psi:\Omega\mapsto 2^{\mathbb{R}^d}$ an $\mathcal{A}$-measurable
multifunction. Let $\varepsilon>0$ then
\begin{equation*}
\Psi^\varepsilon:\omega\mapsto \left\{v\in \mathbb{R}^d\mid\langle
v,s\rangle\geq \varepsilon \quad \forall s\in\Psi(\omega)\setminus
\{0\}\right\}
\end{equation*}
is an $\mathcal{A}$-measurable multifunction.
\end{lemma}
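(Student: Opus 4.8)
The plan is to exhibit $\Psi^{\varepsilon}$ as a \emph{countable} intersection of elementary measurable closed-valued multifunctions, using a Castaing representation of $\Psi$. First I would reduce to the case in which $\Psi$ is closed-valued: replacing $\Psi$ by its pointwise closure $\overline{\Psi}$ changes nothing, since for every open $V\subseteq\mathbb{R}^{d}$ one has $\{\omega\mid\overline{\Psi(\omega)}\cap V\neq\varnothing\}=\{\omega\mid\Psi(\omega)\cap V\neq\varnothing\}$ (so $\overline{\Psi}$ is still $\mathcal{A}$-measurable), and, because $v\mapsto\langle v,s\rangle$ is continuous in $s$ and $\varepsilon>0$, a vector $v$ satisfies $\langle v,s\rangle\geq\varepsilon$ for all $s\in\Psi(\omega)\setminus\{0\}$ if and only if it satisfies the same inequality for all $s\in\overline{\Psi(\omega)}\setminus\{0\}$; hence $\Psi^{\varepsilon}=(\overline{\Psi})^{\varepsilon}$. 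So I may assume $\Psi$ closed-valued and take a Castaing representation $\{\sigma_{n}\}_{n\in\mathbb{N}}$ of $\Psi$, i.e. $\mathcal{A}$-measurable selections with $\overline{\{\sigma_{n}(\omega)\}_{n}}=\Psi(\omega)$ on $\{\Psi\neq\varnothing\}$, with $\sigma_{n}\equiv 0$ on the measurable set $\{\Psi=\varnothing\}$.

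The key claim is then
\[
\Psi^{\varepsilon}(\omega)=\bigcap_{n\in\mathbb{N}}C_{n}(\omega),\qquad
C_{n}(\omega):=\begin{cases}\{v\in\mathbb{R}^{d}\mid\langle v,\sigma_{n}(\omega)\rangle\geq\varepsilon\}&\text{if }\sigma_{n}(\omega)\neq 0,\\[2pt]\mathbb{R}^{d}&\text{if }\sigma_{n}(\omega)=0.\end{cases}
\]
The inclusion $\subseteq$ holds because each $\sigma_{n}(\omega)\in\Psi(\omega)$; for $\supseteq$, given $v\in\bigcap_{n}C_{n}(\omega)$ and $s\in\Psi(\omega)\setminus\{0\}$, pick $\sigma_{n_{k}}(\omega)\to s$, note $\sigma_{n_{k}}(\omega)\neq 0$ for large $k$ (as $s\neq 0$), so $\langle v,\sigma_{n_{k}}(\omega)\rangle\geq\varepsilon$ and, in the limit, $\langle v,s\rangle\geq\varepsilon$. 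The one point requiring care here is precisely the second branch of $C_{n}$: on $\{\sigma_{n}(\omega)=0\}$ the naive inequality ``$\langle v,0\rangle\geq\varepsilon$'' fails for every $v$ and must be replaced by the whole space, since a zero selection imposes no constraint.

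It remains to check that each $C_{n}$ is an $\mathcal{A}$-measurable closed-valued multifunction (with nonempty values) and that their countable intersection is again $\mathcal{A}$-measurable. For the first point, fix $x\in\mathbb{R}^{d}$: the function $\omega\mapsto d(x,C_{n}(\omega))$ vanishes on $\{\sigma_{n}=0\}\in\mathcal{A}$ and equals $(\varepsilon-\langle x,\sigma_{n}(\omega)\rangle)^{+}/\lVert\sigma_{n}(\omega)\rVert$ on $\{\sigma_{n}\neq 0\}$, hence is $\mathcal{A}$-measurable; running $x$ over a countable dense subset of $\mathbb{R}^{d}$ and using the standard equivalence between measurability of $\omega\mapsto d(x,C_n(\omega))$ and weak measurability for closed-valued multifunctions into a separable metric space, $C_{n}$ is $\mathcal{A}$-measurable. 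For the second point, the countable intersection of $\mathcal{A}$-measurable closed-valued multifunctions with values in the $\sigma$-compact space $\mathbb{R}^{d}$ is again $\mathcal{A}$-measurable and closed-valued — a classical fact valid over an arbitrary measurable space, and exactly the kind of stability invoked as ``preservation of measurability'' in the proof of Theorem \ref{theoremExtraction}. Hence $\Psi^{\varepsilon}=\bigcap_{n}C_{n}$ is $\mathcal{A}$-measurable. The only genuine subtlety in the whole argument is the correct definition of $C_{n}$ on $\{\sigma_{n}=0\}$, together with the routine verifications that passing to $\overline{\Psi}$ and to a Castaing representation leaves $\Psi^{\varepsilon}$ unchanged; the rest is the standard measurable-multifunction calculus.
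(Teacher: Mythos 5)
Your proposal is correct and follows essentially the same route as the paper: reduce to a closed-valued map via continuity of the scalar product, take a Castaing representation, write $\Psi^{\varepsilon}$ as the countable intersection of the half-space-valued maps (with the value $\mathbb{R}^{d}$ on zero selections), and invoke preservation of measurability under countable intersections of closed-valued multifunctions. The only difference is a sub-step: you verify measurability of each $C_{n}$ through the explicit distance-function formula $(\varepsilon-\langle x,\sigma_{n}(\omega)\rangle)^{+}/\lVert\sigma_{n}(\omega)\rVert$ and the distance criterion, whereas the paper argues via the nonempty interior of $\Lambda_{n}(\omega)$ and an open-projection argument — your variant is equally valid (and arguably cleaner), and your convention $\Psi^{\varepsilon}=\mathbb{R}^{d}$ on $\{\Psi=\varnothing\}$ versus the paper's $\varnothing$ is an immaterial boundary convention.
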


\begin{proof}
Observe first that for $v\in\mathbb{R}^d$
\begin{equation}  \label{dense}
\begin{array}{clcll}
\vspace{0.1in} \langle v,s\rangle\geq \varepsilon\  & \forall
s\in\Psi(\omega)\setminus
\{0\} & \Leftrightarrow & \langle v,s\rangle\geq \varepsilon\  & \forall s\in\overline{%
\Psi}(\omega)\setminus \{0\} \\
&  & \Leftrightarrow & \langle v,s\rangle\geq \varepsilon\  &
\forall s\in
D(\omega)\setminus \{0\}%
\end{array}%
\end{equation}
where $D(\omega)$ is a dense subset of $\Psi(\omega)$. This is
obvious by continuity of the scalar product. With no loss of
generality we can then consider $\Psi$ closed valued and we denote
by $\psi_n$ its Castaing representation (see Theorem 14.5 in
\cite{R} for details). For any $n\in\mathbb{N}$ consider the
following closed-valued multifunction:
\begin{equation*}
\Lambda_n(\omega)=\left\lbrace
\begin{array}{ll}
\left\{v\in \mathbb{R}^d\mid\langle v,\psi_n(\omega) \rangle\geq
\varepsilon \right\} &
\text{if }\omega\in\text{dom }\Psi,\ \psi_n(\omega)\neq 0 \\
\mathbb{R}^d & \text{if }\omega\in\text{dom }\Psi,\ \psi_n(\omega)= 0 \\
\varnothing & \text{otherwise }%
\end{array}%
\right.
\end{equation*}
We claim that $\Lambda_n$ is measurable for any $n\in\mathbb{N}$ from which the map $\omega\mapsto\bigcap_{n\in\mathbb{N}%
}\Lambda_n(\omega)$ is also measurable (cfr Proposition
\ref{preservation}). From (\ref{dense}) we thus conclude that
$\Psi^\varepsilon$ is measurable.\newline We are only left to show
the claim. To this end observe that $\Lambda_n(\omega)$ has
non-empty interior on $\{\Lambda_n\neq\varnothing\}$. Therefore
for any open set $V\subseteq\mathbb{R}^d$ we have
$\{\omega\in\Omega\mid\Lambda_n(\omega)\cap V\neq
\varnothing\}=\{\omega\in\Omega\mid\text{int}(\Lambda_n(\omega))\cap
V\neq \varnothing\}$. Note now that
\begin{equation*}
\{\omega\in\Omega\mid\text{int}(\Lambda_n(\omega))\cap V\neq
\varnothing\}=\psi_n^{-1}\left(\Pi_y\left(\Pi_x^{-1}(V)\cap\langle\cdot,%
\cdot\rangle^{-1}(\varepsilon,\infty)\right)\right)\cup
\psi_n^{-1}(0)
\end{equation*}
which is measurable (when $\psi_n$ is measurable) from the continuity of $%
\langle\cdot,\cdot\rangle$ and from the open mapping property of the projections $%
\Pi_x,\Pi_y:\mathbb{R}^d\times \mathbb{R}^d\mapsto\mathbb{R}^d$.
\end{proof}

\begin{proposition}
\label{preservation}[Proposition 14.2-11-12 \cite{R}] Consider a class of $%
\mathcal{A}$-measurable set-valued functions. The following
operations preserve $\mathcal{A}$-measurability: countable unions,
countable intersections (if the functions are closed-valued),
finite linear combination, convex/linear/affine hull, generated
cone, polar set, closure.
\end{proposition}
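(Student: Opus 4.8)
The plan is to reduce every clause to two standard facts about closed-valued measurable multifunctions with Polish range $\mathbb{R}^{d}$: the existence of a Castaing representation (a countable family of measurable selections whose closed span recovers the multifunction on its domain), and the stability of measurability under images by a fixed continuous — here always linear or multilinear — map. All seven operations in the statement are exactly the ones tabulated in \cite[\S 14]{R}, so the honest route is to quote those results; below I record how each would be verified directly. Recall that $\Psi:\Omega\to2^{\mathbb{R}^{d}}$ is $\mathcal{A}$-measurable iff $\Psi^{-1}(V):=\{\omega\mid\Psi(\omega)\cap V\neq\varnothing\}\in\mathcal{A}$ for every open $V\subseteq\mathbb{R}^{d}$.

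For countable unions, $\big(\bigcup_{n}\Psi_{n}\big)^{-1}(V)=\bigcup_{n}\Psi_{n}^{-1}(V)\in\mathcal{A}$. For the closure, if $x\in\overline{\Psi(\omega)}\cap V$ with $V$ open then every neighbourhood of $x$ meets $\Psi(\omega)$, hence $\Psi(\omega)\cap V\neq\varnothing$; thus $\overline{\Psi}^{-1}(V)=\Psi^{-1}(V)$ and $\overline{\Psi}$ is measurable. For a finite linear combination $\sum_{i=1}^{k}\lambda_{i}\Psi_{i}$, realize this multifunction as the image of $\omega\mapsto\Psi_{1}(\omega)\times\dots\times\Psi_{k}(\omega)$ — measurable because measurable coordinatewise — under the linear map $(x_{1},\dots,x_{k})\mapsto\sum_{i}\lambda_{i}x_{i}$, and invoke the image theorem. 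The convex, affine and linear hulls and the generated cone follow the same way after Carath\'eodory's theorem in $\mathbb{R}^{d}$: e.g. $\mathrm{conv}\,\Psi(\omega)$ is the image of $\{\lambda\in\mathbb{R}^{d+1}\mid\lambda\ge0,\ \sum_{i}\lambda_{i}=1\}\times\Psi(\omega)^{d+1}$ under $(\lambda,x_{0},\dots,x_{d})\mapsto\sum_{i}\lambda_{i}x_{i}$, and for the generated cone one drops the normalization $\sum_{i}\lambda_{i}=1$.

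For the polar cone one uses a Castaing representation $\{\psi_{n}\}$ of (the closure of) $\Psi$: then $(\Psi(\omega))^{\circ}=\bigcap_{n}H_{n}(\omega)$ with $H_{n}(\omega):=\{v\mid\langle v,\psi_{n}(\omega)\rangle\le0\}$ a closed half-space depending measurably on $\omega$, so measurability of the polar is a special case of the countable-intersection clause. That clause — a countable intersection of closed-valued measurable multifunctions is measurable — is the only genuinely non-elementary step: it rests on the equivalence, valid for a Polish range, between measurability of a closed-valued multifunction and $\mathcal{A}\otimes\mathcal{B}(\mathbb{R}^{d})$-measurability of its graph, so that $\mathrm{gph}\big(\bigcap_{n}\Psi_{n}\big)=\bigcap_{n}\mathrm{gph}\,\Psi_{n}$ stays graph-measurable, together with the measurable projection theorem to pass back to measurability of $\omega\mapsto\big(\bigcap_{n}\Psi_{n}\big)^{-1}(V)$. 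I expect this to be the sole point requiring care; since it is precisely the content of the propositions cited in \cite{R}, I would simply quote them rather than reprove them, which is the route adopted here.
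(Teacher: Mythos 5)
The paper offers no argument for this proposition at all: it is stated as a direct citation of Propositions 14.2, 14.11 and 14.12 of \cite{R}, and since you too ultimately quote \cite{R}, your proposal follows the same route, with the verification sketches as optional extras. Those sketches are essentially sound (unions via hit-sets, closure via $\overline{\Psi}^{-1}(V)=\Psi^{-1}(V)$ for open $V$, linear combinations and the various hulls and cones via images of product multifunctions under continuous maps plus Carath\'eodory, the polar via a Castaing representation and the intersection clause), with one caveat worth flagging: your justification of the countable-intersection clause through graph measurability plus the measurable projection theorem is not available under the hypotheses of the proposition as stated, because the converse implication ``graph in $\mathcal{A}\otimes\mathcal{B}(\mathbb{R}^d)$ $\Rightarrow$ measurable'' and the projection theorem both require $\mathcal{A}$ to be complete with respect to a ($\sigma$-finite) measure, an assumption absent here. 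The argument in \cite{R} avoids this: for closed-valued maps into the $\sigma$-compact space $\mathbb{R}^d$ one uses the Carath\'eodory functions $(\omega,x)\mapsto d(x,\Psi_n(\omega))$ and the finite-intersection property of compact sets, writing $\{\omega\mid \bigcap_n\Psi_n(\omega)\cap K\neq\varnothing\}=\bigcap_N\{\omega\mid \min_{x\in K}\max_{n\leq N}d(x,\Psi_n(\omega))=0\}$ for compact $K$ and then exhausting open sets by compacts, so no completeness is needed. In the paper's actual application the relevant $\sigma$-algebra is the $P$-completion $\mathcal{F}^P_{\tau-1}$, so your projection-theorem route would do no harm there, but as a proof of the general statement it has this gap; since you defer to \cite{R} in the end, the proposal stands.
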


\subsection{Complementary results}

Recall that we are assuming that $\Omega $ is a Polish space.

\begin{lemma}
\label{pasting} Let $Q_{i}\in \mathcal{M}$ for any $i\in
\mathbb{N}$. Then
\begin{equation*}
Q:=\sum_{i\in \mathbb{N}}\frac{1}{2^{i}}Q_{i}\in \mathcal{M}
\end{equation*}
\end{lemma}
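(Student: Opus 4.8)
The plan is to verify directly that $Q:=\sum_{i\in\mathbb N}\frac{1}{2^{i}}Q_{i}$ satisfies the two defining requirements of membership in $\mathcal{M}$: that it is a probability measure on $(\Omega,\mathcal F)$, and that $S$ is a $(Q,\mathbb F)$-martingale, with conditional expectations understood in the generalized sense introduced in Section~\ref{setup}. The probability-measure part is routine: nonnegativity and $Q(\Omega)=\sum_i 2^{-i}=1$ are immediate, and countable additivity follows by Tonelli applied to the double series of nonnegative terms, $Q\big(\bigcup_n B_n\big)=\sum_i 2^{-i}\sum_n Q_i(B_n)=\sum_n\sum_i 2^{-i}Q_i(B_n)=\sum_n Q(B_n)$.

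For the martingale property, fix $t\in I_1$ and a coordinate $j\in\{1,\dots,d\}$; the goal is $E_Q[S_t^j\mid\mathcal F_{t-1}]=S_{t-1}^j$ $Q$-a.s. Since $2^{-i}Q_i\le Q$, each $Q_i\ll Q$; set $\rho_i:=\tfrac{dQ_i}{dQ}$ and $\rho_i^{\mathcal G}:=E_Q[\rho_i\mid\mathcal F_{t-1}]$, so that $\sum_i 2^{-i}\rho_i=1$ $Q$-a.s. and, by monotone convergence for conditional expectations, $\sum_i 2^{-i}\rho_i^{\mathcal G}=1$ $Q$-a.s. The key intermediate claim is the mixture identity
\[
E_Q[X\mid\mathcal F_{t-1}]=\sum_{i\in\mathbb N}\tfrac{1}{2^{i}}\,\rho_i^{\mathcal G}\,E_{Q_i}[X\mid\mathcal F_{t-1}]\qquad Q\text{-a.s.},
\]
valid for every nonnegative $\mathcal F$-measurable $X$. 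I would prove it by testing against $A\in\mathcal F_{t-1}$: writing $\int_A X\,dQ=\sum_i 2^{-i}\int_A E_{Q_i}[X\mid\mathcal F_{t-1}]\,dQ_i$, then using $\int_A W\,dQ_i=\int_A W\rho_i^{\mathcal G}\,dQ$ for $\mathcal F_{t-1}$-measurable $W\ge 0$ (conditioning the density), interchanging the sum and the integral by Tonelli, and invoking the a.s.-uniqueness of the generalized conditional expectation.

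Applying the identity to $X=(S_t^j-S_{t-1}^j)^{+}$ and $X=(S_t^j-S_{t-1}^j)^{-}$, I use that $Q_i\in\mathcal M$ forces $E_{Q_i}[(S_t^j-S_{t-1}^j)^{\pm}\mid\mathcal F_{t-1}]$ to be $Q_i$-a.s. finite and equal; hence the set $N_i\in\mathcal F_{t-1}$ where they differ or are infinite is $Q_i$-null, so $\rho_i\mathbf 1_{N_i}=0$ $Q$-a.s., and because $N_i$ is $\mathcal F_{t-1}$-measurable this upgrades to $\rho_i^{\mathcal G}\mathbf 1_{N_i}=0$ $Q$-a.s. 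Thus $\rho_i^{\mathcal G}E_{Q_i}[(S_t^j-S_{t-1}^j)^{+}\mid\mathcal F_{t-1}]=\rho_i^{\mathcal G}E_{Q_i}[(S_t^j-S_{t-1}^j)^{-}\mid\mathcal F_{t-1}]$ $Q$-a.s. for each $i$; summing and using the mixture formula gives $E_Q[(S_t^j-S_{t-1}^j)^{+}\mid\mathcal F_{t-1}]=E_Q[(S_t^j-S_{t-1}^j)^{-}\mid\mathcal F_{t-1}]$ $Q$-a.s., i.e. $E_Q[S_t^j\mid\mathcal F_{t-1}]=S_{t-1}^j$ $Q$-a.s. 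Taking a countable union over $t\in I_1$ and $j\le d$ of the exceptional $Q$-null sets then shows $S$ is a $(Q,\mathbb F)$-martingale, so $Q\in\mathcal M$.

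The only real obstacle, and where all the care has to go, is that $S$ need not be $Q$-integrable, so every manipulation of $E_Q[\cdot\mid\mathcal F_{t-1}]$ must be run through the generalized conditional expectation: one must justify the mixture formula and the termwise identification of the two nonnegative conditional expectations with no a priori $L^1$ control, keeping precise track of where the common value could a priori be $+\infty$ — which is exactly the point at which the $\mathcal F_{t-1}$-measurability of the sets $N_i$ and the normalization $\sum_i 2^{-i}\rho_i^{\mathcal G}=1$ enter. The dyadic weights themselves play no role beyond ensuring $Q_i\ll Q$ and convergence of the series, so the argument is in fact the same for any countable convex combination $\sum_i c_i Q_i$ with $c_i>0$.
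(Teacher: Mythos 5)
Your argument reaches the paper's conclusion by a genuinely different route. The paper's proof works with the partial sums $Q_k:=\sum_{i=1}^k 2^{-i}Q_i$: it shows $\Vert Q_k-Q\Vert\to 0$, hence $Q_k(A)\uparrow Q(A)$ and $\int X\,dQ_k\to\int X\,dQ$ for nonnegative $X$, and then checks the martingale property by testing the \emph{unconditional} expectations $E_Q[\mathbf{1}_B\,\Delta S_t]=0$ over all $B\in\mathcal{F}_{t-1}$; no Radon--Nikodym derivatives or conditional mixture formulas appear. You instead use $2^{-i}Q_i\le Q$ to introduce $\rho_i=dQ_i/dQ$ and prove the disintegration $E_Q[X\mid\mathcal{F}_{t-1}]=\sum_i 2^{-i}\rho_i^{\mathcal G}E_{Q_i}[X\mid\mathcal{F}_{t-1}]$ for $X\ge 0$, then annihilate each term on the $\mathcal{F}_{t-1}$-measurable $Q_i$-null sets $N_i$; this is more local (it identifies the conditional expectation itself, not just its integrals over $\mathcal{F}_{t-1}$-sets) and makes transparent that only $Q_i\ll Q$ and summability of the weights are used, whereas the paper's argument is shorter and avoids densities altogether. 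One caveat applies to both proofs in equal measure: your final step, passing from $E_Q[(\Delta S_t^j)^{+}\mid\mathcal{F}_{t-1}]=E_Q[(\Delta S_t^j)^{-}\mid\mathcal{F}_{t-1}]$ $Q$-a.s.\ to $E_Q[S_t^j\mid\mathcal{F}_{t-1}]=S_{t-1}^j$, requires the common value to be $Q$-a.s.\ finite (under the paper's convention $\infty-\infty=-\infty$), exactly as the paper's subtraction of the two limits $\lim_k\int(\mathbf{1}_B\Delta S_t)^{\pm}\,dQ_k$ tacitly assumes one of them is finite. Neither the normalization $\sum_i 2^{-i}\rho_i^{\mathcal G}=1$ nor the measurability of the $N_i$ secures this: each summand is a.s.\ finite, but the series $\sum_i 2^{-i}\rho_i^{\mathcal G}E_{Q_i}[(\Delta S_t^j)^{+}\mid\mathcal{F}_{t-1}]$ can be infinite on a set of positive $Q$-measure if $E_{Q_i}[\,|\Delta S_t|\,]$ grows faster than $2^{i}$, and then $\Delta S_t$ fails to be $Q$-integrable; closing this would require an extra integrability input (for instance weights $c_i$ chosen so that $\sum_i c_i E_{Q_i}[\sum_t|S_t|]<\infty$). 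Since the paper's own proof elides the same point, your proposal is on par with it; apart from flagging that shared step, your argument is sound.
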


\begin{proof}
We first observe that $Q\in \mathcal{P}$ hence we just need to
show that is
a martingale measure. Consider the measures $Q_{k}:=\sum_{i=1}^{k}\frac{1}{%
2^{i}}Q_{i}$, which are not probabilities, and note that for each
$k$ we have: $\int_{\Omega }1_{B}\Delta S_{t}dQ_{k}=0$ if $B\in
\mathcal{F}_{t-1}$.
We observe that $\Vert Q_{k}-Q\Vert \rightarrow 0$ for $k\rightarrow \infty $%
, where $\Vert \cdot \Vert $ is the total variation norm. We have
indeed that
\begin{equation*}
\sup_{A\in \mathcal{F}}|Q_{k}(A)-Q(A)|=\sup_{A\in \mathcal{F}%
}\sum_{i=k+1}^{\infty }\frac{1}{2^{i}}Q_{i}(A)=\sum_{i=k+1}^{\infty }\frac{1%
}{2^{i}}\rightarrow 0\text{ as }k\rightarrow \infty .
\end{equation*}%
In particular we have $Q_{k}(A)\uparrow Q(A)$ for any $A\in
\mathcal{F}$.
This implies that for a non negative random variable $X$%
\begin{equation*}
\lim_{k\rightarrow \infty }\int_{\Omega
}XdQ_{k}=\lim_{k\rightarrow \infty }\left( \sup_{f\in
\mathfrak{S}}f_{j}(\omega )Q_{k}(A_{j})\right) =\sup_{k}\sup_{f\in
\mathfrak{S}}f_{j}(\omega )Q_{k}(A_{j})=
\end{equation*}%
\begin{equation*}
\sup_{f\in \mathfrak{S}}\sup_{k}f_{j}(\omega
)Q_{k}(A_{j})=\sup_{f\in \mathfrak{S}}f_{j}(\omega
)Q(A_{j})=\int_{\Omega }XdQ
\end{equation*}%
where $\mathfrak{S}$ are the simple function less or equal than
$X$. For any
$B\in \mathcal{F}_{t-1}$ we then have:%
\begin{eqnarray*}
&&E_{Q}\left[ 1_{B}\Delta S_{t}\right] =\int_{\Omega }(1_{B}\Delta
S_{t})^{+}dQ-\int_{\Omega }(1_{B}\Delta S_{t})^{-}dQ \\
&=&\lim_{k\rightarrow \infty }\int_{\Omega }(1_{B}\Delta
S_{t})^{+}dQ_{k}-\lim_{k\rightarrow \infty }\int_{\Omega
}(1_{B}\Delta S_{t})^{-}dQ_{k}=\lim_{k\rightarrow \infty
}\int_{\Omega }1_{B}\Delta S_{t}dQ_{k}=0.
\end{eqnarray*}
\end{proof}

\begin{lemma}
\label{denseDirac}For any dense set $D\subseteq \Omega $, the set
of
probabilities $co(\{\delta _{\omega }\}_{\omega \in D})$ is $\sigma (%
\mathcal{P},C_{b})$ dense in $\mathcal{P}$.
\end{lemma}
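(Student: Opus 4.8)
The plan is to combine two ingredients: (i) the set of \emph{finitely supported} probability measures, i.e. $co(\{\delta_{\omega}\mid \omega\in\Omega\})$, is already $\sigma(\mathcal{P},C_{b})$-dense in $\mathcal{P}$ — this is the classical density statement for Borel probability measures on a separable metric space (see e.g. \cite{Aliprantis}); and (ii) every Dirac mass $\delta_{\omega}$ with $\omega\in\Omega$ lies in the $\sigma(\mathcal{P},C_{b})$-closure of $co(\{\delta_{\omega'}\mid \omega'\in D\})$. Granting these, the conclusion is immediate: the $\sigma(\mathcal{P},C_{b})$-closure of the convex set $co(\{\delta_{\omega'}\mid \omega'\in D\})$ is again convex, and by (ii) it contains $\{\delta_{\omega}\mid \omega\in\Omega\}$, hence it contains $co(\{\delta_{\omega}\mid \omega\in\Omega\})$, and therefore, by (i), it contains all of $\mathcal{P}$.

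For step (ii) I would argue as follows. Fix $\omega\in\Omega$. Since $D$ is dense in the Polish space $(\Omega,d)$, pick $\omega_{n}\in D$ with $d(\omega_{n},\omega)\to 0$. For every $f\in C_{b}$ we have $\int_{\Omega}f\,d\delta_{\omega_{n}}=f(\omega_{n})\to f(\omega)=\int_{\Omega}f\,d\delta_{\omega}$ by continuity of $f$, so $\delta_{\omega_{n}}\overset{w}{\rightarrow}\delta_{\omega}$. Thus $\delta_{\omega}$ belongs to the $\sigma(\mathcal{P},C_{b})$-closure of $\{\delta_{\omega'}\mid \omega'\in D\}\subseteq co(\{\delta_{\omega'}\mid \omega'\in D\})$. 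For step (i), if a self-contained argument is preferred to a citation, one can use that any $P\in\mathcal{P}$ is tight on the Polish space $\Omega$: choose a compact $K$ carrying mass $>1-\varepsilon$, cover $K$ by finitely many balls of small radius centered at points of a fixed countable dense subset, and transport the mass of each (disjointified) ball to its center; the resulting finitely supported measure is $\sigma(\mathcal{P},C_{b})$-close to $P$ when the radius is small relative to the moduli of continuity of the test functions.

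The "hard part" here is essentially nonexistent: the real content is the density of finitely supported measures, which is classical, together with the trivial observation that the closure of a convex set is convex. The only mild point of care is organizational — one first passes to \emph{arbitrary} Dirac masses $\delta_{\omega}$, $\omega\in\Omega$, via density of $D$, and only afterwards invokes (i), rather than attempting to approximate a general $P\in\mathcal{P}$ directly by convex combinations of $\delta_{\omega'}$ with $\omega'\in D$.
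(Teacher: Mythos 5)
Your proof is correct and follows essentially the same route as the paper's: approximate an arbitrary Dirac mass $\delta_{\omega}$ by Dirac masses at points of the dense set $D$ (weak convergence being immediate from $f(\omega_n)\to f(\omega)$ for $f\in C_b$), then combine a convexity/closure argument with the classical $\sigma(\mathcal{P},C_{b})$-density of finitely supported measures. The only cosmetic differences are that the paper checks the Dirac convergence via the portmanteau (liminf over open sets) criterion and phrases the convexity step through the identity $\overline{co}(X)=\overline{co}(\overline{X})$, whereas you argue directly that the closure of the convex hull is convex and contains all Dirac masses; these are interchangeable.
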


\begin{proof}
Take $\omega ^{\ast }\notin D$ and let $\omega _{n}\rightarrow
\omega ^{\ast }$. Note that for every open set $G$ we have
$\liminf \delta _{\omega _{n}}(G)\geq \delta _{\omega ^{\ast
}}(G)$ and this is equivalent to the weak convergence $\delta
_{\omega _{n}}\overset{w}{\rightarrow }\delta _{\omega ^{\ast }}$.
Observe that for every set $X$ we have
\begin{equation*}
\overline{co(X)}=\overline{co}(X):=\bigcap \left\{ C\mid C\text{
convex closed containing }X\right\} =\overline{co}(\overline{X}).
\end{equation*}%
Hence, by taking $X=\{\delta _{\omega }\}_{\omega \in D}$ and by $\sigma (%
\mathcal{P},C_{b})$ density of the set of measures with finite support in $%
\mathcal{P},$ we obtain the thesis.
\end{proof}

\begin{lemma}
\label{extended} Let $\mathcal{F}=\mathcal{B}(\Omega )$ be the
Borel sigma
algebra and let $\widetilde{\mathcal{F}}$ be a sigma algebra such that $%
\mathcal{F\subseteq }\widetilde{\mathcal{F}}$. The set $\widetilde{\mathcal{P%
}}:=\{\widetilde{P}:\widetilde{\mathcal{F}}\rightarrow \lbrack
0,1]\mid
\widetilde{P}$ is a probability$\}$ is endowed with the topology $\sigma (%
\widetilde{\mathcal{P}},C_{b})$. Then

\begin{enumerate}
\item If $A\subseteq \Omega $ is dense in $\Omega ,$ then
$co(\{\delta _{\omega }\}_{\omega \in A})$ is $\sigma
(\widetilde{\mathcal{P}},C_{b})$ dense in
$\widetilde{\mathcal{P}}$. Notice that any element $Q\in
co(\{\delta _{\omega }\}_{\omega \in A})$ can be extended to $\widetilde{%
\mathcal{F}}$.

\item If $D\subseteq \Omega $ is closed then
\begin{equation*}
\widetilde{\mathcal{P}}(D):=\{\widetilde{P}\in
\widetilde{\mathcal{P}}\mid supp(\widetilde{P})\subseteq D\}
\end{equation*}%
is $\sigma (\widetilde{\mathcal{P}},C_{b})$ closed, where the
support is well-defined by
\begin{equation*}
supp(\widetilde{P}):=\bigcap \{C\in \mathcal{C}\mid
\widetilde{P}(C)=1\}
\end{equation*}%
and $\mathcal{C}$ are the closed sets in $(\Omega ,d)$.
\end{enumerate}
\end{lemma}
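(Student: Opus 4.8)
The plan is to reduce both assertions to the already-established facts about $\mathcal{P}=\mathcal{P}(\Omega)$ equipped with $\sigma(\mathcal{P},C_{b})$, exploiting that $\sigma(\widetilde{\mathcal{P}},C_{b})$ is the initial topology generated by the evaluations $\widetilde{P}\mapsto\int_{\Omega}f\,d\widetilde{P}$, $f\in C_{b}$, and that every $f\in C_{b}$ is Borel, hence $\widetilde{\mathcal{F}}$-measurable, so each such integral depends only on the restriction $\widetilde{P}|_{\mathcal{B}(\Omega)}$. First I would record the observation behind the parenthetical remark in item 1: a convex combination $\sum_{i}\lambda_{i}\delta_{\omega_{i}}$ of point masses is a probability set function on the full power set $2^{\Omega}$, in particular on $\widetilde{\mathcal{F}}$, so $co(\{\delta_{\omega}\}_{\omega\in A})$ may legitimately be regarded as a subset of $\widetilde{\mathcal{P}}$; and for such $Q$ and any $f\in C_{b}$ the value $\int f\,dQ=\sum_{i}\lambda_{i}f(\omega_{i})$ is unchanged whether computed on $\mathcal{B}(\Omega)$ or on $\widetilde{\mathcal{F}}$.

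For item 1, fix $\widetilde{P}\in\widetilde{\mathcal{P}}$ and a basic $\sigma(\widetilde{\mathcal{P}},C_{b})$-neighbourhood $U=\{\widetilde{Q}\mid |\int f_{i}\,d\widetilde{Q}-\int f_{i}\,d\widetilde{P}|<\varepsilon,\ i=1,\dots,n\}$ with $f_{i}\in C_{b}$. Put $P:=\widetilde{P}|_{\mathcal{B}(\Omega)}\in\mathcal{P}$, so that $\int f_{i}\,d\widetilde{P}=\int f_{i}\,dP$ for each $i$. Since $A$ is dense in $\Omega$, Lemma \ref{denseDirac} yields $Q\in co(\{\delta_{\omega}\}_{\omega\in A})$ with $|\int f_{i}\,dQ-\int f_{i}\,dP|<\varepsilon$ for all $i$; regarding $Q$ as an element of $\widetilde{\mathcal{P}}$ leaves these integrals unchanged, so $Q\in U$. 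Hence $co(\{\delta_{\omega}\}_{\omega\in A})$ is $\sigma(\widetilde{\mathcal{P}},C_{b})$-dense in $\widetilde{\mathcal{P}}$.

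For item 2, the key preliminary step is to identify $\widetilde{\mathcal{P}}(D)$ with $\{\widetilde{P}\in\widetilde{\mathcal{P}}\mid\widetilde{P}(D)=1\}$ when $D$ is closed. Every closed $C$ lies in $\mathcal{B}(\Omega)\subseteq\widetilde{\mathcal{F}}$ and $\widetilde{P}(C)=P(C)$ for $P:=\widetilde{P}|_{\mathcal{B}(\Omega)}$, so by construction $supp(\widetilde{P})=supp(P)$. If $\widetilde{P}(D)=1$ then $D$ belongs to the family intersected to define the support, whence $supp(\widetilde{P})\subseteq D$; conversely, since $\Omega$ is separable, $P(supp(P))=1$ (the complement of $supp(P)$ is the union of all open $P$-null sets, which by Lindel\"of is a countable such union, hence $P$-null), so $supp(\widetilde{P})\subseteq D$ forces $\widetilde{P}(D)\geq\widetilde{P}(supp(\widetilde{P}))=P(supp(P))=1$. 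Now the restriction map $r:\widetilde{\mathcal{P}}\to\mathcal{P}$, $r(\widetilde{P}):=\widetilde{P}|_{\mathcal{B}(\Omega)}$, is continuous when $\widetilde{\mathcal{P}}$ carries $\sigma(\widetilde{\mathcal{P}},C_{b})$ and $\mathcal{P}$ carries $\sigma(\mathcal{P},C_{b})$, because it intertwines the defining evaluations $\int f\,d(\cdot)$, $f\in C_{b}$; moreover $\widetilde{\mathcal{P}}(D)=r^{-1}(\mathcal{P}(D))$ with $\mathcal{P}(D):=\{P\in\mathcal{P}\mid supp(P)\subseteq D\}$, which is $\sigma(\mathcal{P},C_{b})$-closed by Theorem 15.19 in \cite{Aliprantis} (the same fact invoked in the proof of Proposition \ref{WS-characterization}(1)(f)). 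Therefore $\widetilde{\mathcal{P}}(D)$ is $\sigma(\widetilde{\mathcal{P}},C_{b})$-closed.

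The only point requiring attention — and where I expect the only real, if minor, obstacle — is the pair of identities $supp(\widetilde{P})=supp(P)$ and $P(supp(P))=1$, which rely on the standing hypothesis that $\Omega$ is Polish (metrizability so that the support is a well-defined closed set characterized by closed sets of full measure, and separability so that the support carries full mass). Everything else is a transparent transfer through the restriction map $r$, using that $\sigma(\widetilde{\mathcal{P}},C_{b})$ sees a measure only through its Borel part.
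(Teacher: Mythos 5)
Your proof is correct and follows essentially the same route as the paper's: both reduce everything to the restriction $P=\widetilde{P}|_{\mathcal{B}(\Omega)}$, invoke Lemma \ref{denseDirac} for density in item 1, and for item 2 use $supp(\widetilde{P})=supp(P)$ together with the $\sigma(\mathcal{P},C_b)$-closedness of $\mathcal{P}(D)$ from Theorem 15.19 of \cite{Aliprantis}. Phrasing the closedness as continuity of the restriction map plus preimage of a closed set, rather than the paper's net argument, is only a cosmetic repackaging of the same idea.
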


\begin{proof}
By construction for any $\widetilde{P}\in \widetilde{\mathcal{P}}$ we have $%
\int fd\widetilde{P}=\int fdP$ for any $f\in C_{b}$ where $P\in
\mathcal{P}$ is the restriction of $\widetilde{P}$ to
$\mathcal{F}$. \newline
To show the first claim we choose any $\widetilde{P}\in \widetilde{\mathcal{P%
}}$. Consider $P\in \mathcal{P}$ the restriction of $\widetilde{P}$ to $%
\mathcal{F}$. Then from Lemma \ref{denseDirac} there exists a sequence $%
Q_{n}\in co(\{\delta _{\omega }\}_{\omega \in A})$ such that $\int
fdQ_{n}\rightarrow \int fdP$ for every $f\in C_{b}$. As a
consequence $\int fdQ_{n}\rightarrow \int fd\widetilde{P}$, for
every $f\in C_{b}$. \newline To show the second claim consider any
net $\{\widetilde{P}_{\alpha
}\}_{\alpha }\subset \widetilde{\mathcal{P}}(D)$ such that $\widetilde{P}%
_{\alpha }\overset{w}{\longrightarrow }\widetilde{P}$. We want to show that $%
\widetilde{P}\in \widetilde{\mathcal{P}}(D)$. Consider $P_{\alpha
},P$ the restriction to $\mathcal{F}$ of $\widetilde{P}_{\alpha
},\widetilde{P}$ respectively. Then $P_{\alpha
}\overset{w}{\longrightarrow }P$. Notice that by definition
$supp(P_{\alpha })=supp(\widetilde{P}_{\alpha })\subseteq D$ and
$supp(P)=supp(\widetilde{P})$. Moreover the set
$\mathcal{P}(D)=\{P\in \mathcal{P}\mid supp(P)\subseteq D\}$ is
$\sigma (\mathcal{P},C_{b})$ closed
(Theorem 15.19 in \cite{Aliprantis}) so that $D\supseteq supp(P)=supp(%
\widetilde{P})$.
\end{proof}

\begin{proof}[Proof of Proposition \protect\ref{WS-characterization}, item
(2)]
Recall that an Open Arbitrage in $\widetilde{\mathcal{H}}$ is a $\{%
\widetilde{\mathcal{F}}\}$-predictable processes $H=[H^{1},\ldots
,H^{d}]$ such that $V_{T}(H)\geq 0$ and
$\mathcal{V}_{H}^{+}=\{V_{T}(H)>0\}$ contains an open set.

First we show that $H\in W(\widetilde{\sigma
},\widetilde{\mathcal{H}})$ implies $V_{T}(H)(\omega )\geq 0$ for
all $\omega \in \Omega $. We need only to show that $B:=\left\{
\omega \in \Omega \ \mid \ V_{T}(H)(\omega )<0\right\} $ is empty.
By contradiction, let $\omega \in B$, take any $P\in \mathcal{U}$
and define the probability $P_{\lambda }:=\lambda \delta
_{\omega }+(1-\lambda )P$. Since $V_{T}(H)\geq 0$ $P$-a.s. we must have $%
P(\omega )=0$, otherwise $P(B)>0$. However, $P_{\lambda }(B)\geq
P_{\lambda }(\omega )=\lambda >0$ for all positive $\lambda $ and
$P_{\lambda }$ will
belongs to $\mathcal{U}$, as $\lambda \downarrow 0$, which contradicts $%
V_{T}(H)\geq 0$ $P$-a.s. for any $P\in \mathcal{U}$.

To prove the equivalence, assume first that $H\in W(\widetilde{\sigma },%
\widetilde{\mathcal{H}})$. We claim that
$(\mathcal{V}_{H}^{+})^{c}=\left\{ \omega \in \Omega \mid
V_{T}(H)=0\right\} $ is not dense in $\Omega $. This will imply
the thesis as the open set $int(\mathcal{V}_{H}^{+})$ will then
be a not empty on which $V_{T}(H)>0$. Suppose by contradiction that $%
\overline{(\mathcal{V}_{H}^{+})^{c}}=\Omega $. We know by Lemma \ref%
{extended} that the corresponding set $\mathcal{Q}$ of embedded
probabilities $co(\{\delta _{\omega }\}_{\omega \in (\mathcal{V}%
_{H}^{+})^{c}})$ is weakly dense in $\widetilde{\mathcal{P}}$ and
hence it intersects, in particular, the weakly open set
$\mathcal{U}$. However, for every $P\in \mathcal{Q}$ we have
$V_{T}(H)=0\ P$-a.s. and so this contradicts the assumption.

Suppose now that $H\in \widetilde{H}$ is an Open Arbitrage. Note
that from
Lemma \ref{extended} if $F$ is a closed subset of $\Omega $, then $%
\widetilde{\mathcal{P}}(F):=\{P\in \widetilde{\mathcal{P}}\mid \
supp(P)\subset F\}$ is $\sigma
(\widetilde{\mathcal{P}},C_{b})$-closed. Since $H$ is an Open
Arbitrage then $\mathcal{V}_{H}^{+}$ contains an open set and in
particular $G:=\overline{(\mathcal{V}_{H}^{+})^{c}}$ is a closed
set strictly contained in $\Omega $. Observe then that $\left( \widetilde{%
\mathcal{P}}(G)\right) ^{c}$ is a non empty $\sigma (\widetilde{\mathcal{P}}%
,C_{b})$-open set of probabilities such that for all $P\in
\mathcal{U}$ we have $V_{T}(H)\geq 0$, $P$-a.s. and
$P(\mathcal{V}_{H}^{+})>0$.
\end{proof}

\end{document}